\newtheorem{thm}{Theorem}
\newtheorem{lem}[thm]{Lemma}
\newtheorem{defn}[thm]{Definition}
\newtheorem{prob}[thm]{Problem}
\newcommand{\ket}[1]{|#1\rangle}
\newcommand{\bra}[1]{\langle#1|}
\newcommand{\braket}[2]{\langle#1|#2\rangle}
\newcommand{\ketbra}[2]{|#1\rangle\!\langle#2|}
\newcommand{\norm}[1]{\left\lVert#1\right\rVert}
\newcommand{\lb}{\left(}
\newcommand{\rb}{\right)}
\newcommand{\tn}{\Tilde{N}}
\newcommand{\tx}{\Tilde{x}}
\newcommand{\prep}{\mathtt{PREP}}
\newcommand{\sel}{\mathtt{SEL}}
\newcommand{\swap}{\mathtt{SWAP}}
\newcommand{\Del}{D^{\, \tel}_{n,j}}
\newcommand{\qft}{\mathtt{QFT}}
\newcommand{\cnot}{\mathtt{CNOT}}
\newcommand{\tel}{\text{el}}
\newcommand{\nuc}{\text{nuc}}
\newcommand{\tcla}{\text{class}}
\newcommand{\hax}{\overline{x}}
\newcommand{\hap}{\overline{p}}
\newcommand{\has}{\overline{s}}
\newcommand{\UofT}{\affiliation{
Department of Computer Science, University of Toronto, Canada}}
\newcommand{\UofTP}{\affiliation{
Department of Physics, University of Toronto, Canada}}
\newcommand{\PNNL}{\affiliation{Pacific Northwest National Laboratory, Richland WA, USA}}
\newcommand{\CIFAR}{\affiliation{Canadian Institute for Advanced Studies, Toronto, Canada}}
\newcommand{\BIQuantum}{\affiliation{
Quantum Lab, Boehringer Ingelheim, 55218 Ingelheim am Rhein, Germany}}
\begin{document}

\title{Improved precision scaling for simulating coupled quantum-classical dynamics}

\author{Sophia Simon}
\email{sophia.simon@mail.utoronto.ca}
\UofTP

\author{Raffaele Santagati}
\email{raffaele.santagati@boehringer-ingelheim.com}
\BIQuantum

\author{Matthias Degroote}
\BIQuantum

\author{Nikolaj Moll}
\BIQuantum

\author{Michael Streif}
\BIQuantum

\author{Nathan Wiebe}
\email{nathan.wiebe@utoronto.ca}
\UofT
\PNNL
\CIFAR

\begin{abstract}
    We present a super-polynomial improvement in the precision scaling of quantum simulations for coupled classical-quantum systems in this paper. Such systems are found, for example, in molecular dynamics simulations within the Born-Oppenheimer approximation. By employing a framework based on the Koopman-von Neumann formulation of classical mechanics, we express the Liouville equation of motion as unitary dynamics and utilize phase kickback from a dynamical quantum simulation to calculate the quantum forces acting on classical particles. This approach allows us to simulate the dynamics of these classical particles without the overheads associated with measuring gradients and solving the equations of motion on a classical computer, resulting in a super-polynomial advantage at the price of increased space complexity. We demonstrate that these simulations can be performed in both microcanonical and canonical ensembles, enabling the estimation of thermodynamic properties from the prepared probability density. 
\end{abstract}

\maketitle


\section{Introduction} 
We are accustomed to thinking of nature in terms of binaries. Specifically, we often speak of dynamics as if they were either purely quantum or purely classical. In reality, many models of physical interest actually share features with both quantum and classical matter.  
As a particular example, molecular dynamics is often formulated in this way wherein the nuclei are assumed to follow Newton's equations, but the electrons follow the Schr\"odinger equation. In other cases, we may treat an electromagnetic field as a time-dependent classical field and the particles interacting with it as quantum. In both cases, neither a fully quantum model nor a fully classical model can be used to address the problem efficiently.
Quantum computers have long been known to provide, under reasonable complexity-theoretic conjectures, exponential advantages for simulating certain quantum systems~\cite{lloyd1996universal, kempe2006complexity, lanyon2010towards, aaronson2011computational}. Recently, this has been extended to show that quantum computers can offer exponential advantages even for systems that are classical~\cite{Babbush2023}. However, when examining systems that straddle this line, such as the molecular dynamics example considered above, the situation is not as clear because such simulation methods are comparatively underexplored.

Descriptions within the Born-Oppenheimer (BO) approximation, where the wavefunctions of the nuclei can be considered independent of the wavefunctions of the electrons, play an important role in the chemical and pharmaceutical industry. These are often used to compute thermodynamic quantities of the chemical systems under study, such as the entropy or the free energy~\cite{Tuckerman1992, Martyna1996,  Xu2023, Ries2022, Santagati2023}.  In fact, in classical computational chemistry, thermodynamic averages can be obtained by combining molecular dynamics (MD) simulations with the use of thermostats to go beyond microcanonical ensembles~\cite{Tuckerman2000, Nose1984, Bond1999NosePoincare}. MD simulations introduce another approximation on top of the BO approximation by treating the nuclei as classical particles but retaining the quantum description of the electrons.
Recent works have explored the study of molecular dynamics on fault-tolerant quantum computers, for example, via force calculations, to update the coordinates of the classical nuclei~\cite{Kassal2010, Ollitrault2020, Fedorov2021, Kuroiwa2022, Sokolov2021, Obrien2019, Obrien2022}. Some of these works go beyond the Born-Oppenheimer approximation. In contrast to approaches where the full system is treated quantum mechanically, BO models have some advantages, namely that the scale of the quantum and the classical dynamics are naturally separated and that classical noise and external forces can be easily applied to the system without needing an expensive fully quantum description. 

Recent work \cite{Obrien2022,Steudtner2023} analyzed the cost of computing interatomic forces on quantum computers, which can be used to update the classical nuclear positions on a classical computer iteratively. In those works, the cost of an $\epsilon$-approximate gradient evaluation was found to scale with the error tolerance $\epsilon$ as $O(1/\epsilon)$. This approach is even less practical if we consider that, for many practical applications, we need to compute the properties of the chemical systems in the canonical ensemble~\cite{Tuckerman2000, MANATHUNGA2022, Xu2023, Santagati2023}. A key problem in simulating molecular dynamics in the canonical ensemble is to prepare the classical distribution of the nuclei (their probability density function) according to the Boltzmann distribution at a certain temperature.  Current quantum computing methods do not allow for efficient encoding and evolution of the probability density of the classical degrees of freedom on the quantum computer. In the 1930s, Koopman and von Neumann (KvN) proposed an empirical formulation of classical mechanics that incorporated a Hilbert space consisting of complex and square-integrable wavefunctions $\psi_{\text{KvN}}(q,p)$ which depend on the position $q$ and momentum $p$ of the particle~\cite{Koopman1931}. These wavefunctions were understood as probability densities, $\rho(p,q) = \psi_{\text{KvN}}^*\psi_{\text{KvN}}$, of finding the particle in a specific configuration $(p,q)$ of the phase space ~\cite{Mauro2002KvN}. Both, $\rho$ and $\psi_{\text{KvN}}$, evolve according to the Liouville equation
\begin{equation}
    \partial_t \rho = - i L \rho,
\end{equation}
where $L$ is a Hermitian operator called the Liouvillian operator. This gives a natural encoding (and subsequent evolution) of classical probability densities into quantum states that follow Hamiltonian dynamics.

The KvN formalism has been recently exploited to propose new algorithms for simulating classical systems on quantum computers and for solving non-linear partial differential equations~\cite{Joseph2020, Jin2023}. In this work, we do not use quantum linear systems algorithms but explicitly evolve the positions and momenta in time. We provide a specific procedure for the time propagation along with the cost of the block-encodings involved, as well as a method for obtaining ensemble averages from our final state.

We implement the time evolution of a hybrid quantum-classical system, by propagating the discretized phase space density of $N$ classical nuclei that interact with $\tn$ electrons, which are treated quantumly using the first quantized simulation method of~\cite{Babbush2019, Su2021first_quant_sim}. We present this result for the microcanonical ensemble~\cite{Tuckerman2000} where the number of particles $N_{\text{tot}} = N + \tn$, volume $V$, and energy $E$ are conserved. By adding a thermostat to the evolution, which couples the nuclear phase space density to a classical heat bath, we can impose a constant temperature $T$, allowing us to perform the simulation in the canonical ensemble~\cite{Nose1984}. 
The asymptotic gate complexity of the Liouvillian simulation algorithm is in
\begin{equation}
    \widetilde{O} \lb \frac{N_{\text{tot}} \mu^{2+o(1)} t^{1+o(1)}}{\widetilde{\gamma} \, \widetilde{\delta} \, \epsilon^{o(1)}} \log \lb \frac{1}{\xi} \rb \rb
\end{equation}
where $\mu$ is an upper bound on the spectral norm of the Liouvillian operator, $t$ is the evolution time, $\widetilde{\gamma}$ is a lower bound on the spectral gap of the electronic Hamiltonian over all configurations visited during the simulation, 
$\widetilde{\delta}$ is a lower bound on the overlap of the initial electronic state with the target electronic state over all configurations visited during the simulation, $\epsilon$ is the desired simulation precision and $\xi$ is an upper bound on the failure probability. Our result  provides super-polynomially better scaling with $\epsilon$ than the existing approach of~\cite{Obrien2022}, which in turn suggests that the roadblocks previously identified for molecular dynamics therein may not be the obstacles that they previously were believed to be. Additionally, the asymptotic space complexity of the Liouvillian simulation algorithm is moderate, scaling linearly in $N_{\text{tot}}$ and logarithmically in all other simulation parameters including the grid spacing as well as the precision $\epsilon$.

To tackle the problem of computing thermodynamic quantities, we introduce a second algorithm which, given a quantum state encoding the discretized phase space density of the system together with the heat bath (e.g., obtained using the first algorithm), can output an estimate of the free energy of the system. In contrast to classical methods and previous approaches for MD simulations on quantum computers~\cite{Ollitrault2020, Fedorov2021, Kuroiwa2022, Sokolov2021}, our approach exploits a fully coherent state preparation of the classical probability density in the canonical ensemble, avoiding sampling and enabling the direct estimation of thermodynamic properties~\cite{Obrien2022, Santagati2023}.

The manuscript is structured as follows. In Section~\ref{sec:prelim}, we introduce the main concepts for coupled quantum-classical dynamics in the Liouvillian picture. We show that a thermostat can be used to prepare the canonical ensemble and that we can implement this simulation on a quantum computer by discretizing the phase space. Our main results are presented in Section~\ref{sec:main}, which includes precise statements of the computational problems as well as the asymptotic gate complexity of our algorithms to solve these problems. In Section~\ref{sec:algorithm}, we give an overview of our quantum algorithms for simulating Liouvillian dynamics and estimating the free energy of a quantum-classical system. We conclude with a brief discussion in Section~\ref{sec:conclusion}.
The proofs of the theorems and lemmas presented in the main text are given in Appendices~\ref{app:L_class} -- \ref{app:EulerCost}.

\section{Preliminaries}
\label{sec:prelim}

In this section, we provide an overview of the fundamental definitions, from the Liouvillian formalism to the mappings into qubit registers and the microcanonical and canonical ensembles, that are essential for the definition of the computational problems and the implementation of the algorithms.

\subsection{Liouvillian formalism}

The trajectories of $N$ classical particles in 3 dimensions are governed by Newton's equations of motion:
\begin{equation}
    F_{n,j} = m_n \Ddot{x}_{n,j}, \quad  n \in \{1,2, \dots, N \}, \quad j \in \{1,2,3 \},
\end{equation}
where $F_{n,j}$ is the $j$-th component of the force on the $n$-th particle, $\Ddot{x}_{n,j}$ is its acceleration and $m_n$ is its mass. In general, these differential equations are nonlinear and nonunitary, meaning that the time evolution of the positions of the particles cannot be directly implemented on a quantum computer. The Born-Oppenheimer approximation in molecular dynamics (MD) turns the Hamiltonian problem of jointly evolving nuclei and electrons into an example of such a nonlinear and nonunitary time evolution. 
We overcome the issue by working with the Liouvillian formulation of classical mechanics instead, which is centered around the phase space probability density $\rho \lb \{x_n\}, \{p_n\}, t \rb$ of the classical particles. The probability density depends on the positions $x_n \in \mathbb{R}^3$ and momenta $p_n \in \mathbb{R}^3$ of the particles as well as on time $t$. It is normalized according to $\int_{\mathbb{R}^{6N}}\rho \lb \{x_n\}, \{p_n\}, t \rb d\{x_n\}d\{p_n\} = 1$ and satisfies the following equation of motion:
\begin{equation}
    \frac{\partial \rho}{\partial t} = -i L \rho,
\label{eom}
\end{equation}
where $L$ is the Liouvillian operator defined below:
\begin{equation}
    L := -i\sum_{n=1}^{N} \sum_{j=1}^{3} \lb \frac{\partial H}{\partial {p_{n,j}}} \partial_{x_{n,j}} - \frac{\partial H}{\partial {x_{n,j}}} \partial_{p_{n,j}}\rb.
\label{liouvillian}
\end{equation}
Here, $H$ is the classical Hamiltonian of the system and $\partial_{x_{n,j}}$ ($\partial_{p_{n,j}}$) denotes the partial derivative with respect to the $j$-th position (momentum) component of the $n$-th particle.

The formal solution to Eq.~(\ref{eom}) reads
\begin{equation}
    \rho(t) = e^{-iLt}\rho(0).
\label{ev_rho}
\end{equation}
Note that $L$ is Hermitian, which implies that $e^{-iLt}$ is unitary. 
The similarities to quantum mechanics become even more apparent when employing the Koopman-von Neumann formulation of classical mechanics~\cite{Mauro2002KvN}. The idea is to introduce a complex wave function, $\psi_{\text{KvN}}\lb \{x_n\}, \{p_n\}, t \rb$, which evolves according to the Liouville equation just like $\rho \lb \{x_n\}, \{p_n\}, t \rb$:
\begin{equation}
   \frac{\partial \psi_{\text{KvN}}}{\partial t} = -i L \psi_{\text{KvN}}.
\end{equation}
The phase space density can then be recovered via the relation $\rho = \psi_{\text{KvN}}^*\psi_{\text{KvN}}$, which resembles the quantum-mechanical calculation of probabilities from amplitudes. This works out because the Liouvillian contains only first-order derivatives, $\partial_{x_{n,j}}$ and $\partial_{p_{n,j}}$, meaning that the product of two wave functions, each satisfying the Liouville equation, also satisfies the Liouville equation.
In contrast, the Schrödinger equation of quantum mechanics generically contains second-order derivatives, $\partial_{x_{n,j}}^2$, implying that the product of two solutions does not necessarily satisfy the Schrödinger equation.

Considering the Koopman-von Neumann wave function $\psi_{\text{KvN}}$ rather than the phase space density $\rho$ has one significant advantage: $\psi_{\text{KvN}}$ can be easily encoded on a quantum computer since it has the same properties as a ``true'' quantum wave function. For example, while $\rho$ needs to be real-valued and positive, $\psi_{\text{KvN}}$ can take on complex values. Furthermore, $\psi_{\text{KvN}}$ is normalized according to the 2-norm, i.e. $\int_{\mathbb{R}^{6N}} |\psi_{\text{KvN}}|^2 d\{x_n\}d\{p_n\} = 1$, in contrast to $\rho$ which is normalized according to the 1-norm.

While the Liouvillian formalism can be applied to any classical system governed by a Hamiltonian, we will focus on molecular dynamics within the Born-Oppenheimer approximation. In this setting, the nuclei are treated as classical particles, whereas the electrons are treated quantumly. More specifically, solving the electronic Schrödinger equation as a function of the nuclear positions yields potential energy surfaces which determine the dynamics of the classical nuclei. We consider MD simulations in the microcanonical and the canonical ensemble, as discussed in the following subsections.

\subsection{Evolution in the microcanonical \texorpdfstring{($NVE$)}{NVE} ensemble}

The microcanonical ($NVE$) ensemble is a thermodynamic ensemble in which the number of nuclei $N$, the volume $V$ and the system's total energy $E$ are constants of motion. 
In order to prevent potential misunderstandings, let us remind ourselves that the phase space representation of an $NVE$ ensemble is usually considered to be a time-independent phase space density $\rho_{NVE}$ which is constant over all configurations with energy $E$ and zero otherwise. This is often written in terms of a Dirac delta distribution:
\begin{equation}
    \rho_{NVE} \lb \{x_n\}, \{p_n\} \rb \propto \delta \lb H \lb \{x_n\}, \{p_n\} \rb - E \rb.
\end{equation}
In the following, we will not assume that the phase space density is given by $\rho_{NVE} \lb \{x_n\}, \{p_n\} \rb$ when we talk about simulations in the $NVE$ ensemble. Rather, we refer to constant energy dynamics of a generic time-dependent phase space density, which evolves according to Eq.~\eqref{ev_rho}.

The classical Hamiltonian of the nuclei in the $NVE$ ensemble takes the following form:
\begin{equation}
    H_{\nuc}^{(NVE)} := H_{\tcla}^{(NVE)}\lb \{x_n\},\{p_n\}\rb + E_{\tel}\lb \{x_n\}\rb,
\label{H_NVE}
\end{equation}
where
\begin{equation}
    H_{\tcla}^{(NVE)} := \sum_{n=1}^{N} \sum_{j=1}^{3} \frac{p_{n,j}^2}{2 m_n} + \sum_{k=1}^{N-1}\sum_{n>k}^{N} \frac{Z_n Z_k}{\norm{x_n - x_k}}
\label{H_class_NVE}
\end{equation}
is the classical Hamiltonian of the nuclei without any electronic contributions. The mass and the atomic number of the $n$-th nucleus are denoted $m_n$ and $Z_n$, respectively. Unless stated otherwise, we will use $\norm{\cdot}$ to refer to the (induced) 2-norm. $H_{\nuc}^{(NVE)}$ also depends on $E_{\tel}\lb \{x_n\}\rb$, the ground state energy of the following quantum Hamiltonian governing the dynamics of the electrons for fixed nuclear positions:
\begin{equation}
    H_{\tel} := -\sum_{n=1}^{\tn} \sum_{j=1}^{3} \frac{\nabla_{n,j}^2}{2} + \sum_{n>k}^{\tn} \frac{1}{\norm{\tx_n - \tx_k}} - \sum_{k, n = 1}^{\tn, N} \frac{Z_n}{\norm{\tx_k - x_n}},
\label{H_el}
\end{equation}
where $\tn$ is the number of electrons, $\tx_n \in \mathbb{R}^3$ denotes the position of the $n$-th electron and $\nabla_{n,j} := \partial_{\tx_{n,j}}$ is the partial derivative operator with respect to the $j$-th coordinate of the $n$-th electron.
Note that the total number of particles, $N_{\text{tot}} := N + \tn$, is also conserved in the $NVE$ ensemble (the same is true for the $NVT$ ensemble discussed in the next subsection).

So far, we have worked with continuous position and momentum variables that can take on any real value.
However, to simulate the time evolution of the phase space density according to Eq.~\eqref{ev_rho} on a quantum computer, we need to consider a finite, discretized phase space. The idea is to restrict each position and momentum component of the nuclei to a finite set of
\begin{align}
    g_x &:= \frac{x_{\max}}{h_x} \in \mathbb{N} \\
    g_p &:= \frac{p_{\max}}{h_p} \in \mathbb{N}
\end{align}
values, respectively, where $x_{\max}$ ($p_{\max}$) is the maximum attainable value of any $x_{n,j}$ ($p_{n,j}$) and $h_x$ ($h_p$) is the grid spacing. The choice of grid spacing depends on the smoothness of the phase space density. A detailed error analysis regarding the grid spacing for real space simulations can be found in~\cite{Kivlichan2017realspace}. Further bounds on the grid spacing and numerical results are presented in~\cite{Obrien2022}.
Since we consider a finite simulation box, we must also specify the boundary conditions. For simplicity, we choose periodic boundary conditions.

Each grid point of the discretized phase space corresponds to a computational basis state of the form
\begin{equation}
    \ket{\{\hax_{n,j}\}, \{\hap_{n,j}\}} := \bigotimes_{n, j} \Big( \ket{\hax_{n,j}} \otimes \ket{\hap_{n,j}} \Big) \, ,
\label{NVE_basis_states}
\end{equation}
where $\hax_{n,j} \in [g_x]$ and $\hap_{n,j} \in [g_p]$ are integers such that 
$x_{n,j} = \hax_{n,j} h_x - x_{\max}/2$ and $p_{n,j} = \hap_{n,j} h_{p} - p_{\max}/2$. Thus, $\ket{\hax_{n,j}}$ is a computational basis state on $\log g_x$ qubits specifying the value of the $j$-th discrete position coordinate of the $n$-th nucleus and analogously for $\ket{\hap_{n,j}}$. The mapping to qubits to obtain the computational basis is shown in Fig.~\ref{fig:mapping}. 

\begin{figure}[tb]
    \includegraphics[width=1.\textwidth]{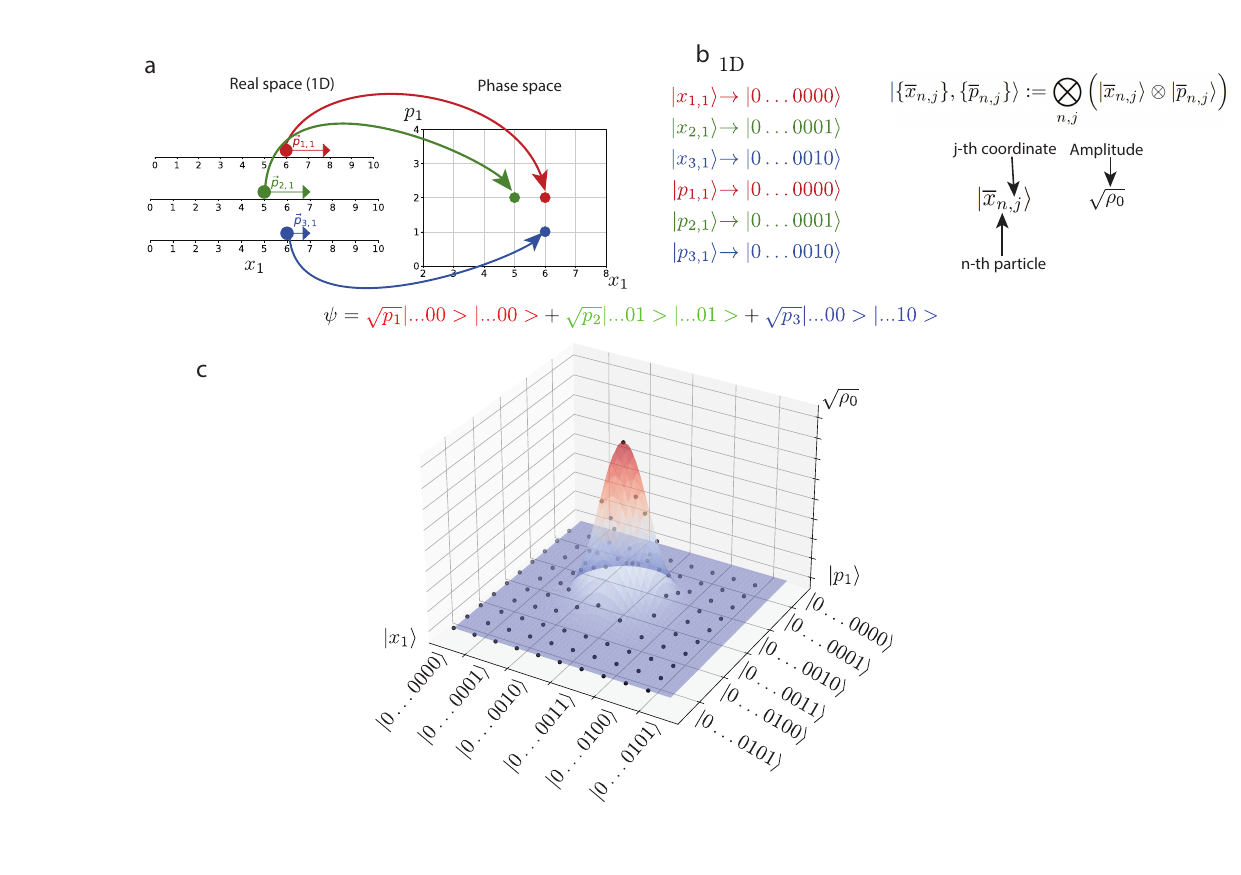}
    \caption{Mapping from real space to phase space and to qubit registers for a one-dimensional problem. \textbf{a} Each configuration is stored as a point in the phase space. \textbf{b} Each point of the discretized phase space is associated with a computational basis state on the quantum computer. The computational basis states can be written as the tensor product of the states encoding the discretized positions and momenta of the individual particles. The amplitude of a computational basis state defines the corresponding probability for that point in the probability density function.
    \textbf{c} Example encoding of the discretized phase density of one particle in one spatial dimension. }
    \label{fig:mapping}
\end{figure}

Given a classical $\lb g_x^{3N} g_p^{3N} \rb$-dimensional probability vector $\Vec{\rho}_0$ encoding the discretized initial phase space density $\rho_0$, the $k$-th amplitude of the initial quantum state representing the associated KvN wave function can simply be chosen to be $\sqrt{\lb\Vec{\rho}_{0}\rb_k}$. In other words, the quantum register is initially prepared in the state $ \ket{\rho_0} := \sum_k \sqrt{\lb\Vec{\rho}_{0}\rb_k} \ket{k}$ where $k \in [g_x^{3N} g_p^{3N}]$ enumerates the points of the discretized phase space. Note that this is just a convenient relabeling of the computational basis states $\{\ket{\{\hax_{n,j}\}, \{\hap_{n,j}\}} \}$ introduced before. 
Depending on the choice of the initial phase space density of the classical particles, one can use a number of different general purpose state preparation methods to prepare a quantum state encoding the initial phase space density. For example, if the initial phase space density is efficiently integrable, one can use the Grover-Rudolph algorithm to prepare the corresponding quantum state~\cite{Grover2002}. For sparse quantum states, this method scales quadratically with the number of qubits~\cite{Ramacciotti2023}. 
Another option is to use the QSVT-based approach developed in~\cite{Mcardle2022state_prep}. Their method provides a qubit-efficient way of encoding functions with a well behaved polynomial or Fourier expansion in the amplitudes of a quantum state.

In order to evolve the discretized quantum state that encodes our system on a quantum computer, we also need to discretize the Liouvillian operator defined in Eq.~\eqref{liouvillian}. This requires us to define discrete versions of the derivative operators appearing in the Liouvillian operator. 
Central finite difference schemes are a popular tool for discretizing derivatives~\cite{Li2005finite_difference}. In the quantum setting, the corresponding discrete operator can be defined as follows:
\begin{defn}[Discrete derivative operator]
    Let $\{ \ket{\overline{y}} \}$ denote a complete set of computational basis states, representing the variable with respect to the derivative operator is applied, e.g., $x_{n,j}$ or $p_{n,j}$. The discrete derivative operator $D_y$ of order $2d$ is defined as follows:
    \begin{equation}
        D_y := \frac{1}{h} \sum_{\overline{y}} \sum_{k = -d}^d c_{d,k} \ket{\overline{y}-k}\!\bra{\overline{y}},
    \end{equation}
    where $h$ is the user-specified grid spacing and the coefficients $\{c_{d,k}\}$ are given by~\cite{Li2005finite_difference}
    \begin{align}
        c_{d,k} := 
        \begin{cases}
            \frac{(-1)^{k+1}(d!)^2}{k(d-k)!(d+k)!} ,& \text{if } k\neq 0\\
            0 ,              & \text{else}.
        \end{cases}
    \end{align}
\label{def:discrete_derivative}
\end{defn}
In general, the higher the order $2d$ of the finite difference scheme, the better the error scaling with respect to the grid spacing, see Lemma \ref{lem:finite_diff} or Ref.~\cite{Obrien2022} for more details.

Now we are ready to define the discretized Liouvillian operator in the $NVE$ ensemble:

\begin{defn}[Discretized Liouvillian operator for BO molecular dynamics in the $NVE$ ensemble]
    Let $H_{\nuc}^{(NVE)}$ be the BO Hamiltonian from Eq.~\eqref{H_NVE}. The discretized Liouvillian for simulations in the $NVE$ ensemble is given by
    \begin{equation}
        L_{NVE} := -i\sum_{n=1}^{N} \sum_{j=1}^{3} \lb D_{x_{n,j}} \otimes \frac{\partial H_{\nuc}^{(NVE)}}{\partial {p_{n,j}}} - \frac{\partial H_{\nuc}^{(NVE)}}{\partial {x_{n,j}}} \otimes D_{p_{n,j}}\rb,
    \label{disc_liouvillian}
    \end{equation}
    where 
    \begin{align}
        \frac{\partial H_{\nuc}^{(NVE)}}{\partial {p_{n,j}}} &= \sum_{\hap_{n,j}} \frac{p_{n,j}}{m_n} \ketbra{{\hap_{n,j}}}{{\hap_{n,j}}}\\
        \frac{\partial H_{\nuc}^{(NVE)}}{\partial {x_{n,j}}} &= \sum_{n' \neq n} \sum_{\hax_{n}} \sum_{\hax_{n'}} \frac{-Z_n Z_{n'}}{\lb \norm{x_n - x_{n'}}^2 + \Delta^2 \rb^{3/2}} (x_{n,j} - x_{n',j}) \ketbra{{\hax_n}}{{\hax_n}} \otimes \ketbra{{\hax_{n'}}}{{\hax_{n'}}} + \Del 
    \end{align}
    are now diagonal matrices of dimension $g_x^{3N}g_p^{3N}$. $\Delta$ is a gap parameter introduced to regularize the Coulomb interaction and to avoid infinities in the simulation. $D_{x_{n,j}}$ is a discrete derivative operator of order $2d_x$ and $D_{p_{n,j}}$ is a discrete derivative operator of order $2d_p$.
    Furthermore,
    \begin{equation}
        \Del := \frac{1}{h_x} \sum_{k = -d_e}^{d_e} \sum_{(n',j') \neq (n,j)} \sum_{\hax_{n',j'}} \sum_{\hax_{n,j}} c_{d_e,k} E_{\tel}\lb \{ x_{n',j'}\}, x_{n,j} + k h_x \rb \ketbra{\hax_{n',j'}}{\hax_{n',j'}} \otimes \ketbra{\hax_{n,j}}{\hax_{n,j}}
    \label{el_derivative}
    \end{equation}
    is a central finite difference approximation of order $2 d_e$ to $\frac{\partial E_{\tel}}{\partial {x_{n,j}}}$.
    Note that we only show the quantum registers that are acted on in a nontrivial manner. For example, 
    \begin{equation}
        \sum_{\hap_{1,1}} \ketbra{{\hap_{1,1}}}{{\hap_{1,1}}} \equiv \lb \mathbb{1}_{x_{1,1}} \otimes \sum_{\hap_{1,1}} \ketbra{{\hap_{1,1}}}{{\hap_{1,1}}} \rb \otimes \lb \mathbb{1}_{x_{1,2}} \otimes \mathbb{1}_{p_{1,2}} \rb \otimes \lb \mathbb{1}_{x_{1,3}} \otimes \mathbb{1}_{p_{1,3}} \rb \bigotimes_{n=2, j=1}^{N,3} \lb \mathbb{1}_{x_{n,j}} \otimes \mathbb{1}_{p_{n,j}} \rb.
    \end{equation}
\label{def:NVE_liouvillian}
\end{defn}

The reason for introducing $\Del$ in the above definition is that we generally do not have an analytic expression for $\frac{\partial E_{\tel}}{\partial {x_{n,j}}}$.

In Appendix \ref{app:overall_Liouvillian} we show that the spectral norm of of $L_{NVE}$ is upper bounded by $\mu_{NVE}$ which is defined as follows:
\begin{equation}
    \mu_{NVE} := 3N \frac{p_{\text{max}}}{m_{\text{min}}} \frac{2 \lb \ln{d_x} + 1 \rb}{h_x} + 6N^2 \frac{2 Z_{\text{max}}^2 x_{\text{max}}}{\Delta^3} \frac{2 \lb \ln{d_p} + 1 \rb}{h_p} + 3N \lambda \frac{2 \ln(d_e+1)}{h_x h_p},
\label{mu_NVE}
\end{equation}
where $\lambda$ is an upper bound on the spectral norm of the discretized electronic Hamiltonian as discussed in Lemma \ref{lem:block-encode_Hel}.

Since the discrete Liouvillian $L_{NVE}$ is still Hermitian, we can use tools from Hamiltonian simulation~\cite{Childs2021trotter_error, Gilyen2019qsvt} to efficiently implement the unitary $U_{NVE} := e^{-iL_{NVE} t}$ on a quantum computer to simulate the time evolution of the discretized phase space density. Our quantum simulation algorithm is discussed in more detail in Section \ref{sec:algorithm}.

\subsection{Evolution in the canonical \texorpdfstring{($NVT$)}{NVT} ensemble}

By default, MD simulations are performed in the $NVE$ ensemble. However, it is often desirable to perform simulations in the canonical ($NVT$) ensemble where the temperature $T$ rather than the energy is held constant. This is especially true when performing conformational searches of molecules, such as those required in drug design~\cite{Yu2017}. 

The Nosé-Hoover thermostat is a common choice in classical MD calculations to simulate dynamics in the $NVT$ ensemble~\cite{Huenenberger2005thermostat, Frenkel20021}. This thermostat is based on non-Hamiltonian equations of motion, meaning there does not exist an underlying Hamiltonian governing the dynamics of the system. Therefore, it cannot be straightforwardly incorporated into the Liouvillian framework.

In contrast, the original Nosé thermostat is compatible with the Liouvillian framework since the equations of motion can be derived from an extended-system Hamiltonian~\cite{Nose1984}. The idea is to introduce additional terms to the classical Hamiltonian that involve an extra degree of freedom, $s$, representing a heat bath. This effectively allows the kinetic energy of the nuclei to be exchanged with the energy of the bath so that the system can be equilibrated to a user-specified temperature $T$. The extended-system Hamiltonian is defined as follows:
\begin{equation}
    H_{\nuc}^{(NVT)} := H_{\tcla}^{(NVT)}\lb \{x_n\},\{p'_n\}, s, p_s \rb + E_{\tel}\lb \{x_n\}\rb,
\label{H_NVT}
\end{equation}
where
\begin{equation}
    H_{\tcla}^{(NVT)} := \sum_{n=1}^{N} \sum_{j=1}^{3} \frac{{p'}_{n,j}^2}{2 m_n s^2} + \sum_{k=1}^{N-1}\sum_{n>k}^{N} \frac{Z_n Z_k}{\norm{x_n - x_k}^2} + {\frac{p_s^2}{2 Q}} + N_f k_B T \ln \lb s \rb \,.
\label{H_class_NVT}
\end{equation}
Here, $p_s$ is the momentum variable conjugate to $s$, and $Q$ is an effective mass of $s$, which controls the coupling of the system to the heat bath. $k_B$ is the Boltzmann constant, and $N_f = 3 N - K$ is equal to the number of degrees of freedom of the system, with $K$ being the number of constraints. 
The heat bath modifies the kinetic energy term of the nuclei while the Coulomb potential term remains unaffected. In particular, $p'_{n,j}$ is the conjugate momentum variable to $x_{n,j}$ in the extended system. It is often called a ``virtual'' momentum variable and is related to the real momentum variable $p_{n,j}$ of the physical system via the following equation:
\begin{equation}
    p_{n,j} = \frac{p'_{n,j}}{s}.
\end{equation}
The third term of $H_{\tcla}^{(NVT)}$ represents the kinetic energy of the heat bath, while the last term represents the potential energy of the heat bath. 
This potential energy term ensures that the partition function $\mathcal{Z}$ associated with a microcanonical ensemble in the extended system gives rise to a canonical partition function when restricted to the real system~\cite{Nose1984partition_function, Huenenberger2005thermostat}:
\begin{equation}
\begin{split}
    \mathcal{Z} &\propto \int d\{x_n\} \int d\{p'_n\} \int ds \int dp_s \, \delta \lb H_{\nuc}^{(NVT)} \lb \{x_n\}, \{p'_n\}, s, p_s \rb - E_{\text{ext}}\rb \\
    &\propto  \int d\{x_n\} \int d\{p_n\} e^{- H_{\nuc}^{(NVE)} \lb \{x_n\}, \{ p_n \} \rb/(k_BT)},
\end{split}
\label{partition_func}
\end{equation}
where $E_{\text{ext}}$ is the conserved energy of the extended system. 

To avoid confusion later on let us briefly mention here that the phase space version of an $NVT$ ensemble is usually considered to be a time-independent phase space density $\rho_{NVT}$ that has the form of a Boltzmann distribution:
\begin{equation}
    \rho_{NVT} \lb \{x_n\}, \{p_n\} \rb \propto e^{-E \lb \{x_n\}, \{p_n\} \rb/(k_b T)},
\end{equation}
where $E$ is the energy of the system (nuclei) for a given configuration. In the following, we will not assume that the phase space density is given by $\rho_{NVT} \lb \{x_n\}, \{p_n\} \rb$ when we talk about simulations in the $NVT$ ensemble. Rather, we refer to constant temperature dynamics of a generic time-dependent phase space density obtained by evolving the joined probability density of the system and heat bath according to Eq.~\eqref{ev_rho} and then integrating out the heat bath. However, if the dynamics of the extended system are ergodic, then we can mimic the behavior of $\rho_{NVT} \lb \{x_n\}, \{p_n\} \rb$ in the sense that we can estimate thermodynamic properties such as the free energy of the system via coherent time averaging as explained in more detail in Section~\ref{sec:free_energy}.

As with the $NVE$ ensemble, we need to discretize the (now $(6N+2)$-dimensional) phase space to simulate the time evolution of the phase space density $\rho \lb  \{x_n\}, \{p'_n\}, s, p_s\rb$ according to Eq.~\eqref{ev_rho} on a quantum computer. We do so by restricting each position and virtual momentum component of the nuclei as well as the bath variables $s$ and $p_s$ to a finite set of
\begin{align}
    g_x &= \frac{x_{\max}}{h_x} \in \mathbb{N} \\
    g_{p'} &:= \frac{p'_{\max}}{h_{p'}} \in \mathbb{N} \\
    g_s &:= \frac{s_{\max}}{h_s} \in \mathbb{N} \\
    g_{p_s} &:= \frac{p_{s, \max}}{h_{p_s}} \in \mathbb{N}
\end{align}
values, respectively, where $x_{\max}$ is the maximum attainable value of $x_{n,j}$ and $h_x$ is the position grid spacing as before. Similarly, $p'_{\max}$, $s_{\max}$ and $p_{s, \max}$ are the maximum attainable values of $p'_{n,j}$, $s$ and $p_s$ and $h_{p'}$, $h_{s}$ and $h_{p_s}$ are the respective grid spacings.
Since we consider a finite simulation box, we also need to specify the boundary conditions. For simplicity, we again choose periodic boundary conditions.

The computational basis states in the $NVT$ ensemble are of the form
\begin{equation}
    \ket{\{\hax_{n,j}\}, \{\hap'_{n,j}\}, s, p_s} := \bigotimes_{n, j} \Big( \ket{\hax_{n,j}} \otimes \ket{\hap'_{n,j}} \Big) \otimes \ket{\has} \otimes \ket{\hap_{s}} \, ,
\label{NVT_basis_states}
\end{equation}
where $\hax_{n,j} \in [g_x]$, $\hap'_{n,j} \in [g_{p'}]$, $\has \in [g_s]$ and $\hap_{s} \in [g_{p_s}]$ are integers such that $x_{n,j} = \hax_{n,j} h_x - x_{\max}/2$, $p'_{n,j} = \hap'_{n,j} h_{p'} - p'_{\max}/2$, $s = \has h_s$ and $p_{s} = \hap_{s} h_{p_s} - p_{s, \max}/2$.

We again employ the Koopman-von Neumann formalism to encode the (discretized) phase space density in a quantum state on a quantum computer.

The discretized Liouvillian in the $NVT$ ensemble is then defined as follows:
\begin{defn}[Discretized Liouvillian operator for BO molecular dynamics in the $NVT$ ensemble]
    Let $H_{\nuc}^{(NVT)}$ be the $NVT$ Hamiltonian as defined in Eq.~\eqref{H_NVT}. The discretized Liouvillian for simulations in the $NVT$ ensemble is given by
    \begin{equation}
        L_{NVT} := -i\sum_{n=1}^{N} \sum_{j=1}^{3} \lb  D_{x_{n,j}} \otimes \frac{\partial H_{\nuc}^{(NVT)}}{\partial {p'_{n,j}}} - \frac{\partial H_{\nuc}^{(NVT)}}{\partial {x_{n,j}}} \otimes D_{p'_{n,j}} \rb -i \lb  D_{s} \otimes \frac{\partial H_{\nuc}^{(NVT)}}{\partial_{p_{s}}} - \frac{\partial H_{\nuc}^{(NVT)}}{\partial_{s}} \otimes D_{p_s} \rb,
    \end{equation}
     where 
    \begin{align}
        \frac{\partial H_{\nuc}^{(NVT)}}{\partial {p'_{n,j}}} &= \sum_{\hap'_{n,j}} \sum_{\has} \frac{p'_{n,j}}{m_n \lb s + s_{\min} \rb^2} \ketbra{{\hap'_{n,j}}}{{\hap'_{n,j}}} \otimes \ketbra{\has}{\has} \\
        \frac{\partial H_{\nuc}^{(NVT)}}{\partial {x_{n,j}}} &= \sum_{n' \neq n} \sum_{\hax_{n}} \sum_{\hax_{n'}} \frac{-Z_n Z_{n'}}{\lb \norm{x_n - x_{n'}}^2 + \Delta^2 \rb^{3/2}} (x_{n,j} - x_{n',j}) \ketbra{{\hax_n}}{{\hax_n}} \otimes \ketbra{{\hax_{n'}}}{{\hax_{n'}}} + \Del \\
        \frac{\partial H_{\nuc}^{(NVT)}}{\partial p_{s}} &= \sum_{\hap_{s}} \frac{p_{s}}{Q} \ketbra{{\hap_{s}}}{\hap_{s}} \\
        \frac{\partial H_{\nuc}^{(NVT)}}{\partial {s}} &= -\sum_{\hap_{n,j}} \sum_{\has} \frac{2 {p'}_{n,j}^2}{m_{n} \lb s + s_{\min} \rb^3} \ketbra{{\hap_{n,j}}}{{\hap_{n,j}}} \otimes \ketbra{{\has}}{\has} + \sum_{\has} \frac{N_f k_B T}{s + s_{\min}} \ketbra{{\has}}{\has}
    \end{align}
    are now diagonal matrices of dimension $\eta := g_x^{3N}g_{p'}^{3N}g_s g_{p_s}$. As with the $NVE$ Liouvillian, $\Delta$ is a gap parameter introduced to regularize the Coulomb interaction and to avoid infinities in the simulation. The bath variable cutoff $s_{\min}$ is introduced for the same reason.
    $D_{x_{n,j}}$ is a discrete derivative operator of order $2d_x$, $D_{p'_{n,j}}$ is a discrete derivative operator of order $2d_{p'}$, $D_{s}$ is a discrete derivative operator of order $2d_s$ and $D_{p_s}$ is a discrete derivative operator of order $2d_{p_s}$. They are constructed according to Definition \ref{def:discrete_derivative}. $\Del$ is again the finite difference approximation to $\frac{\partial E_{\tel}}{\partial {x_{n,j}}}$.
    Note that we only show the quantum registers that are acted on in a nontrivial manner.
\label{def:NVT_liouvillian}
\end{defn}

In Appendix \ref{app:overall_Liouvillian} we show that the spectral norm of of $L_{NVT}$ is upper bounded by $\mu_{NVT}$ which is defined as follows:
\begin{equation}
\begin{split}
    \mu_{NVT} &:= 3N \frac{p_{\text{max}}}{m_{\text{min}} s_{\text{min}}^2} \frac{2 \lb \ln{d_x} + 1 \rb}{h_x} + 6N^2 \frac{2 Z_{\text{max}}^2 x_{\text{max}}}{\Delta^3} \frac{2 \lb \ln{d_p} + 1 \rb}{h_p} + 3N \lambda \frac{2 \ln{(d_e+1)}}{h_x h_p} \\
    & \quad + \frac{p_{s, \text{max}}}{Q} \frac{2 \lb \ln{d_s} + 1 \rb}{h_s} + 3N \frac{2 p_{\text{max}}^2}{m_{\text{min}} s_{\text{min}}^3} \frac{2 \lb \ln{d_{p_s}} + 1 \rb}{h_{p_s}} + \frac{N_f k_B T}{s_{\text{min}}} \frac{2 \lb \ln{d_{p_s}} + 1 \rb}{h_{p_s}}.
\end{split}
\label{mu_NVT}
\end{equation}

\section{Main results}\label{sec:main}

Our main result is an efficient quantum algorithm for simulating the time evolution of the discretized phase space density of $N$ nuclei within the Born-Oppenheimer approximation. 
The formal problem can be stated as follows:
\begin{prob}[Simulating Liouvillian dynamics within the Born-Oppenheimer approximation]
    Let $L \in \{L_{NVE}, L_{NVT} \}$ be the discretized Liouvillian governing the dynamics of the discretized phase space density associated with $N$ classical nuclei in the $NVE$ ensemble (Definition \ref{def:NVE_liouvillian}) or the $NVT$ ensemble (Definition \ref{def:NVT_liouvillian}).
    Given a quantum state $\ket{\rho_0}$ encoding the initial discretized phase space density, output a quantum state that is $\epsilon$-close in $\ell^2$ distance to $\ket{\rho_t} := e^{-iL t}\ket{\rho_0}$.
\label{prob:sim}
\end{prob}

Our algorithm requires access to an initial electronic state preparation oracle $\widetilde{U}_I$ whose precise definition is given later in Definition \ref{def:state_prep}. The main feature of $\widetilde{U}_I$ is that it prepares an initial electronic state $\ket{\phi_0 \{ x_n \}}$ that has nontrivial overlap with the ground state $\ket{\widetilde{\psi}_0 \{ x_n \}}$ of a discretized version of the electronic Hamiltonian from Eq.~\eqref{H_el}. To be more specific, let $0 < \widetilde{\delta} \leq 1$. Then
\begin{equation}
    \widetilde{U}_I \ket{\{ \hax_n \}} \ket{0} = \ket{\{ \hax_n \}} \ket{\phi_0 \{ x_n \}},
\end{equation}
where $\left| \braket{\widetilde{\psi}_0 \{ x_n \}}{\phi_0 \{ x_n \}} \right| \geq \widetilde{\delta}$ for all nuclear configurations visited during the simulation. In other words, $\widetilde{\delta}$ is a lower bound on the overlap of the initial electronic state with the true electronic ground state over all nuclear grid points associated with a nonzero amplitude at some point during the simulation.

Unless stated otherwise, we use ``$\log$'' to refer to the binary logarithm. Furthermore, we write $O \lb z^{o\lb 1 \rb} \rb$ to indicate sub-polynomial scaling with respect to the parameter $z$ and we use the $\widetilde{O}$ notation to hide subdominant logarithmic factors.
With this in mind, we present our first result below.
\begin{restatable}[Complexity of Born-Oppenheimer Liouvillian simulation]{thm}{liouvillian}
    There exists a quantum algorithm that solves Problem \ref{prob:sim} with success probability $\geq 1 -\xi$ using
    \begin{equation*}
        \widetilde{O} \lb \frac{N_{\text{tot}} \, d \, \mu^{2+o(1)} t^{1+o(1)}}{\widetilde{\gamma} \, \widetilde{\delta} \, \epsilon^{o(1)}} \log \lb \frac{1}{\xi} \rb \rb
    \end{equation*}
    Toffoli gates, where $d$ is the maximum order of the finite difference schemes used, $\mu \in \{\mu_{NVE}, \mu_{NVT} \}$ is an upper bound on the spectral norm of the discretized Liouvillian $L \in \{L_{NVE}, L_{NVT} \}$ and $\widetilde{\gamma}$ is a lower bound on the spectral gap of the discretized electronic Hamiltonian over all phase space grid points that are associated with a nonzero amplitude at some point during the simulation.
    Additionally,
    \begin{equation*}
        \widetilde{O} \lb \frac{N d \mu^{1+o(1)} t^{1+o(1)}}{\widetilde{\delta} \, \epsilon^{o(1)}} \log \lb \frac{1}{\xi} \rb \rb
    \end{equation*}
    queries to the initial electronic state preparation oracle $\widetilde{U}_I$ are needed.
\label{thm:complexity_liouvillian}
\end{restatable}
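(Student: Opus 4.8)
The plan is to solve Problem~\ref{prob:sim} by writing $e^{-iLt}$ as a high-order Suzuki--Trotter product formula over the elementary terms of the discretized Liouvillian $L\in\{L_{NVE},L_{NVT}\}$, and implementing each elementary exponential by combining a quantum Fourier transform --- which diagonalizes the banded finite-difference derivative operators under the periodic boundary conditions --- with a phase-kickback evaluation of the electronic forces. Concretely, $L$ is a sum of $O(N)$ Hermitian terms of the form $-i D_{x_{n,j}}\otimes A_{n,j}$ and $-i B_{n,j}\otimes D_{p_{n,j}}$, with $A_{n,j},B_{n,j}$ the diagonal force operators of Definitions~\ref{def:NVE_liouvillian}--\ref{def:NVT_liouvillian}; a $p$-th-order formula approximates $e^{-iLt}$ to $\ell^2$-error $\epsilon$ using $r=\widetilde{O}\lb(\mu t)^{1+1/p}(\mu t/\epsilon)^{1/p}\rb$ repetitions, and choosing $p\sim\sqrt{\log(\mu t/\epsilon)}$ collapses this to the $\mu^{1+o(1)}t^{1+o(1)}/\epsilon^{o(1)}$ step count reflected in the statement.

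For each elementary exponential $e^{-i(-iD_{x_{n,j}}\otimes A_{n,j})\tau}$ I would apply the quantum Fourier transform to the position register $x_{n,j}$, so that $D_{x_{n,j}}$ becomes diagonal with known imaginary symbol values $i\omega_m$; the exponential is then a diagonal unitary with phases $e^{-i\omega_m a\tau}$, where $a$ ranges over the eigenvalues of $A_{n,j}$. The classical Coulomb, kinetic, and thermostat contributions to $a$ are computed into an ancilla by reversible arithmetic at $\mathrm{poly}\log$ cost. The electronic contribution --- the finite-difference derivative $\Del$ of $E_{\tel}$ --- is the one new ingredient: writing $\Del=\tfrac1{h_x}\sum_k c_{d_e,k}E_{\tel}(\dots,x_{n,j}+kh_x)$, each summand's phase $e^{-i\theta_k E_{\tel}}$ is obtained by preparing $\ket{\phi_0\{x_n\}}$ with the oracle $U_I$ and running a dynamical simulation of the first-quantized electronic Hamiltonian $H_{\tel}$ (Lemma~\ref{lem:block-encode_Hel},~\cite{Babbush2019, Su2021first_quant_sim}) so that the ground-state energy is phase-kicked back; since $H_{\tel}$ has gap $\ge\widetilde{\gamma}$ on all visited nuclear configurations and $|\braket{\widetilde{\psi}_0\{x_n\}}{\phi_0\{x_n\}}|\ge\delta$, a spectral filter of resolution $\sim\widetilde{\gamma}$ built from the same simulation isolates the Born--Oppenheimer surface, and $\widetilde{O}(1/\delta)$ rounds of amplitude amplification --- each re-invoking $U_I$ --- raise the ground-state weight to a constant. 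The factor $d$ in the statement comes from the $O(d_e)$ shifted configurations entering $\Del$ (and the $2d$-banded stencils generally), $N_{\text{tot}}$ from the gate cost of simulating $\tn$ electrons among $N$ nuclei in first quantization, and the extra factor of $\mu$ in the Toffoli count relative to the oracle count from running the electronic dynamical simulation to the per-step precision $\sim\epsilon/r$, which scales with $\lambda$ and hence --- after absorbing the $1/(h_x h_p)$ grid prefactors --- with $\mu$.

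Collecting costs: one Trotter step touches $O(N)$ elementary exponentials; the $3N$ position-derivative ones each invoke the electronic subroutine $O(d_e)$ times, at a Toffoli cost per invocation of $\widetilde{O}(N_{\text{tot}}\mu^{1+o(1)}t^{o(1)}/(\widetilde{\gamma}\delta\epsilon^{o(1)}))$ and a $U_I$-query cost of $\widetilde{O}(1/\delta)$; multiplying by $r=\widetilde{O}(\mu^{1+o(1)}t^{1+o(1)}/\epsilon^{o(1)})$ steps gives the stated Toffoli count $\widetilde{O}(N_{\text{tot}}d\mu^{2+o(1)}t^{1+o(1)}/(\widetilde{\gamma}\delta\epsilon^{o(1)}))$ and oracle count $\widetilde{O}(Nd\mu^{1+o(1)}t^{1+o(1)}/(\delta\epsilon^{o(1)}))$. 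Finally I would boost the success probability --- failures arise only from the probabilistic ground-state filtering and amplification --- from a constant to $1-\xi$ by running $O(\log(1/\xi))$ independent copies and majority-voting (or, where phases rather than outcomes are needed, by suppressing the per-step failure probability to $O(\xi/r)$ at $O(\log(r/\xi))$ overhead and union-bounding), contributing the $\log(1/\xi)$ factor.

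I expect the main obstacle to be the combined error analysis. The product-formula error of the idealized step $e^{-iL\tau}$ must be bounded by a high-order commutator estimate for the specific banded-derivative $\otimes$ diagonal-force structure of $L$ so as to reproduce the stated $\mu$- and $t$-scaling, and \emph{simultaneously} the deviation of the implemented step from $e^{-iL\tau}$ --- caused by finite-precision, only probabilistically-successful electronic force evaluations that can silently return an excited-state energy whenever the ground state is misidentified --- must be controlled and propagated through all $r$ steps. The key technical points are (i) that the gap lower bound $\widetilde{\gamma}$, taken over exactly the grid points carrying nonzero amplitude during the evolution, together with the overlap bound $\delta$, certify that the filtered electronic phase is the correct Born--Oppenheimer value to the target precision; (ii) that the finite-difference error of $\Del$ relative to $\partial E_{\tel}/\partial x_{n,j}$ (Lemma~\ref{lem:finite_diff}) is already subsumed in $L$, so Problem~\ref{prob:sim} only requires simulating the stated $L$; and (iii) a union bound over the $r$ steps and all amplification rounds. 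Reproducing the explicit norm bounds $\mu_{NVE},\mu_{NVT}$ of Eqs.~\eqref{mu_NVE}--\eqref{mu_NVT} from the finite-difference coefficients of Definition~\ref{def:discrete_derivative} (the $2(\ln d+1)$ factors), and checking that the sub-polynomial overheads of the first-quantized electronic simulation compose cleanly with the high-order product formula to give the $o(1)$ exponents, is then routine bookkeeping.
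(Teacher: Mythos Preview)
Your approach is correct and shares the crucial ingredient with the paper --- the electronic force $\Del$ is never measured but phase-kicked back onto the nuclear register by preparing (via $U_I$, spectral filtering at resolution $\widetilde\gamma$, and $\widetilde O(1/\delta)$ rounds of amplification) the electronic ground state and then running a controlled simulation of $H_{\tel}$; this, together with a high-order product formula and the balancing $k\sim\sqrt{\log(\mu t/\epsilon)}$, is exactly what delivers the $\epsilon^{o(1)}$ scaling. Where you differ is in the decomposition. You Trotterize $L$ into its $O(N)$ elementary terms and diagonalize every finite-difference operator by QFT plus reversible arithmetic, whereas the paper splits only into \emph{two} pieces, $L=L_{\tcla}+L_{\tel}$: the classical piece is simulated in one shot per Trotter leg by constructing a single block-encoding (Lemmas~\ref{lem:bounds_L_class_NVE}--\ref{lem:bounds_L_class_NVT}) and applying QSVT-based Hamiltonian simulation (Lemma~\ref{lem:rob_block-Ham_sim}); the electronic piece $e^{-iL_{\tel}\tau}$ is then factored \emph{exactly} as a product over the $3N$ pairwise-commuting terms $\Del\otimes D^1_{p_{n,j}}$ (Definition~\ref{def:el_liouvillian}), so no Trotter error is incurred among them. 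The two-term split keeps the nested-commutator analysis short (Lemma~\ref{lem:bound_alpha_c}) and places the classical Liouvillian in the optimal qubitization regime, but your finer split would yield the same asymptotics since the $5^k$ and $N$ overheads are absorbed into the $o(1)$ exponents.

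Two small slips to fix when you write this up: the electronic force $\Del$ sits in the \emph{momentum}-derivative leg $-i(\partial H/\partial x_{n,j})\otimes D_{p_{n,j}}$, not the position-derivative one; and majority-voting independent runs cannot boost success for a coherent state-preparation task --- the paper uses fixed-point amplitude amplification within each Trotter stage together with a union bound over the $O(r\cdot 5^k)$ stages, which is precisely your parenthetical alternative.
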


Theorem \ref{thm:complexity_liouvillian} is proved in Appendix \ref{app:overall_Liouvillian}.
In comparison to gradient-based approaches~\cite{Obrien2022}, which, in the worst case scale exponentially with the evolution time, see Appendix~\ref{app:EulerCost}, our approach scale polynomially in time $t$ and sub-polynomially with error $\epsilon$.

We also show how to use our Liouvillian simulation algorithm to estimate the free energy of the nuclei in the $NVT$ ensemble assuming the dynamics of the extended system are ergodic.
Usually, we are interested in the free energy when the system reaches equilibrium. The thermostat allows us to reach thermal equilibrium and then estimate thermodynamic properties such as the free energy of the classical system.
A conceptual challenge that arises in the computation of the free energy is the definition of macrostates in the probability distribution. Specifically, we envision these microstates to be hypercubes in phase space and define the probability of finding the entire system within this hypercube to be $p_i$.  With this in mind, the definition of the discrete free energy is given below.
\begin{defn}[Free energy]
    Let $p_i$ denote the probability of a classical system being in the $i$-th (discrete) microstate and let $E_i$ be the energy associated with the $i$-th microstate. Let
    \begin{equation}
        \mathcal{S}_G := -k_B \sum_j p_j \ln{p_j},
    \end{equation}
    be the Gibbs entropy of the system, where $k_B$ is the Boltzmann constant.
    Furthermore, let
    \begin{equation}
        \mathcal{U} := \sum_j p_j E_j
    \end{equation}
   be the internal energy of the system. The free energy $F$ of the system is then given by
   \begin{equation}
       \mathcal{F} := \mathcal{U} - T \mathcal{S}_G,
   \end{equation}
   where $T$ is the temperature of the system.
\label{def:free_energy}
\end{defn}

In our case, the energies $\{E_i\}$ are the eigenvalues of $H_{\text{nuc}} := H_{\text{kin}} + H_{\text{pot}} + H_{E_{\tel}}$ where
\begin{align}
    H_{\text{kin}} &:= \sum_{n,j} \sum_{\hap'_{n,j}} \sum_{\has} \frac{{p'}_{n,j}^2}{m_n (s + s_{\min})^2} \ketbra{{\hap'_{n,j}}}{{\hap'_{n,j}}} \otimes \ketbra{\has}{\has} \label{H_kin} \\
    H_{\text{pot}} &:= \sum_{n' \neq n} \sum_{\hax_{n}} \sum_{\hax_{n'}} \frac{Z_n Z_{n'}}{\lb \norm{x_n - x_{n'}}^2 + \Delta^2 \rb^{1/2}}\ketbra{{\hax_n}}{{\hax_n}} \otimes \ketbra{{\hax_{n'}}}{{\hax_{n'}}} \label{H_pot} \\
    H_{E_{\tel}} &:= \sum_{\{ \hax_n \}} E_{\tel} \lb \{ x_n \} \rb \ketbra{\{ \hax_n \}}{\{ \hax_n \}}. \label{H_E_el}
\end{align}

In Appendix \ref{app:free} we show that the spectral norm of of $H_{\text{nuc}}$ is upper bounded by $\alpha_{\nuc}$ which is defined as follows:
\begin{equation}
    \alpha_{\nuc} := 3 N \frac{{p'}_{\max}^2}{m_{\min} s_{\min}^2} + N^2 \frac{Z_{\max}^2}{\Delta} + \lambda.
\end{equation}

We now give a formal definition of the free energy estimation problem.

\begin{prob}[Estimating the free energy of a phase space distribution]
    Let $L_{NVT}$ be the discretized Liouvillian operator in the $NVT$ ensemble as in Definition \ref{def:NVT_liouvillian}. Given a quantum state $\ket{\rho_0}$ encoding the initial discretized phase space density of the system and the heat bath, output an $\epsilon$-precise estimate of the free energy $\mathcal{F}$ of the system after time $t$, i.e.~estimate $\mathcal{F}$ associated with 
    \begin{equation}
        \rho_{\text{sys}}(t) := \text{Tr}_{\text{bath}} \lb  e^{-iL_{NVT} t}\ketbra{\rho_0}{\rho_0} e^{iL_{NVT}} \rb.
    \end{equation}
\label{prob:free_energy}
\end{prob}

Note that the above problem description allows the free energy to be a time-dependent quantity. If the dynamics of the extended system (nuclei + heat bath) are ergodic, then we can obtain an estimate of the equilibrium free energy via coherent time averaging; see Section~\ref{sec:free_energy} for more details.

The next theorem provides upper bounds on the complexity of estimating the free energy associated with the nuclear phase space density.

\begin{restatable}[Estimation of the free energy]{thm}{free}
    Let $\eta := g_x^{3N}g_{p'}^{3N}g_s g_{p_s}$ be the number of grid points of the discretized phase space and assume that $\log \lb \eta^2/\epsilon \rb \leq \eta$. Then there exists a quantum algorithm that solves Problem \ref{prob:free_energy} with success probability at least $1 - \xi$ using
    \begin{equation*}
        \widetilde{O} \lb \lb \frac{\eta^{o(1)} N_{\text{tot}} d \mu_{NVT}^{2+o(1)} t^{1+o(1)}}{\widetilde{\gamma} \, \widetilde{\delta} \, \epsilon^{1+o(1)}} \lb \alpha_{\nuc} + \frac{\eta (k_b T)^{1.5 + o(1)}}{\sqrt{\epsilon}} \rb + \frac{N_{\text{tot}} \alpha_{\text{nuc}} \lambda}{\epsilon^2} \rb \log \lb \frac{1}{\xi} \rb  \rb.
    \end{equation*}
    Toffoli gates.
    Additionally,
    \begin{equation*}
        \widetilde{O} \lb \frac{ \eta^{o(1)} N d \mu_{NVT}^{1+o(1)} t^{1+o(1)}}{\widetilde{\delta} \, \epsilon^{1+o(1)}} \log \lb \frac{1}{\xi} \rb \lb \alpha_{\nuc} + \frac{\eta (k_b T)^{1.5 + o(1)}}{\sqrt{\epsilon}} \rb \rb.
    \end{equation*}
    queries to the initial electronic state preparation oracle $\widetilde{U}_I$ are needed.
\label{thm:free_energy}
\end{restatable}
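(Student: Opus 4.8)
The plan is to assemble the free-energy estimate from three separately-estimated pieces, following the decomposition $\mathcal{F} = \mathcal{U} - T\mathcal{S}_G$ with $\mathcal{U} = \mathcal{U}_{\text{kin}} + \mathcal{U}_{\text{pot}} + \mathcal{U}_{E_{\tel}}$ coming from the three terms $H_{\text{kin}}, H_{\text{pot}}, H_{E_{\tel}}$ of Eqs.~\eqref{H_kin}--\eqref{H_E_el}. First I would invoke Theorem \ref{thm:complexity_liouvillian} to prepare (with success probability $\geq 1-\xi/\text{poly}$) a copy of the state $\ket{\rho_t} = e^{-iL_{NVT}t}\ket{\rho_0}$, using the stated Toffoli count and number of $U_I$ queries; every subsequent estimation subroutine will consume fresh copies of this state, so the total cost picks up a multiplicative factor equal to the number of copies needed, which drives the $1/\epsilon$-type overheads in the final bound. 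The key observation is that the amplitudes of $\ket{\rho_t}$ are the square roots of the discretized microstate probabilities $p_i$, so that $\langle\rho_t| H |\rho_t\rangle = \sum_i p_i E_i^{(H)}$ is exactly the internal-energy contribution of a diagonal observable $H$, and $-\sum_i p_i \ln p_i$ is the Gibbs entropy.

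For the internal-energy terms I would use amplitude-estimation-based expectation-value estimation: block-encode each of $H_{\text{kin}}, H_{\text{pot}}, H_{E_{\tel}}$ (the first two are diagonal and arithmetic, so cheap; the third requires running the electronic-structure eigenvalue subroutine that already appears in the proof of Theorem \ref{thm:complexity_liouvillian}, contributing the $\lambda$- and $\widetilde\gamma$-dependent factors), normalize by the spectral-norm bounds $\alpha_{\nuc}$ (and $\lambda$ for the electronic part), and estimate $\langle\rho_t|H|\rho_t\rangle$ to additive error $\epsilon$ with $\widetilde O(\alpha/\epsilon)$ coherent calls to the state-preparation-plus-block-encoding unitary — this is where the $\alpha_{\nuc}/\epsilon$ and $N_{\text{tot}}\alpha_{\nuc}\lambda/\epsilon^2$ contributions enter (the extra $1/\epsilon$ and $N_{\text{tot}}\lambda$ on the last term reflecting that each query to the $E_{\tel}$ block-encoding itself costs a ground-state-energy estimation of precision $\sim\epsilon$). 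For the entropy term $\mathcal{S}_G = -k_B\sum_j p_j\ln p_j$, the plan is to write it as an expectation $\sum_j p_j \cdot (-k_B\ln p_j)$ and estimate it by the standard trick of (i) coherently computing an approximation to $-\ln p_j$ into an ancilla register — this needs an estimate of the amplitude $\sqrt{p_j}$ itself via amplitude estimation inside the oracle, hence a nested/iterative structure and the $\log(\eta^2/\epsilon)\le\eta$ technical assumption to control the bit-complexity of representing tiny probabilities — and (ii) running amplitude/expectation estimation on the resulting function; the factor $\eta (k_BT)^{1.5+o(1)}/\sqrt\epsilon$ is the normalization-times-precision cost of this entropy estimator, with the $\eta$ coming from the worst-case $|\ln p_j|\sim\ln\eta$ range (absorbed into $\widetilde O$) combined with the number of grid points over which the sum ranges, and the $(k_BT)^{1.5}/\sqrt\epsilon$ from the energy scale and precision of the inner amplitude estimation. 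Finally I would combine the three estimates by triangle inequality, allocate $\epsilon/4$ to each and to the state-preparation error, boost success probability to $1-\xi$ by $O(\log(1/\xi))$ repetition and majority voting, and collect the Toffoli and $U_I$-query totals by multiplying the per-copy costs of Theorem \ref{thm:complexity_liouvillian} by the number of copies each subroutine demands.

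The main obstacle I expect is the entropy estimation: unlike the internal energy, $-\ln p_j$ is an unbounded, singular function of the amplitude, so one cannot simply block-encode a fixed diagonal observable. Making this rigorous requires (a) a polynomial (or QSVT-based) approximation to $\ln$ on the interval $[p_{\min}, 1]$ with controlled degree — forcing the assumption $\log(\eta^2/\epsilon)\le\eta$ so that $p_{\min}$ is not too small and the degree stays polylogarithmic — (b) carefully propagating the error of the inner amplitude estimate of $\sqrt{p_j}$ through the (Lipschitz-unbounded) logarithm, and (c) arguing that the contribution of microstates with exponentially small $p_j$ to the total entropy is negligible at the target precision. Getting the $\eta (k_BT)^{1.5+o(1)}/\sqrt\epsilon$ scaling out of this nested estimation, rather than a worse power of $1/\epsilon$, is the delicate quantitative point; everything else (preparing $\ket{\rho_t}$, block-encoding the diagonal energy terms, standard amplitude estimation, union-bounding the failure probabilities) is routine given Theorem \ref{thm:complexity_liouvillian} and the norm bound $\alpha_{\nuc}$.
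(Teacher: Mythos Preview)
Your internal-energy piece is essentially correct and matches the paper: block-encode $H_{\text{kin}}$, $H_{\text{pot}}$, $H_{E_{\tel}}$ separately, estimate each $\langle\rho_t|H_i|\rho_t\rangle$ via a Hadamard test plus amplitude estimation to additive error $\Theta(\epsilon)$, and multiply by the per-copy cost of Theorem~\ref{thm:complexity_liouvillian}. One minor correction: the paper builds the $H_{E_{\tel}}$ block-encoding via quantum phase estimation on the qubitized electronic Hamiltonian rather than via the phase-kickback trick used inside the Liouvillian simulation; this is what generates the additive $N_{\text{tot}}\alpha_{\nuc}\lambda/\epsilon^2$ term.

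The entropy piece has a real gap. Your ``key observation'' that the amplitudes of $\ket{\rho_t}$ equal $\sqrt{p_i}$ holds only for the \emph{extended} phase-space probabilities indexed by $(x,p',s,p_s)$. The Gibbs entropy in Definition~\ref{def:free_energy}, however, is over the \emph{system} microstates $(x,p',s)$, with marginals $p_{x,p',s}=\sum_{p_s}|c_{x,p',s,p_s}(t)|^2$. Because $\ln$ is nonlinear, the entropy of the extended distribution is not the entropy of the marginal, so your plan to coherently read off $\sqrt{p_j}$ and compute $\ln p_j$ in an ancilla cannot produce the correct quantity without first performing the marginalisation --- which your proposal does not address. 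The paper's route is quite different and sidesteps the nested-amplitude-estimation difficulty you flag: after duplicating the $\ket{\has}$ register and applying controlled phase gradients to dephase the nuclear registers, the augmented state $\ket{\Psi'_t}$ becomes a \emph{purification} of a diagonal density matrix $\rho'_{\text{sys}}$ whose eigenvalues are exactly the marginals $p_{x,p',s}$. Hence the Gibbs entropy equals the von Neumann entropy of $\rho'_{\text{sys}}$, and one simply invokes the known $\widetilde{O}(\eta/\epsilon^{1.5})$ purification-based von Neumann entropy estimator (QSVT with a degree-$\widetilde{O}(\sqrt{\eta/\epsilon})$ polynomial approximation to $\ln$, followed by amplitude estimation). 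After rescaling $\epsilon\to\epsilon/(k_BT)$ this is precisely the $\eta(k_BT)^{1.5+o(1)}/\sqrt{\epsilon}$ factor --- so the $\eta$ is the dimension entering the entropy estimator, not a sum over grid points as you suggest. Finally, the hypothesis $\log(\eta^2/\epsilon)\le\eta$ is not a bit-complexity assumption; it enters through the Fannes inequality when controlling how the Liouvillian-simulation error on $\ket{\rho_t}$ propagates to the von Neumann entropy of the traced-out density matrix.
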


Theorem \ref{thm:free_energy} is proved in Appendix \ref{app:free}.
Although the scaling with $\eta$ may seem challenging at first as it implies, in the worst case, exponential scaling with the number of nuclei, this is actually a reasonable expectation because estimating the free energy is an NP-hard problem~\cite{Sorin2000}. However, for many practical problems, the phase space can be coarse-grained, which reduces the complexity considerably and makes the problem more manageable.

\section{Overview of the algorithm}
\label{sec:algorithm}

As mentioned before, the (discretized) Liouvillian $L$ is Hermitian, meaning that the time evolution operator $e^{-iLt}$ of the (discretized) phase space density is unitary. Hence, we can use Hamiltonian simulation algorithms to implement $e^{-iLt}$ on a quantum computer~\cite{Suzuki1991product_formula, Berry2006trotter, Gilyen2019qsvt, Childs2021trotter_error}. The main idea is to split the overall Liouvillian $L = L_{\tcla} + L_{\tel}$ into a classical, electron-independent part and an electronic part as defined below.

\begin{defn}[Electronic Liouvillian]
     Let $\Del$ be the finite difference approximation to $\frac{\partial E_{\tel}}{\partial {x_{n,j}}}$ as in Definition \ref{def:NVE_liouvillian}.
     In the $NVE$ ensemble, the electronic Liouvillian acting on the nuclei is given by
     \begin{equation}
        L_{\tel}^{(NVE)} := i\sum_{n=1}^{N} \sum_{j=1}^{3} \Del \otimes D_{p_{n,j}}^{1},
    \end{equation}
    where $D_{p_{n,j}}^{1}$ is a 2nd-order discrete derivative approximation to $\partial_{p_{n,j}}$.
    In the $NVT$ ensemble, the electronic Liouvillian acting on the nuclei is given by
    \begin{equation}
        L_{\tel}^{(NVT)} := i\sum_{n=1}^{N} \sum_{j=1}^{3} \Del \otimes D_{p'_{n,j}}^{1},
    \end{equation}
    where $D_{p'_{n,j}}^{1}$ is a 2nd-order discrete derivative approximation to $\partial_{p'_{n,j}}$.
\label{def:el_liouvillian}
\end{defn}
The reason for restricting $D_{p_{n,j}}$ and $D_{p'_{n,j}}$ to be 2nd-order discrete derivatives in the above definition is related to their implementation as explained in more detail in Appendix \ref{app:electronic}. 

\begin{defn}[Classical Liouvillian]
    Let $L \in \{L_{NVE}, L_{NVT} \}$ be the discretized Liouvillian in the $NVE$ ensemble (Definition \ref{def:NVE_liouvillian}) or the $NVT$ ensemble (Definition \ref{def:NVT_liouvillian}). Let $L_{\tel} \in \{L_{\tel}^{(NVE)}, L_{\tel}^{(NVT)} \}$ be the electronic Liouvillian from Definition \ref{def:el_liouvillian}.
    The classical Liouvillian is then given by
    \begin{equation}
        L_{\tcla} := L - L_{\tel}.
    \end{equation}
\label{def:class_liouvillian}
\end{defn}

We simulate $U_{L_{\tcla}} := e^{-iL_{\tcla}t}$ and $U_{L_{\tel}} := e^{-iL_{\tel}t}$ separately and then recombine them using a $2k$-th order Trotter-Suzuki product formula~\cite{Suzuki1991product_formula, Berry2006trotter, Childs2021trotter_error}.

\begin{defn}[$2k$-th order Trotter-Suzuki product formula]
    Let $L = \sum_{\gamma=1}^{\Gamma} L_\gamma$ be an operator consisting of $\Gamma$ Hermitian summands and $t \ge 0$. Then the following recursion defines
    $\mathcal{S}_{2k}(t)$, the Trotter-Suzuki product formula of order $2k$:
     \begin{align}
        \mathcal{S}_{2}(t) &:= e^{L_1 \frac{t}{2}} \cdots e^{L_\Gamma \frac{t}{2}} e^{L_\Gamma \frac{t}{2}} \cdots e^{L_1 \frac{t}{2}}\\
        \mathcal{S}_{2k}(t) &:= \mathcal{S}^2_{2k-2}(u_k t) \mathcal{S}_{2k-2}((1-4u_k)t)\mathcal{S}^2_{2k-2}(u_k t),
    \end{align}
    where
    \begin{equation}
        \frac{1}{3} \le u_k := \frac{1}{\lb 4 - 4^{\frac{1}{2k-1}} \rb} \le \frac{1}{2} \hspace{0.4cm} \forall k \in \mathbb{N}, k \ge 2.
    \end{equation}
\label{def:2ktrotter}
\end{defn}

Using results from~\cite{Childs2021trotter_error}, we show in Appendix \ref{app:overall_Liouvillian} that the cost of our algorithm depends on the spectral norm of the nested commutator of $L_{\tcla}$ and $L_{\tel}$. We derive upper bounds $\mu'_{NVE}$ and $\mu'_{NVT}$ on the nested commutator in the $NVE$ and the $NVT$ ensemble which provide better scaling with respect to the number of nuclei $N$ than $\mu_{NVE}$ (Eq.~\eqref{mu_NVE}) and $\mu_{NVT}$ (Eq.~\eqref{mu_NVT}). 
However, we use the looser bounds $\mu_{NVE}$ and $\mu_{NVT}$ in the main theorems to keep the statements as simple as possible.

Our algorithm requires several different quantum registers. In the $NVE$ ensemble, we use two types of registers. First, we have a nuclear register for encoding the nuclear phase space density. The basis states of this register are given in Eq.~\eqref{NVE_basis_states}. Simulating $U_{L_{\tel}}$ requires a second register which we call the electronic register as it is used to encode the electronic wave function. This register can be treated like an ancilla register in the sense that it is only used to compute $\Del$ during the simulation of $U_{L_{\tel}}$. At the end of the algorithm, the electronic register is uncomputed.
In the $NVT$ ensemble, we have a third register for the bath variables $s$ and $p_s$. The computational basis states of the nuclear register together with the bath register are given in Eq.~\eqref{NVT_basis_states}.

Fig.~\ref{fig:algorithm_scheme} summarizes our quantum algorithm for $NVE$ and $NVT$ Liouvillian simulations. The corresponding pseudocode is presented in Algorithm \ref{alg:liouvillian}. The subroutines for evolving the phase space density under the classical Liouvillian $L_{\tcla}$ and the electronic Liouvillian $L_{\tel}$ are summarized in Algorithms \ref{alg:class_ev} and \ref{alg:el_ev}, respectively.

\begin{figure}[tb]
    \centering
    \resizebox{0.8\textwidth}{!}{
    \includegraphics{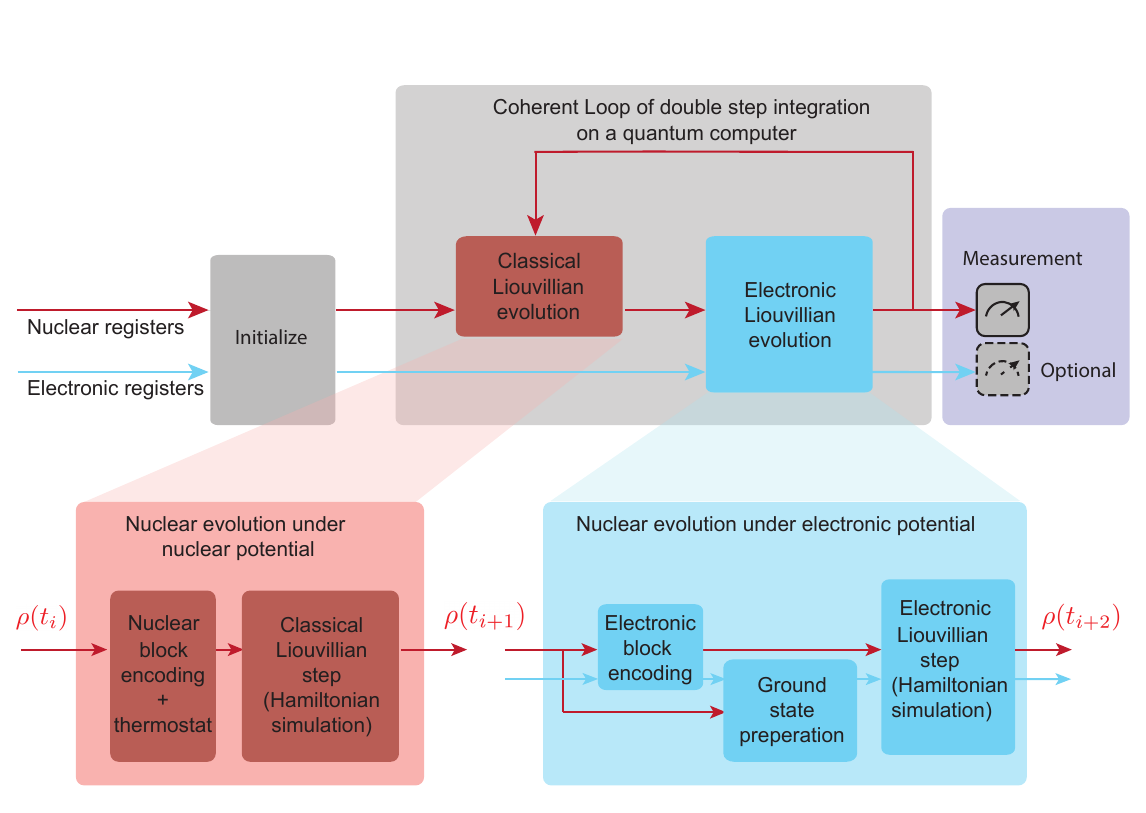} 
    }
    \caption{
    Liouvillian algorithm scheme. Red arrows represent the nuclear registers, while electronic registers are in blue. After initializing the registers into the chosen initial conditions, the time evolution is carried out by iterating a two-step integration. The double-step integration originates from alternating the classical nuclear Liouvillian evolution ($t_i +1$) with the electronic one ($t_i +2$). In the classical nuclear Liouvillian evolution, the classical nuclear block-encoding is used to implement the evolution under the classical NVE or NVT Liouvillians. The electronic Liouvillian evolution step takes care of the implementation of the electronic block-encodings, preparing the electronic ground state given the updated nuclear coordinates, and applying the nuclear evolution due to the electronic wavefunction contribution. After the required number of integration steps is achieved, one can decide whether to measure the output states, use them for other computations, or, for the nuclear register, use it for the estimation of the free energy.}
    \label{fig:algorithm_scheme}
\end{figure}

Let us now explain these subroutines in more detail.
To implement $U_{L_{\tcla}}$ we first construct a block-encoding of $L_{\tcla}$ according to the following definition:

\begin{defn}[Block-encoding (\cite{Gilyen2019qsvt}, Definition 24)]
    Let $A$ be an $s$-qubit operator, $\alpha \in \mathbb{R}$ a normalization constant, $\epsilon \in \mathbb{R}$ the allowable error and $a \in \mathbb{N}$ the number of ancilla qubits. Then we define that the $(s+a)$-qubit unitary $U$ is an $(\alpha, a, \epsilon)$-block-encoding of $A$, if
    \begin{equation}
        \norm{A - \alpha \lb \bra{0}^{\otimes a} \otimes \mathbb{1} \rb U \lb \ket{0}^{\otimes a} \otimes \mathbb{1} \rb} \le \epsilon.
    \end{equation}
\label{def:block-encoding}
\end{defn}
In other words, we embed $L_{\tcla}$ inside a larger unitary matrix. If $\lb \bra{0}^{\otimes a} \otimes \mathbb{1} \rb U \lb \ket{0}^{\otimes a} \otimes \mathbb{1} \rb$ is Hermitian then we call $U$ a Hermitian block-encoding.  

The following two lemmas show that we can block-encode $L_{\tcla}$ efficiently. Both lemmas are proved in Appendix \ref{app:L_class}.
\begin{restatable}[Block-encoding of the discretized classical $NVE$ Liouvillian]{lem}{LclassNVE}
     There exists an $(\alpha_{NVE}, a_{NVE}, \epsilon)$-block-encoding of the discretized classical Liouvillian $L_{\tcla}^{(NVE)}$ with normalization constant
    \begin{equation*}
        \alpha_{NVE} \in O \lb N\frac{p_{\text{max}}}{m_{\text{min}}} \frac{\ln d_x}{h_x} + N^2\frac{ Z_{\text{max}}^2 x_{\text{max}}}{\Delta^3} \frac{\ln d_p}{h_p} \rb
    \end{equation*}
    and a number of ancilla qubits
    \begin{equation*}
        a_{NVE} \in O \lb \log \lb \frac{\alpha_{NVE}}{\epsilon} \rb + \log d \rb
    \end{equation*}
    where $d := \max \{ d_x, d_p \}$.
    This block-encoding can be implemented using
    \begin{equation*}
        \widetilde{O} \lb N \log \lb \frac{g \alpha_{NVE}}{\epsilon} \rb + \log^{\log 3}{\lb \frac{\alpha_{NVE}}{\epsilon} \rb}  + d \log g \rb
    \end{equation*}
    Toffoli gates, where $g := \max \{ g_x, g_p \}$.
\label{lem:bounds_L_class_NVE}
\end{restatable}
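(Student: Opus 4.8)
The plan is to build the block-encoding of $L_{\tcla}^{(NVE)} = L_{NVE} - L_{\tel}^{(NVE)}$ by a linear combination of unitaries (LCU). The key structural observation is that, once the electronic piece $\Del$ has been moved into $L_{\tel}^{(NVE)}$, what remains is a sum of $\Theta(N)$ ``kinetic'' terms of the form $D_{x_{n,j}} \otimes \big(\text{diagonal in } \hap_{n,j}\big)$ and $\Theta(N^2)$ ``potential'' terms of the form $\big(\text{diagonal in } \hax_n,\hax_{n'}\big) \otimes D_{p_{n,j}}$, each a tensor product over disjoint registers of operators we already know how to block-encode.

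First I would block-encode the discrete derivative operators. By Definition~\ref{def:discrete_derivative} together with the periodic boundary conditions, $D_y = \frac{1}{h}\sum_{k=-d}^{d} c_{d,k} W_k$, where $W_k:\ket{\hat y}\mapsto\ket{\hat y - k}$ is a cyclic shift, hence unitary and implementable with $O(\log g)$ Toffoli gates (in-place addition of the constant $-k$ modulo $g$). This is already an LCU of $2d+1$ unitaries, so a $\lceil\log(2d+1)\rceil$-qubit $\prep/\sel$ pair gives a $\lb \norm{c_{d,\cdot}}_1/h,\, O(\log d),\, 0 \rb$-block-encoding of $D_y$, and by the logarithmic growth of the central finite-difference coefficients (Lemma~\ref{lem:finite_diff}) one has $\norm{c_{d,\cdot}}_1/h \in O(\ln d / h)$, which supplies the $\ln d_x/h_x$ and $\ln d_p/h_p$ factors; the total cost of all shift operators is $O(d\log g)$. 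Next I would block-encode the diagonal factors: the kinetic diagonals have entries $\hap_{n,j} h_p / m_n$ bounded by $p_{\max}/m_{\min}$, and the potential diagonal for a fixed pair $n\neq n'$ and component $j$ has entries $Z_n Z_{n'}(x_{n,j}-x_{n',j})/\lb\norm{x_n-x_{n'}}^2+\Delta^2\rb^{3/2}$, bounded by $Z_{\max}^2 x_{\max}/\Delta^3$. A diagonal $\mathrm{diag}(\lambda)$ with $\norm{\lambda}_\infty\le\Lambda$ is block-encoded in the usual way: reversibly compute $\lambda_i/\Lambda$ from the register contents into an ancilla register, apply the controlled rotation $\ket{\hat i}\ket{0}\mapsto\ket{\hat i}\lb(\lambda_i/\Lambda)\ket{0}+\sqrt{1-(\lambda_i/\Lambda)^2}\,\ket{1}\rb$ (computing $\arccos$ to precision $O(\epsilon/\Lambda)$), and uncompute, giving a $\lb\Lambda,\,O(\log(\Lambda/\epsilon)),\,\epsilon\rb$-block-encoding. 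For the potential diagonals this means squaring and summing position differences, adding $\Delta^2$, computing the $-3/2$ power via reciprocal-square-root iteration, and multiplying by $Z_nZ_{n'}(x_{n,j}-x_{n',j})$ — all fixed-precision arithmetic on $O(\log(g\Lambda/\epsilon))$-bit numbers at cost $\widetilde O(\log(g\Lambda/\epsilon))$, the integer multiplications contributing the $\log^{\log 3}(\alpha_{NVE}/\epsilon)$ term (Karatsuba multiplication of $O(\log(\alpha_{NVE}/\epsilon))$-bit numbers). Composing a derivative block-encoding with a diagonal block-encoding on disjoint registers yields a block-encoding of their tensor product with multiplied subnormalizations, matching the two summands of $\alpha_{NVE}$.

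Finally I would assemble the outer LCU over the $\Theta(N)$ kinetic and $\Theta(N^2)$ potential terms: the outer $\prep$ loads amplitudes proportional to the square roots of the per-term subnormalizations onto $O(\log N)$ ancillas, and $\sel$ applies the corresponding controlled tensor-product block-encodings, organized to loop over the nuclear index (and, for the potential, the partner index) rather than hard-coding $\Theta(N^2)$ distinct circuits, so the Toffoli count only picks up a factor $N$ (not $N^2$) times the per-step arithmetic — giving the $N\log(g\alpha_{NVE}/\epsilon)$ term. The standard LCU block-encoding lemma then certifies an $\lb\alpha_{NVE},\,a_{NVE},\,\epsilon\rb$-block-encoding with $\alpha_{NVE}$ the sum of subnormalizations and $a_{NVE}\in O\!\lb\log(\alpha_{NVE}/\epsilon)+\log d\rb$ (arithmetic ancillas, plus $O(\log d)$ derivative-LCU ancillas, plus $O(\log N)\subseteq O(\log\alpha_{NVE})$ selection ancillas). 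I expect the main obstacle to be the careful resource accounting for the regularized-Coulomb diagonal: bounding its spectral norm by $Z_{\max}^2 x_{\max}/\Delta^3$, certifying that the reciprocal-square-root/power and $\arccos$ subroutines reach accuracy $O(\epsilon/\alpha_{NVE})$ with only polylogarithmically many bits, and confirming that the $\sel$ construction keeps the gate complexity linear in $N$ while introducing no overhead beyond the stated $\log^{\log 3}$ factor from integer multiplication.
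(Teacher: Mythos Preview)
Your proposal is correct and follows the same hierarchical LCU architecture as the paper's proof: block-encode the discrete derivatives $D_x, D_p$ as linear combinations of cyclic shifts, block-encode the diagonal kinetic and Coulomb factors, take tensor products, and combine via outer $\prep$ unitaries indexed by $(n,j)$ and $(n,n',j)$, using $\swap$ registers so that only one copy of each inner block-encoding need be applied (this is exactly what keeps the Toffoli count linear in $N$).

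The one substantive difference is in how the diagonal factors are block-encoded. You propose computing the diagonal value into an ancilla, evaluating $\arccos$, and applying a controlled rotation; for the Coulomb diagonal this entails a reciprocal-square-root/fractional-power subroutine. The paper instead uses the alternating-sign trick: prepare a uniform superposition $\frac{1}{\sqrt{2^a}}\sum_l\ket{l}$, test an inequality $l \le \lambda_i/\Lambda$ into a flag qubit, apply a $Z$ controlled on the flag and on the least significant bit of $l$, and unprepare, so that the $\ket{0}$ amplitude becomes $\lambda_i/\Lambda + O(2^{-a})$. For the Coulomb term the inequality is rewritten as $l^2\lb\norm{\hax_n-\hax_{n'}}^2+\hat\Delta^2\rb^3 \le (\hax_{n,j}-\hax_{n',j})^2$, which eliminates any need for square roots or fractional powers on the quantum computer --- only integer multiplication and comparison remain, so the $\log^{\log 3}$ Karatsuba factor is the entire arithmetic story. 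Your route is valid and yields the same asymptotic bounds, but the paper's trick buys a cleaner arithmetic circuit with no iterative or transcendental subroutines. One minor slip: your derivative block-encoding cannot have error exactly $0$, since $\prep_{D_x}$ must load the irrational amplitudes $\sqrt{|c_{d,k}|/c_d}$; the paper tracks this as $\epsilon_{D_x}$ and it contributes another $\log(\alpha_{NVE}/\epsilon)$ factor to the state-preparation cost.
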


\begin{restatable}[Block-encoding of the discretized classical $NVT$ Liouvillian]{lem}{LclassNVT}
     There exists an $(\alpha_{NVT}, a_{NVT}, \epsilon)$-block-encoding of the discretized classical Liouvillian $L_{\tcla}^{(NVT)}$ with normalization constant
    \begin{equation*}
         \alpha_{NVT} \in O \lb N \frac{p'_{\text{max}}}{m_{\text{min}} s_{\text{min}}^2} \frac{\ln d_x}{h_x} + N^2\frac{ Z_{\text{max}}^2 x_{\text{max}}}{\Delta^3} \frac{\ln d_{p'}}{h_{p'}} + \frac{p_{s, \text{max}}}{Q} \frac{\ln d_s}{h_s} + \lb N \frac{{p'}_{\text{max}}^2}{m_{\text{min}} s_{\text{min}}^3} + \frac{N_f k_B T}{s_{\text{min}}} \rb \frac{\ln d_{p_s}}{h_{p_s}} \rb
    \end{equation*}
    and a number of ancilla qubits
    \begin{equation*}
        a_{NVT} \in O \lb \log \lb \frac{\alpha_{NVT}}{\epsilon} \rb + \log d \rb
    \end{equation*}
    where $d := \max \{ d_x, d_{p'}, d_s, d_{p_s} \}$.
    This block-encoding can be implemented using
    \begin{equation*}
        \widetilde{O} \lb N \log \lb \frac{g \alpha_{NVT}}{\epsilon} \rb + \log^{\log 3}{\lb \frac{\alpha_{NVT}}{\epsilon} \rb}  + d \log g \rb
    \end{equation*}
    Toffoli gates, where $g := \max \{ g_x, g_{p'}, g_s, g_{p_s} \}$.
\label{lem:bounds_L_class_NVT}
\end{restatable}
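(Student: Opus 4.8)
The plan is to build the block-encoding of $L_{\tcla}^{(NVT)} = L_{NVT} - L_{\tel}^{(NVT)}$ (Definition~\ref{def:class_liouvillian}) as a linear combination of products of block-encodings of shift operators and of diagonal operators, mirroring the $NVE$ construction of Lemma~\ref{lem:bounds_L_class_NVE} but with the two extra bath-variable summands $D_s \otimes \partial_{p_s} H_{\nuc}^{(NVT)}$ and $\partial_s H_{\nuc}^{(NVT)} \otimes D_{p_s}$, and with the electronic contribution $\Del$ split off into $L_{\tel}^{(NVT)}$ and treated separately in Appendix~\ref{app:electronic}. First I would block-encode the discrete derivatives of Definition~\ref{def:discrete_derivative}: under periodic boundary conditions each $D_y$ of order $2d$ is a weighted sum of the $2d$ unitary cyclic shifts $S_y^{k}\colon\ket{\hat y}\mapsto\ket{\hat y - k}$ with coefficients $c_{d,k}/h$, so a single LCU step gives an exact $\lb \tfrac{1}{h}\sum_k|c_{d,k}|,\, O(\log d),\, 0\rb$-block-encoding, with $\sum_k|c_{d,k}| \le 2(\ln d + 1)$ by the standard bound on central-difference weights (precisely the factor appearing in $\mu_{NVT}$). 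Each $S_y^{k}$ costs $\widetilde{O}(\log g)$ Toffoli gates via an in-place modular adder of a constant, and rounding the $2d$ amplitudes of the associated $\prep$ unitary to precision $\epsilon$ contributes the $d\log g$ and $\log^{\log 3}(\alpha_{NVT}/\epsilon)$ terms of the final gate count.

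Next I would block-encode each diagonal factor in Definition~\ref{def:NVT_liouvillian}: $\partial_{p'_{n,j}}H = p'_{n,j}/[m_n(s+s_{\min})^2]$, the regularised Coulomb gradient $\partial_{x_{n,j}}H_{\tcla}^{(NVT)}$ (its $\Del$ piece handled with $L_{\tel}^{(NVT)}$), $\partial_{p_s}H = p_s/Q$, and $\partial_s H$, which combines the kinetic $s$-derivative $-\sum 2{p'}^2_{n,j}/[m_n(s+s_{\min})^3]$ with the thermostat term $N_f k_B T/(s+s_{\min})$. Each is diagonal in the computational basis with entries that are fixed elementary functions of the integer register labels, so I would (i) compute an $\epsilon'$-accurate fixed-point value of the entry into a fresh arithmetic register using reversible reciprocals, squarings, and a square root (for the Coulomb term), (ii) apply a controlled $Y$-rotation to encode that value as an amplitude, and (iii) uncompute the arithmetic register, obtaining a $(\nu,\, O(\log(g\nu/\epsilon)),\, \epsilon)$-block-encoding, where $\nu$ bounds the magnitude of the diagonal: $\nu = p'_{\max}/(m_{\min}s_{\min}^2)$, $\nu = 2Z_{\max}^2 x_{\max}/\Delta^3$, $\nu = p_{s,\max}/Q$, and $\nu = 2{p'}_{\max}^2/(m_{\min}s_{\min}^3) + N_f k_B T/s_{\min}$, respectively, at a cost of $\widetilde{O}(\mathrm{polylog}(g\nu/\epsilon))$ Toffoli gates per term.

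I would then form, for each of the $3N$ position/virtual-momentum index pairs and for the single bath pair $(s,p_s)$, the product of the matching derivative and diagonal block-encodings — a product of block-encodings multiplies the normalizations and adds the ancilla counts — and combine all $6N+2$ resulting terms with one LCU ($\prep$, $\sel$, $\prep^\dagger$) over an $O(\log N)$-qubit selection register, absorbing the $\pm i$ prefactors into $\prep$ and symmetrising if a Hermitian block-encoding is wanted. The overall normalization is the sum of the per-term normalizations, which after grouping is $O$ of the kinetic piece $3N\,\tfrac{p'_{\max}}{m_{\min}s_{\min}^2}\tfrac{\ln d_x}{h_x}$, the Coulomb piece $N^2\,\tfrac{Z_{\max}^2 x_{\max}}{\Delta^3}\tfrac{\ln d_{p'}}{h_{p'}}$, the bath-momentum piece $\tfrac{p_{s,\max}}{Q}\tfrac{\ln d_s}{h_s}$, and the $s$-derivative piece $\lb N\tfrac{{p'}_{\max}^2}{m_{\min}s_{\min}^3} + \tfrac{N_f k_B T}{s_{\min}}\rb\tfrac{\ln d_{p_s}}{h_{p_s}}$ — i.e.\ exactly $\alpha_{NVT}$. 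The ancilla count collapses to $O(\log(\alpha_{NVT}/\epsilon) + \log d)$, and summing the $O(N)$ shift circuits, arithmetic circuits, and rotations together with the $\prep$ network gives the stated Toffoli count $\widetilde{O}(N\log(g\alpha_{NVT}/\epsilon) + \log^{\log 3}(\alpha_{NVT}/\epsilon) + d\log g)$.

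I expect the main obstacle to be the error bookkeeping for the diagonal factors with the singular denominator $(s+s_{\min})$: one must propagate the fixed-point arithmetic error through the reciprocal and square-root steps and then through the product and LCU composition, so that the cumulative deviation from $L_{\tcla}^{(NVT)}$ stays below $\epsilon$ while the arithmetic-ancilla overhead remains logarithmic in $g\nu/\epsilon$, and one must verify that the grouped constants above really do reproduce $\alpha_{NVT}$ up to the claimed $O(\cdot)$. Beyond that, the argument is a mechanical repetition of the $NVE$ proof of Lemma~\ref{lem:bounds_L_class_NVE} with the two extra thermostat summands.
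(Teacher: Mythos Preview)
Your proposal is correct and follows essentially the same hierarchical LCU strategy as the paper: block-encode the discrete derivative operators as linear combinations of cyclic shifts, block-encode the diagonal Hamiltonian-derivative factors, take products term-by-term, and combine everything with an outer LCU and state-preparation unitaries encoding the masses and atomic numbers. The normalisation bookkeeping you sketch indeed reproduces $\alpha_{NVT}$, and your anticipated obstacle (error propagation through the diagonal factors) is exactly the step the paper tracks most carefully.

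There are two technical differences worth flagging. First, for the diagonal operators the paper does \emph{not} compute the function value into an arithmetic register and then apply a controlled rotation; instead it uses the ``alternating sign trick'' of~\cite{Berry2014alter_sign_trick}: prepare a uniform superposition $\sum_l \ket{l}$ on an ancilla, test an inequality of the form $l^2 \cdot (\text{denominator})^k \le (\text{numerator})^{k'}$, and flip the sign of the odd-$l$ components whenever the inequality fails. This avoids reversible reciprocals and square roots entirely --- the awkward $(s+s_{\min})^{-2}$, $(s+s_{\min})^{-3}$, and $(\|x_n-x_{n'}\|^2+\Delta^2)^{-3/2}$ factors are handled by squaring both sides and doing only integer multiplications, so the $\log^{\log 3}(\alpha_{NVT}/\epsilon)$ term in the Toffoli count comes cleanly from Karatsuba. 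Your rotation-based approach is also valid and gives the same asymptotics, but the reversible reciprocal/root subroutines you invoke are the part you correctly identify as the main obstacle; the paper sidesteps it.

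Second, to keep the Toffoli cost at $\widetilde{O}(N)$ rather than $\widetilde{O}(N^2)$ (the Coulomb gradient has $O(N^2)$ pairs $(n,n')$), the paper does not implement $6N{+}2$ separate controlled circuits in $\sel$. Instead it introduces $O(1)$ $\swap$ registers: controlled on the $\prep_m$/$\prep_Z$ selection ancilla, the relevant position or momentum variable is swapped into a fixed scratch register, the \emph{single} block-encoding $U_{p'_{n,j}}$ or $U_{V_{n,n',j}}$ is applied there once, and then swapped back. The $O(N\log g)$ Toffoli term in the final count is precisely the cost of these controlled swaps. Your proposal implicitly needs an equivalent mechanism to justify the claimed $\widetilde{O}(N\log(\cdots))$ scaling, so it is worth making that step explicit.
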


We then apply quantum singular value transformation (QSVT) to the block-encoding of $L_{\tcla}$ to efficiently approximate the exponential $e^{-iL_{\tcla}t}$~\cite{Gilyen2019qsvt}. The idea behind QSVT is to perform polynomial transformations of the singular values of a block-encoded matrix. In our case, we implement polynomial approximations of $\cos (L_{\tcla})$ and $-i\sin (L_{\tcla})$ which can then be added to simulate $e^{-iL_{\tcla}t}$.

Implementing $U_{L_{\tel}}$ is more difficult because we do not generally have an analytic expression for $E_{\tel} \lb\{ x_n \}\rb$ which would be required for constructing a block-encoding of $L_{\tel}$ and subsequently using QSVT.
In principle, one could use quantum phase estimation on the electronic Hamiltonian $H_{\tel}$ to extract the ground state energies at the different nuclear positions and construct a block-encoding of $\Del$ from these (numerical) values. However, the associated computational cost is in $O\lb 1/ \epsilon \rb$ where $\epsilon$ is the desired precision of the simulation. The computational cost of our algorithm on the other hand is only in $O\lb 1/\epsilon^{o(1)} \rb$, which provides a superpolynomial improvement over $O\lb 1/ \epsilon \rb$ scaling.

One important feature of the electronic Liouvillian $L_{\tel}$ is that all summands commute with each other. The evolution operator associated with $L_{\tel}$ can thus be decomposed as follows:

\begin{equation}
\begin{split}
    e^{-iL_{\tel}t} &= e^{-i \lb i\sum_{n=1}^{N} \sum_{j=1}^{3} \Del \otimes D_{p_{n,j}}^1 \rb t} \\
    &= \prod_{n,j} e^{\Del \otimes D_{p_{n,j}}^1 t}.
\end{split}
\label{exp_Lel}
\end{equation}

For simplicity, we use $p_{n,j}$ to refer to either a real or virtual momentum variable.
Let us now explain how to implement a single exponential appearing in Eq.~\eqref{exp_Lel}.
First, we diagonalize the discrete (virtual) momentum derivative operator $D_{p_{n,j}}^1$ by applying the quantum Fourier transform ($\qft$) to the $\ket{p_{n,j}}$ register. Next, we shift the quantum register associated with the nuclear position coordinate $x_{n,j}$ according to the finite difference scheme of $\Del$ and prepare the electronic ground state controlled by the entire nuclear positions register. The electronic ground states are prepared using techniques from~\cite{Lin2020ground_state} together with the following initial state preparation oracle:

\begin{defn}[Initial electronic state preparation oracle]
    Let $U_{H_{\tel} \{ x_n \}}$ be a Hermitian block-encoding of $H_{\tel} \lb\{ x_n \}\rb$ for fixed nuclear positions and let $\ket{\widetilde{\psi}_0 \{ x_n \}}$ be the ground state of $\lb \bra{0} \otimes \mathbb{1} \rb U_{H_{\tel} \{ x_n \}} \lb \ket{0} \otimes \mathbb{1} \rb$. Furthermore, let $0 < \delta \leq 1$. The electronic state preparation oracle $U_I$ is defined via its action on the nuclear positions register $\ket{\{ \hax_n \}}$:
    \begin{equation}
        U_I \ket{\{ \hax_n \}} \ket{0} = \ket{\{ \hax_n \}} \ket{\phi_0 \{ x_n \}},
    \end{equation}
    where $\ket{\phi_0 \{ x_n \}}$ is an initial electronic state that is promised to satisfy $\left| \braket{\widetilde{\psi}_0 \{ x_n \}}{\phi_0 \{ x_n \}} \right| \geq \delta$ for all nuclear configurations. We write $\widetilde{U}_I$ to refer to a variant of the initial electronic state preparation oracle where $\left| \braket{\widetilde{\psi}_0 \{ x_n \}}{\phi_0 \{ x_n \}} \right| \geq \widetilde{\delta}$ with $\widetilde{\delta} \geq \delta$ for all nuclear configurations visited during the simulation. This means that $\widetilde{U}_I$ depends implicitly on the initial nuclear phase space density.
\label{def:state_prep}
\end{defn}
While numerous strategies exist for addressing overlap problems in quantum algorithms for the electronic structure problem, see e.g.~\cite{Reiher2017nitrogenase,Tubman2018,Fomichev2023state_prep}, the overlap issues remain a fundamental problem facing all quantum algorithms within the space and remain an active area of research. Providing an explicit implementation of the initial electronic state preparation oracle is hence beyond the scope of this work.

Controlled by the entire nuclear positions register as well as the Fourier transformed momentum register, we then apply $\exp \lb -iH_{\tel} \{ x_n \} t_{c_k, l_{n,j}} \rb$ to the electronic register holding the corresponding electronic ground state where
\begin{equation}
    t_{c_k, l_{n,j}} := \frac{c_{d_e, k}}{h_x} \frac{\sin{\lb 2\pi l_{n,j}/g_p \rb}}{h_p} t
\end{equation}
is a rescaled time variable depending on the finite difference coefficients $\{c_{d_e, k}\}$ of $\Del$ and the Fourier transform variables $\{l_{n,j}\}$ as explained in more detail in Appendix \ref{app:electronic}. Next, we uncompute the Fourier-transformed momentum register as well as the electronic register and repeat the above procedure for each stencil point of the finite difference formula of $\Del$.

The above method requires access to a discretized electronic Hamiltonian. Instead of utilizing a grid discretization as for the nuclei, we use a finite set of basis functions to discretize the Hilbert space of the electrons. In particular, we choose $B$ plane waves as basis functions, which take the following form in (three-dimensional) position space:
\begin{equation}
    \phi_b (r) := \frac{1}{\sqrt{\Omega}} e^{-i k_b \cdot r} \,.
\end{equation}
$r$ is a vector in position space and $k_b = \frac{2 \pi b}{\Omega^{1/3}}$ is a wave vector in reciprocal space where $b$ is a vector in $\mathbb{Z}^3$ constrained to the cube $G := \big[ -\frac{B^{1/3}-1}{2}, \frac{B^{1/3}-1}{2} \big]^3$. Furthermore, $\Omega \in \Theta \lb B \, h_{\tel}^3 \rb$ is the computational cell volume where $1/h_{\tel}$ is the grid spacing in reciprocal space.

The electronic basis states in first quantization can then be written as $\ket{b_0} \ket{b_1} \cdots \ket{b_{\Tilde{N}-1}}$, where each $\ket{b_j}$ is a qubit register of size $\lceil \log{B} \rceil$ specifying the index $b \in [B]$ of the basis function occupied by electron $j$. The main advantage of using a plane wave expansion of the electronic Hamiltonian is that all terms in the Hamiltonian can be obtained from the nuclear position registers and the plane wave momenta through coherent arithmetic on the quantum computer.
It is shown in Ref.~\cite{Su2021first_quant_sim} that the 1st quantized electronic Hamiltonian in the plane-wave basis takes the following form:
\begin{equation}
\begin{split}
    H_{\tel}^{(\text{pw})} \lb \{x_n\} \rb &:= \sum_{j=1}^{\Tilde{N}}  \sum_{b \in G} \frac{\left \| k_b\right\|^2}{2} \ket{b}\!\bra{b}_{j} \\
     &-\frac{4\pi}{\Omega}\sum_{n=1}^{N}\sum_{j=1}^{\Tilde{N}}\sum_{\substack{b,c\in G\\ b\neq c}}\bigg(Z_n \frac{e^{ik_{c-b}\cdot x_n}}{\norm{k_{b-c}}^2}\bigg)\ket{b}\!\bra{c}_j \\
     &+\frac{2 \pi}{\Omega} \sum_{i\neq j=1}^{\Tilde{N}}\sum_{b,c \in G} \sum_{\substack{\nu\in G_0\\(b+\nu)\in G\\(c-\nu)\in G}}\frac{1}{\left\| k_{\nu}\right\|^2} \ket{b + \nu}\!\bra{b}_i \ket{c-\nu}\!\bra{c}_j,
\end{split}
\label{plane_wave_Hel}
\end{equation}
where $\ket{b}\!\bra{b}_{j}$ acts nontrivially only on the register associated with electron $j$ and similarly for the other terms. Furthermore, $G_0 := \big[ -B^{1/3}, B^{1/3}\big]^3 \subset \mathbb{Z}^3 \backslash \{(0,0,0\}$. Unless stated otherwise, we will write $H_{\tel}$ to refer to $H_{\tel}^{(\text{pw})}$ in the following.

The next Lemma shows that $H_{\tel}$ can be efficiently block-encoded.

\begin{lem}[Block-encoding the electronic Hamiltonian (\cite{Su2021first_quant_sim}, Lemma 1 rephrased)]
    There exists a Hermitian $(\lambda, a_{\tel}, \epsilon)$-block-encoding of the discretized electronic Hamiltonian $H_{\tel}$ with normalization constant
    \begin{equation}
        \lambda \in O \lb \frac{\tn }{h_{\tel}^2} + \frac{N \tn Z_{\text{max}}}{h_{\tel}} + \frac{\tn^2}{h_{\tel}} \rb,
    \end{equation}
    and a number of ancilla qubits
    \begin{equation}
        a_{\tel} \in O \lb \log{\lb \frac{N \tn B}{\epsilon} \rb} \rb.
    \end{equation}
    This block-encoding can be implemented using
    \begin{equation}
        O \lb N + \Tilde{N} + \log{\lb \frac{B}{\epsilon}\rb} \rb
    \end{equation}
    Toffoli gates.
\label{lem:block-encode_Hel}
\end{lem}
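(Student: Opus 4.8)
\emph{Proof proposal.} The plan is to exhibit an explicit linear-combination-of-unitaries (LCU) block-encoding for the three pieces of $H_{\tel}^{(\text{pw})} = T + U + V$ appearing in Eq.~\eqref{plane_wave_Hel} --- the kinetic term, the electron--nuclear attraction, and the electron--electron repulsion --- exactly along the lines of \cite{Su2021first_quant_sim}, and then re-derive the normalization, ancilla, and gate counts in terms of our parameters $(\tn, N, Z_{\text{max}}, B, h_{\tel}, \epsilon)$. Writing $H_{\tel} = \sum_\ell \alpha_\ell U_\ell$ with the $U_\ell$ unitary and $\alpha_\ell > 0$, the standard $\prep$/$\sel$ construction yields a $(\lambda, a_{\tel}, \epsilon)$-block-encoding $(\prep^\dagger\otimes\mathbb{1})\,\sel\,(\prep\otimes\mathbb{1})$ with $\lambda = \sum_\ell \alpha_\ell$ once $\prep$ is implemented to amplitude error $O(\epsilon/\lambda)$; a symmetrization over $\pm\nu$ (whose LCU coefficients coincide and are real) lets $\sel$ be taken Hermitian, making the block-encoding Hermitian at constant overhead as required by Definition~\ref{def:state_prep}. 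The one structural point that must not be missed is that the potential terms are decomposed over the \emph{momentum transfer} $\nu \in G_0$ --- with $\sel$ applying the shift $\sum_{b:\,(b+\nu)\in G}\ketbra{b+\nu}{b}$ (extended by the identity on the complement) to the relevant electron register(s) --- rather than over pairs $(b,c)$; this is precisely what keeps the $1$-norm from acquiring a spurious factor of $B$.

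The core estimate is the bound on $\lambda = \lambda_T + \lambda_U + \lambda_V$. For the kinetic term each summand is diagonal with largest entry $\tfrac12\max_{b\in G}\norm{k_b}^2$; since $k_b = 2\pi b/\Omega^{1/3}$, $\norm{b} \in O(B^{1/3})$ on $G$, and $\Omega \in \Theta(B h_{\tel}^3)$ so $\Omega^{1/3} \in \Theta(B^{1/3} h_{\tel})$, this is $O(1/h_{\tel}^2)$, and summing over the $\tn$ electrons gives $\lambda_T \in O(\tn/h_{\tel}^2)$. For the potential terms the relevant lattice sum is
\begin{equation*}
    \sum_{\substack{\nu \in G_0 \\ \nu \neq 0}} \frac{1}{\norm{k_\nu}^2} = \frac{\Omega^{2/3}}{4\pi^2} \sum_{\substack{\nu \in G_0 \\ \nu \neq 0}} \frac{1}{\norm{\nu}^2} \in O\lb \Omega^{2/3} B^{1/3} \rb,
\end{equation*}
using that a three-dimensional lattice sum of $1/\norm{\nu}^2$ over a box of side $O(B^{1/3})$ grows like its radius. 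Multiplying by the prefactor $4\pi/\Omega$ and by $Z_n$, then summing over the $N$ nuclei and the $\tn$ electron indices, gives $\lambda_U \in O(N\tn Z_{\text{max}}/h_{\tel})$ after again substituting $\Omega^{1/3} \in \Theta(B^{1/3} h_{\tel})$; the same computation with prefactor $2\pi/\Omega$ and a double sum over electron pairs gives $\lambda_V \in O(\tn^2/h_{\tel})$. Adding the three contributions yields the stated $\lambda$.

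For the remaining bounds: $\prep$ must create a superposition over the control registers indexing $\nu \in G_0$ ($O(\log B)$ qubits), the nucleus $n \in [N]$ ($O(\log N)$ qubits), and the (up to two) electron labels in $[\tn]$ ($O(\log \tn)$ qubits each), with amplitudes $\propto 1/\norm{k_\nu}$ weighted by $Z_n$; preparing these to error $O(\epsilon/\lambda)$ uses $O(\log(1/\epsilon))$ further ancillas together with work registers for the coherent arithmetic (evaluating $\norm{\nu}^2$, comparisons, and the phases $e^{ik_{c-b}\cdot x_n}$ for the fixed nuclear coordinates), so $a_{\tel} \in O(\log(N\tn B/\epsilon))$. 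The Toffoli count splits as: $\sel$ performs a unary iteration over the $\tn$ electron registers to route the shift --- $O(\tn)$ Toffolis --- plus $O(\log B)$-bit modular additions and boundary inequality checks to realize the shift; $\prep$ realizes the $1/\norm{\nu}$ amplitudes via inequality tests against a value computed from $\norm{\nu}$ by coherent arithmetic ($O(\log(B/\epsilon))$ Toffolis) and the $Z_n$-weighted superposition over the $N$ nuclei ($O(N)$ Toffolis, e.g.\ by QROM); Hermitization costs only constant overhead. This gives the claimed $O(N + \tn + \log(B/\epsilon))$.

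The step I expect to be the main obstacle is the $1$-norm bound: one has to (i) commit to the momentum-transfer LCU so that no extra factor of $B$ creeps in, (ii) control the three-dimensional lattice sum $\sum_{\nu}1/\norm{\nu}^2 \in O(B^{1/3})$ over the cube $G_0$ --- both near the origin and at the boundary, comparing against $\int \mathrm{d}^3\nu/\norm{\nu}^2$ over a ball of radius $R$, which is $\Theta(R)$ --- and (iii) carry the $\Omega \leftrightarrow B h_{\tel}^3$ substitution carefully so that the final expression collapses to the clean form in $(\tn, N, Z_{\text{max}}, h_{\tel})$ matching \cite{Su2021first_quant_sim}. Everything downstream of the $\lambda$ estimate --- the ancilla accounting and the Toffoli cost of $\prep$ and $\sel$ --- is a direct transcription of the first-quantized construction of \cite{Su2021first_quant_sim, Babbush2019}.
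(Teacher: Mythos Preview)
The paper does not actually prove this lemma: it is explicitly labeled as ``\cite{Su2021first_quant_sim}, Lemma~1 rephrased'' and is invoked as a black box throughout (e.g.\ in the proofs of Lemma~\ref{lem:complexity_eL_el}, Theorem~\ref{thm:complexity_liouvillian}, and Lemma~\ref{lem:H_E_el}). The only modification the paper makes to the cited construction is noted in Appendix~\ref{app:electronic}: rather than accessing the nuclear positions $\{x_n\}$ via QROM as in~\cite{Su2021first_quant_sim}, the nuclear position register is swapped into an ancilla controlled on the $\sum_n \sqrt{Z_n/Z}\ket{n}$ register, and $k_{c-b}\cdot x_n$ is computed coherently --- but the paper observes this still costs $O(N)$ Toffolis and does not alter the stated bounds.

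Your proposal is therefore not comparable to a proof in the paper, but it is a faithful sketch of the construction in the cited reference. The key points you identify --- decomposing $U$ and $V$ over the momentum transfer $\nu$ rather than over $(b,c)$ pairs to avoid a spurious factor of $B$, bounding $\sum_{\nu\neq 0}\norm{\nu}^{-2}\in O(B^{1/3})$ by comparison with the radial integral, and substituting $\Omega\in\Theta(Bh_{\tel}^3)$ --- are exactly the ingredients of the $\lambda$ bound in~\cite{Su2021first_quant_sim}. One small caveat: you describe the nuclear phases as being for ``fixed nuclear coordinates,'' whereas in this paper's setting the $\{x_n\}$ live in a quantum register; the paper's controlled-$\swap$ modification is what bridges that gap without changing the asymptotics.
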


Let us briefly discuss the space complexity of the Liouvillian simulation algorithm. On the one hand, we need $O \lb N \log \lb g_x g_p \rb \rb$ qubits to represent the phase space density of the nuclei. We need another $O \lb \tn \log(B) \rb$ qubits to represent the wave function of the electrons. On the other hand, we require a certain number of ancilla qubits for the various block-encodings described above. Since Hamiltonian simulation via QSVT requires only an additional 2 ancilla qubits (see Lemma~\ref{lem:rob_block-Ham_sim}), we find that the overall space complexity is in
\begin{equation}
    O \lb N \log \lb g \rb + \tn \log(B) + \log \lb \frac{\alpha}{\epsilon} \rb + \log (d) \rb,
\end{equation}
where $g = \max \{ g_x, g_{p'}, g_s, g_{p_s} \}$, $\alpha \in \{ \alpha_{NVE}, \alpha_{NVT} \}$ and $d = \max \{ d_x, d_{p'}, d_s, d_{p_s} \}$.
This scaling is very moderate given that it is linear in the total particle number and logarithmic in all other simulation parameters. 

The complexity of our algorithm depends on several user-supplied parameters, which are summarized in Table \ref{tab:NVET_parameters}.

\begin{table}[tb]
\renewcommand{\arraystretch}{1.2}
\setlength{\tabcolsep}{8pt}
    \centering
    \caption{Input parameters that determine the complexity of our quantum algorithm for simulating $NVE$ and $NVT$ Liouvillian dynamics in the Born-Oppenheimer approximation.}
    \begin{tabularx}{\textwidth}{l|X|l}
        \hline \hline
        Description                                                                     & NVE & NVT \\ \hline
        Evolution time                                                                  & $t$ & $t$ \\
        Desired precision                                                      & $\epsilon$ & $\epsilon$\\
        Failure probability                                                             & $\xi$ & $\xi$ \\
        Order of the Trotter product formula                                            & $k$ & $k$\\
        Number of nuclei and electrons                                                  & $N$, $\tn$ & $N$, $\tn$\\
        Mass of the lightest nucleus                                                    & $m_{\text{min}}$ & $m_{\text{min}}$ \\
        Maximum atomic number over all nuclei                                           & $Z_{\text{max}}$ & $Z_{\text{max}}$ \\
        Maximum value of a component of the nuclear position vectors         &$x_{\text{max}}$ & $x_{\text{max}}$  \\
        Maximum value of a component of the (virtual) momentum vectors                  & $p_{\text{max}}$ & $p'_{\text{max}}$  \\
        Grid spacing for a component of the discretized variables                       & $h_x$, $h_{p}$ & $h_x$, $h_{p'}$, $h_s$, $h_{p_s}$ \\
        Order of the finite difference scheme used for approximating derivatives        & $d_x$, $d_{p}$, $d_e$ & $d_x$, $d_{p'}$, $d_s$, $d_{p_s}$, $d_e$\\
        Gap parameter to regularize the Coulomb potential                               &$\Delta$ & $\Delta$ \\
        Number of plane wave basis functions in the electronic Hamiltonian              &$B$ & $B$ \\
        Inverse grid spacing for a component of the electronic wave number              &$h_{\tel}$ & $h_{\tel}$ \\
        Lower bound on the overlap of the initial and true electronic ground state      &$\widetilde{\delta}$ & $\widetilde{\delta}$ \\
        Lower bound on the spectral gap of $H_{\tel}$ during the simulation   & $\widetilde{\gamma}$ &  $\widetilde{\gamma}$ \\
        Upper bound on the higher order derivatives of the electronic energy            & $\chi$ & $\chi$\\
        Number of phase space grid points                             & - & $\eta$ \\
        Number of degrees of freedom of the physical system                             & - & $N_f$\\
        Temperature of the heat bath                                                    & - & $T$\\
        Mass parameter associated with the heat bath                                    & - & $Q$\\
        Minimum value of the bath position variable                          & - & $s_{\text{min}}$ \\
        Maximum value of the bath momentum variable                          & - & $p_{s, \text{max}}$ \\
        \hline \hline
    \end{tabularx}
\label{tab:NVET_parameters}
\end{table}

\begin{algorithm}
    \caption{Liouvillian MD simulation}
    \label{alg:liouvillian}
    \KwIn{Quantum state $\ket{\rho_0}$ encoding initial phase space density of nuclei (+ heat bath for $NVT$ ensemble). \newline
    Input parameters for constructing the Liouvillian operator: $t, \epsilon, \xi, k, N, \tn, \{m_n\}_{n=1}^N, \{Z_n\}_{n=1}^N, x_{\text{max}}, p_{\text{max}}/ p'_{\text{max}}, h_x, h_p/h_{p'}, d_x, d_p/d_{p'}, d_e, \Delta, B, h_{\tel}, \widetilde{\delta}, \widetilde{\gamma}, \chi$. \newline
    Simulations in the $NVT$ ensemble require additional input parameters: $N_f, T, Q, s_{\text{min}}, h_s, h_{p_s}, d_s, d_{p_s}$.}
    \KwOut{An $\epsilon$-precise approximation to $e^{-iL t} \ket{\rho_0}$ with success probability $\geq 1-\xi$.}
    \begin{enumerate}[leftmargin=*]
        \item Compute a set of time steps $\{ t_i\}_{i=1}^{N_{\text{exp}}}$ using the recursive definition of higher-order Trotter product formulas with $N_{\text{exp}} = 2 \times 5^{k-1} + 1$ being the total number of Trotter exponentials~\cite{Berry2006trotter, Childs2021trotter_error}. Additionally, determine the set of indices $I_{\tcla}$ for which $t_i$ corresponds to evolutions under the classical Liouvillian\;
        \item Initialize the nuclear position and momentum registers (and the bath register for $NVT$ simulations), the electronic register, as well as an ancilla register\;
        \item $\epsilon' \gets \epsilon/N_{\text{exp}}$\;
        \item $\xi' \gets \xi/N_{\text{exp}}$\;
    \end{enumerate}
    \For{$1 \leq i \leq N_{\text{exp}}$ }{
        \eIf{$i \in I_{\tcla}$}{
            Apply $\mathtt{ClassLiouvillianEv} \lb t_i, \epsilon', \xi', N, \{m_n\}_{n=1}^N, \{Z_n\}_{n=1}^N, x_{\text{max}}, p_{\text{max}}/p'_{\max}, h_x, h_p/h_{p'}, d_x, d_p/d_{p'}, \Delta \rb$ to the nuclear position and momentum registers (and the bath register in the case of $NVT$ simulations), as well an ancilla register. Simulations in the $NVT$ ensemble also take $N_f, T, Q, s_{\text{min}}, h_s, h_{p_s}, d_s, d_{p_s}$ as input\;
        }{
            Apply $\mathtt{ElectronicLiouvillianEv} \lb t_i, \epsilon', \xi', N, \tn, \{Z_n\}_{n=1}^N,
            x_{\text{max}}, p_{\text{max}}/p'_{\max}, h_x, h_p/h_{p'}, d_e, \Delta, B, h_{\tel}, \widetilde{\delta}, \widetilde{\gamma}, \chi \rb$ to the nuclear position and momentum registers, the electronic register and an ancilla register\;
        }
    }
\end{algorithm}

\begin{algorithm}
    \caption{$\mathtt{ClassLiouvillianEv}$: Evolution under the classical Liouvillian}
    \label{alg:class_ev}
    \KwIn{$t, \epsilon, \xi, N, \{m_n\}_{n=1}^N, \{Z_n\}_{n=1}^N, x_{\text{max}}, p_{\text{max}}/p'_{\max}, h_x, h_p/h_{p'}, d_x, d_p/d_{p'}, \Delta$. \newline
    Simulations in the $NVT$ ensemble require additional input parameters: $N_f, T, Q, s_{\text{min}}, h_s, h_{p_s}, d_s, d_{p_s}$.}
    \KwOut{An $\epsilon$-precise approximation to $e^{-iL_{\tcla} t}$ with success probability $ \geq 1-\xi$.}
    \begin{enumerate}[leftmargin=*]
        \item Construct an $\epsilon/(2t)$-precise block-encoding of $L_{\tcla}$ as shown in Appendix \ref{app:L_class}\;
        \item  Apply the QSVT based Hamiltonian simulation algorithm from~\cite{Gilyen2019qsvt} to the block-encoding of $L_{\tcla}$ to obtain an $\epsilon$-precise block-encoding of $e^{-iL_{\tcla} t}$\;
        \item Use $O \lb \log \lb \frac{1}{\xi} \rb \rb$ rounds of fixed-point amplitude amplification to boost the success probability to at least $1- \xi$\;
    \end{enumerate}
\end{algorithm}

\begin{algorithm}
    \caption{$\mathtt{ElectronicLiouvillianEv}$: Evolution under the electronic Liouvillian}
    \label{alg:el_ev}
    \KwIn{$t, \epsilon, \xi, N, \tn, \{Z_n\}_{n=1}^N, x_{\text{max}}, p_{\text{max}}/p'_{\max}, h_x, h_p/h_{p'}, d_e, \Delta, B, h_{\tel}, \delta, \gamma, \chi$.}
    \KwOut{An $\epsilon$-precise approximation to $e^{-iL_{\tel} t}$ with success probability $\geq 1-\xi$.}
    Initialize the electronic register in the $\ket{0}$ state\;
    \For{$1 \le n \le N$}{
    \For{$1 \le j \le 3$}{
        Fourier transform the quantum register of the $j$-th momentum coordinate of the $n$-th nucleus:
        \begin{equation*}
            \qft \ket{\hap_{n,j}} = \frac{1}{\sqrt{g_p}} \sum_{l_{n,j}} e^{2\pi i \hap_{n,j} l_{n,j}/g_p}\ket{l_{n,j}}
        \end{equation*}
        $\ket{\hax_{n,j}} \gets \ket{\hax_{n,j} - d_e}$\;
        \For{$-d_e \le k \le d_e$}{
            \begin{equation*}
                t_{c_{d_e, k}, l_{n,j}} \gets \frac{c_{d_e, k}}{h_x} \frac{\sin{\lb 2\pi l_{n,j}/g_p \rb}}{h_p} t
            \end{equation*}
            Apply the state preparation oracle $U_I$ from Definition \ref{def:state_prep} to the entire nuclear positions register $\ket{\{x_n\}}$ and the electronic register: 
            \begin{equation*}
                U_I \ket{\{ x_n \}} \ket{0} = \ket{\{ x_n \}} \ket{\phi_0 \{ x_n \}}
            \end{equation*}
            Apply the ground state preparation algorithm from~\cite{Lin2020ground_state} to $\ket{\{ x_n \}} \ket{\phi_0 \{ x_n \}}$ 
            using a block-encoding of $H_{\tel}\lb \{x_n\} \rb$\;
            Apply $\exp \lb -i H_{\tel}\lb \{x_n\} \rb t_{c_{d_e, k}, l_{n,j}} \rb$ to the electronic register\;
            Uncompute the (approximate) electronic ground state\;
            $\ket{\hax_{n,j}} \gets \ket{\hax_{n,j} +1}$\;
        }
            $\ket{\hax_{n,j}} \gets \ket{\hax_{n,j} - d_e}$\;
        Apply $\qft^{-1}$ to the Fourier transformed momentum register to switch back to the $\ket{\hap_{n,j}}$ basis\;
    }
    }
    Use $O \lb \log \lb \frac{1}{\xi} \rb \rb$ rounds of fixed-point amplitude amplification to boost the success probability to at least $1- \xi$\;
\end{algorithm}

\subsection{Estimating the free energy}
\label{sec:free_energy}

Let us now discuss how to estimate the free energy of the nuclei after time $t$ (see Definition~\ref{def:free_energy}) using our Liouvillian simulation algorithm. 
At this stage, we do not assume that the nuclei are in thermal equilibrium.
First, we apply $U_{L_{NVT}}$ to the initial discretized phase space density of the nuclei and the heat bath to evolve them for time $t$. The main idea is to estimate the Gibbs entropy and the internal energy associated with $\ket{\rho_t} = e^{-iL_{NVT} t}\ket{\rho_0}$ separately and then add the results classically to estimate the free energy. This means we require at least two separate simulations.
In Appendix \ref{app:free} we show how to reduce the problem of estimating the Gibbs entropy of the nuclei to the problem of estimating the von Neumann entropy of a density matrix $\rho_{\text{sys}}'$ obtained from $\ket{\rho_t}$ by tracing out the bath register and removing the off-diagonal elements. This allows us to employ Theorem 13 of~\cite{Gilyen2019dist_prop_test}. Their algorithm requires access to a purification of the system's density matrix, which in our case is essentially just $\ket{\rho_t}$. From that purification, we first construct a block-encoding of $\rho_{\text{sys}}'$ and then use QSVT to transform the singular values $\rho_i$ of $\rho_{\text{sys}}'$ via a polynomial approximation to $\ln \lb 1/\rho_i \rb$. The resulting operator is then applied to the purification of $\rho_{\text{sys}}'$. Lastly, we use amplitude estimation to obtain an estimate of 
\begin{equation}
    \mathcal{S}_G = -k_b \, \text{Tr} \lb \rho_{\text{sys}}' \ln \rho_{\text{sys}}' \rb.
\end{equation}

Next, let us discuss how to estimate the internal energy associated with $\ket{\rho_t}$. First, note that a classical system can be described by a density matrix $\rho$ and a Hamiltonian $H$ both of which are diagonal in the computational basis. The internal energy of a classical system can thus be computed as follows:
\begin{equation}
    \mathcal{U} = \text{Tr} \lb  \rho H \rb.
\end{equation}
In our case, we have that $\rho \equiv \rho_{\text{sys}}'$ and $H \equiv H_{\text{nuc}} = H_{\text{kin}} + H_{\text{pot}} + H_{E_{\tel}}$.
In Appendix \ref{app:free} we show how to efficiently block-encode each of the three terms as given in Eqs.~\eqref{H_kin}, \eqref{H_pot} and \eqref{H_E_el}.
The idea then is to use the Hadamard test to estimate $\text{Tr} \lb \rho_{\text{sys}}'  H_{\text{kin}} \rb$, $\text{Tr} \lb \rho_{\text{sys}}'  H_{\text{pot}} \rb$ and $\text{Tr} \lb \rho_{\text{sys}}' H_{E_{\tel}} \rb$ individually and combine the results classically.

Let us now assume that the dynamics of the extended system (nuclei + heat bath) are ergodic, i.e.~that the extended system samples all phase space points associated with energy $E_{\text{ext}}$. This allows us to estimate the equilibrium free energy via coherent time averaging. More specifically, we first prepare the following time-averaged density matrix:
\begin{equation}
    \overline{\rho} := \frac{1}{t} \int_0^t e^{-iL_{NVT}\tau} \ketbra{\rho_0}{\rho_0} e^{iL_{NVT}\tau} d\tau,
\end{equation}
where $\ket{\rho_0}$ is an initial phase space density of the extended system that has support only on configurations with energy $E_{\text{ext}}$.
Operationally speaking, the above density matrix can be prepared by sampling $t' \in [0,t]$ uniformly at random and applying $e^{-iL_{NVT}t'}$ to $\ket{\rho_0}$.
If $t$ is sufficiently large then expectation values of observables estimated with $\overline{\rho}$ are approximately equal to expectation values computed with $\rho_{NVT}$ \cite{Short2012equilibration}.

Note that having access to an initial phase space density describing a microcanonical ensemble in the extended system, i.e. $\rho_0 \propto \delta \lb H_{\nuc}^{(NVT)} - E_{\text{ext}}\rb$, allows us to directly prepare the corresponding Boltzmann distribution over the nuclear variables by tracing out the bath variables, see Eq.~\eqref{partition_func}. This implies that we could estimate the free energy as an ensemble average without having to perform any time evolution.

\section{Conclusion}
\label{sec:conclusion}

Our main achievement is a new approach for efficiently simulating coupled quantum-classical dynamics on fault-tolerant quantum computers which provides a super-polynomial improvement in the precision scaling over previous work. 
The upper bounds on the computational costs of our algorithm for the evolution of a classical phase space density scales polynomially with the 1-norm of the Liouvillian and with the simulation time $t$. This is in stark contrast to earlier gradient-based approaches~\cite{Obrien2022}, which we show in Appendix~\ref{app:EulerCost} can scale under worst-case assumptions exponentially with the evolution time. The presented Liouvillian simulation algorithm illustrates the value of incorporating classical dynamics into quantum simulations coherently on fault-tolerant quantum computers and paves the way for simulating coupled quantum-classical systems. We apply the approach to the simulation of molecular systems in both the microcanonical and canonical ensembles and to the estimation of thermodynamic quantities, such as the free energy. 

To make our algorithms applicable to practical problems, challenges and limitations remain. 
For example, preparing the classical system in the canonical ensemble requires it to thermalize. In classical simulations, it is possible to have clear indicators of thermalization \cite{baras2011molecular,lamoureux2003modeling,bussi2007canonical}, while it is unclear how to estimate those indicators within our approach without sampling. Also, the computational cost of the free energy estimation scales exponentially with the number of particles because of the growth of the phase space \cite{Sorin2000}. 
Another challenge is the preparation of a quantum state encoding the initial phase space density of the classical particles since preparing an arbitrary quantum state can take time that scales exponentially with the number of qubits in the worst case.
Additionally, artifacts affecting classical simulation~\cite{harvey1998} will likely influence our simulation methodology. 

Compared to state-of-the-art classical MD, where many refinements have been developed over the years, our approach is still rudimentary. Several solutions could be explored for solving the above issues and adapted to complement our approach.
For example, exploiting adaptive time steps to improve the computational costs or using multiple coupled thermostats to allow correct thermalization \cite{marx2000ab}.

Future work must optimize our resource scalings, which are based on loose bounds. In terms of quantum circuit design, numerous improvements are possible. One example is the extensive use of products of block-encodings to simulate the parts of the Liouvillian. These costs could likely be brought down by designing a specific block-encoding in a single step. Similarly, it is an open question whether combining the classical and electronic Liouvillians in a higher-order Trotter formula is the most efficient choice. Exploiting the fractional query model~\cite{Berry2014alter_sign_trick} or multi-product formulas may lead to polylogarithmic scaling with the error tolerance rather than the sub-polynomial scalings obtained using high-order Trotter formulas. Another important task is the identification of a first potential application, along with assessing its run time and qubit count. This will aid in pinpointing bottlenecks within the proposed algorithm, as well as enable a comparison with alternative approaches. It would also be interesting to move beyond the Born-Oppenheimer approximation by adapting our algorithm to include excited electronic states.

Looking forward, a larger question emerges about the role that quantum computers may play in the simulation of classical or quantum-classical dynamics.  While our research strengthens our understanding of the advantages that quantum may provide for simulating such hybrid dynamics, it is still necessary to fully explore what are the limitations and opportunities that quantum computers face or provide when simulating both types of dynamical systems~\cite{Joseph2020, Jin2022, Babbush2023, Jin2023}.  Our belief is that further study of such applications will unveil a host of new use cases for quantum computers that lie outside of purely quantum simulations and, in turn will reveal that quantum computation is much more broadly applicable to simulation than previously thought.

\section*{Acknowledgements}
SS acknowledges support from a Research Award from Google Inc., NSERC Discovery Grants, as well as support from Boehringer Ingelheim.  NW's work on this project was supported by the, “Embedding Quantum Computing into Many-body
Frameworks for Strongly Correlated Molecular and Materials Systems” project, which is funded by the U.S. Department of Energy (DOE), Office of Science, Office of Basic
Energy Sciences, the Division of Chemical Sciences, Geosciences, and Biosciences. We thank Benjamin Ries, Aniket Magarkar, Thomas Fox and Rodrigo Ochoa for their insights and stimulating discussions on molecular dynamics applications. The authors thank Christofer Tautermann and Clemens Utschig-Utschig for useful discussions and feedback. Additionally, we thank Ryan Babbush and Tom O'Brien for their comments on the manuscript.

\bibliography{references}

\appendix
\label{appendix}

\section{Evolution under the classical Liouvillian}
\label{app:L_class}

We simulate $e^{-iL_{\tcla}t}$ using qubitization/QSVT~\cite{Low2019hamiltonian, Gilyen2019qsvt}, which requires us to prepare a block-encoding of $L_{\tcla}$ according to Definition \ref{def:block-encoding}. These block encodings are implemented using the linear combination of unitaries (LCU) framework~\cite{Childs2012lcu}. First, for an arbitrary matrix $A$ we decompose it into a linear combination of unitaries, $A = \sum_{j=0}^{2^a-1} \alpha_j U_j$ where $\alpha_j \ge 0 \,\, \forall j$. This linear combination can then be implemented using the following two unitary operations defined via their action on an ancilla register initialized to $\ket{0}^{\otimes a}$ and some quantum state $\ket{\psi}$:
\begin{align}
    \prep \ket{0}^{\otimes a}\ket{\psi} := \sum_j \sqrt{\frac{\alpha_j}{\alpha}}\ket{j}\ket{\psi} \label{prep}\\
    \sel \ket{j}\ket{\psi} := \ket{j}U_j\ket{\psi},
\end{align}
where $\alpha:= \sum_j \alpha_j$ is a normalization constant.
This allows us to implement $A$ probabilistically in the sense that
\begin{equation}
    \frac{A}{\alpha} = \lb\bra{0}^a\otimes \mathbb{1}\rb  \prep^\dagger \cdot \sel \cdot \prep \lb\ket{0}^a\otimes \mathbb{1}\rb.
\end{equation}

Once we have such a block-encoding of the classical Liouvillian $L_{\tcla}$, we can use QSVT to construct a polynomial approximation of $e^{-iL_{\tcla}t}$~\cite{Gilyen2019qsvt, Low2019hamiltonian}. The corresponding query complexity of block-Hamiltonian, or in our case block-Liouvillian, simulation based on qubitization/QSVT is stated below. 

\begin{lem}[Robust block-Hamiltonian simulation~\cite{Gilyen2019qsvt, Martyn2023coherent_sim}]
    Let $t \in \mathbb{R}_{\geq 0}$,  $\epsilon \in (0,1)$ and let $U$ be an $(\alpha, a, \epsilon/2t)$-block-encoding of the Hamiltonian $H$. Then we can implement an $\epsilon$-precise Hamiltonian simulation unitary $V$ which is an $(1, a+2, \epsilon)$-block-encoding of $e^{iHt}$, with probability of success at least $1-\xi$, with
    \begin{equation}
        O \lb \log{\lb \frac{1}{\xi} \rb} \lb \alpha t + \log \lb \frac{1}{\epsilon} \rb \rb \rb
    \end{equation}
    uses of $U$ or its inverse, 3 uses of controlled-$U$ or its inverse, using $O \lb \log{\lb \frac{1}{\xi} \rb} \lb a \alpha t + a \log \lb \frac{1}{\epsilon} \rb \rb \rb$ two-qubit gates and using $O(1)$ ancilla qubits.
\label{lem:rob_block-Ham_sim}
\end{lem}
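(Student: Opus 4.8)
The plan is to reduce the claim to the quantum eigenvalue/singular value transformation of $U$, following the now-standard QSVT route to Hamiltonian simulation, and then to do the error and success-probability bookkeeping carefully. \textbf{Step 1 (polynomial approximation).} Write $e^{iHt} = \cos(Ht) + i\sin(Ht)$. Using the Jacobi--Anger expansions $\cos(\alpha t\,z) = J_0(\alpha t) + 2\sum_{k\ge 1}(-1)^k J_{2k}(\alpha t)\,T_{2k}(z)$ and $\sin(\alpha t\,z) = 2\sum_{k\ge 0}(-1)^k J_{2k+1}(\alpha t)\,T_{2k+1}(z)$ for $z\in[-1,1]$, I would truncate each series at degree $D \in O\lb \alpha t + \log(1/\epsilon) \rb$; the super-exponential decay of the Bessel coefficients $J_k(\alpha t)$ once $k \gtrsim e\,\alpha t$ yields a uniform $\epsilon/8$-approximation on $[-1,1]$. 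I then rescale the truncations by a factor $1-\epsilon/8$ so that the resulting even polynomial $P_{\cos}$ and odd polynomial $P_{\sin}$ satisfy the QSP normalization $\norm{P}_{\infty,[-1,1]}\le 1$, at the price of an extra $\epsilon/8$ error.

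\textbf{Step 2 (apply QSVT).} Let $\tilde H := \alpha\lb\bra{0}^{\otimes a}\otimes\mathbb{1}\rb U\lb\ket{0}^{\otimes a}\otimes\mathbb{1}\rb$ be the Hermitian operator actually block-encoded by $U$, so $\norm{\tilde H/\alpha}\le 1$. Applying the quantum eigenvalue transformation of~\cite{Gilyen2019qsvt} to $U$ with QSP phase sequences realizing $P_{\cos}$ and $P_{\sin}$ of $\tilde H/\alpha$ costs $O(D)$ uses of $U$ and $U^\dagger$ together with $O(D)$ controlled reflections about $\ket{0}^{\otimes a}$, each of the latter using $O(a)$ two-qubit gates. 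A one-ancilla linear-combination-of-unitaries step (a Hadamard-test-like circuit, whence the ``$3$ uses of controlled-$U$ or its inverse'') then combines the two results into a $(1,a+2,\cdot)$-block-encoding of $(1-\epsilon/8)\lb P_{\cos}(\tilde H/\alpha) + i P_{\sin}(\tilde H/\alpha)\rb$, which is within $O(\epsilon)$ of $(1-O(\epsilon))\,e^{i\tilde Ht}$.

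\textbf{Step 3 (robustness and error accounting).} The hypothesis is $\norm{H - \tilde H}\le \epsilon/2t$, so Duhamel's formula gives $\norm{e^{iHt} - e^{i\tilde Ht}} \le t\,\norm{H-\tilde H} \le \epsilon/2$. Adding the $\epsilon/8 + \epsilon/8$ truncation-plus-rescaling errors and the residual LCU error, the block-encoded operator is within $\epsilon$ of $e^{iHt}$ up to an overall shrink factor $1-O(\epsilon)$. \textbf{Step 4 (success probability).} That shrink factor, together with the slight subnormalization forced by the QSP norm constraint, means the $a+2$ flag qubits are measured in $\ket{0}$ with probability bounded below by a constant; $O\lb\log(1/\xi)\rb$ rounds of fixed-point amplitude amplification boost this to $\ge 1-\xi$ while preserving the action $e^{iHt}$ up to $O(\epsilon)$, and the residual subnormalization is absorbed into the final $(1,a+2,\epsilon)$-block-encoding $V$. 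This multiplies the query count to $O\lb\log(1/\xi)(\alpha t + \log(1/\epsilon))\rb$, multiplies the two-qubit-gate count by the extra $O(a)$ from the ancilla reflections, and uses only $O(1)$ further ancillas.

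The conceptual content is entirely in Steps 1--2 (the Bessel-tail degree estimate and the invocation of QSVT); I expect the main obstacle to be the simultaneous bookkeeping in Steps 3--4 --- honoring $\norm{P}_\infty\le 1$, propagating the $\epsilon/2t$-imprecise block-encoding through the transformation via the Duhamel bound, and converting the near-unit success amplitude into a clean $1-\xi$ guarantee --- without inflating the stated asymptotics. None of this requires ideas beyond those already in~\cite{Gilyen2019qsvt, Low2019hamiltonian, Martyn2023coherent_sim}.
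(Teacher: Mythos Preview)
Your proposal is correct and follows essentially the same approach as the paper: the paper does not give a self-contained proof but simply cites~\cite{Gilyen2019qsvt} for the constant-success-probability version (which is precisely your Jacobi--Anger + QSVT Steps 1--3) and then invokes fixed-point amplitude amplification~\cite{Martyn2023coherent_sim} to boost the success probability to $1-\xi$ (your Step 4). Your sketch fills in more detail than the paper itself provides, but the underlying argument is identical.
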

A proof of Lemma \ref{lem:rob_block-Ham_sim} with constant success probability is given in~\cite{Gilyen2019qsvt}. 
Using (fixed-point) amplitude amplification, we can boost the success probability to $1-\xi$ at the expense of a multiplicative factor of $\log{\lb \frac{1}{\xi} \rb}$ \cite{Martyn2023coherent_sim}.

The following Lemma bounds the error propagation of the $\prep$ subroutine which will be useful for our discussion of the overall block-encoding error of the classical Liouvillian.
\begin{lem}[Error propagation of $\prep$]
    Let $A = \sum_{j=0}^{2^a-1} \alpha_j U_j$ be a a linear combination of unitaries with $\alpha_j \ge 0 \,\, \forall j$. Let $\ket{\prep} := \sum_j \sqrt{\frac{\alpha_j}{\alpha}}\ket{j}$ 
    with $\alpha := \sum_j \alpha_j$ be the quantum state prepared by the $\prep$ subroutine as defined in Eq.~\eqref{prep}. Let $\ket{\widetilde{\prep}} := \sum_j c_j \ket{j}$ be an $\epsilon/(2 \alpha \sqrt{2^a})$-precise approximation to $\ket{\prep}$ prepared by $\widetilde{\prep}$ such that $\norm{\ket{\widetilde{\prep}} - \ket{\prep}} \leq \epsilon/(2 \alpha \sqrt{2^a})$. Given access to the unitary $\sel := \sum_j \ketbra{j}{j} \otimes U_j$,
    we can implement an $\epsilon$-precise block-encoding of $A$.
\label{lem:prep_error}
\end{lem}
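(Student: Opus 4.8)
The plan is to compare the ideal block-encoding unitary $W := \prep^\dagger \cdot \sel \cdot \prep$ with the approximate one $\widetilde{W} := \widetilde{\prep}^\dagger \cdot \sel \cdot \widetilde{\prep}$, and to show that the operator extracted from the top-left block of $\widetilde{W}$ is $\epsilon$-close in spectral norm to $A/\alpha$. First I would write $\widetilde{\prep} = \prep + E_{\prep}$ at the level of the isometry $\ket{0}^{\otimes a}\mapsto \ket{\widetilde{\prep}}$ (extended arbitrarily to a unitary on the ancilla register), and note that the induced operator norm of the difference of the two unitaries is controlled by $\norm{\ket{\widetilde{\prep}} - \ket{\prep}}$; more precisely, I would use that if $\norm{V_1 - V_2}\le \varepsilon_0$ for unitaries then $\norm{V_1 A' V_1^\dagger - V_2 A' V_2^\dagger}\le 2\varepsilon_0\norm{A'}$ for any bounded $A'$, together with a telescoping $\widetilde{\prep}^\dagger\sel\widetilde{\prep} - \prep^\dagger\sel\prep = (\widetilde{\prep}-\prep)^\dagger\sel\widetilde{\prep} + \prep^\dagger\sel(\widetilde{\prep}-\prep)$, so that $\norm{\widetilde{W}-W}\le 2\norm{\widetilde{\prep}-\prep}$ since $\norm{\sel}=1$.

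The subtlety is that the hypothesis bounds $\norm{\ket{\widetilde{\prep}} - \ket{\prep}}$ only as a bound on the image of the single state $\ket{0}^{\otimes a}$, whereas the telescoping above wants a bound on $\norm{\widetilde{\prep}-\prep}$ as operators on the whole ancilla space. Here I would exploit the projector structure: the block-encoding condition only involves $(\bra{0}^{\otimes a}\otimes\mathbb{1})\widetilde{W}(\ket{0}^{\otimes a}\otimes\mathbb{1})$, so it suffices to control $(\bra{0}^{\otimes a}\otimes\mathbb{1})\widetilde{W}(\ket{0}^{\otimes a}\otimes\mathbb{1}) - (\bra{0}^{\otimes a}\otimes\mathbb{1})W(\ket{0}^{\otimes a}\otimes\mathbb{1}) = (\ket{\widetilde{\prep}}^\dagger \sel \ket{\widetilde{\prep}} - \ket{\prep}^\dagger \sel \ket{\prep})$ where now $\ket{\prep},\ket{\widetilde{\prep}}$ are the specific prepared states, i.e. the operators $\ket{\prep} = \sum_j\sqrt{\alpha_j/\alpha}\,\ket{j}$ and $\ket{\widetilde{\prep}} = \sum_j c_j\ket{j}$ viewed as maps from the system register to system$\otimes$ancilla. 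Then the same telescoping gives a bound of $2\norm{\ket{\widetilde{\prep}}-\ket{\prep}}$, so that $\norm{A/\alpha - (\bra{0}^{\otimes a}\otimes\mathbb{1})\widetilde{W}(\ket{0}^{\otimes a}\otimes\mathbb{1})} \le 2\norm{\ket{\widetilde{\prep}}-\ket{\prep}}$, and multiplying through by $\alpha$ shows $\widetilde{W}$ is an $(\alpha, a, 2\alpha\norm{\ket{\widetilde{\prep}}-\ket{\prep}})$-block-encoding of $A$.

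It remains to reconcile the factor $2\alpha\cdot \epsilon/(2\alpha\sqrt{2^a}) = \epsilon/\sqrt{2^a}$ with the claimed precision $\epsilon$, which is in fact slack by a factor $\sqrt{2^a}$; I would check whether the $\sqrt{2^a}$ in the hypothesis is there to absorb the difference between the normalized-state error and an $\ell^1$-type error on the amplitudes, or is simply a conservative choice, and state the result with whichever bound falls out — the point being that an $\epsilon/(2\alpha\sqrt{2^a})$-accurate state preparation certainly suffices for an $\epsilon$-precise block-encoding. The one genuine technical point to be careful about, and the main obstacle, is justifying that $\norm{\widetilde{\prep}^\dagger\sel\widetilde{\prep} - \prep^\dagger\sel\prep}$ restricted to the relevant block is bounded by $2\norm{\ket{\widetilde{\prep}}-\ket{\prep}}$ rather than something involving the (potentially uncontrolled) behavior of $\widetilde{\prep}$ away from $\ket{0}^{\otimes a}$ — this is precisely what the projection onto $\ket{0}^{\otimes a}$ on both sides buys us, and making that argument clean is the crux.
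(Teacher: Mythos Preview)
Your proposal is correct and takes a genuinely different (and in fact sharper) route than the paper's proof. The paper computes the block directly as $\sum_j |c_j|^2 U_j$, writes $|c_j|^2 - \alpha_j/\alpha = c_j\bigl(c_j^* - \sqrt{\alpha_j/\alpha}\bigr) + \sqrt{\alpha_j/\alpha}\bigl(c_j - \sqrt{\alpha_j/\alpha}\bigr)$, applies the triangle inequality term-by-term to obtain the bound $2\alpha\sum_j\bigl|c_j - \sqrt{\alpha_j/\alpha}\bigr|$, and then invokes the $\ell^1$--$\ell^2$ inequality $\norm{v}_1 \le \sqrt{2^a}\norm{v}_2$ to reach $2\alpha\sqrt{2^a}\norm{\ket{\widetilde{\prep}} - \ket{\prep}}\le\epsilon$. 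This is precisely the ``$\ell^1$-type error on the amplitudes'' you suspected, and it is the sole reason the $\sqrt{2^a}$ appears in the hypothesis.

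Your telescoping argument on $(\bra{\widetilde{\prep}}\otimes\mathbb{1})\,\sel\,(\ket{\widetilde{\prep}}\otimes\mathbb{1}) - (\bra{\prep}\otimes\mathbb{1})\,\sel\,(\ket{\prep}\otimes\mathbb{1})$ bounds the block error by $2\norm{\ket{\widetilde{\prep}}-\ket{\prep}}$ directly in operator norm, never passing through an amplitude-wise $\ell^1$ bound. Consequently you only need $\norm{\ket{\widetilde{\prep}}-\ket{\prep}}\le\epsilon/(2\alpha)$, and the stated hypothesis $\epsilon/(2\alpha\sqrt{2^a})$ is indeed slack by $\sqrt{2^a}$, exactly as you noted. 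Your identification of the crux --- that projecting onto $\ket{0}^{\otimes a}$ on both sides means only $\widetilde{\prep}\ket{0}^{\otimes a}$ matters, not the full unitary $\widetilde{\prep}$ --- is the right observation and resolves your stated obstacle cleanly.
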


\begin{proof}
    First, recall that for any $v \in \mathbb{C}^{2^a}$ it holds that $\norm{v}_1 = \sum_{j=0}^{2^a-1} |v_j| \leq \sqrt{2^a} \norm{v}_2 \equiv \sqrt{2^a} \norm{v}$. Using this inequality and the triangle inequality, one finds that
    \begin{equation}
    \begin{split}
        &\norm{\alpha \lb\bra{0}^a\otimes \mathbb{1}\rb  \widetilde{\prep}^\dagger \cdot \sel \cdot \widetilde{\prep} \lb\ket{0}^a\otimes \mathbb{1}\rb - A} \\ 
        &= \norm{\alpha \lb\bra{0}^a\otimes \mathbb{1}\rb  \widetilde{\prep}^\dagger \cdot \sel \cdot \widetilde{\prep} \lb\ket{0}^a\otimes \mathbb{1}\rb - \alpha \lb\bra{0}^a\otimes \mathbb{1}\rb  \prep^\dagger \cdot \sel \cdot \prep \lb\ket{0}^a\otimes \mathbb{1}\rb} \\
        &= \alpha \norm{\sum_j c_j c_{j}^* U_j - \sum_j \frac{\alpha_j}{\alpha} U_j} 
        \leq \alpha \sum_j \left|c_j c_{j}^* - c_j \sqrt{\frac{\alpha_j}{\alpha}} + c_j \sqrt{\frac{\alpha_j}{\alpha}}  - \frac{\alpha_j}{\alpha}\right|\norm{U_j} \\
        &\leq \alpha \sum_j \left|c_j\right| \left|c_j^* - \sqrt{\frac{\alpha_j}{\alpha}}\right| + \alpha \sum_j \left|\sqrt{\frac{\alpha_j}{\alpha}}\right| \left|c_j - \sqrt{\frac{\alpha_j}{\alpha}}\right|
        \leq 2 \alpha \sum_j \left|c_j - \sqrt{\frac{\alpha_j}{\alpha}}\right| \\
        &\leq 2 \alpha \sqrt{2^a} \sqrt{\sum_j \left|c_j - \sqrt{\frac{\alpha_j}{\alpha}}\right|^2} =  2 \alpha \sqrt{2^a} \norm{\ket{\widetilde{\prep}} - \ket{\prep}} \leq \epsilon.
    \end{split}
    \end{equation}
\end{proof}

Lemmas \ref{lem:bounds_L_class_NVE} and \ref{lem:bounds_L_class_NVT} provide upper bounds on the complexity of block-encoding the classical Liouvillian in the $NVE$ and $NVT$ ensemble, respectively.
We prove these lemmas by explicitly constructing a block-encoding of the relevant $L_{\tcla}$. The general idea is to block-encode each term of $L_{\tcla}$ separately and then combine all the smaller block-encodings to obtain a block-encoding of the overall classical Liouvillian. To do so, we need to know how to multiply two block-encodings.

\begin{lem}[Product of block-encoded matrices (\cite{Gilyen2019qsvt}, Lemma 30)]
    If $U$ is an $(\alpha, a, \delta)$-block-encoding of an s-qubit operator $A$ and $V$ is a $(\beta, b, \epsilon)$-block-encoding of an s-qubit operator $B$, then $(I_a \otimes U)(I_b \otimes V)$ is an $(\alpha \beta, a+b, \alpha \epsilon + \beta \delta)$-block-encoding of $AB$.
\label{lem:product_block_encoding}
\end{lem}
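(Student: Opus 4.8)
The plan is to prove the lemma by identifying exactly which $s$-qubit operator the composed unitary block-encodes, and then bounding the distance of that operator to $AB$ with a single application of the triangle inequality.

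First I would set up notation for the ancillas. View the full register as $\mathcal{A}\otimes\mathcal{B}\otimes\mathcal{S}$, where $\mathcal{A}$ holds the $a$ ancilla qubits of $U$, $\mathcal{B}$ holds the $b$ ancilla qubits of $V$, and $\mathcal{S}$ holds the $s$ system qubits. Let $\widehat U$ be $U$ (acting on $\mathcal{A}\otimes\mathcal{S}$) extended by the identity on $\mathcal{B}$, and let $\widehat V$ be $V$ (acting on $\mathcal{B}\otimes\mathcal{S}$) extended by the identity on $\mathcal{A}$; with these conventions the operator $(I_a\otimes U)(I_b\otimes V)$ in the statement is $\widehat U\widehat V$. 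Write $\Pi_{\mathcal{A}} := \ketbra{0}{0}^{\otimes a}$ and $\Pi_{\mathcal{B}} := \ketbra{0}{0}^{\otimes b}$, so the projector onto the all-zero ancilla subspace is $\Pi := \Pi_{\mathcal{A}}\otimes\Pi_{\mathcal{B}}$ (tensored with $\mathbb{1}$ on $\mathcal{S}$). Set $\widetilde A := \alpha\lb\bra{0}^{\otimes a}\otimes\mathbb{1}\rb U\lb\ket{0}^{\otimes a}\otimes\mathbb{1}\rb$ and $\widetilde B := \beta\lb\bra{0}^{\otimes b}\otimes\mathbb{1}\rb V\lb\ket{0}^{\otimes b}\otimes\mathbb{1}\rb$; by Definition \ref{def:block-encoding} we have $\norm{A-\widetilde A}\le\delta$ and $\norm{B-\widetilde B}\le\epsilon$, while $\norm{\widetilde A}\le\alpha$ and $\norm{\widetilde B}\le\beta$ since $U,V$ are unitary and projectors are contractions.

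The crux is the algebraic identity
\[
    \Pi\,\widehat U\,\widehat V\,\Pi \;=\; \Pi_{\mathcal{A}}\otimes\Pi_{\mathcal{B}}\otimes\frac{\widetilde A\,\widetilde B}{\alpha\beta}.
\]
To establish it, observe that $\widehat U$ acts trivially on $\mathcal{B}$, hence commutes with $\Pi_{\mathcal{B}}$, and $\widehat V$ acts trivially on $\mathcal{A}$, hence commutes with $\Pi_{\mathcal{A}}$; combining these with $[\Pi_{\mathcal{A}},\Pi_{\mathcal{B}}]=0$ one reorders $\Pi\widehat U\widehat V\Pi$ into $\lb\Pi_{\mathcal{A}}\widehat U\Pi_{\mathcal{A}}\rb\lb\Pi_{\mathcal{B}}\widehat V\Pi_{\mathcal{B}}\rb$, and then $\Pi_{\mathcal{A}}\widehat U\Pi_{\mathcal{A}}=\Pi_{\mathcal{A}}\otimes(\widetilde A/\alpha)$ and $\Pi_{\mathcal{B}}\widehat V\Pi_{\mathcal{B}}=\Pi_{\mathcal{B}}\otimes(\widetilde B/\beta)$ directly from the definitions of $\widetilde A,\widetilde B$. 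Reading off the $\mathcal{S}$-block then yields $\alpha\beta\lb\bra{0}^{\otimes(a+b)}\otimes\mathbb{1}\rb\widehat U\widehat V\lb\ket{0}^{\otimes(a+b)}\otimes\mathbb{1}\rb=\widetilde A\widetilde B$.

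It remains to compare $\widetilde A\widetilde B$ with $AB$. Writing $AB-\widetilde A\widetilde B=(A-\widetilde A)B+\widetilde A(B-\widetilde B)$ and using submultiplicativity of the spectral norm,
\[
    \norm{AB-\widetilde A\widetilde B}\;\le\;\norm{A-\widetilde A}\,\norm{B}+\norm{\widetilde A}\,\norm{B-\widetilde B}\;\le\;\delta\beta+\alpha\epsilon,
\]
where we used $\norm{\widetilde A}\le\alpha$ and the (standard) bound $\norm{B}\le\beta$; this is exactly the asserted $(\alpha\beta,\,a+b,\,\alpha\epsilon+\beta\delta)$-block-encoding property. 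The step I expect to require the most care is the bookkeeping in the first paragraph: one must pad each of $U$ and $V$ with the identity on the \emph{other} one's ancilla register, since it is precisely this disjointness that makes the commutations $[\widehat U,\Pi_{\mathcal{B}}]=[\widehat V,\Pi_{\mathcal{A}}]=0$ valid and hence lets the key identity go through. Everything else is a routine norm estimate.
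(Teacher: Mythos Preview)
The paper does not supply its own proof of this lemma; it is quoted verbatim from \cite{Gilyen2019qsvt} and used as a black box. Your argument is the standard one and is correct: the key identity $\Pi\,\widehat U\,\widehat V\,\Pi=\Pi_{\mathcal A}\otimes\Pi_{\mathcal B}\otimes(\widetilde A\widetilde B/\alpha\beta)$ follows exactly from the commutations you isolate, and the error bound follows from the telescoping decomposition. One small remark: the step $\norm{B}\le\beta$ is not literally implied by Definition~\ref{def:block-encoding}, which only gives $\norm{B}\le\beta+\epsilon$; the clean bound $\alpha\epsilon+\beta\delta$ therefore tacitly assumes $\norm{B}\le\beta$ (as is conventional, and as the original reference does). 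If you want to avoid that assumption entirely you can instead write $AB-\widetilde A\widetilde B=(A-\widetilde A)\widetilde B+A(B-\widetilde B)$ and use $\norm{\widetilde B}\le\beta$, at the cost of needing $\norm{A}\le\alpha$ on the other factor---either way one of the two norm bounds on $A,B$ must be invoked.
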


A linear combination of block-encodings can be constructed using the concept of a state preparation pair.

\begin{defn}[State preparation pair (\cite{Gilyen2019qsvt}, Definition 28)]
    Let $y \in \mathbb{C}^m$ and $\norm{y}_1 \leq \beta$. The pair of unitaries $\lb P_L, P_R \rb$ is called a $\lb \beta, b, \epsilon \rb$-state-preparation-pair if $P_L \ket{0}^{\otimes b} = \sum_{j=0}^{2^b-1} c_j \ket{j}$ and $P_R \ket{0}^{\otimes b} = \sum_{j=0}^{2^b-1} d_j \ket{j}$ such that $\sum_{j=0}^{m-1} \left| \beta c_j^* d_j - y_j \right| \leq \epsilon$ and for all $j \in \{ m, \dots, 2^b-1 \}$ we have $c_j^* d_j =0$.
\end{defn}

Note that $b$ in the above definition is chosen such that $2^b \geq m$. This is necessary to accommodate all $m$ entries of $y$. In general, $m$ does not need to be a power of 2. The condition $c_j^* d_j =0$ for all $j \in \{ m, \dots, 2^b-1 \}$ ensures that we are limited to an $m$-dimensional subspace of the $2^b$-dimensional space of the $b$ register.

For our purposes, it will always be true that $P_L = P_R$ in which case we call $P_L$ a state preparation unitary.

\begin{lem}[Linear combination of block-encoded matrices (improved version of Lemma 29, appeared in~\cite{Gilyen2019qsvt}).]
    Let $A = \sum_{j=0}^{m-1} y_j A_j$ be an $s$-qubit operator and $\epsilon \in \mathbb{R}_{> 0}$. Suppose that $(P_L, P_R)$ is a $(\beta, b, \epsilon_1)$-state-preparation pair for $y$, $W = \sum_{j=0}^{m-1} \ket{j}\!\bra{j} \otimes U_j + ((I - \sum_{j=0}^{m-1} \ket{j}\!\bra{j}) \otimes I_a \otimes I_s$ is an $(s + a + b)$-qubit unitary such that $\forall j \in \{0, 1, \dots, m-1\}$ we have that $U_j$ is an $(\alpha, a, \epsilon_2)$-block-encoding of $A_j$. Then we can implement an $(\alpha \beta, a + b, \alpha \epsilon_1 + \beta \epsilon_2)$-block-encoding of $A$, with a single use of $W$, $P_R$ and $P_L^\dagger$.
\label{lem:sum_block_encoding}
\end{lem}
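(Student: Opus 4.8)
The plan is to write the block-encoding down explicitly as the ``sandwich''
\begin{equation}
    \widetilde{W} := \lb P_L^\dagger \otimes I_a \otimes I_s \rb \, W \, \lb P_R \otimes I_a \otimes I_s \rb,
\end{equation}
which is manifestly a unitary on $s+a+b$ qubits and invokes $W$, $P_R$ and $P_L^\dagger$ exactly once each, matching the claimed cost. It then remains to show that the top-left $(a+b)$-ancilla block of $\widetilde{W}$, rescaled by $\alpha\beta$, is $(\alpha\epsilon_1 + \beta\epsilon_2)$-close to $A$.

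First I would compute that block. Writing $P_R\ket{0}^{\otimes b} = \sum_j d_j\ket{j}$ and $P_L\ket{0}^{\otimes b} = \sum_j c_j\ket{j}$, and using that $W$ is block-diagonal in the $b$-register — acting as $U_j$ on the block labelled $\ket{j}$ for $j<m$ and as the identity on the remaining blocks — one obtains
\begin{equation}
    \lb \bra{0}^{\otimes(a+b)} \otimes I_s \rb \widetilde{W} \lb \ket{0}^{\otimes(a+b)} \otimes I_s \rb = \sum_{j=0}^{m-1} c_j^* d_j \, \lb \bra{0}^{\otimes a}\otimes I_s \rb U_j \lb \ket{0}^{\otimes a}\otimes I_s \rb .
\end{equation}
The identity ``padding'' blocks with $j \ge m$ contribute nothing precisely because the state-preparation-pair condition enforces $c_j^* d_j = 0$ there; this is the one place where the structural hypothesis on $(P_L,P_R)$ is genuinely used, and also the place where one must keep the ordering of the $b$, $a$ and $s$ tensor factors consistent between $W$ and the projectors.

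Next I would estimate the error. Put $\widetilde{A}_j := \alpha\lb \bra{0}^{\otimes a}\otimes I_s\rb U_j \lb \ket{0}^{\otimes a}\otimes I_s\rb$; since $\widetilde A_j$ is $\alpha$ times a compression of a unitary, $\norm{\widetilde A_j} \le \alpha$, and by hypothesis $\norm{\widetilde A_j - A_j} \le \epsilon_2$. Multiplying the displayed block by $\alpha\beta$ gives $\beta\sum_j c_j^* d_j \widetilde A_j$, and comparing with $A = \sum_j y_j A_j$ by inserting the intermediate sum $\sum_j y_j \widetilde A_j$, the triangle inequality yields
\begin{equation}
    \norm{\beta\sum_j c_j^* d_j \widetilde A_j - \sum_j y_j A_j} \le \alpha\sum_j \left|\beta c_j^* d_j - y_j\right| + \epsilon_2 \sum_j |y_j| \le \alpha\epsilon_1 + \beta\epsilon_2,
\end{equation}
using $\norm{\widetilde A_j}\le\alpha$, $\norm{\widetilde A_j - A_j}\le\epsilon_2$, the state-preparation-pair bound $\sum_j|\beta c_j^* d_j - y_j|\le\epsilon_1$, and $\sum_j|y_j| = \norm{y}_1\le\beta$. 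This is exactly the claimed block-encoding error, so $\widetilde W$ is an $(\alpha\beta, a+b, \alpha\epsilon_1+\beta\epsilon_2)$-block-encoding of $A$.

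There is no deep obstacle here; the work is careful bookkeeping of registers and norms. The one subtlety worth flagging is that the error must be routed through the \emph{exact} block $\widetilde A_j$ rather than through $A_j$: only $\widetilde A_j$ has the clean bound $\norm{\widetilde A_j}\le\alpha$, whereas $\norm{A_j}$ could be as large as $\alpha+\epsilon_2$, and routing through $A_j$ would produce the looser first term $(\alpha+\epsilon_2)\epsilon_1$. Getting the tight constant $\alpha\epsilon_1+\beta\epsilon_2$ while allowing $\norm{y}_1\le\beta$ (rather than equality) and $P_L\ne P_R$ is precisely what distinguishes this from the original Lemma~29.
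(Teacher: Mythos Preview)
Your proposal is correct and follows essentially the same route as the paper: both define $\widetilde{W} = (P_L^\dagger \otimes I_a \otimes I_s)\, W\, (P_R \otimes I_a \otimes I_s)$, compute its top-left block as $\sum_j c_j^* d_j\, (\bra{0}^{\otimes a}\otimes I_s) U_j (\ket{0}^{\otimes a}\otimes I_s)$, insert the intermediate term $\sum_j y_j \widetilde{A}_j$, and apply the triangle inequality with $\norm{\widetilde{A}_j}\le\alpha$ and $\norm{y}_1\le\beta$. Your write-up is in fact slightly more careful than the paper's in flagging why the $j\ge m$ blocks vanish and why one must route through $\widetilde{A}_j$ rather than $A_j$ to get the sharp constant.
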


\begin{proof}
    We adapt the proof from~\cite{Gilyen2019qsvt} by showing that $\widetilde{W}:= \lb P_L^{\dagger} \otimes I_a \otimes I_s \rb W \lb P_R \otimes I_a \otimes I_s \rb$ is an $\lb \alpha \beta , a + b, \alpha \epsilon_1 + \beta \epsilon_2 \rb$-block-encoding of $A$:
    \begin{equation*}
    \begin{split}
        \norm{A - \alpha \beta \lb \bra{0}^{\otimes b} \otimes \bra{0}^{\otimes a} \otimes I \rb \widetilde{W} \lb \ket{0}^{\otimes b} \otimes \ket{0}^{\otimes a} \otimes I \rb} &= \norm{A - \alpha \sum_{j=0}^{m-1} \beta \lb c_j^* d_j \rb \lb \bra{0}^{\otimes a} \otimes I \rb U_j \lb \ket{0}^{\otimes a} \otimes I \rb} \\
        &\leq \alpha \epsilon_1 + \norm{A - \alpha \sum_{j=0}^{m-1} y_j \lb \bra{0}^{\otimes a} \otimes I \rb U_j \lb \ket{0}^{\otimes a} \otimes I \rb} \\
        &\leq \alpha \epsilon_1 + \sum_{j=0}^{m-1} y_j \norm{A_j - \alpha \lb \bra{0}^{\otimes a} \otimes I \rb U_j \lb \ket{0}^{\otimes a} \otimes I \rb} \\
        &\leq \alpha \epsilon_1 + \sum_{j=0}^{m-1} y_j \epsilon_2 \\
        &\leq \alpha \epsilon_1 + \beta \epsilon_2.
     \end{split}
    \end{equation*}
\end{proof}

Additionally, we require a bound on the coefficients of higher-order central finite difference approximations arising from the discretized derivative operators $D_x$ and $D_p$ for the $NVE$ ensemble and $D_x$, $D_{p'}$, $D_s$ and $D_{p_s}$ for the $NVT$ ensemble. The following result provides a higher order bound on centered difference formulas that can be used for our purposes.

\begin{lem}[Higher-order central finite difference approximation~\cite{Li2005finite_difference}]
    Let $d$ be an integer and $\{x_k\}$ with $k \in \{-d, -d+1, \dots, d-1, d\}$ a set of equally spaced points on the interval $[a,b]$, i.e.
    \begin{equation}
        x_k = x_0 + h
    \end{equation}
    for some $h > 0$. Furthermore, let $f \in C^{2d+1}[a,b]$ be a function on the interval $[a,b]$ which is $2d+1$ times continuously differentiable. Then one can use a linear combination of $f(x_k)$ to construct a central finite difference formula of order $2d$ to approximate $f^{(1)}(x_0) = f'(x_0)$, i.e.
    \begin{equation}
        f'(x_0) = \frac{1}{h} \sum_{k = -d}^{d} c_{d,k} f(x_k) + R_d(x_0)
    \end{equation}
    where
    \begin{align}
        c_{d,k} := 
        \begin{cases}
            \frac{(-1)^{k+1}(d!)^2}{k(d-k)!(d+k)!} ,& \text{if } k\neq 0\\
            0 ,              & \text{else}.
        \end{cases}
    \label{fin_diff_coeff}
    \end{align}
    The remainder term can be bounded as follows:
    \begin{equation}
        R_d(x_0) \in O \left( \max_{x \in [a,b]} \left| f^{(2d+1)}(x) \right| \left(  \frac{e h}{2} \right)^{2d} \right).
    \label{FDerror}
    \end{equation}
\label{lem:finite_diff}
\end{lem}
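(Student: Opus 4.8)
The plan is to derive both the stencil weights $c_{d,k}$ and the remainder from polynomial interpolation. Let $w(x) := \prod_{k=-d}^{d}(x - x_k)$ be the (monic, degree $2d+1$) nodal polynomial at the equally spaced nodes $x_k = x_0 + kh$, and let $P$ be the unique polynomial of degree at most $2d$ with $P(x_k) = f(x_k)$ for all $k$; write it in Lagrange form $P(x) = \sum_{k=-d}^{d} f(x_k)\,\ell_k(x)$ with $\ell_k(x) = w(x)/((x - x_k)\,w'(x_k))$. We approximate $f'(x_0)$ by $P'(x_0) = \sum_k f(x_k)\,\ell_k'(x_0)$, so the stencil formula follows once we check $\ell_k'(x_0) = c_{d,k}/h$, after which $R_d(x_0) = f'(x_0) - P'(x_0)$.

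First I would verify the coefficients. Since $x_0$ is the $k=0$ node, $w(x_0)=0$ and $\ell_k(x_0)=0$ for $k\neq 0$; differentiating $(x - x_k)\,w'(x_k)\,\ell_k(x) = w(x)$ at $x_0$ then gives $\ell_k'(x_0) = w'(x_0)/((x_0 - x_k)\,w'(x_k))$. Using $x_0 - x_m = -mh$ one computes $w'(x_0) = \prod_{m \neq 0}(-mh) = (-1)^d (d!)^2 h^{2d}$ and $w'(x_k) = \prod_{m \neq k}(x_k - x_m) = h^{2d}\prod_{m\neq k}(k - m) = (-1)^{d-k}(d+k)!\,(d-k)!\,h^{2d}$; substituting and simplifying the signs yields exactly $\ell_k'(x_0) = \frac{1}{h}\cdot\frac{(-1)^{k+1}(d!)^2}{k\,(d-k)!\,(d+k)!} = c_{d,k}/h$. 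For $k=0$, $\ell_0(x_0)=1$ and the logarithmic derivative gives $\ell_0'(x_0) = \sum_{m\neq 0}(x_0 - x_m)^{-1} = \frac{1}{h}\sum_{m\neq 0}(-1/m) = 0$ by the symmetry of the stencil, matching $c_{d,0}=0$.

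For the remainder I would argue via Rolle's theorem rather than differentiating the classical interpolation-error formula directly (the point implicit in the latter need not depend smoothly on the evaluation point). Set $c := (f'(x_0) - P'(x_0))/w'(x_0)$ and $H(t) := f(t) - P(t) - c\,w(t)$. Then $H$ vanishes at all $2d+1$ nodes, and $H'(x_0) = f'(x_0) - P'(x_0) - c\,w'(x_0) = 0$, so $H$ has at least $2d+2$ zeros in $[a,b]$ counting multiplicity; repeated application of Rolle's theorem produces $\xi \in [a,b]$ with $H^{(2d+1)}(\xi)=0$. Since $\deg P \le 2d$ and $w$ is monic of degree $2d+1$, $H^{(2d+1)} \equiv f^{(2d+1)} - (2d+1)!\,c$, hence $c = f^{(2d+1)}(\xi)/(2d+1)!$, and therefore
\begin{equation*}
    R_d(x_0) = c\,w'(x_0) = (-1)^d\,\frac{(d!)^2}{(2d+1)!}\,h^{2d}\,f^{(2d+1)}(\xi).
\end{equation*}

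It remains to bound the prefactor. Writing $\frac{(d!)^2}{(2d+1)!} = \frac{1}{(2d+1)\binom{2d}{d}}$ and using $\binom{2d}{d} \ge 4^d/(2d+1)$ (the central binomial coefficient is the largest of the $2d+1$ terms in $\sum_k \binom{2d}{k} = 4^d$), we get $\frac{(d!)^2}{(2d+1)!} \le 4^{-d}$, so $|R_d(x_0)| \le \max_{x\in[a,b]}|f^{(2d+1)}(x)|\,(h/2)^{2d} \le \max_{x\in[a,b]}|f^{(2d+1)}(x)|\,(eh/2)^{2d}$, which is the asserted $O(\cdot)$ bound (in fact slightly stronger). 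The only genuinely fiddly parts are the bookkeeping of the equally spaced products defining $w'(x_0)$ and $w'(x_k)$ and remembering to run the error estimate through Rolle's theorem rather than by naive differentiation; neither is a real obstacle, and since the statement is quoted from~\cite{Li2005finite_difference} one could alternatively cite it, the above serving as a self-contained derivation.
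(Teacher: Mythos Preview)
Your derivation is correct and self-contained: the Lagrange-interpolation computation of $\ell_k'(x_0)$ recovers the stated $c_{d,k}$, the Rolle's-theorem argument for the remainder is the standard way to handle differentiation of the interpolant at a node, and the central-binomial bound $\binom{2d}{d}\ge 4^d/(2d+1)$ gives $|R_d(x_0)|\le (h/2)^{2d}\max|f^{(2d+1)}|$, which is indeed sharper than the stated $(eh/2)^{2d}$ bound.

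The paper itself does not prove this lemma; it is stated with attribution to \cite{Li2005finite_difference} and used as a black box (the paper's only related argument is Lemma~\ref{lem:bound_FD_coeff}, bounding $\sum_k|c_{d,k}|$). So there is nothing to compare against: you have supplied a full proof where the paper simply cites the literature, and you correctly anticipated this at the end of your write-up.
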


In order to provide the cost of block-encoding the result, we need to identify the sum of the coefficients for the finite difference formula.  Such a bound is given in the following lemma.

\begin{lem}
    The coefficients $c_{d,k}$ of the central finite difference formula as defined in Lemma \ref{lem:finite_diff} satisfy
    \begin{equation}
        \sum_{k = -d}^{d} |c_{d,k}| \le 2 \left( \ln{d} + 1 \right).
    \end{equation}
\label{lem:bound_FD_coeff}
\end{lem}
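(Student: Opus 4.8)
The plan is to exploit the reflection symmetry of the coefficients and to recognize them as normalized binomial coefficients, after which everything collapses to a harmonic-number bound. First I would note that $|c_{d,-k}| = |c_{d,k}|$ for every $k$, since the substitution $k \mapsto -k$ merely interchanges the factors $(d-k)!$ and $(d+k)!$ in the denominator while leaving $|k|$ and $(d!)^2$ untouched; the $k=0$ term is zero by definition. Hence
\begin{equation*}
    \sum_{k=-d}^{d} |c_{d,k}| = 2\sum_{k=1}^{d} \frac{(d!)^2}{k\,(d-k)!\,(d+k)!}.
\end{equation*}

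Next I would rewrite each summand in terms of binomial coefficients. Using $\binom{2d}{d-k} = \frac{(2d)!}{(d-k)!\,(d+k)!}$ and $\binom{2d}{d} = \frac{(2d)!}{(d!)^2}$ gives
\begin{equation*}
    \frac{(d!)^2}{(d-k)!\,(d+k)!} = \frac{\binom{2d}{d-k}}{\binom{2d}{d}},
\end{equation*}
so that $|c_{d,k}| = \frac{1}{k}\cdot\frac{\binom{2d}{d-k}}{\binom{2d}{d}}$. The one substantive observation is then that $\binom{2d}{d-k} \le \binom{2d}{d}$, which holds because the central binomial coefficient is the largest entry in row $2d$ of Pascal's triangle (the entries increase up to the middle). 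Consequently $\frac{\binom{2d}{d-k}}{\binom{2d}{d}} \le 1$ for all $1 \le k \le d$, and therefore
\begin{equation*}
    \sum_{k=1}^{d} |c_{d,k}| \le \sum_{k=1}^{d} \frac{1}{k} = H_d,
\end{equation*}
the $d$-th harmonic number. Finally I would invoke the elementary bound $H_d \le 1 + \ln d$ (obtained by comparing the sum to $\int_1^d dx/x$), which gives $\sum_{k=-d}^{d} |c_{d,k}| \le 2(\ln d + 1)$, as claimed.

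I do not anticipate any real obstacle: once the binomial-coefficient reformulation is in place, the argument uses only the monotonicity of binomial coefficients toward the center and the standard estimate of the harmonic series. The only points worth double-checking are the edge case $d=1$ (where $H_1 = 1 = 1 + \ln 1$, so the bound is tight) and that the definitional value $c_{d,0}=0$ means the $k=0$ term can be silently dropped.
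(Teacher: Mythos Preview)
Your proof is correct and follows essentially the same approach as the paper: both establish $\frac{(d!)^2}{(d-k)!(d+k)!} \le 1$ to deduce $|c_{d,k}| \le 1/k$, then bound the resulting harmonic sum by $\ln d + 1$. The only cosmetic difference is that you phrase the key inequality via the maximality of the central binomial coefficient, whereas the paper writes the ratio directly as a telescoping product $\frac{d(d-1)\cdots(d-k+1)}{(d+k)(d+k-1)\cdots(d+1)} < 1$.
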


\begin{proof}
    First, note that for $k \neq 0$ we have
    \begin{equation}
        \frac{(d!)^2}{(d-k)!(d+k)!} = \frac{d \cdot (d-1) \cdots (d-k+1)}{(d+k) \cdot (d+k-1) \cdots (d+1)} < 1.
    \end{equation}
    Thus,
    \begin{equation}
        |c_{d,k}| = \left| \frac{(-1)^{k+1}}{k}\frac{(d!)^2}{(d-k)!(d+k)!} \right| \le \left| \frac{1}{k} \right|.
    \end{equation}
    This implies that
    \begin{equation}
        \sum_{k = -d}^{d} |c_{d,k}| \le \sum_{\substack{k = -d \\ i \neq 0}}^{d} \Big|\frac{1}{k} \Big| = 2 \sum_{k=1}^{d} \frac{1}{k} \le 2 \left( \ln{d} + 1 \right).
    \end{equation}
\end{proof}

Finally, let us show how to construct a general inequality testing circuit, which will be used repeatedly as a subroutine for block-encoding $L_{\tcla}$.

\begin{lem}[Inequality testing]
    Let $a$ and $b$ be arbitrary bitstrings of length $n$. Using $n+2$ additional qubits and $5n-2$ Toffoli gates, one can construct a quantum circuit that outputs `0' iff $a \le b$ and `1' otherwise.
\label{lem:inequal}
\end{lem}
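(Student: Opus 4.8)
The plan is to reduce the comparison to reading off a single carry bit of an integer subtraction and then to realise that subtraction with an explicit ripple-carry network. Interpret $a = (a_{n-1}\cdots a_0)_2$ and $b = (b_{n-1}\cdots b_0)_2$ as unsigned integers in $\{0,\dots,2^n-1\}$. The key identity is that, as integers, $a \le b$ iff $b - a \ge 0$ iff $b + (2^n - a) \ge 2^n$, i.e. iff the carry-out $c_n$ of the $n$-bit addition $b + \bar a + 1$ equals $1$, where $\bar a$ denotes the bitwise complement of $a$ (so $2^n - a = \bar a + 1$). Indeed, when $a \le b$ the sum $b + 2^n - a$ lies in $\{2^n,\dots,2^{n+1}-1\}$ so the weight-$2^n$ bit is set, while for $a > b$ it lies in $\{1,\dots,2^n-1\}$ and that bit is clear. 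Hence a circuit that coherently computes $c_n$ into a fresh qubit and then applies an $X$ gate to it outputs $0$ exactly when $a \le b$, which is the desired behaviour.

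Next I would compute $c_n$ with the standard carry recursion: set $c_0 = 1$, a classical constant absorbing the ``$+1$'' of the two's complement, and $c_{k+1} = \text{MAJ}(b_k,\bar a_k,c_k) = b_k\bar a_k \oplus b_k c_k \oplus \bar a_k c_k$ for $k = 0,\dots,n-1$, storing each $c_{k+1}$ in its own ancilla qubit. Each majority is evaluated into a fresh ancilla by a fixed decomposition using a constant number of Toffoli gates plus CNOT/$X$ gates (a negatively-controlled Toffoli absorbs the complement $\bar a_k$), with one scratch qubit reused across the $n$ steps and uncomputed within each step. Because $c_0$ is the constant $1$, the first carry simplifies to $c_1 = b_0 \lor \bar a_0 = \overline{a_0 \land \bar b_0}$, costing fewer Toffolis, and the final output $\overline{c_n}$ needs only an $X$ gate at no Toffoli cost; these two boundary savings are precisely what turn the naive per-bit count into exactly $5n-2$ Toffolis. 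The qubit budget is $n$ carry ancillas, plus one scratch qubit, plus the output qubit, i.e. $n+2$ additional qubits. (If clean ancillas are required one can run the carry chain in reverse after copying out the answer, but the statement asks only for the output.)

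Finally I would establish correctness by a short induction on $k$ showing that $c_k$ equals the $k$-th carry bit of the addition $b + \bar a + 1$ --- this is just the correctness of a ripple-carry adder, using the majority recursion above --- and then combine it with the arithmetic identity of the first step to conclude $c_n = 1 \iff a \le b$, so that the $X$-flipped output qubit holds $0$ iff $a \le b$.

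The only real obstacle is pinning down the constant: one must fix a concrete Toffoli decomposition of the out-of-place majority gate, check that it restores both the scratch qubit and the input registers, and tally the first- and last-block savings carefully so that the total comes out to precisely $5n-2$ rather than a nearby multiple of $n$. Everything else --- the unsigned-subtraction identity and the carry-chain induction --- is routine.
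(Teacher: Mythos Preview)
Your approach is correct in principle but genuinely different from the paper's. The paper does not use arithmetic at all: it builds a direct bitwise comparator that scans from the most significant bit downward, using an $n$-qubit ``clock'' register whose single hot bit marks the position currently under comparison. At each position a CNOT computes $a_j\oplus b_j$ in place, a triple-controlled NOT (controlled on $a_j$, the modified $b_j$, and the active clock bit) flips the result qubit $\ket{r}$ exactly when $a_j=1,\,b_j=0$ at the first differing position, and a single (negatively-controlled) Toffoli advances the clock iff $a_j=b_j$. All gates except the $n$ triple-controlled NOTs are then run in reverse to restore $a$, $b$, and the clock. The count is then mechanical: $n$ triple-controlled NOTs at $3$ Toffolis each plus $n-1$ clock-advance Toffolis applied once forward and once in reverse gives $3n+2(n-1)=5n-2$, and the extra qubits are the $n$ clock bits, one ancilla for decomposing the triple-controlled NOT, and the result qubit, i.e.\ $n+2$.

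By contrast, your ripple-carry/two's-complement reduction is the standard arithmetic route and is equally valid; it even tends to give \emph{lower} Toffoli counts if you use an efficient out-of-place majority (e.g.\ one Toffoli per bit via $z\oplus(x\oplus z)(y\oplus z)$ with CNOT conjugation), so matching the specific constant $5n-2$ is not a constraint but an upper bound you would comfortably beat. The gap in your write-up is exactly the one you flag: you never fix a concrete majority decomposition and count, so the claimed constants ($5n-2$ Toffolis, $n+2$ qubits) are asserted rather than derived. You should also state explicitly that your construction returns $a$ and $b$ unchanged and either uncomputes the intermediate carries or declares them garbage, since the paper's circuit does restore the inputs (this matters because the comparator is used as a coherent subroutine inside larger block-encodings).
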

\begin{proof}
    Consider the circuit shown in Fig.~\ref{fig:inequality_circuit} with bitstrings $a$ and $b$ as inputs. The general strategy is to perform bitwise comparisons starting with the most significant bits and to store the result in an additional qubit $\ket{r}$. To avoid overwriting the result of the previous bit comparison, we use an additional $n$ qubits, $\ket{c_0}, \dots \ket{c_{n-1}}$ as a clock register. Furthermore, we need one ancilla qubit to implement triple controlled-NOT gates from Toffoli gates. Note that this ancilla qubit is not shown in the circuit diagram. We need one additional qubit to store the result of the inequality test. The state of that qubit is denoted $\ket{r}$ in Fig.~\ref{fig:inequality_circuit}.
    The circuit first compares the most significant bits, $a_0$ and $b_0$ using a CNOT gate. The second (triple controlled-NOT) gate fires only if initially $a_0 = 1$ and $b_0 = 0$, i.e.~if $a_0 > b_0$. In this case, the last qubit, $\ket{r}$, which stores the result of the inequality test, gets flipped to $\ket{1}$. None of the remaining triple controlled-NOT gates fire since the indicator state of the clock register, $\ket{c_0} = \ket{1}$, does not get transferred to the next clock qubit $\ket{c_1}$.
    The same is true for the case $a_0 < b_0$, i.e.~if $a_0 = 0$ and $b_0 = 1$. However, in this case, not even the first triple controlled-NOT gate fires. The indicator state of the clock register gets swapped to $\ket{c_1}$ iff $a_0 = b_0$. This is repeated until $a_j \neq b_j$ for some $j \in [n]$. In the worst case, $j=n$. At the end of the inequality test, we uncompute intermediate results by applying all gates except for the triple-controlled NOT-gates in reverse. Using the fact that a single triple-controlled CNOT gate can be implemented with 3 Toffoli gates we then find that the overall Toffoli count is equal to $5n-2$.
    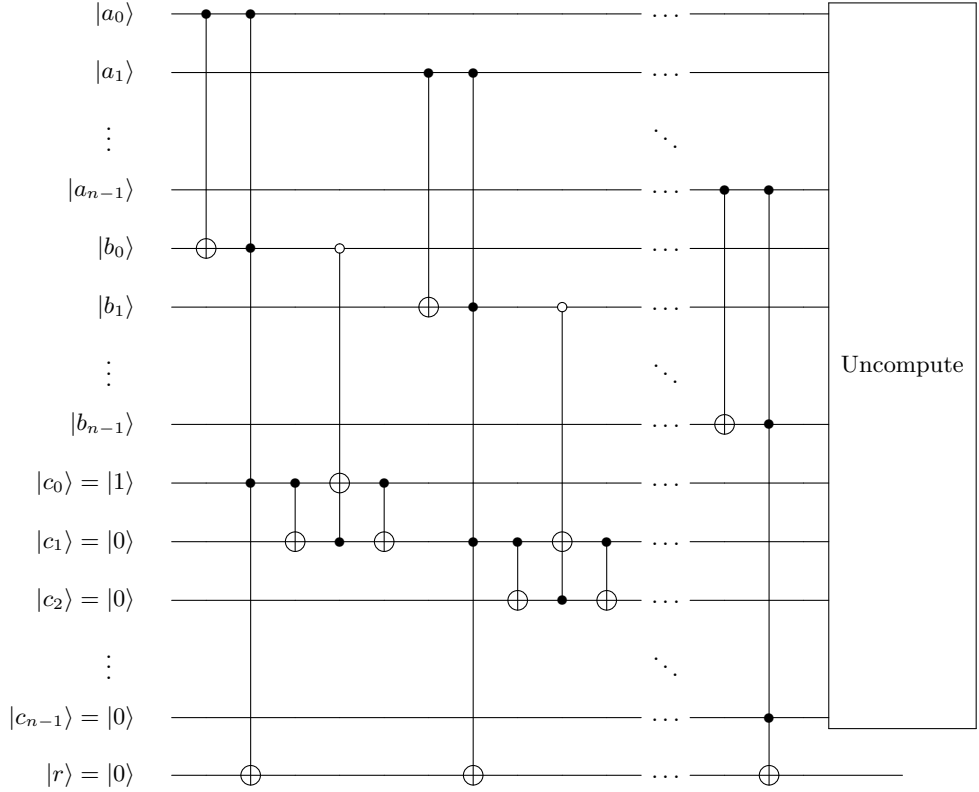
\begin{figure}
        \centering
        \mbox{
        \Qcircuit @C=1em @R=1.5em {
            \lstick{\ket{a_0}} && \ctrl{4} & \ctrl{4} & \qw & \qw & \qw & \qw & \qw & \qw & \qw & \qw  & \qw & \dots & & \qw & \qw & \qw & \multigate{12}{\text{Uncompute}} \\
            \lstick{\ket{a_1}} && \qw & \qw & \qw & \qw & \qw & \ctrl{4} & \ctrl{4} & \qw & \qw & \qw  & \qw & \dots & & \qw & \qw & \qw & \ghost{\text{Uncompute}}\\
            \lstick{\vdots \hspace{0.3cm}}&&  &  &  &  &  & &  & & &  &  & \ddots & & & & &  \pureghost{Uncompute} \\
            \lstick{\ket{a_{n-1}}}&& \qw & \qw & \qw & \qw & \qw & \qw & \qw & \qw & \qw & \qw  & \qw & \dots & & \ctrl{4} & \ctrl{4} & \qw & \ghost{\text{Uncompute}} \\
            \lstick{\ket{b_0}} && \targ & \ctrl{4} & \qw & \ctrlo{4} & \qw & \qw & \qw & \qw & \qw & \qw & \qw & \dots & & \qw & \qw & \qw & \ghost{\text{Uncompute}} \\
            \lstick{\ket{b_1}} && \qw & \qw & \qw & \qw & \qw & \targ & \ctrl{4} & \qw & \ctrlo{4} & \qw & \qw & \dots & & \qw & \qw & \qw & \ghost{\text{Uncompute}}\\
            \lstick{\vdots \hspace{0.3cm}}&&  &  &  &  &  &  &  &  &  &   &  & \ddots & & & &  & \pureghost{Uncompute}\\
            \lstick{\ket{b_{n-1}}}&& \qw & \qw & \qw & \qw & \qw & \qw & \qw & \qw & \qw & \qw & \qw & \dots & & \targ &  \ctrl{5} & \qw & \ghost{\text{Uncompute}}\\
            \lstick{\ket{c_0} = \ket{1}} &&  \qw & \ctrl{5} & \ctrl{1} & \targ & \ctrl{1} & \qw & \qw & \qw & \qw & \qw & \qw & \dots & & \qw & \qw & \qw & \ghost{\text{Uncompute}}\\
            \lstick{\ket{c_1} = \ket{0}} && \qw & \qw & \targ & \ctrl{-1} & \targ & \qw & \ctrl{4} & \ctrl{1} & \targ & \ctrl{1} & \qw & \dots & & \qw & \qw & \qw & \ghost{\text{Uncompute}}\\
            \lstick{\ket{c_2} = \ket{0}} && \qw & \qw & \qw & \qw & \qw & \qw & \qw & \targ & \ctrl{-1} & \targ & \qw & \dots & & \qw & \qw & \qw & \ghost{\text{Uncompute}}\\
            \lstick{\vdots \hspace{0.3cm}}&&  &  &  &  &  &  &  &  &  &   &  & \ddots & & & & & \pureghost{Uncompute} \\
            \lstick{\ket{c_{n-1}} = \ket{0}}&&  \qw & \qw & \qw & \qw & \qw & \qw & \qw & \qw & \qw & \qw & \qw & \dots & & \qw & \ctrl{1} & \qw & \ghost{\text{Uncompute}}\\
            \lstick{\ket{r} = \ket{0}}&& \qw & \targ & \qw & \qw & \qw & \qw & \targ & \qw & \qw & \qw & \qw & \dots & & \qw & \targ & \qw & \qw \\
        }
        }
        \caption{Inequality testing circuit $U_{\leq}$. The basic idea behind this circuit is to perform bit by bit comparison and use an ancillary qubit, $\ket{r}$, in which the outcome of the comparison is stored. The additional n-qubit register $\ket{c}$ is used to avoid overwriting the results of the other bit comparison results.  The ``Uncompute'' part consists of all previous gates applied in reverse except for the triple-controlled NOT-gates which are not uncomputed.
        If $a \leq b$ then the final output is $\ket{r} = \ket{0}$. Otherwise, $\ket{r} = \ket{1}$.}
        \label{fig:inequality_circuit}
    \end{figure}
\end{proof}

Now we are ready to prove Lemmas \ref{lem:bounds_L_class_NVE} and \ref{lem:bounds_L_class_NVT}.

\subsection{Proof of Lemma \ref{lem:bounds_L_class_NVE}}
\label{sec:proof_lem_NVE}

For convenience, let us restate Lemma \ref{lem:bounds_L_class_NVE} here.

\LclassNVE*

\begin{proof}
    We block-encode $L_{\tcla}^{(NVE)}$ via several layers of ``smaller'' block-encodings which can be thought of as a block-encoding hierarchy. More precisely, we apply different $\prep$ and $\sel$ operations in a nested fashion as summarized in Fig.~\ref{fig:hierarchy}.
    Let us first give an overview of all the levels of the hierarchy before discussing the gate and space complexity.

    At the lowest level, we have four types of block-encodings:
    \begin{enumerate}
        \item $U_{p_{n,j}}$, an $\lb \alpha_p, a_p, \epsilon_p \rb$-block-encoding of $\sum_{\hap_{n,j}=0}^{g_p-1} p_{n,j} \ketbra{\hap_{n,j}}{\hap_{n,j}}$,
        \item $U_{D_{x_{n,j}}}$, an $\lb \alpha_{D_x}, a_{D_x}, \epsilon_{D_x} \rb$-block-encoding of $D_{x_{n,j}}$,
        \item $U_{V_{n,n',j}}$, an $\lb \alpha_V, a_V, \epsilon_V \rb$-block-encoding of $\sum_{\hax_{n,j}=0}^{g_x-1} \frac{(x_{n,j} - x_{n',j})}{\lb \norm{x_n - x_{n'}}^2 + \Delta^2 \rb^{3/2}} \ketbra{\hax_{n,j}}{\hax_{n,j}}$,
        \item $U_{D_{p_{n,j}}}$, an $\lb \alpha_{D_p}, a_{D_p}, \epsilon_{D_p} \rb$-block-encoding of $D_{p_{n,j}}$.
    \end{enumerate}

    Using Lemma \ref{lem:product_block_encoding} we combine $U_{p_{n,j}}$ and $U_{D_{x_{n,j}}}$ as well as $U_{V_{n,n',j}}$ and $U_{D_{p_{n,j}}}$ to construct 
    \begin{itemize}
        \item $U_{(pD_x)_{n,j}}$, an $ \lb \alpha_p \alpha_{D_x}, a_p + a_{D_x}, \alpha_p \epsilon_{D_x} + \alpha_{D_x} \epsilon_p \rb$-block-encoding of $\sum_{\hap_{n,j}=0}^{g_p-1} p_{n,j} \ketbra{\hap_{n,j}}{\hap_{n,j}} \otimes D_{x_{n,j}}$, 
        \item $U_{(VD_p)_{n,n',j}}$, an $\lb \alpha_V \alpha_{D_p}, a_V + a_{D_p}, \alpha_V \epsilon_{D_p} + \alpha_{D_p} \epsilon_V \rb$-block-encoding of \\
        $\sum_{\hax_{n,j}=0}^{g_x-1} \frac{(x_{n,j} - x_{n',j})}{\lb \norm{x_n - x_{n'}}^2 + \Delta^2 \rb^{3/2}} \ketbra{\hax_{n,j}}{\hax_{n,j}} \otimes D_{p_{n,j}}$.
    \end{itemize}

    The next level of the hierarchy involves two different state preparation unitaries:
    \begin{itemize}
        \item $\prep_m$, an $\lb \alpha_m, a_m, \epsilon_m \rb$ state preparation unitary which encodes the nuclear masses,
        \item $\prep_Z$, an $\lb \alpha_Z, a_Z, \epsilon_Z \rb$ state preparation unitary which encodes the atomic numbers of the nuclei. 
    \end{itemize}
   
    Applying Lemma \ref{lem:sum_block_encoding} to $\prep_m$ and $\left\{ U_{(pD_x)_{n,j}} \right\}$ 
    as well as to $\prep_Z$ and $\left\{ U_{(VD_p)_{n,n',j}} \right\}$ yields
    \begin{itemize}
        \item $U_{L_{\text{kin}}^{(NVE)}}$, an $\lb \alpha_{\text{kin}}^{(NVE)}, a_{\text{kin}}^{(NVE)}, \epsilon_{\text{kin}}^{(NVE)} \rb$-block-encoding of \\
        $-i\sum_{n=1}^{N} \sum_{j=1}^{3} \sum_{\hap_{n,j}=0}^{g_p-1} \frac{p_{n,j}}{m_n} \ketbra{\hap_{n,j}}{\hap_{n,j}} \otimes D_{x_{n,j}}$, where \\
        \begin{align}
            \alpha_{\text{kin}}^{(NVE)} &= \alpha_m \alpha_p \alpha_{D_x} \\
            a_{\text{kin}}^{(NVE)} &= a_m + a_p + a_{D_x} \\
            \epsilon_{\text{kin}}^{(NVE)} &= \alpha_m (\alpha_p \epsilon_{D_x} + \alpha_{D_x} \epsilon_p) + \alpha_p \alpha_{D_x} \epsilon_m  \label{epsilon_kin}
        \end{align}
        \item $U_{L_{\text{pot}}^{(NVE)}}$, an $\lb \alpha_{\text{pot}}^{(NVE)}, a_{\text{pot}}^{(NVE)}, \epsilon_{\text{pot}}^{(NVE)} \rb$-block-encoding of \\
        $i\sum_{n=1}^{N} \sum_{n'>n} \sum_{j=1}^{3} \sum_{\hax_{n,j}=0}^{g_x-1} \frac{Z_n Z_{n'}(x_{n,j} - x_{n',j})}{\lb \norm{x_n - x_{n'}}^2 + \Delta^2 \rb^{3/2}} \ketbra{\hax_{n,j}}{\hax_{n,j}} \otimes D_{p_{n,j}}$, where
        \begin{align}
            \alpha_{\text{pot}}^{(NVE)} &= \alpha_Z \alpha_V \alpha_{D_p} \\
            a_{\text{pot}}^{(NVE)} &= a_Z + a_V + a_{D_p} \\
            \epsilon_{\text{pot}}^{(NVE)} &= \alpha_Z (\alpha_V \epsilon_{D_p} + \alpha_{D_p} \epsilon_V) + \alpha_V \alpha_{D_p} \epsilon_Z.  \label{epsilon_pot}
        \end{align}
    \end{itemize}
    Note that \ref{lem:sum_block_encoding} requires all block-encodings of the linear combination to have the same block-encoding normalization constant, which is true in both cases discussed above. More specifically, all $\left\{ U_{(pD_x)_{n,j}} \right\}$ terms have the same normalization constant. Similarly, all $\left\{ U_{(VD_p)_{n,n',j}} \right\}$ terms have the same normalization constant.

    Lastly, we use Lemma \ref{lem:sum_block_encoding} once more to combine $U_{L_{\text{kin}}^{(NVE)}}$ and $U_{L_{\text{pot}}^{(NVE)}}$. Since they have different normalization constants, we first need to renormalize both block-encodings. The idea is as follows: if $U$ is an $\lb \alpha, a, \epsilon \rb$-block-encoding of some matrix $A$, then $U$ is also an $\lb \alpha \beta, a, \beta \epsilon \rb$-block-encoding of the scaled matrix $\beta A$. This follows straight from Definition \ref{def:block-encoding}.
    Thus, $U_{L_{\text{kin}}^{(NVE)}}$ can also be viewed as an
    \begin{equation}
        \lb \alpha_{\text{kin}}^{(NVE)} + \alpha_{\text{pot}}^{(NVE)}, a_{\text{kin}}^{(NVE)} + a_{\text{pot}}^{(NVE)}, \frac{\alpha_{\text{kin}}^{(NVE)} + \alpha_{\text{pot}}^{(NVE)}}{\alpha_{\text{kin}}^{(NVE)}} \epsilon_{\text{kin}}^{(NVE)} + \frac{\alpha_{\text{kin}}^{(NVE)} + \alpha_{\text{pot}}^{(NVE)}}{\alpha_{\text{pot}}^{(NVE)}} \epsilon_{\text{pot}}^{(NVE)} \rb
    \end{equation}
    block-encoding of
    \begin{equation}
        -i \frac{\alpha_{\text{kin}}^{(NVE)} + \alpha_{\text{pot}}^{(NVE)}}{\alpha_{\text{kin}}^{(NVE)}} \sum_{n=1}^{N} \sum_{j=1}^{3} \sum_{\hap_{n,j}=0}^{g_p-1} \frac{p_{n,j}}{m_n} \ketbra{\hap_{n,j}}{\hap_{n,j}} \otimes D_{x_{n,j}}
    \end{equation}
    and $U_{L_{\text{pot}}^{(NVE)}}$ can be viewed as an 
    \begin{equation}
        \lb \alpha_{\text{kin}}^{(NVE)} + \alpha_{\text{pot}}^{(NVE)}, a_{\text{kin}}^{(NVE)} + a_{\text{pot}}^{(NVE)}, \frac{\alpha_{\text{kin}}^{(NVE)} + \alpha_{\text{pot}}^{(NVE)}}{\alpha_{\text{kin}}^{(NVE)}} \epsilon_{\text{kin}}^{(NVE)} + \frac{\alpha_{\text{kin}}^{(NVE)} + \alpha_{\text{pot}}^{(NVE)}}{\alpha_{\text{pot}}^{(NVE)}} \epsilon_{\text{pot}}^{(NVE)} \rb
    \end{equation}
    block-encoding of
    \begin{equation}
        i \frac{\alpha_{\text{kin}}^{(NVE)} + \alpha_{\text{pot}}^{(NVE)}}{\alpha_{\text{kin}}^{(NVE)}} \sum_{n=1}^{N} \sum_{n'>n} \sum_{j=1}^{3} \sum_{\hax_{n,j}=0}^{g_x-1} \frac{Z_n Z_{n'}(x_{n,j} - x_{n',j})}{\lb \norm{x_n - x_{n'}}^2 + \Delta^2 \rb^{3/2}} \ketbra{\hax_{n,j}}{\hax_{n,j}} \otimes D_{p_{n,j}}.
    \end{equation}
    The following state preparation unitary is used to recover the appropriate weighting of the two block-encodings:
    \begin{itemize}
        \item $\prep_{\text{out}}$, an $\lb \alpha_{\text{out}}, a_{\text{out}}, \epsilon_{\text{out}} \rb$ state preparation unitary, which prepares the state \\
        $\sqrt{\frac{\alpha_{\text{kin}}^{(NVE)}}{\alpha_{\text{kin}}^{(NVE)} + \alpha_{\text{pot}}^{(NVE)}}} \ket{0} + \sqrt{\frac{\alpha_{\text{pot}}^{(NVE)}}{\alpha_{\text{kin}}^{(NVE)} + \alpha_{\text{pot}}^{(NVE)}}} \ket{1}$. Note that $\alpha_{\text{out}} = 1$ and $a_{\text{out}} = 1$.
    \end{itemize}
    
    Using $\prep_{\text{out}}$ together with $U_{L_{\text{kin}}^{(NVE)}}$ and $U_{L_{\text{pot}}^{(NVE)}}$ we can construct $U_{L_{\tcla}^{(NVE)}}$, an $\lb \alpha_{NVE}, a_{NVE}, \epsilon_{NVE} \rb$ block-encoding of $L_{\tcla}^{(NVE)}$ where 
    \begin{align}
        \alpha_{NVE} &= \alpha_{\text{kin}}^{(NVE)} + \alpha_{\text{pot}}^{(NVE)}\\
        a_{NVE} &= a_{\text{kin}}^{(NVE)} +  a_{\text{pot}}^{(NVE)} + 1 \\
        \epsilon_{NVE} &= \frac{\alpha_{\text{kin}}^{(NVE)} + \alpha_{\text{pot}}^{(NVE)}}{\alpha_{\text{kin}}^{(NVE)}} \epsilon_{\text{kin}}^{(NVE)} + \frac{\alpha_{\text{kin}}^{(NVE)} + \alpha_{\text{pot}}^{(NVE)}}{\alpha_{\text{pot}}^{(NVE)}} \epsilon_{\text{pot}}^{(NVE)} + \lb \alpha_{\text{kin}}^{(NVE)} + \alpha_{\text{pot}}^{(NVE)} \rb \epsilon_{\text{out}}.
    \end{align}

    We have $\epsilon_{NVE} \leq \epsilon$ if
    \begin{align}
        \epsilon_{\text{kin}}^{(NVE)} &\leq \frac{\alpha_{\text{kin}}^{(NVE)}}{\alpha_{\text{kin}}^{(NVE)} + \alpha_{\text{pot}}^{(NVE)}}  \frac{\epsilon}{3} \\
        \epsilon_{\text{pot}}^{(NVE)} &\leq \frac{\alpha_{\text{pot}}^{(NVE)}}{\alpha_{\text{kin}}^{(NVE)} + \alpha_{\text{pot}}^{(NVE)}}  \frac{\epsilon}{3} \\
        \epsilon_{\text{out}} &\leq \frac{1}{\alpha_{\text{kin}}^{(NVE)} + \alpha_{\text{pot}}^{(NVE)}} \frac{\epsilon}{3}. \label{error_out}
    \end{align}

    It follows from Eqs.~\eqref{epsilon_kin} and \eqref{epsilon_pot} that the above error bounds can be achieved by ensuring that 
    \begin{align}
        \epsilon_{p} &\leq \frac{1}{\alpha_m \alpha_{D_x}} \frac{\alpha_{\text{kin}}^{(NVE)}}{\alpha_{\text{kin}}^{(NVE)} + \alpha_{\text{pot}}^{(NVE)}}  \frac{\epsilon}{9} \label{error_p} \\
        \epsilon_{D_x} &\leq \frac{1}{\alpha_m \alpha_p} \frac{\alpha_{\text{kin}}^{(NVE)}}{\alpha_{\text{kin}}^{(NVE)} + \alpha_{\text{pot}}^{(NVE)}}  \frac{\epsilon}{9} \label{error_D_x} \\
        \epsilon_{m} &\leq \frac{1}{\alpha_p \alpha_{D_x}} \frac{\alpha_{\text{kin}}^{(NVE)}}{\alpha_{\text{kin}}^{(NVE)} + \alpha_{\text{pot}}^{(NVE)}}  \frac{\epsilon}{9} \label{error_m} \\
        \epsilon_{V} &\leq \frac{1}{\alpha_Z \alpha_{D_p}} \frac{\alpha_{\text{pot}}^{(NVE)}}{\alpha_{\text{kin}}^{(NVE)} + \alpha_{\text{pot}}^{(NVE)}}  \frac{\epsilon}{9} \label{error_V} \\
        \epsilon_{D_p} &\leq \frac{1}{\alpha_Z \alpha_V} \frac{\alpha_{\text{pot}}^{(NVE)}}{\alpha_{\text{kin}}^{(NVE)} + \alpha_{\text{pot}}^{(NVE)}}  \frac{\epsilon}{9} \label{error_D_p} \\
        \epsilon_{Z} &\leq \frac{1}{\alpha_V \alpha_{D_p}} \frac{\alpha_{\text{pot}}^{(NVE)}}{\alpha_{\text{kin}}^{(NVE)} + \alpha_{\text{pot}}^{(NVE)}}  \frac{\epsilon}{9} . \label{error_Z}
    \end{align}

    \begin{figure}
        \centering
        \resizebox{1\textwidth}{!}{
        \includegraphics{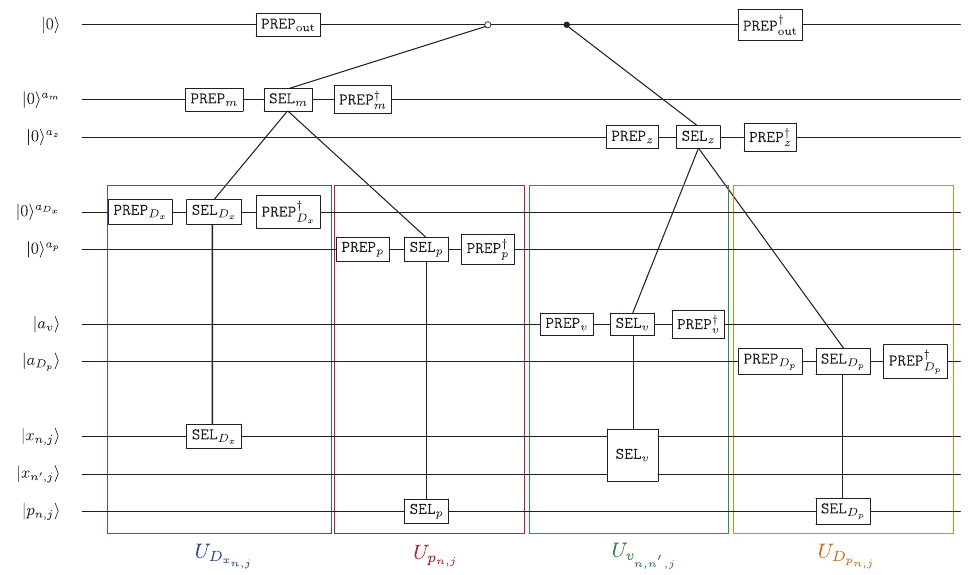} 
        }
        \caption{Circuit diagram of the block-encoding hierarchy used for implementing a block-encoding of $L_{\tcla}^{(NVE)}$.}
    \label{fig:hierarchy}
    \end{figure}

    Let us now show how to implement the four basic block-encodings starting with $U_{p_{n,j}}$. Since this is a block-encoding of a diagonal matrix, we can use the simplest form of the alternating sign trick~\cite{Berry2014alter_sign_trick}. 
    To explain this trick in more detail, let us consider a single computational basis state associated with the momentum variable $p_{n,j}$. We require two additional ancilla registers to implement this trick. The overall input state can then be written as $\ket{0}^{\otimes a_p} \ket{\hap_{n,j}} \ket{0}$ with all remaining quantum registers being suppressed for ease of notation. 
    Let $U_{\leq}$ be the inequality testing unitary from Lemma \ref{lem:inequal}, see Fig.~\ref{fig:inequality_circuit}.
    The circuit in Fig.~\ref{fig:alternating_sign_trick} then evolves the initial state as follows:
    \begin{align*}
        \ket{0}\ket{\hap_{n,j}}\ket{0} &\xrightarrow{\prep_{p}} \frac{1}{\sqrt{2^{a_p}}} \sum_{j=0}^{2^{a_p}-1} \ket{j}\ket{\hap_{n,j}}\ket{0} \\ &\xrightarrow{U_{\leq}} \frac{1}{\sqrt{2^{a_p}}} \lb \sum_{j \leq \hap_{n,j}} \ket{j}\ket{\hap_{n,j}}\ket{0} + \sum_{j > \hap_{n,j}} \ket{j}\ket{\hap_{n,j}}\ket{1} \rb \\
        &\xrightarrow{CZ} \frac{1}{\sqrt{2^{a_p}}} \lb \sum_{j \leq \hap_{n,j}} \ket{j}\ket{\hap_{n,j}}\ket{0} + \sum_{j > \hap_{n,j}} (-1)^j \ket{j}\ket{\hap_{n,j}}\ket{1} \rb \\
        &\xrightarrow{U_{\leq}^\dagger} \frac{1}{\sqrt{2^{a_p}}} \lb \sum_{j \leq \hap_{n,j}} \ket{j} + \sum_{j > \hap_{n,j}} (-1)^j \ket{j} \rb \ket{\hap_{n,j}} \ket{0} \\
         &\xrightarrow{\prep_{p}^\dagger} \frac{1}{2^{a_p}} \sum_k \lb \sum_{j \leq \hap_{n,j}} (-1)^{k \cdot j} + \sum_{j > \hap_{n,j}} (-1)^j (-1)^{k \cdot j} \rb \ket{k} \ket{\hap_{n,j}} \ket{0}
    \end{align*}
    As usual with block-encodings, we postselect on $\ket{k} = \ket{0}$. If $\hap_{n,j}$ is even, this yields the following (unnormalized) state:
    \begin{equation}
        \frac{\hap_{n,j}}{2^{a_p}} \ket{0}\ket{\hap_{n,j}}\ket{0},
    \end{equation}
    as desired. 
    If $\hap_{n,j}$ is odd, then we obtain
    \begin{equation}
        \frac{\hap_{n,j}+1}{2^{a_p}} \ket{0}\ket{\hap_{n,j}}\ket{0}.
    \end{equation}
    The number of $\prep$ ancilla qubits $a_p$ determines the precision $\epsilon_p$ of $U_{p_{n,j}}$. In particular, 
    \begin{equation}
        \norm{U_{p_{n,j}} - \frac{1}{\alpha_p}\sum_{\hap_{n,j}} \hap_{n,j} \ketbra{\hap_{n,j}}{\hap_{n,j}}} \leq \frac{1}{2^{a_p}}.
    \end{equation}
    Note that $a_p$ should be equal to the number of qubits used to represent a single momentum variable (otherwise the inequality test does not work). Increasing $a_p$ therefore requires us to (temporarily) blow up the momentum values as well.
    The above calculation shows that $U_{p_{n,j}} := \prep_{p}^\dagger \cdot \sel_{p} \cdot \prep_{p}$ where
    \begin{align}
        \prep_{p} \ket{0} & := \sum_{j=0}^{2^{a_p}-1} \frac{1}{\sqrt{2^{a_p}}} \ket{j}  \\
        \sel_{p} & := U_{\leq}^\dagger \cdot CZ \cdot U_{\leq},
    \end{align}
    provides an $\lb \alpha_p, a_p, \epsilon_p \rb$-block-encoding of $\sum_{\hap_{n,j}=0}^{g_p-1} p_{n,j} \ketbra{\hap_{n,j}}{\hap_{n,j}}$ where $\alpha_p \in O \lb p_{\max} \rb$ and \\
    $a_p \in O \lb \log{\lb \frac{p_{\max}}{\epsilon_p} \rb} \rb$. The precision $\epsilon_p$ is determined by the overall error tolerance $\epsilon$ as shown in Eq.~\eqref{error_p} which implies that
    \begin{equation}
        a_p \in O \lb \log{\lb \alpha_m \alpha_{D_x} \frac{\alpha_{\text{kin}}^{(NVE)} + \alpha_{\text{pot}}^{(NVE)}}{\alpha_{\text{kin}}^{(NVE)}} \frac{p_{\max}}{\epsilon} \rb} \rb.
    \end{equation}
   Since $\prep_{p}$ only requires Hadamard gates but no Toffoli gates, the Toffoli complexity of $U_{p_{n,j}}$ is equal to the Toffoli complexity of $\sel_{p}$ which is in $O \lb a_p \rb$ due to the inequality testing.

    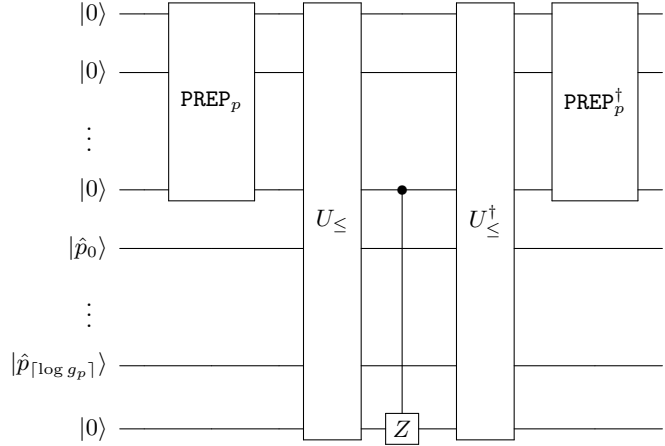
\begin{figure}
    \centering
    \mbox{
        \Qcircuit @C=1em @R=1.5em {
            \lstick{\ket{0}} & \qw & \multigate{3}{\prep_{p}} & \qw & \multigate{7}{U_{\leq}} & \qw & \multigate{7}{U_{\leq}^\dagger} & \multigate{3}{\prep_{p}^\dagger} & \qw \\
            \lstick{\ket{0}} & \qw & \ghost{\prep_{p}} & \qw & \ghost{U_{\leq}} & \qw & \ghost{U_{\leq}^\dagger} & \ghost{\prep_{p}^\dagger} & \qw \\
            \lstick{\vdots \hspace{0.2cm}}&  & \pureghost{\prep_{p}}& & \pureghost{U_{\leq}} &  & \pureghost{U_{\leq}^\dagger} & \pureghost{\prep_{p}^\dagger} \\
            \lstick{\ket{0}} & \qw & \ghost{\prep_{p}} & \qw & \ghost{U_{\leq}} & \ctrl{4} & \ghost{U_{\leq}^\dagger} & \ghost{\prep_{p}^\dagger} & \qw \\
            \lstick{\ket{\hap_{0}}} & \qw & \qw & \qw & \ghost{U_{\leq}} & \qw  & \ghost{U_{\leq}^\dagger} & \qw & \qw \\
            \lstick{\vdots \hspace{0.2cm}}& & & & \pureghost{U_{\leq}} & & \pureghost{(\le ?)^\dagger}  \\
            \lstick{\ket{\hap_{\lceil \log{g_p} \rceil}}}& \qw & \qw & \qw & \ghost{U_{\leq}} & \qw & \ghost{U_{\leq}^\dagger} & \qw & \qw \\
            \lstick{\ket{0}}& \qw & \qw & \qw & \ghost{U_{\leq}} & \gate{Z} & \ghost{U_{\leq}^\dagger} & \qw & \qw \\
    } 
    }
    \caption{Circuit for implementing the alternating sign trick. $\prep_{p}$ prepares a uniform superposition over all computational basis states $\ket{j}$ of the ancilla register where $j \in [2^{a_p}]$. The inequality test $U_{\leq}$ from Fig.~\ref{fig:inequality_circuit} is applied to the ancilla register and the input variable $\ket{p}$ expressed in binary. The result of the inequality test is stored in the bottom qubit (`1' if $j > p$ and `0' otherwise). Next, we apply a $Z$-gate controlled by the least significant qubit of the ancilla register to the output qubit of the inequality test. As long as $j \leq p$, the controlled $Z$-gate acts as the identity gate. However, when $j > p$, the controlled $Z$-gate introduces a minus sign for every second computational basis state of the ancilla register. This creates an alternating sequence of $\pm 1$ such that the contributions of all $j > p$ cancel each other out. Finally, we uncompute the ancilla qubits. 
    }
    \label{fig:alternating_sign_trick}
    \end{figure}

    The implementation of $U_{V_{n,n',j}}$, an $\lb \alpha_{V}, a_{V}, \epsilon_{V} \rb$-block-encoding of
    \begin{equation}
        \sum_{\hax_{n,j}=0}^{g_x-1} \frac{(x_{n,j} - x_{n',j})}{\lb \norm{x_n - x_{n'}}^2 + \Delta^2 \rb^{3/2}} \ketbra{\hax_{n,j}}{\hax_{n,j}}
    \end{equation}
    is also based on the alternating sign trick.
    Using $\prep_V \ket{0} := \sum_{l=0}^{2^{a_V}-1} \frac{1}{\sqrt{2^{a_V}}} \ket{l}$, we test the following inequality:
    \begin{equation}
        l^2 \lb \norm{\hax_n - \hax_{n'}}^2 + \overline{\Delta}^2 \rb^3 \le (\hax_{n,j} - \hax_{n',j})^2,
    \label{coulomb_ineq}
    \end{equation}
    where $\overline{\Delta} \in \mathbb{N}$ such that $\Delta = \overline{\Delta} h_x$.
    Note that $\alpha_V \in O \lb \frac{x_{\text{max}}}{\Delta^3} \rb$. The number of ancilla qubits $a_V$ is again determined by the allowable error. More specifically, $a_V \in O \lb \log{\lb \frac{x_{\text{max}}}{\Delta^3 \epsilon_V}\rb}\rb$.
    The precision $\epsilon_V$ is determined by the overall error tolerance $\epsilon$ as shown in Eq.~\eqref{error_V} which implies that
    \begin{equation}
        a_V \in O \lb \log{\lb \alpha_Z \alpha_{D_p} \frac{\alpha_{\text{kin}}^{(NVE)} + \alpha_{\text{pot}}^{(NVE)}}{\alpha_{\text{pot}}^{(NVE)}} \frac{x_{\text{max}}}{\Delta^3 \epsilon} \rb} \rb.\label{num_qubits_av}
    \end{equation}
    
    To determine the correct sign we also need to test $x_{n,j} \le x_{n',j}$ which has Toffoli complexity in $O \lb \log{(g_x)} \rb$.
    The advantage of testing Eq.~({\ref{coulomb_ineq}}) rather than directly
    \begin{equation}
        l \le \frac{(\hax_{n,j} - \hax_{n',j})}{\lb \norm{\hax_n - \hax_{n'}}^2 + \overline{\Delta}^2 \rb^{3/2}}
    \end{equation}
    is that we do not have to calculate fractions containing square roots.
    However, the inequality test in Eq.~({\ref{coulomb_ineq}}) does require us first to compute the left- and right-hand sides of the inequality using $O(1)$ quantum Karatsuba multiplications. This can be done using $O\lb \lb a_V \rb^{\log 3} \rb$ Toffoli gates, whereas the inequality test itself requires only $O\lb a_V \rb$ Toffolis~\cite{Gidney2019}. 

    Next, let us explain how to construct a block-encoding $U_{D_{x_{n,j}}}$ of the discrete derivative operator $D_{x_{n,j}}$ of order $2 d_x$. The idea is to apply a linear combination of $2d_x$ unitary adders to the $\ket{\hax_{n,j}}$ register as shown in Fig.~\ref{fig:finite_difference}.
    \begin{figure}
        \centering
        \mbox{
            \Qcircuit @C=1em @R=1.5em {
                \lstick{\ket{0}} & \qw & \multigate{4}{\prep_d} & \qw & \ctrlo{1} & \ctrl{1} & \qw & \dots &  &  \ctrl{1}  & \multigate{4}{\prep_d^\dagger} & \qw \\
                \lstick{\ket{0}} & \qw & \ghost{\prep_d} & \qw & \ctrlo{1} & \ctrlo{1} & \qw & \dots &  &  \ctrl{1}  & \ghost{\prep_d^\dagger} & \qw \\
                & & \pureghost{\prep_d} &&&&&&&& \pureghost{\prep_d^\dagger}\\
                \lstick{\vdots \hspace{0.2cm}}& &  \pureghost{\prep_d} & & \vdots & \vdots &  & \ddots & & \vdots & \pureghost{\prep_d^\dagger} & \\
                \lstick{\ket{0}} &  \qw & \ghost{\prep_d} & \qw & \ctrlo{1} & \ctrlo{1} & \qw & \dots &  &  \ctrl{1}  & \ghost{\prep_d^\dagger} & \qw \\
                \lstick{\ket{\hax_{n,j,0}}} & \qw & \qw & \qw & \multigate{3}{\mathtt{Add} - d_x} & \multigate{3}{\mathtt{Add} - d_x + 1} & \qw & \dots & & \multigate{3}{\mathtt{Add} + d_x} \qw & \qw & \qw \\
                \lstick{\ket{\hax_{n,j,1}}}  & \qw & \qw & \qw  & \ghost{\mathtt{Add} - d_x} & \ghost{\mathtt{Add} - d_x + 1} & \qw & \dots & & \ghost{\mathtt{Add} + d_x} & \qw & \qw \\
                \lstick{\vdots \hspace{0.2cm}}& & & & \pureghost{\mathtt{Add} - d_x} & \pureghost{\mathtt{Add}- d_x + 1} & & \ddots & & \pureghost{\mathtt{Add} + d_x} &  \\
                \lstick{\ket{\hax_{n,j,\lceil \log{g_x} \rceil}}}& \qw & \qw & \qw & \ghost{\mathtt{Add} - d_x} & \ghost{\mathtt{Add} - d_x + 1} & \qw & \dots &  & \ghost{\mathtt{Add} + d_x} & \qw & \qw 
        } 
        }
        \caption{Circuit for implementing a central finite difference operator of order $2d_x$. $\mathtt{Add} + j$ with $j$ being an integer is a unitary adder of the form $\sum_{\hax} \ketbra{\hax - j}{\hax}$.}
        \label{fig:finite_difference}
    \end{figure}
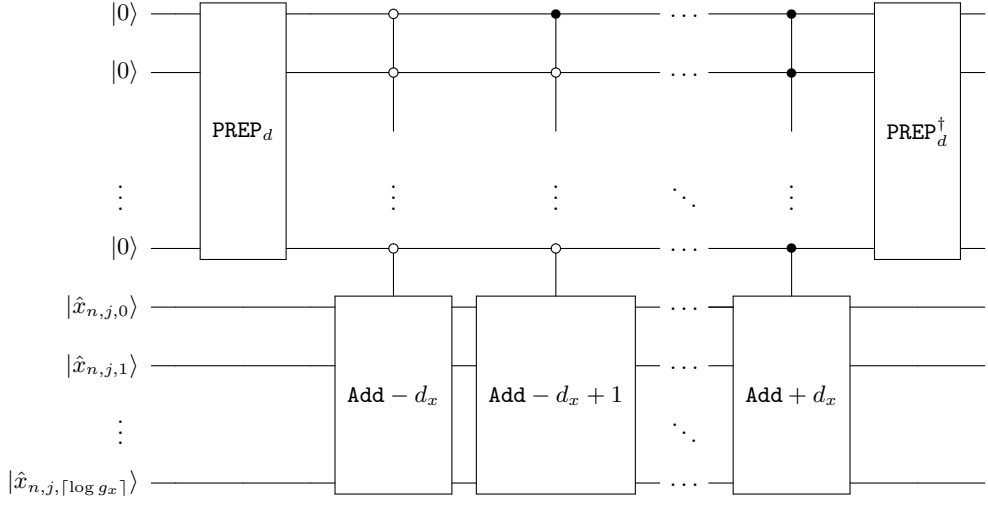

    Let
    \begin{align}
        \prep_{D_x} \ket{0} & := \sum_{k= -d_x}^{d_x} \sqrt{\frac{c_{d_x, k}}{c_{d_x}}} \ket{k}  \\
        \sel_{D_x} & := \sum_{k=-d_x}^{d_x} \ketbra{k}{k} \otimes \sum_{\hax} \ketbra{\hax-k}{\hax},
    \end{align}
    where $\{c_{d_x, k}\}$ are the finite difference coefficients as given in Lemma \ref{lem:finite_diff} and $c_{d_x}:= \sum_{k= -d_x}^{d_x} |c_{d_x, k}|$. Then $U_{D_{x_{n,j}}} := \prep_{D_x}^{\dagger} \sel_{D_x} \prep_{D_x}$ is an $\lb \alpha_{D_x}, a_{D_x}, \epsilon_{D_x} \rb$-block-encoding of $D_{x_{n,j}}$ where
    \begin{equation}
        \alpha_{D_x} = \frac{c_{d_x}}{h_x} \leq \frac{2 \lb \ln{d_x} + 1 \rb}{h_x},
    \end{equation}
    see Lemma~\ref{lem:bound_FD_coeff}, and $a_{D_x} \in O \lb \log{\lb d_x \rb} \rb$. The error $\epsilon_{D_x}$ stems solely from the state preparation error associated with $\prep_{D_x}$. Lemma \ref{lem:prep_error} implies that we need to prepare the state $\prep_{D_x} \ket{0}$ within error
    \begin{equation}
        \frac{\epsilon_{D_x}}{\alpha_{D_x} \sqrt{d_x}} \in O \lb \frac{h_x \epsilon_{D_x}}{\ln{(d_x)} \sqrt{d_x} } \rb.
    \end{equation}
    Such a general quantum state preparation has Toffoli cost in $O \lb d_x \log{\lb \frac{\ln{(d_x)} \sqrt{d_x} }{h_x \epsilon_{D_x}} \rb} \rb$. By Eq.~(\ref{error_D_x}) this is in 
    \begin{equation}
        O \lb d_x \log{\lb \alpha_m \alpha_p \frac{\alpha_{\text{kin}}^{(NVE)} + \alpha_{\text{pot}}^{(NVE)}}{\alpha_{\text{kin}}^{(NVE)}} \frac{\ln{(d_x)} \sqrt{d_x} }{h_x} \frac{1}{\epsilon} \rb}\rb.
    \end{equation}
    A single unitary adder requires $O \lb \log{(g_x)} \rb$ Toffolis where $g_x$ again is the number of grid points for a single position coordinate. Additionally, we need $O \lb \log{d_x} \rb$ Toffolis to implement a controlled version of the adder controlled by the $\prep_d$ register.
    
    In total, we therefore require 
    \begin{equation}
        O \lb d_x \lb \log{d_x} + \log{g_x} + \log{\lb \alpha_m \alpha_p \frac{\alpha_{\text{kin}}^{(NVE)} + \alpha_{\text{pot}}^{(NVE)}}{\alpha_{\text{kin}}^{(NVE)}} \frac{\ln{(d_x)} \sqrt{d_x} }{h_x} \frac{1}{\epsilon} \rb} \rb \rb
    \end{equation}
    Toffolis to implement $U_{D_{x_{n,j}}}$.

    We implement $U_{D_{p_{n,j}}}$, an $\lb \alpha_{D_p}, a_{D_p}, \epsilon_{D_p} \rb$-block-encoding of the discrete derivative operator $D_{p_{n,j}}$ of order $2 d_p$, using exactly the same strategy as for $U_{D_{x_{n,j}}}$. This implies that 
    \begin{equation}
        \alpha_{D_p} \leq \frac{2 \lb \ln{d_p} + 1 \rb}{h_p}
    \end{equation}
    and $a_{D_p} \in O \lb \log{\lb d_p \rb} \rb$. As before, the error $\epsilon_{D_p}$ stems solely from the state preparation error of $\prep_{D_p}$. The Toffoli cost associated with preparing the state $\prep_{D_p} \ket{0}$ within sufficiently small error is then in 
    \begin{equation}
        O \lb d_p \log{\lb \alpha_Z \alpha_V \frac{\alpha_{\text{kin}}^{(NVE)} + \alpha_{\text{pot}}^{(NVE)}}{\alpha_{\text{kin}}^{(NVE)}} \frac{\ln{(d_p)} \sqrt{d_p} }{h_p} \frac{1}{\epsilon} \rb}\rb.
    \end{equation}
    In total, we then require 
    \begin{equation}
        O \lb d_p \lb \log{d_p} + \log{g_p} + \log{\lb \alpha_Z \alpha_V \frac{\alpha_{\text{kin}}^{(NVE)} + \alpha_{\text{pot}}^{(NVE)}}{\alpha_{\text{pot}}^{(NVE)}} \frac{\ln{(d_p)} \sqrt{d_p} }{h_p} \frac{1}{\epsilon} \rb} \rb \rb
    \end{equation}
    Toffolis to implement $U_{D_{p_{n,j}}}$.

    Now that we have shown how to implement the four basic block-encodings, $U_{p_{n,j}}$, $U_{D_{x_{n,j}}}$, $U_{V_{n,n',j}}$ and $U_{D_{p_{n,j}}}$, we can combine them. More specifically, we multiply $U_{p_{n,j}}$ and $U_{D_{x_{n,j}}}$ to obtain $U_{(pD_x)_{n,j}}$. This can be done at no extra Toffoli cost by simply keeping the ancilla qubits separate and applying the two block-encodings consecutively. The same is true when multiplying $U_{V_{n,n',j}}$ and $U_{D_{p_{n,j}}}$ to obtain $U_{(VD_p)_{n,n',j}}$. 

    Next, let us explain how to implement $\prep_m$, an $\lb \alpha_m, a_m, \epsilon_m \rb$ state preparation unitary which we define as follows:
    \begin{equation}
        \prep_m \ket{0} := \sum_{n=1}^{N} \sqrt{\frac{1/m_n}{\alpha_m}} \ket{n} \otimes \frac{1}{\sqrt{3}} \sum_{j=1}^3 \ket{j}, 
    \end{equation}
    where 
    \begin{equation}
        \alpha_m = \sum_{n,j} \frac{1}{m_n} \leq \frac{3N}{m_{\min}}.
    \end{equation}
    The above definition implies that $a_m = \lceil \log{N} \rceil + \lceil \log{3} \rceil$. It follows from Lemma \ref{lem:prep_error} that we need to prepare $\prep_m \ket{0}$ within error
    \begin{equation}
        \frac{\epsilon_{m}}{\alpha_{m} \sqrt{N}} \in O \lb \frac{m_{\min} \epsilon_{m}}{N \sqrt{N}} \rb.
    \end{equation} 
    Such a general quantum state preparation has Toffoli cost in~\cite{NC2010}
    \begin{equation}
        O \lb N \log{\lb \frac{N}{m_{\min} \epsilon_{m}} \rb} \rb.
    \end{equation}
    By Eq.~(\ref{error_m}) this is in
    \begin{equation}
        O \lb N \log{\lb \alpha_p \alpha_{D_x} \frac{\alpha_{\text{kin}}^{(NVE)} + \alpha_{\text{pot}}^{(NVE)}}{\alpha_{\text{kin}}^{(NVE)}} \frac{N}{m_{\min}} \frac{1}{\epsilon} \rb}\rb.
    \end{equation}
    
    We use $\prep_m$ together with $U_{(pD_x)_{n,j}}$ to implement $U_{L_{\text{kin}}^{(NVE)}}$. This can be done efficiently with the help of two additional ancilla registers which we call ``$\swap$ registers''~\cite{Su2021first_quant_sim}. Controlled by the $\prep_m$ register we swap the appropriate position and momentum variables into the two $\swap$ registers. This allows us to apply the block-encodings $U_{p_{n,j}}$ and $U_{D_{x_{n,j}}}$ only once (to the $\swap$ registers holding the appropriate position and momentum variables) rather than $3N$ times (to each individual position and momentum variable). However, we do require a total of $O \lb N \log{\lb g \rb} \rb$ $\swap$ operations, where $g = \max \{ g_x, g_p \}$, implying  $O \lb N \log{\lb g \rb} \rb$ Toffolis.

    $U_{L_{\text{pot}}^{(NVE)}}$ can be implemented following the same strategy. We have
    \begin{equation}
        \prep_Z \ket{0} := \sum_{n=1}^{N} \sqrt{\frac{Z_n}{Z}} \ket{n} \otimes \sum_{n'=1}^{N} \sqrt{\frac{Z_n'}{Z}} \ket{n'} \otimes \frac{1}{\sqrt{3}} \sum_{j=1}^3 \ket{j},
    \end{equation}
    an $\lb \alpha_Z, a_Z, \epsilon_Z \rb$ state preparation unitary with 
    \begin{equation}
        \alpha_Z = \sum_{n,n',j} Z_n Z_{n'} \leq 3 N^2 Z_{\max}^2.
    \end{equation}
    The above definition implies that $a_Z = 2\lceil\log{N}\rceil + \lceil \log{3} \rceil$. Importantly, it is a product state, meaning that $\sum_{n=1}^{N} \sqrt{\frac{Z_n}{Z}} \ket{n}$, $\sum_{n'=1}^{N} \sqrt{\frac{Z_n'}{Z}} \ket{n'}$ and $\sum_{j=1}^3 \ket{j}$ can be prepared individually.
    It follows from Lemma \ref{lem:prep_error} that we need to prepare $\prep_Z \ket{0}$ within error
    \begin{equation}
        \frac{\epsilon_{Z}}{\alpha_{Z} N} \in O \lb \frac{\epsilon_{Z}}{N^3 Z_{\max}^2} \rb.
    \end{equation}
    Preparing such a product state has Toffoli cost in $O \lb N \log{\lb \frac{N Z_{\max}}{\epsilon_{Z}} \rb} \rb$. By Eq.~(\ref{error_Z}) this is in
    \begin{equation}
        O \lb N \log{\lb \alpha_V \alpha_{D_p} \frac{\alpha_{\text{kin}}^{(NVE)} + \alpha_{\text{pot}}^{(NVE)}}{\alpha_{\text{pot}}^{(NVE)}} \frac{N Z_{\max}}{\epsilon} \rb}\rb.
    \end{equation}

    We use $\prep_Z$ together with $U_{(VD_p)_{n,j}}$ to implement $U_{L_{\text{pot}}^{(NVE)}}$. This can be done efficiently with the help of 7 $\swap$ registers, 6 for the 6 nuclear position variables appearing in $\frac{(x_{n,j} - x_{n',j})}{\lb \norm{x_n - x_{n'}}^2 + \Delta^2 \rb^{3/2}}$ and one for the nuclear momentum variable of $D_{p_{n,j}}$. As before, this allows us to apply the block-encodings $U_{V_{n,n',j}}$ and $U_{D_{p_{n,j}}}$ only once (to the $\swap$ registers holding the appropriate position and momentum variables) rather than $3N^2$ times and $3N$ times, respectively. However, we again require a total of $O \lb N \log{\lb g \rb} \rb$ $\swap$ operations, implying  $O \lb N \log{\lb g \rb} \rb$ Toffolis.
    To ensure that we exclude terms where the nuclei are the same, i.e. $n = n'$, and also avoid double counting, we perform an inequality test on $\ket{n}$ and $\ket{n'}$ and store the result in an ancilla qubit. The corresponding Toffoli cost is in $O\lb \log{N} \rb$.

    Lastly, we use the following $\lb \alpha_{\text{out}}, a_{\text{out}}, \epsilon_{\text{out}} \rb$ state preparation unitary
    \begin{equation}
        \prep_{\text{out}}\ket{0} := \sqrt{\frac{\alpha_{\text{kin}}^{(NVE)}}{\alpha_{\text{kin}}^{(NVE)} + \alpha_{\text{pot}}^{(NVE)}}} \ket{0} + \sqrt{\frac{\alpha_{\text{pot}}^{(NVE)}}{\alpha_{\text{kin}}^{(NVE)} + \alpha_{\text{pot}}^{(NVE)}}} \ket{1}
    \end{equation}
    together with $U_{L_{\text{kin}}^{(NVE)}}$ and $U_{L_{\text{pot}}^{(NVE)}}$ to construct $U_{L_{\tcla}^{(NVE)}}$, an $\lb \alpha_{NVE}, a_{NVE}, \epsilon_{NVE} \rb$ block-encoding of $L_{\tcla}^{(NVE)}$. As mentioned before, $\alpha_{\text{out}} = 1$ and $a_{\text{out}} = 1$. It follows from Lemma \ref{lem:prep_error} that we need to prepare $\prep_{\text{out}} \ket{0}$ within error $\epsilon_{\text{out}}$. 
    Such a general quantum state preparation on one qubit has Toffoli cost in $O \lb \log{\lb \frac{1}{\epsilon_{\text{out}}} \rb} \rb$. By Eq.~(\ref{error_out}) this is in
    \begin{equation}
        O \lb \log{\lb \frac{\alpha_{\text{kin}}^{(NVE)} + \alpha_{\text{pot}}^{(NVE)}}{\epsilon} \rb} \rb.
    \end{equation}
    
    Combining all previous results, we find that
    \begin{equation}
         \alpha_{NVE} \in O \lb  N\frac{p_{\text{max}}}{m_{\text{min}}} \frac{\ln{d_x}}{h_x} + N^2\frac{ Z_{\text{max}}^2 x_{\text{max}}}{\Delta^3} \frac{\ln{d_p}}{h_p}  \rb.
    \end{equation}
    Furthermore, ensuring $\epsilon_{NVE} \leq \epsilon$ requires
    \begin{equation}
    \begin{split}
         a_{NVE} &= a_p + a_V + a_{D_x} + a_{D_p} + a_m + a_Z + 1 \\
         &\in O \lb \log{\lb \frac{\alpha_{NVE}}{\epsilon} \rb} + \log{\lb d_x \rb} + \log{\lb d_p \rb} + \log{(N)} \rb \\
         &\in O \lb \log{\lb N\frac{p_{\text{max}}}{m_{\text{min}}} \frac{\ln{d_x}}{h_x} + N^2\frac{ Z_{\text{max}}^2 x_{\text{max}}}{\Delta^3} \frac{\ln{d_p}}{h_p} \rb} + \log{\lb \frac{d}{\epsilon} \rb} \rb
    \end{split}
    \end{equation}
    block-encoding ancilla qubits, where $d := \max \{ d_x , d_p \}$, and
    \begin{equation}
        \widetilde{O} \lb N \log{\lb \frac{g \alpha_{NVE}}{\epsilon} \rb} + \log^{\log{3}}{\lb \frac{\alpha_{NVE}}{\epsilon} \rb}  + d \log{(g)} \rb
    \end{equation}
    Toffoli gates.
\end{proof}

\subsection{Proof of Lemma \ref{lem:bounds_L_class_NVT}}
\label{sec:proof_lem_NVT}

For convenience, let us restate Lemma \ref{lem:bounds_L_class_NVT} here.

\LclassNVT*

\begin{proof}
    Lemma \ref{lem:bounds_L_class_NVT} can be proved analogously to Lemma \ref{lem:bounds_L_class_NVE} via a modified block-encoding hierarchy. Here we only give a brief summary of the construction.
    At the lowest level, we now have nine types of block-encodings which we can express as a function of $\alpha_{NVT}$. Note that the resulting upper bounds on the individual block-encodings are somewhat looser than the corresponding bounds used in the proof of Lemma \ref{lem:bounds_L_class_NVE}. However, this does not affect the overall Toffoli or ancilla complexity. With this in mind, we list the lowest level block-encodings below.
    \begin{enumerate}
        \item $U_{p'_{n,j}}$, an $\lb \alpha_{p'}, a_{p'}, \epsilon_{p'} \rb$ block-encoding of $\sum_{\hap_{n,j}=0}^{g_{p'}-1} \sum_{\has=0}^{g_s-1} \frac{p'_{n,j}}{(s+s_{\text{min}})^2} \ketbra{\hap'_{n,j}}{\hap'_{n,j}} \otimes \ketbra{\has}{\has}$ where 
        \begin{align}
            \alpha_{p'} &\in O \lb \frac{p'_{\text{max}}}{m_{\text{min}} s_{\text{min}}^2} \rb \\
            a_{p'} &\in O \lb \log{\lb \frac{\alpha_{NVT}}{\epsilon} \rb} \rb \\
            \epsilon_{p'} &\in O \lb \frac{\epsilon}{\alpha_{NVT}} \rb.
        \end{align}
        It can be efficiently implemented using the alternating sign trick together with quantum Karatsuba multiplication. The resulting Toffoli cost is in $O \lb \lb a_{p'} \rb^{\log{3}} \rb$, which is dominated by the cost of implementing quantum Karatsuba multiplication with $a_{p'}$ qubits~\cite{Gidney2019}.
        
        \item $U_{D_{x_{n,j}}}$, an $\lb \alpha_{D_x}, a_{D_x}, \epsilon_{D_x} \rb$-block-encoding of $D_{x_{n,j}}$ where
        \begin{align}
            \alpha_{D_x} &\in O \lb \frac{\ln{d_x}}{h_x} \rb \\
            a_{D_x} &\in O \lb \log d_x \rb \\
            \epsilon_{D_x} & \in O \lb \frac{\epsilon}{\alpha_{NVT}} \rb.
        \end{align}
        It can be efficiently implemented via a linear combination of unitary adders.
        The associated Toffoli cost is in
        \begin{equation}
            \widetilde{O} \lb d_x \lb \log{d_x} + \log{g_x} + \log{\lb \frac{\alpha_{NVT}}{\epsilon} \rb} \rb \rb,
        \end{equation}
        which includes the cost of implementing a controlled unitary adder on $\log{g_x}$ qubits and the cost of preparing a state encoding the $2d_x$ coefficients of the central finite difference formula of order $2d_x$.
        
        \item $U_{V_{n,n',j}}$, an $\lb \alpha_V, a_V, \epsilon_V \rb$-block-encoding of $\sum_{\hax_{n,j}=0}^{g_x-1} \frac{(x_{n,j} - x_{n',j})}{\lb \norm{x_n - x_{n'}}^2 + \Delta^2 \rb^{3/2}} \ketbra{\hax_{n,j}}{\hax_{n,j}}$ where
        \begin{align}
            \alpha_{V} &\in O \lb \frac{x_{\text{max}}}{\Delta^3} \rb \\
            a_{V} &\in O \lb \log{\lb \frac{\alpha_{NVT}}{\epsilon} \rb} \rb \\
            \epsilon_{V} &\in O \lb \frac{\epsilon}{\alpha_{NVT}} \rb.
        \end{align}
        It can be efficiently implemented using the alternating sign trick together with quantum Karatsuba multiplication.
        The associated Toffoli cost is in $O \lb \lb a_{V} \rb^{\log{3}} \rb$, which is dominated by the cost of implementing quantum Karatsuba multiplication with $a_{V}$ qubits.
        
        \item $U_{D_{p'_{n,j}}}$, an $\lb \alpha_{D_{p'}}, a_{D_{p'}}, \epsilon_{D_{p'}} \rb$-block-encoding of $D_{p'_{n,j}}$ where
        \begin{align}
            \alpha_{D_{p'}} &\in O \lb \frac{\ln{d_{p'}}}{h_{p'}} \rb \\
            a_{D_{p'}} &\in O \lb \log d_{p'} \rb \\
            \epsilon_{D_{p'}} &\in O \lb \frac{\epsilon}{\alpha_{NVT}} \rb.
        \end{align}
        It can be efficiently implemented via a linear combination of unitary adders.
        The associated Toffoli cost is in
        \begin{equation}
             \widetilde{O} \lb d_{p'} \lb \log{d_{p'}} + \log{g_{p'}} + \log{\lb \frac{\alpha_{NVT}}{\epsilon} \rb} \rb \rb,
        \end{equation}
        which includes the cost of implementing a controlled unitary adder on $\log{g_{p'}}$ qubits and the cost of preparing a state encoding the $2d_{p'}$ coefficients of the central finite difference formula of order $2d_{p'}$.
        
        \item $U_{p_s}$, an $\lb \alpha_{p_s}, a_{p_s}, \epsilon_{p_s} \rb$-block-encoding of $\sum_{\hap_{s}} p_s \ketbra{{\hap_{s}}}{\hap_{s}}$ where
        \begin{align}
            \alpha_{p_s} &\in O \lb p_{s, \text{max}} \rb \\
            a_{p_s} &\in O \lb \log{\lb \frac{\alpha_{NVT}}{\epsilon} \rb} \rb \\
            \epsilon_{p_s} &\in O \lb \frac{\epsilon}{\alpha_{NVT}} \rb.
        \end{align}
        It can be efficiently implemented using the alternating sign trick.
        The associated Toffoli cost is in $O \lb a_{p_s} \rb$.
        \item $U_{D_{s}}$, an $\lb \alpha_{D_{s}}, a_{D_{s}}, \epsilon_{D_{s}} \rb$-block-encoding of $D_{s}$ where
        \begin{align}
            \alpha_{D_{s}} &\in O \lb \frac{\ln{d_{s}}}{h_{s}} \rb \\
            a_{D_{s}} &\in O \lb \log d_s \rb \\
            \epsilon_{D_s} &\in O \lb \frac{\epsilon}{\alpha_{NVT}} \rb.
        \end{align}
        It can be efficiently implemented via a linear combination of unitary adders.
        The associated Toffoli cost is in
        \begin{equation}
             \widetilde{O} \lb d_{s} \lb \log{d_{s}} + \log{g_{s}} + \log{\lb \frac{\alpha_{NVT}}{\epsilon} \rb} \rb \rb,
        \end{equation}
        which includes the cost of implementing a controlled unitary adder on $\log{g_{s}}$ qubits and the cost of preparing a state encoding the $2d_{s}$ coefficients of the central finite difference formula of order $2d_{s}$.
        
        \item $U_{s,n,j}$, an $\lb \alpha_{s}, a_{s}, \epsilon_{s} \rb$ block-encoding of $-\sum_{\hap_{n,j}=0}^{g_p-1} \sum_{\has=0}^{g_s-1} \frac{2p_{n,j}^2}{(s+s_{\text{min}})^3} \ketbra{\hap_{n,j}}{\hap_{n,j}} \otimes \ketbra{\has}{\has}$ where
        \begin{align}
            \alpha_{s} &\in O \lb \frac{p_{\text{max}}^2}{m_{\text{min}} s_{\text{min}}^3} \rb \\
            a_{s} &\in O \lb \log{\lb \frac{\alpha_{NVT}}{\epsilon} \rb} \rb \\
            \epsilon_{s} &\in O \lb \frac{\epsilon}{\alpha_{NVT}} \rb.
        \end{align}
        It can be efficiently implemented using the alternating sign trick together with quantum Karatsuba multiplication.
        The associated Toffoli cost is in $O \lb \lb a_{s} \rb^{\log{3}} \rb$ which is dominated by the cost of implementing quantum Karatsuba multiplication with $a_{s}$ qubits.
        
        \item $U_{1/s}$, an $\lb \alpha_{1/s}, a_{1/s}, \epsilon_{1/s} \rb$-block-encoding of $\sum_{\has=0}^{g_s-1} \frac{1}{s + s_{\text{min}}} \ketbra{{\has}}{\has}$ where
        \begin{align}
            \alpha_{1/s} &\in O \lb \frac{1}{s_{\text{min}}} \rb \\
            a_{1/s} &\in O \lb \log{\lb \frac{\alpha_{NVT}}{\epsilon} \rb} \rb \\
            \epsilon_{1/s} &\in O \lb \frac{\epsilon}{\alpha_{NVT}} \rb.
        \end{align}
        It can be efficiently implemented using the alternating sign trick together with quantum Karatsuba multiplication.
        The associated Toffoli cost is in $O \lb \lb a_{1/s} \rb^{\log{3}} \rb$ which is dominated by the cost of implementing quantum Karatsuba multiplication with $a_{1/s}$ qubits.
        
        \item $U_{D_{p_s}}$, an $\lb \alpha_{D_{p_s}}, a_{D_{p_s}}, \epsilon_{D_{p_s}} \rb$-block-encoding of $D_{p_{s}}$ where
        \begin{align}
            \alpha_{D_{p_s}} &\in O \lb \frac{\ln{d_{p_s}}}{h_{p_s}} \rb \\
            a_{D_{p_s}} &\in O \lb \log d_{p_s} \rb \\
            \epsilon_{D_{p_s}} &\in O \lb \frac{\epsilon}{\alpha_{NVT}} \rb.
        \end{align}
        It can be efficiently implemented via a linear combination of unitary adders.
        The associated Toffoli cost is in
        \begin{equation}
             \widetilde{O} \lb d_{p_s} \lb \log{d_{p_s}} + \log{g_{p_s}} + \log{\lb \frac{\alpha_{NVT}}{\epsilon} \rb} \rb \rb,
        \end{equation}
        which includes the cost of implementing a controlled unitary adder on $\log{g_{p_s}}$ qubits and the cost of preparing a state encoding the $2d_{p_s}$ coefficients of the central finite difference formula of order $2d_{p_s}$.
    \end{enumerate}
    We then use Lemmas \ref{lem:product_block_encoding} and \ref{lem:sum_block_encoding} to combine the above block-encodings. This requires the following state preparation unitaries:
    \begin{itemize}
        \item $\prep_m$, an $\lb \alpha_m, a_m, \epsilon_m \rb$ state preparation unitary which encodes the nuclear masses, where
        \begin{align}
            \alpha_m &\in O \lb \frac{N}{m_{\text{min}}} \rb \\
            a_m &\in O \lb \log{N} \rb \\
            \epsilon_{m} &\in O \lb \frac{\epsilon}{\alpha_{NVT}} \rb.
        \end{align}
        The Toffoli cost of this state preparation unitary is in
        \begin{equation}
            O \lb N \log{\lb \frac{\alpha_{NVT}}{\epsilon} \rb}\rb.
        \end{equation}
        \item $\prep_Z$, an $\lb \alpha_Z, a_Z, \epsilon_Z \rb$ state preparation unitary which encodes the atomic numbers of the nuclei, where
        \begin{align}
            \alpha_Z &\in O \lb N^2 Z_{\text{max}}^2 \rb \\
            a_Z &\in O \lb \log{N} \rb \\
            \epsilon_{Z} &\in O \lb \frac{\epsilon}{\alpha_{NVT}} \rb.
        \end{align}
        The Toffoli cost of this state preparation unitary is in
        \begin{equation}
            O \lb N \log{\lb \frac{\alpha_{NVT}}{\epsilon} \rb}\rb.
        \end{equation}
        \item $\prep_{\text{out}}^{NVT}$, an $\lb \alpha_{\text{out}}, a_{\text{out}}, \epsilon_{\text{out}} \rb$ state preparation unitary which is used to combine all terms of the $NVT$ Liouvillian, where
        \begin{align}
            \alpha_{\text{out}} &\in O \lb 1 \rb \\
            a_{\text{out}} & \in O \lb 1 \rb \\
            \epsilon_{\text{out}} &\in O \lb \frac{\epsilon}{\alpha_{NVT}} \rb.
        \end{align}
        The Toffoli cost of this state preparation unitary is in
        \begin{equation}
            O \lb \log{\lb \frac{\alpha_{NVT}}{\epsilon} \rb} \rb.
        \end{equation}
    \end{itemize}
    
    As before, we utilize $O(1)$ $\swap$ registers to combine the individual block-encodings efficiently. Controlled by the $\prep_m$ or $\prep_Z$ register we swap the appropriate nuclear position and momentum variables into the $\swap$ registers. This allows us to apply the 9 basic block-encodings only once to the $\swap$ registers holding the appropriate position and momentum variables rather than $O \lb N \rb$ or $O \lb N^2 \rb$ times to each individual position or momentum variable. However, we do require a total of $O \lb N \log{\lb g' \rb} \rb$ $\swap$ operations, where $g' = \max \{ g_x, g_{p'} \} \leq g = \max \{ g_x, g_{p'}, g_s, g_{p_s} \}$, implying  $O \lb N \log g \rb$ Toffolis.
    
    Going through the same analysis as for the classical $NVE$ Liouvillian yields the desired complexity bounds. 
\end{proof}

\section{Evolution under the electronic Liouvillian}
\label{app:electronic}

As an analytic expression for the electronic ground state energy $E_{\tel}$ is unavailable, we cannot follow the same strategy as described in Appendix~\ref{app:L_class} for the nuclear part to implement $e^{-iL_{\tel}t}$. In particular, we need to approximate somehow the derivative $\frac{\partial E_{\tel}}{\partial {x_{n,j}}}$ which we do via a central finite difference formula of order $2d_e$. Recall from Eq.~\eqref{el_derivative} that the resulting approximate operator is given by
 \begin{equation}
    \Del = \frac{1}{h_x} \sum_{k = -d_e}^{d_e} \sum_{(n',j') \neq (n,j)} \sum_{\hax_{n',j'}} \sum_{\hax_{n,j}} c_{d_e,k} E_{\tel}\lb \{ x_{n',j'}\}, x_{n,j} + k h_x \rb \ketbra{\hax_{n',j'}}{\hax_{n',j'}} \otimes \ketbra{\hax_{n,j}}{\hax_{n,j}}
\end{equation}
where the coefficients $\{ c_{d_e,k} \}$ are as in Definition~\ref{def:discrete_derivative}.

We prove the following lemma which upper bounds the complexity of simulating $e^{-iL_{\tel}t}$.
\begin{lem}[Complexity of simulating $e^{-iL_{\tel}t}$]
    Assume the following.
    \begin{enumerate}
        \item 
        Let $\epsilon \in (0,1)$ and $t \in \mathbb{R}_{\geq 0}$,
        \item $U_{H_{\tel}}$ be a Hermitian $\lb \lambda, a_{\tel}, \frac{h_x h_p \epsilon}{36 N d_e t}\rb$-block-encoding of the electronic Hamiltonian $H_{\tel}\lb \{x_{n}\} \rb$,
        \item $\gamma$ be a lower bound on the spectral gap of the block-encoded operator $\widetilde{H}_{\tel}\lb \{x_{n}\} \rb$ over all phase space grid points,
        \item for any $d_e \in \mathbb{N}_{+}$ it holds that 
        $\max_{x^* \in [-x_{\text{max}},x_{\text{max}}]^{3N}} \left| \frac{\partial^{(2d_e+1)} E_{\tel}}{\partial {x_{n,j}^{(2d_e+1)}}}(x^*) \right| \leq \chi u^{2d_e +1}$ for some constant $\chi$ with units of energy and $u$ with units of inverse length.
        \item Let $U_I$ be the initial state preparation oracle from Definition \ref{def:state_prep} and let $\delta$ be a lower bound on the initial overlap with the true electronic ground state of $\widetilde{H}_{\tel}$. 
    \end{enumerate}
    
    In order to implement an $\epsilon$-precise Liouvillian simulation unitary $U_{L_{\tel}}$ of $e^{-iL_{\tel}t}$ with success probability at least $1 - \xi$ it is sufficient to query $U_{H_{\tel}}$ a total number of times
     \begin{equation}
        O  \lb N d_e \log{\lb \frac{N d_e}{\xi} \rb} \lb \frac{\lambda t}{h_x h_p} + \log{\lb \frac{N\lambda \ln{(d_e)} t}{h_x h_p \epsilon} \rb}\log{\lb \frac{N d_e \log{\lb \frac{N\lambda \ln{(d_e)} t}{h_x h_p \epsilon} \rb}}{\epsilon} \rb} + \frac{\lambda}{\gamma \, \delta} \log{ \lb \frac{N d_e}{ \delta \epsilon} \rb} \rb \rb,
    \end{equation}
    where
    \begin{equation}
        d_e \in O \lb \frac{\log{\left( \frac{N \chi u t}{h_p \epsilon}\right)}}{\log{\left( \frac{1}{u h_x} \right)}} \rb.
    \end{equation}
    Furthermore, we require
    \begin{equation}
        O \lb \frac{N d_e}{\delta} \log{\lb \frac{N d_e}{\xi} \rb} \log{ \lb \frac{N d_e}{ \delta \epsilon} \rb} \rb
    \end{equation}
    queries to the initial state preparation oracle $U_I$ from Definition \ref{def:state_prep}.
\label{lem:complexity_eL_el}
\end{lem}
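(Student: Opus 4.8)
The plan is to prove the lemma in three stages: first, reduce $e^{-iL_{\tel}t}$ to a product of commuting diagonal phase operators by a Fourier transform of the momentum registers; second, implement each such phase operator by phase kickback from a dynamical electronic-structure simulation (ground-state preparation followed by $e^{-iH_{\tel}\tau}$); and third, bound the three error sources — finite-difference truncation of the force, approximate ground-state preparation, and Hamiltonian simulation — budgeting each to $\lesssim\epsilon/(Nd_e)$ and then summing the resulting query counts. Assumptions 1--2 and 5 feed directly into the implementation, assumption 4 fixes $d_e$, and assumption 3 enters only through the ground-state preparation.

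Concretely, since all summands of $L_{\tel}$ commute, Eq.~\eqref{exp_Lel} gives $e^{-iL_{\tel}t}=\prod_{n,j}e^{\Del\otimes D_{p_{n,j}}^1 t}$, so it suffices to implement one factor to precision $\epsilon/(3N)$. For fixed $(n,j)$ I would apply the QFT to the $\ket{p_{n,j}}$ register, which under periodic boundary conditions diagonalizes the second-order operator $D_{p_{n,j}}^1$ with eigenvalues $\tfrac{i}{h_p}\sin(2\pi l_{n,j}/g_p)$ exactly; restricting to second order is precisely what keeps this spectrum bounded by $1/h_p$ rather than $O(\ln d_p/h_p)$, which controls the effective evolution times below. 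In the Fourier basis the factor acts on the $\ket{x_{n,j}}$ and electronic registers as $\exp\!\big(\tfrac{it\sin(2\pi l_{n,j}/g_p)}{h_xh_p}\sum_{f=-d_e}^{d_e}c_{d_e,f}\,\hat E_{\tel}(\,\cdot\,+fh_x)\big)$, where $\hat E_{\tel}(\,\cdot\,+fh_x)$ evaluates $E_{\tel}$ at the $x_{n,j}$-coordinate shifted by $fh_x$; because these shifted operators are simultaneously diagonal, this factors exactly as $\prod_f\exp\!\big(i\,\hat E_{\tel}(\,\cdot\,+fh_x)\,t_{c_{d_e,f},l_{n,j}}\big)$ with the rescaled times of Algorithm \ref{alg:el_ev}. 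Each single factor is then realized as in Algorithm \ref{alg:el_ev}: shift the position register by $f$, prepare the electronic ground state of $H_{\tel}$ at the current nuclear configuration using $U_I$ (which supplies initial overlap $\geq\delta$) followed by the routine of \cite{Lin2020ground_state}, apply the (controlled-time) simulation $e^{-iH_{\tel}\,t_{c_{d_e,f},l_{n,j}}}$ via Lemma \ref{lem:rob_block-Ham_sim} so that the ground-state energy is kicked back as the required phase, and uncompute the ground state and the shift; at the end the QFT is inverted and $O(\log(Nd_e/\xi))$ rounds of fixed-point amplitude amplification boost the overall success probability to $1-\xi$.

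For the error analysis: (a) By Lemma \ref{lem:finite_diff} and assumption 4, the order-$2d_e$ stencil $\Del$ reproduces $\partial E_{\tel}/\partial x_{n,j}$ up to $O\!\big(\chi u(euh_x/2)^{2d_e}\big)$ per term; multiplying by $\|D_{p_{n,j}}^1\|\leq 1/h_p$, by $t$, and by the $O(N)$ terms and demanding this be $\lesssim\epsilon$ yields $d_e\in O\!\big(\log(N\chi ut/(h_p\epsilon))/\log(1/(uh_x))\big)$. (b) The robust block-Hamiltonian simulation (Lemma \ref{lem:rob_block-Ham_sim}) of each $e^{-iH_{\tel}t_{c_{d_e,f},l_{n,j}}}$, using the $(\lambda,a_{\tel},\tfrac{h_xh_p\epsilon}{36Nd_et})$-block-encoding from assumption 2 and the bound $|t_{c_{d_e,f},l_{n,j}}|\leq|c_{d_e,f}|t/(h_xh_p)\leq t/(h_xh_p)$ together with Lemma \ref{lem:bound_FD_coeff}, costs $O\!\big(\log(1/\xi')(\lambda|t_{c,l}|+\log(1/\epsilon'))\big)$ queries to $U_{H_{\tel}}$; summing over the $3N(2d_e+1)=O(Nd_e)$ exponentials and setting $\epsilon',\xi'$ to $\epsilon,\xi$ divided by $O(Nd_e)$ (so the triangle inequality over all factors and a union bound over all failure events close) produces the first two terms of the query bound. (c) The preparation of \cite{Lin2020ground_state}, with spectral-gap lower bound $\gamma$ and initial overlap $\delta$, reaches infidelity $\epsilon''$ using $O\!\big(\tfrac{\lambda}{\gamma\delta}\log(1/\epsilon'')\big)$ queries to $U_{H_{\tel}}$ and $O\!\big(\tfrac{1}{\delta}\log(1/\epsilon'')\big)$ queries to $U_I$ (the latter from the fixed-point amplitude-amplification rounds); choosing $\epsilon''$ as $\epsilon$ divided by an appropriate $O(Nd_e\cdot\mathrm{polylog})$ factor gives the $\tfrac{\lambda}{\gamma\delta}\log(\cdots)$ term of the $U_{H_{\tel}}$ count and the full $U_I$ count, while the QFT is taken exact and contributes only subleading gate cost. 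Collecting (a)--(c) over the $O(Nd_e)$ factors, with the outer $\log(Nd_e/\xi)$ factor from amplitude amplification, yields the stated bounds.

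The main obstacle is item (c): imperfect ground-state preparation contaminates the phase kickback. When the electronic register holds $\ket{\widetilde{\psi}_0\{x_n\}}$ only up to an orthogonal component of norm $\eta$, applying $e^{-iH_{\tel}t_{c,l}}$ and uncomputing does not return the electronic register cleanly to $\ket{0}$, so each of the $O(Nd_e)$ kickbacks incurs an error scaling with $\eta$ (and a subnormalization scaling with $\eta^2$); one must show that taking $\eta=O(\sqrt{\epsilon/(Nd_e)})$, i.e.\ infidelity $O(\epsilon/(Nd_e))$, suffices, that this costs only a logarithmic factor inside the \cite{Lin2020ground_state} routine, and that the precision parameters $\epsilon',\epsilon''$ together with the block-encoding precision of assumption 2 are chained consistently through the products of block-encodings so that no polynomial blow-up in $N$, $d_e$, or $\lambda t/(h_xh_p)$ is introduced. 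The remaining steps — the commuting decomposition, the exact Fourier diagonalization, the finite-difference estimate, and the bookkeeping of the query counts — are routine given the lemmas already established.
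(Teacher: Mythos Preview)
Your overall strategy matches the paper's: commuting decomposition of $L_{\tel}$, QFT diagonalization of $D_{p_{n,j}}^1$, phase kickback via ground-state preparation plus Hamiltonian simulation, and an error budget of $O(\epsilon/(Nd_e))$ per exponential together with a union bound and fixed-point amplification for the failure probability. Your identification of the preparation-infidelity leakage as the delicate point is also exactly what the paper isolates; the paper's bound $\epsilon_{\text{sim},1}\leq 2\sqrt{2\epsilon_{\text{prep}}}+\epsilon_{\text{e,el}}+\tfrac{t}{h_xh_p}\epsilon_{\text{b-e}}$ confirms your $\eta=O(\sqrt{\epsilon/(Nd_e)})$ heuristic.

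There is, however, one technical gap in step (b). After the QFT the Fourier label $l_{n,j}$ sits in superposition, so the rescaled time $t_{c_{d_e,f},l_{n,j}}$ is not a classical number but a register-dependent quantity. Lemma~\ref{lem:rob_block-Ham_sim} simulates $e^{-iHt}$ for a \emph{fixed} $t$ and cannot be invoked directly. The paper instead (i) coherently computes an $\epsilon_{\sin}$-precise binary approximation of $\sin(2\pi l_{n,j}/g_p)$ into an ancilla of size $J\in O(\log(N\lambda\ln(d_e)t/(h_xh_p\epsilon)))$, and (ii) uses the \emph{controlled} Hamiltonian simulation of \cite{Chakraborty2019block-encoding} (Lemma~52 there), whose cost is $O(|\alpha M\gamma|+J\log(J/\epsilon))$. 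That additive $J\log(J/\epsilon)$ term, multiplied by the $O(Nd_e)$ exponentials, is precisely the origin of the double-logarithm middle term in the $U_{H_{\tel}}$ bound; it does not fall out of Lemma~\ref{lem:rob_block-Ham_sim}. Relatedly, the finite-precision computation of the sine introduces a separate discretization error $\epsilon_{\sin}\cdot\|\Del\|/h_p$ that must be budgeted alongside the finite-difference error $\epsilon_{\Del}$; this is an extra error source you do not mention but which the paper tracks explicitly. Once you replace the appeal to Lemma~\ref{lem:rob_block-Ham_sim} by the controlled-simulation lemma and add the $\epsilon_{\sin}$ bookkeeping, your argument goes through and reproduces the stated bounds.
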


As mentioned in the main text, one important feature of the electronic Liouvillian $L_{\tel}$ is that all summands commute with each other (see Definition \ref{def:el_liouvillian}). The evolution operator associated with $L_{\tel}$ can thus be decomposed as follows:
\begin{align}
    e^{-iL_{\tel}t} &= e^{-i \lb i\sum_{n=1}^{N} \sum_{j=1}^{3} \Del \otimes D_{p_{n,j}}^1 \rb t} \nonumber\\
    &= \prod_{n,j} e^{\Del \otimes D_{p_{n,j}}^1 t},
\label{exp_Lel2}
\end{align}
where we use $p_{n,j}$ to denote either a real or virtual momentum variable.

Let us now explain how to implement a single exponential appearing in Eq.~(\ref{exp_Lel2}). Note that $\Del \otimes D_{p_{n,j}}^1$ acts nontrivially on the nuclear momentum register. We deal with the discrete nuclear momentum derivative $D_{p_{n,j}}^1$ via a quantum Fourier transform ($\qft$) whose action on the nuclear momentum register is defined as follows:
\begin{equation}
    \qft \ket{\hap}:= \frac{1}{\sqrt{g_p}} \sum_l e^{2\pi i \hap l/g_p}\ket{l}.
\end{equation}
Here we dropped the $n,j$ indices of the integer momentum variable $\hap$ for ease of notation.
The quantum Fourier transform diagonalizes finite difference operators. Recall that $D^1$ is a first-order finite difference operator of the form $D^1 := \frac{1}{2 h_p}\sum_{\hap} \lb \ketbra{\hap-1}{\hap} - \ketbra{\hap}{\hap-1}\rb$. Thus,
\begin{equation}
\begin{split}
    D^1 &= \lb \qft \; \qft^{-1} \rb D_1 \lb \qft \; \qft^{-1} \rb = \qft \lb \qft^{-1}  D_1 \qft \rb \qft^{-1} \\
    &= \qft \lb \frac{1}{\sqrt{g_p}} \sum_{\hap} \sum_{l,k} \sum_{l',k'} e^{-2\pi i l k/g_p} e^{2\pi i l' k'/g_p} \ketbra{l}{k} \frac{\ketbra{\hap-1}{\hap} - \ketbra{\hap}{\hap-1}}{2 h_p} \ketbra{l'}{k'} \rb \qft^{-1} \\
    &= \qft \lb \frac{1}{\sqrt{g_p}} \sum_{\hap} \sum_{l, k'} \frac{e^{-2\pi i l (\hap-1)/g_p} e^{2\pi i \hap k'/g_p} - e^{-2\pi i l \hap/g_p} e^{2\pi i (\hap-1) k'/g_p}}{2 h_p} \ketbra{l}{k'} \rb \qft^{-1} \\
    &= \qft \lb \sum_{l,k'} \frac{e^{2\pi i l /g_p} - e^{-2\pi i k'/g_p}}{2 h_p} \ketbra{l}{k'} \rb \lb \sum_{\hap} e^{2\pi i \hap (k' - l)/g_p}\rb \qft^{-1} \\
    &= \qft \lb i \sum_l \frac{\sin{\lb 2\pi l/g_p \rb}}{h_p} \ketbra{l}{l} \rb \qft^{-1}.
\end{split}
\end{equation}
The above calculation shows that $\qft$ does indeed diagonalize $D^1$. Higher-order finite difference operators can also be diagonalized via $\qft$ but will have different eigenvalues. For simplicity, we only consider a first-order finite difference operator here.
A single exponential of Eq.~(\ref{exp_Lel2}) can then be expressed as follows:
\begin{equation}
\begin{split}
    e^{\Del D_{p_{n,j}}^1 t} \equiv e^{\Del \otimes D_{p_{n,j}}^1 t} &= \lb \mathbb{1} \otimes \qft \rb e^{i\Del \otimes \sum_l \frac{\sin{\lb 2\pi l/g_p \rb}}{h_p} \ketbra{l}{l} t} \lb \mathbb{1} \otimes \qft^{-1} \rb \\
    &= \lb \mathbb{1} \otimes \qft \rb \prod_l e^{i\Del \otimes  \frac{\sin{\lb 2\pi l/g_p \rb}}{h_p} \ketbra{l}{l} t} \lb \mathbb{1} \otimes \qft^{-1} \rb
\end{split}
\end{equation}
where we used the fact that $U e^A U^\dagger = e^{U A U^{\dagger}}$ for any square matrix $A$ and unitary matrix $U$ of the same dimension.
The following lemma allows us to simplify the above expression.

\begin{lem}[Projector exponential]
    Let $\{ A_l \}_{l=1}^M$ with $A_l \in \mathbb{C}^{M \times M}$ be a set of $M$ matrices and let $\{ P_l \}_{l=1}^M$ with $P_l \in \mathbb{C}^{M \times M}$ and $P_l^2 = P_l$ for all $l \in [M]$ be a set of $M$ orthogonal projectors that satisfy $\sum_l P_l = \mathbb{1}$. Then it holds that
    \begin{equation}
        \prod_l e^{A_l \otimes P_l} = \sum_l e^{A_l} \otimes P_l. 
    \end{equation}
\label{lem:projector_exp}
\end{lem}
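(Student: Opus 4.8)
The plan is to use the fact that a family of pairwise orthogonal projectors summing to the identity makes the tensor summands $A_l \otimes P_l$ mutually annihilating, so that both the product of exponentials and the exponential of the sum collapse onto the block structure induced by the $P_l$. Throughout I work in finite dimensions, so all matrix exponential series converge absolutely and termwise rearrangements are allowed.

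\textbf{Step 1 (orthogonality and commutation).} First I would make explicit that ``orthogonal projectors'' is meant in the pairwise-orthogonal sense $P_l P_{l'} = \delta_{l l'} P_l$; if only $P_l^2 = P_l = P_l^\dagger$ and $\sum_l P_l = \mathbb{1}$ are assumed, this is forced, since for any $x$ we have $\langle x, P_l x\rangle + \langle x, P_{l'} x\rangle \le \|x\|^2$, hence $P_l + P_{l'} \le \mathbb{1}$, which for orthogonal projections implies $P_l P_{l'} = 0$ for $l \ne l'$. Consequently $(A_l \otimes P_l)(A_{l'} \otimes P_{l'}) = (A_l A_{l'}) \otimes (P_l P_{l'}) = 0$ for $l \ne l'$, and likewise with the factors swapped, so the operators $\{A_l \otimes P_l\}_l$ pairwise commute (note this does \emph{not} require the $A_l$ to commute among themselves). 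Therefore $\prod_l e^{A_l \otimes P_l} = e^{\sum_l A_l \otimes P_l}$.

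\textbf{Step 2 (powers of the block operator) and Step 3 (resumming the series).} Set $B := \sum_l A_l \otimes P_l$. Using the annihilation property I would show by induction on $k \ge 0$, with the convention $A_l^0 := \mathbb{1}$, that $B^k = \sum_l A_l^k \otimes P_l$: the base case is $\mathbb{1} \otimes \mathbb{1} = \mathbb{1} \otimes \sum_l P_l$, and the step uses $B \cdot \lb \sum_l A_l^k \otimes P_l \rb = \sum_{l,l'} (A_{l'} A_l^k) \otimes (P_{l'} P_l) = \sum_l A_l^{k+1} \otimes P_l$. Plugging this into the exponential series and interchanging the finite sum over $l$ with the absolutely convergent sum over $k$ gives
\[
e^{B} = \sum_{k=0}^{\infty} \frac{B^k}{k!} = \sum_{l} \lb \sum_{k=0}^{\infty} \frac{A_l^k}{k!} \rb \otimes P_l = \sum_{l} e^{A_l} \otimes P_l .
\]
Combining this with Step 1 yields $\prod_l e^{A_l \otimes P_l} = \sum_l e^{A_l} \otimes P_l$, as claimed.

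I do not anticipate a genuine obstacle: the only point requiring a sentence of care is pinning down that the projectors are pairwise orthogonal (and not merely self-adjoint idempotents), since once that is in place the commutation in Step 1, the block-diagonal form of $B^k$ in Step 2, and the termwise resummation in Step 3 are all routine in the finite-dimensional setting in which the lemma is applied (with $M = g_p$, $A_l = i \Del\, \sin(2\pi l/g_p) t / h_p$, and $P_l = \ketbra{l}{l}$).
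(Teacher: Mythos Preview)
Your proof is correct. It differs in organization from the paper's, though both are short and elementary. The paper simplifies each factor first, using $P_l^2=P_l$ to get $e^{A_l\otimes P_l}=\mathbb{1}\otimes\mathbb{1}+(e^{A_l}-\mathbb{1})\otimes P_l$, and then expands the product $\prod_l\bigl(\mathbb{1}\otimes\mathbb{1}+(e^{A_l}-\mathbb{1})\otimes P_l\bigr)$, where the cross terms vanish by $P_kP_l=0$ and the remaining sum collapses using $\sum_l P_l=\mathbb{1}$. You instead use the mutual annihilation to pass to a single exponential $e^{\sum_l A_l\otimes P_l}$ and then diagonalize its power series blockwise. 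Your route makes the commutation explicit and perhaps generalizes more readily (e.g.\ to other analytic functions of $B$), while the paper's avoids invoking $\prod e^{X_l}=e^{\sum X_l}$ and is marginally more direct. Your added remark that $\sum_l P_l=\mathbb{1}$ together with self-adjointness forces pairwise orthogonality is a nice clarification the paper leaves implicit.
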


\begin{proof}
    Using the Taylor series expansion of a matrix exponential and the fact that $P_l^2 = P_l$, $P_k P_l = 0$ if $k \neq l$ and $\sum_l P_l = \mathbb{1}$, we obtain
    \begin{equation}
    \begin{split}
        \prod_l e^{A_l \otimes P_l} &= \prod_l \sum_{k=0}^{\infty} \frac{\lb A_l \otimes P_l \rb^k}{k!} = \prod_l \lb \mathbb{1} \otimes \mathbb{1} + \lb e^{A_l} - \mathbb{1} \rb \otimes P_l \rb \\
        &= \mathbb{1} \otimes \mathbb{1} + \sum_l \lb e^{A_l} - \mathbb{1} \rb \otimes P_l = \mathbb{1} \otimes \mathbb{1} + \sum_l  e^{A_l} \otimes P_l - \mathbb{1} \otimes \sum_l P_l \\
        &=  \sum_l  e^{A_l} \otimes P_l.
    \end{split}
    \end{equation}
\end{proof}

Lemma \ref{lem:projector_exp} implies that 
\begin{equation}
     e^{\Del D_{p_{n,j}} t} = \lb \mathbb{1} \otimes \qft \rb \sum_l e^{i\Del \frac{\sin{\lb 2\pi l/g_p \rb}}{h_p} t} \otimes \ketbra{l}{l} \lb \mathbb{1} \otimes \qft^{-1} \rb.
\end{equation}
Recalling that $\Del = \frac{1}{h_x} \sum_{k = -d_e}^{d_e} \sum_{\hax_{n,j}} c_{d_e,k} E_{\tel}(x_{n,j} + k h_x) \ketbra{\hax_{n,j}}{\hax_{n,j}}$ we obtain the following equality:
\begin{equation}
     e^{\Del \otimes D_{p_{n,j}}^1 t} = \lb \mathbb{1} \otimes \qft \rb \sum_{\hax_{n,j}} \sum_l \prod_{k} e^{i c_{d_e,k} \frac{E_{\tel}(x_{n,j} + k h_x)}{h_x}  \frac{\sin{\lb 2\pi l/g_p \rb}}{h_p} t} \ketbra{\hax_{n,j}}{\hax_{n,j}} \otimes \ketbra{l}{l} \lb \mathbb{1} \otimes \qft^{-1} \rb.
\end{equation}
We implement the above expression via controlled Hamiltonian simulation.
\begin{defn}[Controlled Hamiltonian simulation (\cite{Chakraborty2019block-encoding}, Definition 51)]
Let $M = 2^J$ for some $J \in \mathbb{N}$, $\gamma \in \mathbb{R}$ and $\epsilon \ge 0$. We say that the unitary
\begin{equation}
    W := \sum_{m=-M}^{M}\ket{m}\!\bra{m}\otimes e^{i m \gamma H}
\end{equation}
implements a controlled $(M, \gamma)$-simulation of the Hamiltonian $H$, where $\ket{m}$ denotes a (signed) bitstring $\ket{b_J b_{J-1} \dots b_0}$ such that $m = -b_J 2^J + \sum_{j=0}^{J-1} b_j 2^j$.
\end{defn}

\begin{lem}[Complexity of controlled Hamiltonian simulation (\cite{Chakraborty2019block-encoding}, Lemma 52)]
    Let $M = 2^J$ for some $J \in \mathbb{N}$, $\gamma \in \mathbb{R}$ and $\epsilon \ge 0$. Suppose that U is an $(\alpha, a, \epsilon/|8(J+1)^2 M \gamma|)$-block-encoding of the Hamiltonian $H$. Then we can implement a $(1, a+2, \epsilon)$-block-encoding of a controlled $(M, \gamma)$-simulation of the Hamiltonian $H$ with
    \begin{equation}
        O \lb |\alpha M \gamma| + J\log{(J/\epsilon)} \rb
    \end{equation}
    uses of controlled-$U$ or its inverse and with $O \lb a|\alpha M \gamma| + aJ\log{(J/\epsilon)} \rb$ two-qubit gates.
\label{lem:controlled_ham}
\end{lem}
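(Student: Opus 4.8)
\emph{Proof plan.} The plan is to realize a controlled $(M,\gamma)$-simulation as a short product of ordinary block-encoded Hamiltonian simulations, one per bit of the (signed) control register, and then to glue them together with the product rule for block-encodings (Lemma~\ref{lem:product_block_encoding}). Writing the control index in two's complement exactly as in the definition above, $m = -b_J 2^J + \sum_{j=0}^{J-1} b_j 2^j$, and letting $\ket{b}_j$ denote the $j$-th qubit of the $(J+1)$-qubit control register, the fact that all powers of $e^{i\gamma H}$ commute gives
\begin{equation}
    W = \sum_{m=-M}^{M}\ket{m}\!\bra{m}\otimes e^{im\gamma H} = \lb\prod_{j=0}^{J-1}C_j\rb C_J, \qquad C_j := \ket{0}_j\!\bra{0}_j\otimes\mathbb{1} + \ket{1}_j\!\bra{1}_j\otimes e^{i\sigma_j 2^j\gamma H},
\end{equation}
with $\sigma_j = +1$ for $j<J$ and $\sigma_J = -1$; equivalently $W = e^{i\gamma(\hat{n}\otimes H)}$ for the signed number operator $\hat{n} = \sum_m m\ket{m}\!\bra{m}$, which makes the scaling $\norm{\hat{n}\otimes H} = M\norm{H}$ transparent. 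I would work with the product form, since each $C_j$ is literally a controlled Hamiltonian simulation of $H$ for time $\sigma_j 2^j\gamma$.

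Next I would implement each factor $C_j$ using robust block-Hamiltonian simulation. Fixing a per-factor error budget $\epsilon_j := \epsilon/(J+1)$, the deterministic QSVT/qubitization construction underlying Lemma~\ref{lem:rob_block-Ham_sim}, applied to $U$, produces a $(1, a+2, \epsilon_j)$-block-encoding of $e^{i\sigma_j 2^j\gamma H}$ using $O\lb\alpha 2^j|\gamma| + \log(1/\epsilon_j)\rb$ applications of $U$ or $U^\dagger$, provided $U$ is a $(\alpha, a, \epsilon_j/(2\cdot 2^j|\gamma|))$-block-encoding of $H$. That circuit is an alternation of (controlled-)$U$'s with single-qubit rotations on the $a+2$ ancillas; additionally controlling every one of those gates on $\ket{1}_j$ turns it into a $(1, a+2, \epsilon_j)$-block-encoding $\widetilde{C}_j$ of $C_j$ at the same asymptotic cost, now expressed in controlled-$U$ and its inverse. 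Since all of $\widetilde{C}_j$'s gates are inactive when the control qubit is $\ket{0}$, the $a+2$ ancillas are untouched in that branch, so the $J+1$ factors can be applied on one shared $(a+2)$-qubit ancilla register.

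Then I would combine the factors: set $\widetilde{W} := \lb\prod_{j=0}^{J-1}\widetilde{C}_j\rb\widetilde{C}_J$. Because every $\widetilde{C}_j$ has normalization constant $1$, Lemma~\ref{lem:product_block_encoding} shows the errors simply add, so $\widetilde{W}$ is a $(1, a+2, \sum_{j=0}^{J}\epsilon_j)$-block-encoding of $\prod_j C_j = W$, i.e.\ $\epsilon$-precise. The strongest precision demand on $U$ occurs at $j=J$, namely $\norm{H - \alpha\lb\bra{0}^a\otimes\mathbb{1}\rb U\lb\ket{0}^a\otimes\mathbb{1}\rb} \le \epsilon_J/(2\cdot 2^J|\gamma|) = \epsilon/(2(J+1)M|\gamma|)$, and the hypothesis that $U$ is an $(\alpha, a, \epsilon/|8(J+1)^2M\gamma|)$-block-encoding covers this with room to spare since $\epsilon/|8(J+1)^2M\gamma| \le \epsilon/(2(J+1)M|\gamma|)$, the slack absorbing the $\log(1/\epsilon_j)$-dependent query terms and the $O(1)$ overhead of controlling the QSVT circuit. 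Finally, summing the per-factor costs with $\sum_{j=0}^{J}2^j = 2M-1$ and $\epsilon_j = \epsilon/(J+1)$ gives
\begin{equation}
    \sum_{j=0}^{J}O\lb\alpha 2^j|\gamma| + \log\frac{J+1}{\epsilon}\rb = O\lb|\alpha M\gamma| + J\log\frac{J}{\epsilon}\rb
\end{equation}
uses of controlled-$U$ or its inverse, and $O\lb a|\alpha M\gamma| + aJ\log(J/\epsilon)\rb$ two-qubit gates.

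I expect the main obstacle to be the bookkeeping around composing block-encodings on a shared ancilla register: justifying cleanly that the ``inactive when control is $\ket{0}$'' property really lets one avoid the $\sum_j(a+2)$ ancillas that a naive use of the product rule would demand, and that the resulting operator-norm error stays additive rather than degrading to a $\sqrt{\epsilon}$-type bound from leakage out of the signal subspace between successive factors. The secondary subtlety is matching the clean input-precision hypothesis $\epsilon/|8(J+1)^2M\gamma|$ to the per-factor requirements, since the query count of each factor itself depends (weakly, logarithmically) on $\epsilon_j$; one should choose the budgets and trace constants carefully so that a single global precision of $U$ simultaneously suffices for all $J+1$ factors.
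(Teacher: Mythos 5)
This lemma is imported verbatim from \cite{Chakraborty2019block-encoding} and the paper contains no proof of it, so there is no in-paper argument to compare against; your strategy --- the bitwise decomposition $W = \prod_j C_j$, per-factor robust block-Hamiltonian simulation, and the summation $\sum_{j=0}^J 2^j = 2M-1$ --- is exactly the route the cited reference takes, and the query count comes out correctly once the pieces are assembled.

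The genuine gap is the one you flag yourself in your closing paragraph, and it is not mere bookkeeping: with a shared $(a{+}2)$-qubit ancilla register the errors of the $\widetilde{C}_j$ do \emph{not} simply add. Lemma \ref{lem:product_block_encoding} as stated composes block-encodings on \emph{disjoint} ancillas, so invoking it directly costs $(J{+}1)(a{+}2)$ ancillas and contradicts the claimed $a+2$. If you instead reuse the register, then writing $P := \ketbra{0}{0}^{\otimes(a+2)}\otimes\mathbb{1}$ you have $P\widetilde{C}_1\widetilde{C}_2P = (P\widetilde{C}_1P)(P\widetilde{C}_2P) + (P\widetilde{C}_1P^{\perp})(P^{\perp}\widetilde{C}_2P)$, and since $\widetilde{C}_j$ block-encodes a unitary with error $\epsilon_j$, the off-block factors satisfy $\norm{P^{\perp}\widetilde{C}_jP} \le \sqrt{2\epsilon_j}$. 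Your ``inactive when the control is $\ket{0}$'' observation only addresses the control-$\ket{0}$ branch; in the control-$\ket{1}$ branch the ancillas return to $\ket{0}$ only approximately, and iterating the bound gives a total error of order $\lb \sum_j \sqrt{2\epsilon_j}\rb^2$ rather than $\sum_j\epsilon_j$. With your budget $\epsilon_j = \epsilon/(J+1)$ this is $O((J+1)\epsilon)$, which misses the target.

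The fix is to budget $\epsilon_j = \epsilon/(4(J+1)^2)$: then $\lb\sum_j\sqrt{2\epsilon_j}\rb^2 \le \epsilon/2$, the asymptotic query count is unchanged because $\epsilon_j$ enters only through $\log(1/\epsilon_j) = O(\log(J/\epsilon))$, and the worst-case precision demanded of $U$ (at $j=J$, evolution time $2^J|\gamma| = M|\gamma|$) becomes $\epsilon_J/(2\cdot 2^J|\gamma|) = \epsilon/|8(J+1)^2M\gamma|$ --- which is \emph{exactly} the hypothesis of the lemma, not ``room to spare.'' In other words, the $(J+1)^2$ and the $8$ in the hypothesis are there precisely to pay for the shared-ancilla composition; your proof is incomplete until that trade is made explicit.
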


Before applying the above lemma, we first use coherent quantum arithmetic to compute an $\epsilon_{\sin}$-precise binary approximation of $\sin{\lb 2\pi l/g_p \rb}$ in an ancilla register controlled by the $\ket{l}$ register. This can be done using a truncated Taylor series expansion of the sine function and has Toffoli cost in $O \lb \log{\lb 1/\epsilon_{\sin} \rb}\rb$ since the error of the truncation vanishes exponentially quickly. The size of the ancilla register is also in $O \lb \log{\lb 1/\epsilon_{\sin} \rb}\rb$.

Controlled by this $\sin{\lb 2\pi l/g_p \rb}$ ancilla register, we then simulate $ \exp \lb -i H_{\tel}\lb \{x_n\} \rb t_{c_k, l} \rb$ using $U_{H_{\tel}}$, a Hermitian $\lb \lambda, a_{\tel}, \epsilon_{\text{b-e}} \rb$-block-encoding of $H_{\tel}$ where
\begin{equation}
    t_{c_k, l} := \frac{c_{d_e,k}}{h_x} \frac{\sin{\lb 2\pi l/g_p \rb}}{h_p} t
\end{equation}
is a rescaled time variable depending on the finite difference coefficients $\{c_{d_e, k}\}$ of $\Del$.
For convenience, let $\widetilde{H}_{\tel} := \lambda \lb \bra{0} \otimes \mathbb{1} \rb U_{H_{\tel}} \lb \ket{0} \otimes \mathbb{1} \rb$ denote the Hermitian matrix that $U_{H_{\tel}}$ block-encodes. This implies that $\norm{H_{\tel} - \widetilde{H}_{\tel}} \leq \epsilon_{\text{b-e}}$.

Note that Lemma~\ref{lem:controlled_ham} applies to integer values of $m$ which translates to integer values of $\sin{\lb 2\pi l/g_p \rb}$ in our case. Hence, we need to ``blow up'' the values of $\sin{\lb 2\pi l/g_p \rb}$ by a factor of $O \lb 1/\epsilon_{\sin} \rb$. This then entails a renormalization of the exponent by a factor of $ O \lb \epsilon_{\sin} \rb$, which can be done via a rescaling of the form $\gamma \rightarrow \epsilon' \gamma$ for an appropriate $\epsilon' \in O \lb \epsilon_{\sin} \rb$.

The general strategy is first to shift the nuclear position register of a single nuclear position variable $\hax_{n,j}$ according to the finite difference scheme of order $2 d_e$. This is done using a unitary adder. Then we (approximately) prepare the ground state $\ket{\widetilde{\psi} \{ x_n \}}$ of $\widetilde{H}_{\tel}$, controlled by the entire nuclear position register $\ket{\{\hax_{n,j}\}}$, in the electronic register. Next, controlled by the entire nuclear position register $\ket{\{\hax_{n,j}\}}$ and the Fourier-transformed momentum register associated with the 1d momentum variable $\hap_{n,j}$, $\ket{l}$, we apply $\exp \lb -i H_{\tel}\lb \{x_n\} \rb t_{c_k, l} \rb$ to the electronic register. This generates states of the form
\begin{equation}
    e^{i c_{d_e,k} \frac{E_{\tel}(x_{n,j} + k h_x)}{h_x}  \frac{\sin{\lb 2\pi l/g_p \rb}}{h_p} t} \ket{\hax_{n,j} + k} \ket{l}.
\end{equation}
Finally, we uncompute the electronic ground state. Now we simply repeat the above procedure for each stencil point of the finite difference scheme. More precisely, we shift the nuclear position register of the 1d nuclear position variable of interest, $x_{n,j}$, to the next stencil point, prepare the electronic ground state for that nuclear configuration, and then apply $\exp \lb -i H_{\tel}\lb \{x_n\} \rb t_{c_k, l} \rb$ to the electronic register.
In the last step, we shift the position register corresponding to $x_{n,j}$ back to the original state to obtain the desired phase factor
\begin{equation}
    \prod_k e^{i c_{d_e,k} \frac{E_{\tel}(x_{n,j} + k h_x)}{h_x}  \frac{\sin{\lb 2\pi l/g_p \rb}}{h_p} t} \ket{\hax_{n,j}} \ket{l}.
\end{equation}
The overall procedure for simulating $e^{-iL_{\tel}t}$ is summarized in Algorithm \ref{alg:el_ev} as well as Fig.~\ref{fig:Circuit_L_el}.

One might consider using the gradient computation algorithm developed in~\cite{Gilyen2019gradient} to compute $\Del$ in the exponent. The hope is that $O\lb N^{1/2} \rb$ rather than $O \lb N \rb$ evaluations of the electronic ground state energy are sufficient. However, a straightforward application fails in our case since we have to compute the gradient in superposition over all nuclear positions. This is problematic because the gradient computation algorithm produces different global phases for different nuclear positions, i.e.~the global phases become local phases that cannot simply be ignored. Uncomputing these local phases is nontrivial and left for future work.

Both the simulation of $\exp \lb -i H_{\tel}\lb \{x_n\} \rb t_{c_k, l} \rb$ and the electronic ground state preparation require access to $U_{H_{tel}}$, a block-encoding of $H_{\tel}\lb \{x_n\} \rb$ as given in Definition \ref{H_el}. For the ground state preparation, it is important that $U_{H_{tel}}$ is a Hermitian block-encoding, meaning that $\widetilde{H}_{\tel}$ is Hermitian. This is discussed in more detail in the proof of Lemma \ref{lem:complexity_eL_el}.
Note that the only way the nuclear positions $\{x_n\}$ enter the electronic Hamiltonian is via the phase factors of the electron-nucleus interaction terms. In~\cite{Su2021first_quant_sim}, the nuclear positions are accessed via a QROM. The phase $k_{c-b}\cdot \hax_n$ is computed in an ancilla register which is then hit with a phase gradient to produce the phase factor $\exp \lb ik_{c-b}\cdot x_n \rb$. In our case, instead of using a QROM to access $x_n$, we swap the nuclear position register $\ket{\hax_n}$ into an ancilla register and compute $k_{c-b}\cdot \hax_n$. The swap is controlled by the ancilla register preparing the state $\sum_{n=0}^{N-1} \sqrt{\frac{Z_n}{Z}}\ket{n}$ which is needed for block-encoding the electron-nucleus interaction terms. The Toffoli cost associated with the controlled $\swap$s is in $O(N)$, which matches the complexity of the original QROM model.
Apart from accessing the nuclear positions differently, we can employ exactly the same techniques presented in~\cite{Su2021first_quant_sim} to block-encode $H_{\tel}\lb \{x_n\}\rb$ which leads to the complexity expressions of Lemma \ref{lem:block-encode_Hel}.

Let us now discuss the electronic ground state preparation in more detail.
\begin{defn}[Fidelity]
    Let $\ket{x}, \ket{y} \in \mathbb{C}^{2^n \times 2^n}$ be two quantum states. The fidelity $F(x,y)$ between $\ket{x}$ and $\ket{y}$ is given by
    \begin{equation}
        F(x,y) := | \braket{x}{y} |.
    \end{equation}
\end{defn}

\begin{lem}[Ground state preparation with \textit{a priori} ground state energy bound (\cite{Lin2020ground_state}, Theorem 6, reformulated)] 
    Suppose we have a Hamiltonian $\widetilde{H} = \sum_k \widetilde{E}_k \ket{\widetilde{\psi}_k}\!\bra{\widetilde{\psi}_k} \in \mathbb{C}^{N \times N}$, where $\widetilde{E}_k \le \widetilde{E}_{k+1}$, which is given through its $(\lambda, m, 0)$-block-encoding $U_H$. Also suppose we have an initial state $\ket{\phi_0}$, prepared by a unitary $U_I$, together with a lower bound on the overlap $|\braket{\widetilde{\psi}_0}{\phi_0}| \ge \delta$. Furthermore, we require the following bound on the ground state energy and the spectral gap: $\widetilde{E}_0 \le \mu - \gamma/2 < \mu + \gamma/2 \le \widetilde{E}_1$, where $\mu$ is an upper bound on the ground state energy and $\gamma$ is a lower bound on the spectral gap of $\widetilde{H}$.
    Then the ground state $\ket{\widetilde{\psi}_0}$ can be prepared with fidelity at least $1-\epsilon_{\text{prep}}$ using
    \begin{equation}
        O \lb \frac{\lambda}{\gamma \delta} \log{\lb \frac{1}{\delta \epsilon_{\text{prep}}} \rb}  \rb
    \end{equation}
    queries to $U_H$ and
    \begin{equation}
        O \lb \frac{1}{\delta} \rb
    \end{equation}
    queries to $U_I$.
\label{lem:ground_state_prep}
\end{lem}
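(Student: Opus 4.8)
The plan is to invoke the ground-state preparation algorithm of Lin and Tong~\cite{Lin2020ground_state} essentially verbatim, after checking that our hypotheses are precisely the ones their construction requires, and then to track how the error and query counts translate into the form stated here. The engine of their construction is an \emph{eigenvalue filtering} step: rescale $\widetilde{H}$ to $\widetilde{H}/\lambda$, whose spectrum lies in $[-1,1]$, and build an odd polynomial $P$ of degree $D$, bounded by $1$ on $[-1,1]$ (so that it is admissible for QSVT), which approximates a shifted copy of the sign function. Concretely, $P$ is $\epsilon_1$-close to $+1$ on the sub-interval of $[-1,1]$ containing $\widetilde{E}_0/\lambda$ and $\epsilon_1$-close to $0$ on the sub-interval containing $\widetilde{E}_1/\lambda, \widetilde{E}_2/\lambda, \dots$, with a transition region of width $\gamma/\lambda$ centered at $\mu/\lambda$; the gap condition $\widetilde{E}_0 \le \mu-\gamma/2 < \mu+\gamma/2 \le \widetilde{E}_1$ guarantees the two families of eigenvalues really are separated by this region. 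By the standard minimax bound for polynomial approximation of the sign function, such a $P$ exists with $D \in O\!\lb (\lambda/\gamma)\log(1/\epsilon_1)\rb$.

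Applying QSVT to the exact $(\lambda,m,0)$-block-encoding $U_H$ of the Hermitian operator $\widetilde{H}$ with the polynomial $P$ produces a block-encoding $U_P$ of an operator that is $\epsilon_1$-close in operator norm to the rank-one spectral projector $\Pi_0 := \ketbra{\widetilde{\psi}_0}{\widetilde{\psi}_0}$ onto the (one-dimensional) ground space, at a cost of $O(D)$ uses of $U_H$, its inverse, and controlled versions, and no uses of $U_I$. Acting with $U_P$ on $\ket{\phi_0} = U_I\ket{0}$ and postselecting the block-encoding ancillas on $\ket{0}$ yields a branch that is $\epsilon_1$-close to $\braket{\widetilde{\psi}_0}{\phi_0}\,\ket{\widetilde{\psi}_0}$, an unnormalized vector of norm at least $\delta-\epsilon_1 \ge \delta/2$. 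Fixed-point amplitude amplification~\cite{Gilyen2019qsvt} then boosts this branch using $O(1/\delta)$ rounds, each invoking $U_P$, $U_I$, and their inverses a constant number of times, so the totals are $O(D/\delta)$ queries to $U_H$ and $O(1/\delta)$ queries to $U_I$.

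The remaining work is the error bookkeeping: the per-filter error $\epsilon_1$, compounded over the $O(1/\delta)$ amplification rounds and combined with the amplification's own tolerance, must leave a final state of fidelity at least $1-\epsilon_{\text{prep}}$ with $\ket{\widetilde{\psi}_0}$. Choosing $\epsilon_1 \in \Theta(\delta\,\epsilon_{\text{prep}})$ suffices — the amplification maps the $\epsilon_1$-perturbation of the target block into an $O(\epsilon_1/\delta)=O(\epsilon_{\text{prep}})$ perturbation of the normalized output — and then $D \in O\!\lb (\lambda/\gamma)\log(1/(\delta\epsilon_{\text{prep}}))\rb$, giving the claimed $O\!\lb \frac{\lambda}{\gamma\delta}\log\frac{1}{\delta\epsilon_{\text{prep}}}\rb$ queries to $U_H$ while the $U_I$ count remains $O(1/\delta)$. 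The step that needs the most care, and where I expect the only real obstacle to lie, is exactly this propagation of the projector error through fixed-point amplitude amplification: one must verify that the $1/\delta$ rounds amplify the \emph{target} component without amplifying the $O(\epsilon_1)$ leakage by more than a $1/\delta$ factor, and check that the parity and normalization constraints of QSVT are honored by the shifted sign polynomial. Everything else is a direct transcription of the argument in~\cite{Lin2020ground_state} into the present notation.
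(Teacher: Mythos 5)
The paper does not prove this lemma at all: it is imported verbatim (modulo notation) from Theorem 6 of Lin and Tong, and your proposal is essentially a correct reconstruction of their argument --- polynomial eigenvalue filtering of degree $O\lb (\lambda/\gamma)\log(1/\epsilon_1)\rb$ applied via QSVT to the block-encoding, followed by $O(1/\delta)$ rounds of amplitude amplification, with the filter precision set to $\epsilon_1 \in \Theta(\delta\,\epsilon_{\text{prep}})$ so that the leakage survives amplification, which is exactly where the $\log\frac{1}{\delta\epsilon_{\text{prep}}}$ in the query count comes from. The one cosmetic inaccuracy is your description of the filter as an \emph{odd} polynomial: the step function $\theta(\mu - x)$ has no definite parity in $x$ (only the sign approximant is odd in its shifted argument $x-\mu$), and Lin--Tong handle this by constructing an even polynomial approximation to the shifted sign function for their quantum eigenvalue transformation; this changes nothing in the degree bound or the final complexity, and you correctly flag the parity constraint as the point requiring verification.
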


\begin{figure}
    \centering
    \begin{adjustbox}{max width=\textwidth}
        \Qcircuit @C=1em @R=1.5em  {
           \lstick{\ket{0}} & \qw & \multigate{1}{W} & \multigate{2}{e^{-iH_{\tel} t_{c_{-1}, l}}} & \multigate{1}{W^{-1}}  & \qw & \multigate{1}{W} & \multigate{2}{e^{-iH_{\tel} t_{c_{1}, l}}} & \multigate{1}{W^{-1}} & \qw & \qw \\
           \lstick{\ket{\hax_{1,1}}} & \gate{-1} & \ghost{W} & \ghost{e^{-iH_{\tel} t_{c_{-1}, l}}} & \ghost{W^{-1}}  & \gate{+2} & \ghost{W} & \ghost{e^{-iH_{\tel} t_{c_{1}, l}}} & \ghost{W^{-1}} & \gate{-1} & \qw \\
           \lstick{\ket{\hap_{1,1}}} & \gate{\qft} & \qw & \ghost{e^{-iH_{\tel} t_{c_{-1}, l}}} & \qw & \qw & \qw & \ghost{e^{-iH_{\tel} t_{c_{1}, l}}} & \qw & \gate{\qft^{-1}} & \qw
    }
    \end{adjustbox}
    \caption{Circuit for implementing the evolution under the electronic Liouvillian for a single nucleus in 1d. The top register corresponds to the electronic register. ``$-1$'' denotes a unitary adder of the form $\sum_{\hax} \ketbra{\hax + 1}{\hax}$ and similarly, ``$+2$'' denotes a unitary adder of the form $\sum_{\hax} \ketbra{\hax -2}{\hax}$.}
    \label{fig:Circuit_L_el}
\end{figure}
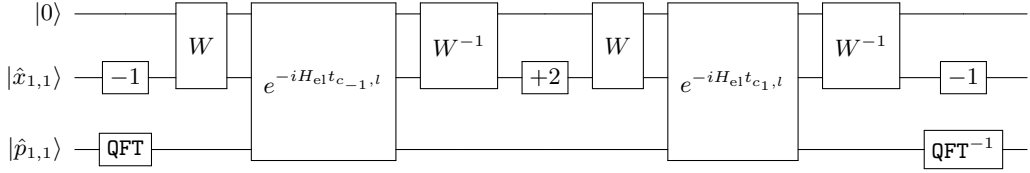

Now we are ready to prove Lemma \ref{lem:complexity_eL_el} which provides an upper bound on the complexity of simulating the evolution under the electronic Liouvillian.

\begin{proof}[Proof of Lemma~\ref{lem:complexity_eL_el}]
    The error in approximating $e^{-iL_{\tel}t}$ consists of two parts. On the one hand, there is the simulation error $\epsilon_{\text{sim}}$, which arises from approximately preparing the exact ground state $\ket{\widetilde{\psi}_0}$ of $\widetilde{H}_{\tel}$ and approximately implementing $\exp \lb -i H_{\tel}\lb \{x_n\} \rb t_{c_k, l} \rb$. On the other hand, we have the discretization error $\epsilon_{\text{disc}}$ associated with the finite difference matrix $\Del$ of the derivatives of the electronic ground state energy and the finite binary representation of $\sin{\lb 2\pi l/g_p \rb}$. Let $\widetilde{L_{\tel}}$ denote the approximate discrete electronic Liouvillian. Then the overall error associated with simulating $e^{-iL_{\tel}t}$ is upper bounded as follows:
    \begin{equation}
        \begin{split}
            \norm{\lb \bra{0} \otimes \mathbb{1} \rb U_{L_{\tel}} \lb \ket{0} \otimes \mathbb{1} \rb - e^{-iL_{\tel}t}} &\le \norm{\lb \bra{0} \otimes \mathbb{1} \rb U_{L_{\tel}} \lb \ket{0} \otimes \mathbb{1} \rb - e^{-i \widetilde{L_{\tel}}t}} + \norm{e^{-i \widetilde{L_{\tel}}t} -e^{-iL_{\tel}t}} \\
            &\le \epsilon_{\text{sim}} + \norm{\widetilde{L_{\tel}} - L_{\tel}}t\\
            &\le \epsilon_{\text{sim}} + \epsilon_{\text{disc}}t,
        \end{split}
    \end{equation}
    where we used Duhamel's formula in going from the second to the third line. The overall error is less or equal to $\epsilon$ if we ensure that $\epsilon_{\text{sim}} \leq \epsilon/2$ and $\epsilon_{\text{disc}} \leq \epsilon/(2t)$.
    Recall that $L_{\tel}$ is a sum of $3N$ commuting terms and each term involves a central finite difference formula of order $2 d_e$. Thus, a total of $6 N d_e$ exponentials need to be implemented.
    We first discuss the simulation error $\epsilon_{\text{sim},1}$ of a single exponential. By the triangle inequality, if $\epsilon_{\text{sim},1} \leq \epsilon/(12 N d_e)$ then $\epsilon_{\text{sim}} \leq \epsilon/2$. Let $W$ denote the unitary that prepares an approximate ground state of $\widetilde{H}_{\tel}$ for fixed nuclear positions according to Lemma \ref{lem:ground_state_prep}, i.e. 
    \begin{equation}
        W\ket{\{\hax_{n,j}\}}\ket{0} = \ket{\{\hax_{n,j}\}}\ket{\widetilde{\phi}_0 \lb \{x_{n,j}\} \rb}
    \end{equation}
    with
    \begin{equation}
        |\braket{\widetilde{\psi}_0 \lb \{x_{n,j}\} \rb}{\widetilde{\phi}_0 \lb \{x_{n,j}\} \rb}| \ge 1- \epsilon_{\text{prep}}.
    \end{equation}
    Note that we can view $U_{H_{\tel}}$ as an exact block-encoding of $\widetilde{H}_{\tel}$, which allows us to use Lemma \ref{lem:ground_state_prep} directly without further error propagation.
    In the following discussion we will refrain from writing out the $\lb \{x_{n,j}\} \rb$-dependence explicitly.
    Now, it holds that 
    \begin{equation}
        \ket{\widetilde{\phi}_0} = e^{i\alpha} \lb 1- \epsilon'_{\text{prep}} \rb \ket{\widetilde{\psi}_0} + \beta \ket{\widetilde{\psi}_0^\perp}
    \end{equation}
    for some angle $\alpha \in [0, 2\pi)$, $0 \le \epsilon'_{\text{prep}} \le \epsilon_{\text{prep}}$ and $|\beta|^2 = 2\epsilon'_{\text{prep}} - \lb \epsilon'_{\text{prep}} \rb^2 \le 2 \epsilon'_{\text{prep}}$. Letting 
    \begin{equation}
        \ket{\psi_0'} := e^{i\alpha}\ket{\widetilde{\psi}_0}
    \end{equation}
    we thus have that
    \begin{equation}
        \norm{\ket{\widetilde{\phi}_0} - \ket{\psi_0'}} \le \sqrt{2\epsilon_{\text{prep}}}.
    \end{equation}

    Let $U_{\text{e,el}}$ be an $\epsilon_{\text{e,el}}$-precise block-encoding of 
    \begin{equation}
        \sum_{\{\hax_n\}, l} \ketbra{\{\hax_n\}}{\{\hax_n\}} \otimes \ketbra{\sin{\lb 2\pi l/g_p \rb}}{\sin{\lb 2\pi l/g_p \rb}} \otimes e^{-i H_{\tel}t_{c_k, l}}
    \end{equation}
    and let $\widetilde{U}_{\text{e,el}} := \lb \bra{0} \otimes \mathbb{1} \rb U_{\text{e,el}} \lb \ket{0} \otimes \mathbb{1} \rb$ denote the block-encoded approximation to the above operator. Our goal is to bound the error of the phase factors obtained via phase kick back from the electronic register, i.e.~we wish to bound
    \begin{equation}
        \epsilon_{\text{sim},1} := \norm{\lb \mathbb{1} \otimes \bra{0} \rb W^{-1}  \widetilde{U}_{\text{e,el}} W \lb \mathbb{1} \otimes \ket{0} \rb  - \sum_{\{\hax_n\}} e^{-i E_{\tel} \lb \{x_{n,j}\} \rb t_{c_k, l}} \ketbra{\{ \hax_n \}}{\{ \hax_n \}}},
    \end{equation}
    for fixed $t_{c_k, l}$.
    Note that the above definition implies that the electronic register is projected out to the $\ket{0}$ state at the end of the simulation. In other words, the error $\epsilon_{\text{sim},1}$ is only measured within the Hilbert space of the nuclear position and momentum registers but not the electronic register. Importantly, the error matrix 
    \begin{equation}
        \mathcal{E}_{{\text{sim},1}} := \lb \bra{0} \otimes \mathbb{1} \rb W^{-1}  \widetilde{U}_{\text{e,el}} W \lb \ket{0} \otimes \mathbb{1} \rb  - \sum_{\{\hax_n\}} e^{-i E_{\tel} \lb \{x_{n,j}\} \rb t_{c_k, l}} \ketbra{\{ \hax_n \}}{\{ \hax_n \}}
    \end{equation}
    is diagonal in the nuclear position and momentum basis since $W$ and $\widetilde{U}_{\text{e,el}}$ act trivially on the nuclear position and momentum register. Hence, $\epsilon_{\text{sim},1}$ is simply the largest value on the diagonal of $\mathcal{E}_{{\text{sim},1}}$. This allows us to consider the phase error for each nuclear computational basis state separately. 
    Let 
    \begin{equation}
        \widetilde{U}_{\text{e,el}} \lb \{ x_n \}, t_{c_k, l} \rb := \lb \bra{\{\hax_n\}}\bra{\sin{\lb 2\pi l/g_p \rb}} \otimes \mathbb{1} \rb \widetilde{U}_{\text{e,el}} \lb \ket{\{\hax_n\}} \ket{\sin{\lb 2\pi l/g_p \rb}} \otimes \mathbb{1} \rb
    \end{equation}
    denote a single exponential of $\widetilde{U}_{\text{e,el}}$ for fixed nuclear positions $\{x_n\}$ and Fourier parameter $l$. Similarly, let $W \lb \{ x_n \} \rb$ denote the electronic ground state preparation unitary for fixed nuclear positions $\{x_n\}$. Then we have that
    \begin{equation}
    \begin{split}
        \epsilon_{\text{sim},1} &= \max_{\{ \hax_n \}, l} \left| \bra{0} W^{-1} \lb \{ x_n \} \rb \widetilde{U}_{\text{e,el}} \lb \{ x_n \}, t_{c_k, l} \rb W \lb \{ x_n \} \rb \ket{0} - e^{-i E_{\tel}t_{c_k, l}\lb \{ x_n \} \rb} \right| \\
        &\leq \max_{\{ \hax_n \}, l} \norm{W^{-1} \lb \{ x_n \} \rb \widetilde{U}_{\text{e,el}} \lb \{ x_n \}, t_{c_k, l} \rb W \lb \{ x_n \} \rb \ket{0} - e^{-i E_{\tel}t_{c_k, l}\lb \{ x_n \} \rb}\ket{0}}.
    \end{split}
    \end{equation}
    In the following, we will not write out the $\{ x_n \}$ and $l$ dependence explicitly.
    Applying the triangle inequality repeatedly and using the submultiplicativity of the induced 2-norm, one finds the following upper bound on the approximation error for a single exponential: 
    \begin{equation}
    \begin{split}
        \epsilon_{\text{sim},1} &\leq \norm{W^{-1} \widetilde{U}_{\text{e,el}} W \ket{0} - e^{-i E_{\tel}t_{c_k}} \ket{0}} \\
        &\leq \norm{W^{-1} \widetilde{U}_{\text{e,el}} W \ket{0} - W^{-1} \widetilde{U}_{\text{e,el}} \ket{\psi_0'}} + \norm{W^{-1} \widetilde{U}_{\text{e,el}} \ket{\psi_0'} - e^{-i E_{\tel}t_{c_k}} \ket{0}} \\
        &\leq \norm{\ket{\widetilde{\phi}_0} - \ket{\psi_0'}} \\
        &\quad + \norm{W^{-1} \widetilde{U}_{\text{e,el}} \ket{\psi_0'} - W^{-1} e^{-i \widetilde{E}_{\tel}t_{c_k}} \ket{\psi_0'}} + \norm{W^{-1} e^{-i \widetilde{E}_{\tel}t_{c_k}} \ket{\psi_0'} - e^{-i E_{\tel}t_{c_k}} \ket{0}} \\
        &\leq \sqrt{2 \epsilon_{\text{prep}}} + \norm{\widetilde{U}_{\text{e,el}} \ket{\psi_0'} - e^{-i \widetilde{E}_{\tel}t_{c_k}} \ket{\psi_0'}} \\
        &\quad + \norm{W^{-1} e^{-i \widetilde{E}_{\tel}t_{c_k}} \ket{\psi_0'} - W^{-1} e^{-i E_{\tel}t_{c_k}} \ket{\psi_0'}} + \norm{W^{-1} e^{-i E_{\tel}t_{c_k}} \ket{\psi_0'} - e^{-i E_{\tel}t_{c_k}} \ket{0}} \\
    \end{split}
    \end{equation}
    The second term is upper bounded by the block-encoding error $\epsilon_{\text{e,el}}$ of $U_{\text{e,el}}$. Duhamel's formula can be used to upper bound the third term:
    \begin{equation}
    \begin{split}
        \norm{W^{-1} e^{-i \widetilde{E}_{\tel}t_{c_k}} \ket{\psi_0'} - W^{-1} e^{-i E_{\tel}t_{c_k}} \ket{\psi_0'}} &\leq \left| e^{-i \widetilde{E}_{\tel}t_{c_k}} - e^{-i E_{\tel}t_{c_k}} \right| \leq \left| \widetilde{E}_{\tel}t_{c_k} -  E_{\tel}t_{c_k} \right| \\
        &\leq \left| \widetilde{E}_{\tel} -  E_{\tel} \right| \frac{t}{h_x h_p}.
    \end{split}
    \end{equation}
    Now recall that $\norm{\widetilde{H}_{\tel} - H_{\tel}} \leq \epsilon_{\text{b-e}}$.
    Eigenvalue perturbation theory then tells us that~\cite{Horn1985}
    \begin{equation}
        \left| \widetilde{E}_{\tel} -  E_{\tel} \right| \leq \epsilon_{\text{b-e}}.
    \end{equation}
    For the last term note that
    \begin{equation}
    \begin{split}
        \norm{W^{-1}\ket{\psi_0'} - \ket{0}} &= \norm{W^{-1}\ket{\psi_0'} - W^{-1} W \ket{0}} \\
        &\leq \norm{W^{-1}}\norm{\ket{\psi_0'} - W \ket{0}} = \norm{\ket{\psi_0'} - \ket{\widetilde{\phi}_0}} \\
        &\le \sqrt{2\epsilon_{\text{prep}}}.
    \end{split}
    \end{equation}
    Putting it all together we find that
    \begin{equation}
        \epsilon_{\text{sim},1} \leq 2 \sqrt{2 \epsilon_{\text{prep}}} + \epsilon_{\text{e,el}} + \frac{t}{h_x h_p} \epsilon_{\text{b-e}}.
    \end{equation}

    To achieve $\epsilon_{\text{sim},1} \leq \frac{\epsilon}{12 N d_e}$ it suffices to have
    \begin{align}
        \epsilon_{\text{prep}} &\leq \frac{1}{2} \lb \frac{\epsilon}{72 N d_e} \rb^2 \\
        \epsilon_{\text{e,el}} &\leq \frac{\epsilon}{36 N d_e} \\
        \epsilon_{\text{b-e}} &\leq \frac{h_x h_p \epsilon}{36 N d_e t}.
    \end{align}
    Next, let us discuss the discretization error $\epsilon_{\text{disc},1}$ of a single term of $L_{\tel}$. Let $\epsilon_{\Del}$ denote the error tolerance associated with approximating $\sum_{\hax_{n,j}} \frac{\partial E_{\tel}}{\partial {x_{n,j}}} \ketbra{\hax_{n,j}}{\hax_{n,j}}$ with $\Del$, i.e. 
    \begin{equation}
        \norm{\Del - \sum_{\hax_{n,j}} \frac{\partial E_{\tel}}{\partial {x_{n,j}}} \ketbra{\hax_{n,j}}{\hax_{n,j}}} \leq \epsilon_{\Del}.
    \end{equation}
    Furthermore, let $\widetilde{\sin}{\lb 2\pi l/g_p \rb}$ denote an approximation to $\sin{\lb 2\pi l/g_p \rb}$ satisfying 
    \begin{equation}
        |\widetilde{\sin}{\lb 2\pi l/g_p \rb} - \sin{\lb 2\pi l/g_p \rb}| \leq \epsilon_{\sin}
    \end{equation}
    for all $l \in [g_p]$.
    By the triangle inequality, we then have that
    \begin{equation}
    \begin{split}
        \epsilon_{\text{disc},1} &= \norm{ \Del \frac{\widetilde{\sin}{\lb 2\pi l/g_p \rb}}{h_p} - \sum_{\hax_{n,j}} \frac{\partial E_{\tel}}{\partial {x_{n,j}}} \ketbra{\hax_{n,j}}{\hax_{n,j}} \frac{\sin{\lb 2\pi l/g_p \rb}}{h_p}} \\ 
        &\leq \norm{\Del \frac{\widetilde{\sin}{\lb 2\pi l/g_p \rb}}{h_p} - \Del \frac{\sin{\lb 2\pi l/g_p \rb}}{h_p}} + \norm{\Del \frac{\sin{\lb 2\pi l/g_p \rb}}{h_p} - \sum_{\hax_{n,j}} \frac{\partial E_{\tel}}{\partial {x_{n,j}}} \ketbra{\hax_{n,j}}{\hax_{n,j}} \frac{\sin{\lb 2\pi l/g_p \rb}}{h_p}} \\
        &\leq \frac{\norm{\Del}}{h_p} \epsilon_{\sin} + \frac{1}{h_p}\epsilon_{\Del}.
    \end{split}
    \end{equation}
    We obtain $\epsilon_{\text{disc}} \leq \frac{\epsilon}{2t}$ if $\epsilon_{\text{disc},1} \leq \frac{\epsilon}{6Nt}$. This can be achieved by ensuring that 
    \begin{align}
        \epsilon_{\sin} &\leq \frac{h_p \epsilon}{12N\norm{\Del}t} \\
        \epsilon_{\Del} &\leq \frac{h_p \epsilon}{12Nt}.
    \end{align}
    From Lemma \ref{lem:bound_FD_coeff} it follows that 
    \begin{equation}
        \norm{\Del} \leq \lambda \frac{2 \ln{(d_e+1)}}{h_x}.
    \end{equation}
    The size of the ancilla register used for representing $\sin{\lb 2\pi l/g_p \rb}$ is thus in 
    \begin{equation}
        O \lb \log{\lb \frac{1}{\epsilon_{\sin}} \rb} \rb \subseteq O \lb \log{\lb \frac{N\lambda \ln{(d_e)} t}{h_x h_p \epsilon} \rb} \rb.
    \end{equation}
    The order of the finite difference approximation, $d_e$, is constrained by 
    \begin{equation}
        \epsilon_{\Del} \leq \frac{h_p \epsilon}{12Nt}.
    \label{de_constraint}
    \end{equation}
    Lemma \ref{lem:finite_diff} implies that
    \begin{equation}
        \epsilon_{\Del} \in O \left( \max_{x^* \in [-x_{\text{max}},x_{\text{max}}]^{3N}} \left| \frac{\partial^{(2d_e+1)} E_{\tel}}{\partial {x_{n,j}^{(2d_e+1)}}}(x^*) \right| \left(  \frac{e h_x}{2} \right)^{2d_e} \right).
    \end{equation}
    
    By assumption,
    \begin{equation}
        \max_{x^* \in [-x_{\text{max}},x_{\text{max}}]^{3N}} \left| \frac{\partial^{(2d_e+1)} E_{\tel}}{\partial {x_{n,j}^{(2d_e+1)}}}(x^*) \right| \leq \chi u^{2d_e +1}.
    \end{equation}

    We can satisfy the constraint in Eq.~\eqref{de_constraint} by choosing
    \begin{equation}
        d_e \in O \left( \frac{\log{\left( \frac{h_p \epsilon}{N \chi u t}\right)}}{\log{\left( u h_x \right)}} \right) = O \lb \frac{\log{\left( \frac{N \chi u t}{h_p \epsilon}\right)}}{\log{\left( \frac{1}{u h_x} \right)}} \rb.
    \end{equation}

    By Lemma \ref{lem:ground_state_prep} the $6 N d_e$ electronic ground state preparations require
    \begin{equation}
        O \lb N d_e \frac{\lambda}{\gamma \delta} \log{\lb \frac{N d_e}{\delta \epsilon} \rb} \rb
    \end{equation}
    queries to $U_{H_\tel}$ and
    \begin{equation}
        O \lb \frac{N d_e}{\delta} \log{\lb \frac{N d_e}{\delta \epsilon} \rb} \rb
    \end{equation}
    queries to the initial state preparation oracle from Definition \ref{def:state_prep}.
    Furthermore, by Lemma \ref{lem:controlled_ham}, we need
    \begin{equation}
        O \lb N d_e \lb \frac{\lambda t}{h_x h_p} + \log{\lb \frac{N\lambda \ln{(d_e)} t}{h_x h_p \epsilon} \rb}\log{\lb \frac{N d_e \log{\lb \frac{N\lambda \ln{(d_e)} t}{h_x h_p \epsilon} \rb}}{\epsilon} \rb} \rb \rb
    \end{equation}
    queries to $U_{H_\tel}$ for the $6 N d_e$ controlled simulations of $e^{-i H_{\tel}t_{c_k, l}}$. 

    Lastly, note that the simulation of each of the $6N d_e$ exponentials is associated with a certain failure probability due to the probabilistic nature of block-encodings.
    By the union bound, we can ensure an overall success probability of at least $1 - \xi$ if the failure probability of a single exponential is in $O \lb \xi/(N d_e) \rb$. This can be achieved via (fixed-point) amplitude amplification at the expense of a multiplicative factor of $\log \lb \frac{N d_e}{\xi} \rb$ to the query complexities of $U_{H_\tel}$ and $U_I$.

    Combining all results yields the complexity expressions stated in Lemma \ref{lem:complexity_eL_el}.
\end{proof}

\section{Implementation of the overall Liouvillian evolution operator}
\label{app:overall_Liouvillian}

The main goal of this appendix is to prove Theorem \ref{thm:complexity_liouvillian} which upper bounds the complexity of simulating $e^{-iLt}$. Let us first discuss some intermediate results.
As explained previously, we implement the overall Liouvillian evolution operator $e^{-iL t}$ via a $(2k)$th-order Trotter product formula combining $e^{-iL_{\tcla} t}$ and $e^{-iL_{\tel} t}$.
The following lemma provides an upper bound on the query complexity of simulating Liouvillian dynamics in the $NVE$ and $NVT$ ensemble.
\begin{lem}[Query complexity of Born-Oppenheimer Liouvillian simulation]
    Let $L = L_{\tcla} + L_{\tel}$ be the discrete Liouvillian operator either in the $NVE$ ensemble (Definition \ref{def:NVE_liouvillian}) or the $NVT$ ensemble (Definition \ref{def:NVT_liouvillian}). Let $k \in \mathbb{N}_+$.
    An $\epsilon$-precise approximation to the evolution operator $U_L = e^{-iL t}$ can be implemented with success probability $\ge 1 - \xi$ using
    \begin{equation*}
        \widetilde{O} \lb 5^k t \lb \alpha \log{\lb \frac{\mu'}{\epsilon \xi} \rb} + \mu' \lb \frac{\mu' t}{\epsilon} \rb^{1/2k} \log{\lb \frac{1}{\xi} \rb} \rb \rb
    \end{equation*}
    queries to an $\lb \alpha, -, \frac{\epsilon}{5^k t} \rb$ block-encoding of the classical Liouvillian $L_{\tcla}$ where $\alpha \in \{ \alpha_{NVE}, \alpha_{NVT} \}$ and $\mu' \in \{ \mu'_{NVE}(2k), \mu'_{NVT}(2k) \}$ is an upper bound on the spectral norm of the nested commutator of $L_{\tcla}$ and $L_{\tel}$ as given in Definition \ref{def:commutator_spectral_norm}.
    An additional
    \begin{equation*}
       \widetilde{O} \lb 5^k N d_e \frac{\lambda}{\gamma \delta} \frac{\lb \mu' t \rb^{1+1/(2k)}}{\epsilon^{1/(2k)}} \log{\lb \frac{1}{\xi} \rb} \rb
    \end{equation*}
    queries to a Hermitian $\lb \lambda, -, \frac{h_x h_p \epsilon}{5^k 36 N d_e t} \rb$-block-encoding of the electronic Hamiltonian $H_{\tel}$ are needed.
    Lastly, we require
    \begin{equation*}
        \widetilde{O} \lb \frac{5^k N d_e}{\delta} \frac{\lb \mu' t \rb^{1+1/(2k)}}{\epsilon^{1/(2k)}} \log{\lb \frac{1}{\xi} \rb} \rb
    \end{equation*}
    queries to the initial electronic state preparation oracle $U_I$ from Definition \ref{def:state_prep}.
\label{lem:query_complexity_eL}
\end{lem}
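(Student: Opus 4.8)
The plan is to synthesize $e^{-iLt}$ from the two Hermitian summands $L_{\tcla}$ and $L_{\tel}$ (Definitions~\ref{def:class_liouvillian} and~\ref{def:el_liouvillian}) by repeating, over $r$ equal time slices of length $t/r$, the $2k$-th order Trotter--Suzuki formula $\mathcal{S}_{2k}$ of Definition~\ref{def:2ktrotter} applied to the two-term splitting $L = L_{\tcla}+L_{\tel}$. By the commutator-scaling error bound for product formulas~\cite{Childs2021trotter_error}, the error incurred by one slice is $O\!\lb \widetilde{\alpha}_{\mathrm{comm}}(t/r)^{2k+1}\rb$, where $\widetilde{\alpha}_{\mathrm{comm}}$ is the sum over all $(2k+1)$-fold nested commutators $\norm{[L_{\gamma_{2k+1}},[\dots[L_{\gamma_2},L_{\gamma_1}]\dots]]}$ with each $L_{\gamma_i}\in\{L_{\tcla},L_{\tel}\}$; by the construction of $\mu'$ in Definition~\ref{def:commutator_spectral_norm} one has $\widetilde{\alpha}_{\mathrm{comm}}\le (\mu')^{2k+1}$ up to a constant depending only on $k$. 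Telescoping over the $r$ slices gives a total Trotter error in $O\!\lb(\mu')^{2k+1}t^{2k+1}/r^{2k}\rb$, so choosing $r\in O\!\lb \mu' t(\mu' t/\epsilon)^{1/2k}\rb$ keeps it below $\epsilon/2$, leaving $\epsilon/2$ for the sub-simulations.

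Each slice expands, via the recursion of Definition~\ref{def:2ktrotter}, into $\Upsilon:=2\cdot 5^{k-1}+1\in O(5^k)$ exponentials that alternate between factors $e^{-iL_{\tcla}\tau_i}$ and $e^{-iL_{\tel}\tau_i}$; using $1/3\le u_k\le 1/2$ one verifies the total stage time $\sum_i|\tau_i|$ of one slice is $O(5^k)\cdot(t/r)$, hence $O(5^k t)$ over the whole evolution, which consists of $O(5^k r)$ exponentials in all. Allocate to each a simulation precision $\epsilon'\in\Theta\!\lb\epsilon/(5^k r)\rb$ (so the accumulated simulation error is $\le\epsilon/2$) and a failure probability $\xi'=\xi/(5^k r)$ (so the union bound gives total failure $\le\xi$); after absorbing logarithms of $5^k r$, $\log(1/\epsilon')=\widetilde O(\log(\mu'/\epsilon))$ and $\log(1/\xi')=\widetilde O(\log(1/\xi))$.

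For the classical factors, apply Lemma~\ref{lem:rob_block-Ham_sim} to an $\lb\alpha,-,\epsilon'/(2|\tau_i|)\rb$-block-encoding of $L_{\tcla}$ furnished by Lemma~\ref{lem:bounds_L_class_NVE} ($NVE$) or Lemma~\ref{lem:bounds_L_class_NVT} ($NVT$); since $|\tau_i|\in O(5^k t/r)$, the required block-encoding precision is $\Omega\!\lb\epsilon/(5^k t)\rb$ --- the $1/r$ from the finer step size cancels the $r$ from the step count, which is why the stated precision is $r$-independent. The lemma costs $O\!\lb\log(1/\xi')(\alpha|\tau_i|+\log(1/\epsilon'))\rb$ block-encoding queries per factor; summing over all $O(5^k r)$ classical factors, the $\alpha|\tau_i|$ terms sum to $O(5^k\alpha t)$ while the $\log(1/\epsilon')$ terms contribute $O(5^k r)$ copies, and substituting $r$ reproduces the stated $\widetilde O\!\lb 5^k t(\alpha\log(\mu'/(\epsilon\xi))+\mu'(\mu' t/\epsilon)^{1/2k}\log(1/\xi))\rb$.

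For the electronic factors, invoke Lemma~\ref{lem:complexity_eL_el} with parameters $(\tau_i,\epsilon',\xi')$: this needs a Hermitian $\lb\lambda,-,\Omega(h_xh_p\epsilon/(5^k N d_e t))\rb$-block-encoding of $H_{\tel}$ (again the $1/r$ cancels the $r$), and costs $\widetilde O\!\lb N d_e\log(1/\xi')(\lambda|\tau_i|/(h_xh_p)+(\lambda/(\gamma\delta))\log(N d_e/(\delta\epsilon)))\rb$ queries to it plus $\widetilde O\!\lb(N d_e/\delta)\log(1/\xi')\log(N d_e/(\delta\epsilon))\rb$ queries to $U_I$, per factor. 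Summed over the $O(5^k r)$ electronic factors, the $\lambda|\tau_i|/(h_xh_p)$ contributions give $O(5^k N d_e\lambda t/(h_xh_p))$, dominated by the ground-state-preparation contribution $O(5^k r\,N d_e\lambda/(\gamma\delta))$ because $\mu'$ itself scales at least as $\lambda/(h_xh_p)$ (so $r\ge\mu' t\ge\lambda t/(h_xh_p)$) while $\lambda/(\gamma\delta)\ge 1$; inserting $r=O((\mu' t)^{1+1/2k}/\epsilon^{1/2k})$ yields the claimed $\widetilde O\!\lb 5^k N d_e(\lambda/(\gamma\delta))(\mu' t)^{1+1/2k}\epsilon^{-1/2k}\log(1/\xi)\rb$ queries to the $H_{\tel}$-block-encoding and $\widetilde O\!\lb(5^k N d_e/\delta)(\mu' t)^{1+1/2k}\epsilon^{-1/2k}\log(1/\xi)\rb$ queries to $U_I$. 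The main obstacle is the Trotter accounting: correctly invoking the commutator bound of~\cite{Childs2021trotter_error} for the two-term splitting, relating its nested-commutator sum to the purpose-built $\mu'$ of Definition~\ref{def:commutator_spectral_norm}, and checking that the per-exponential precision demands collapse to the clean $r$-independent forms once the $1/r$ from the step size cancels the $r$ from the step count; the rest is bookkeeping of the error and failure budgets across the $O(5^k r)$ sub-simulations and summing stage times.
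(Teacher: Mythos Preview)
Your proposal is correct and follows essentially the same approach as the paper: split into $r$ slices, apply the $2k$-th order Suzuki formula with $O(5^k)$ exponentials per slice, budget error and failure per exponential, implement the classical factors via Lemma~\ref{lem:rob_block-Ham_sim} and the electronic factors via Lemma~\ref{lem:complexity_eL_el}, choose $r$ from the commutator Trotter bound~\cite{Childs2021trotter_error} with $\widetilde{\alpha}_c$ controlled by $(\mu')^{2k+1}$, and sum using the total stage time $O(5^k t)$. The only minor point is that the bound $\widetilde{\alpha}_{\mathrm{comm}}\le C_k(\mu')^{2k+1}$ is not immediate from Definition~\ref{def:commutator_spectral_norm} alone---the paper establishes it separately (Lemma~\ref{lem:bound_alpha_c}) via an induction counting nontrivial commutator strings---but you correctly identified it as the key ingredient linking the Trotter step count to $\mu'$.
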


To prove Lemma \ref{lem:query_complexity_eL} we first need to discuss the complexity of quantum simulation via a higher-order Trotter product formula.

\begin{lem}[Trotter error with commutator scaling (\cite{Childs2021trotter_error}, Theorem 6, Corollary 7)]
    Let $L = \sum_{\gamma=1}^{\Gamma} L_\gamma$ be an operator consisting of $\Gamma$ Hermitian summands and $t \ge 0$. Let
    \begin{equation}
        \mathcal{S}_{\ell}(t)=\prod_{\upsilon=1}^{\Upsilon} \prod_{\gamma=1}^{\Gamma} e^{-i a_{(\upsilon, \gamma)} L_{\pi_\upsilon(\gamma)}t}
    \end{equation}
    be an $\ell$th-order product formula with $\ell \in \mathbb{N}_+$. Define
    \begin{equation}
        \Tilde{\alpha}_{c}(\ell) := \sum_{\gamma_1, \gamma_2, \dots, \gamma_{\ell+1}}^\Gamma \norm{[L_{\gamma_{\ell+1}}, \cdots [L_{\gamma_2}, L_{\gamma_1}] \cdots ]}.
    \end{equation}
    Then the additive Trotter error, defined by $\mathcal{S}(t) = e^{-iLt} + \mathcal{A}(t)$, can be asymptotically bounded as
    \begin{equation}
        \norm{\mathcal{A}(t)} \in O(\Tilde{\alpha}_{c} t^{\ell+1}).
    \end{equation}
    We have $\norm{\mathcal{S}_{\ell}^r(t/r) - e^{-iLt}} \in O(\epsilon)$ if
    \begin{equation}
        r \in O \left(\frac{\Tilde{\alpha}_{c}^{1/\ell} t^{1 + 1/\ell}}{\epsilon^{1/\ell}}\right).
    \end{equation}
\label{lem:trotter}
\end{lem}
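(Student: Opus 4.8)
The statement is Theorem~6 together with Corollary~7 of \cite{Childs2021trotter_error}, so the plan is to recall the structure of that argument rather than reinvent it. Write the product formula as $\mathcal{S}_\ell(t)=U_1(t)\cdots U_P(t)$ with $U_p(t)=e^{-ib_pL_{q_p}t}$ and $P=\Upsilon\Gamma$ (here $b_p$ and $L_{q_p}$ are the $a_{(\upsilon,\gamma)}$ and $L_{\pi_\upsilon(\gamma)}$ relabelled). Differentiating factor by factor and conjugating each generator to the front yields $\mathcal{S}_\ell'(t)=\mathcal{M}(t)\mathcal{S}_\ell(t)$ with
\[
\mathcal{M}(t)=-i\sum_{p=1}^{P} b_p\,\widehat L_p(t),\qquad \widehat L_p(t):=\big(U_1(t)\cdots U_{p-1}(t)\big)\,L_{q_p}\,\big(U_1(t)\cdots U_{p-1}(t)\big)^{-1},
\]
so each $\widehat L_p(t)$ is a unitary conjugate of $L_{q_p}$; in particular $\norm{\widehat L_p(t)}=\norm{L_{q_p}}$ and $\widehat L_p(0)=L_{q_p}$. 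Since $\mathcal{S}_\ell(0)=I$, consistency of the formula forces $\mathcal{M}(0)=-i\sum_p b_p L_{q_p}=-iL$.

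Two facts drive the estimate. First, the order condition: by definition an $\ell$th-order formula satisfies $\mathcal{S}_\ell(t)=e^{-iLt}+O(t^{\ell+1})$, and via $\mathcal{M}(t)=\mathcal{S}_\ell'(t)\mathcal{S}_\ell(t)^{-1}$ this is readily seen to be equivalent to $\mathcal{M}(t)=-iL+\mathcal{R}(t)$ with $\mathcal{R}(t)=O(t^{\ell})$. Second, the commutator structure: Taylor-expanding each $\widehat L_p(t)$ about $t=0$ with integral remainder at order $\ell$, the coefficient of $t^m$ is a finite linear combination — with coefficients polynomial in the $b_q$, hence $O(1)$ for fixed $\ell$ — of $m$-fold nested commutators $[L_{\gamma_m},[\cdots[L_{\gamma_1},L_{q_p}]\cdots]]$, and the order-$\ell$ remainder integrand is a matching combination of $\ell$-fold nested commutators whose inner operators are replaced by unitary conjugates. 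Summing over $p$ with weights $-ib_p$: the $m=0$ contribution reconstructs $-iL$, the contributions $m=1,\dots,\ell-1$ vanish by the order condition, and what is left is $\mathcal{R}(t)$; taking norms and absorbing the combinatorial prefactor (depending only on $\ell,\Upsilon,\Gamma$) into the $O(\cdot)$ gives
\[
\norm{\mathcal{R}(t)}\;\le\; C_\ell\, t^{\ell}\!\!\sum_{\gamma_1,\dots,\gamma_{\ell+1}}\!\!\norm{[L_{\gamma_{\ell+1}},[\cdots[L_{\gamma_2},L_{\gamma_1}]\cdots]]}\;=\;C_\ell\, t^{\ell}\,\widetilde\alpha_c(\ell).
\]

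The rest is variation of parameters. Using $[L,e^{-iLs}]=0$,
\[
\frac{d}{ds}\!\left(e^{-iL(t-s)}\mathcal{S}_\ell(s)\right)=e^{-iL(t-s)}\big(iL+\mathcal{M}(s)\big)\mathcal{S}_\ell(s)=e^{-iL(t-s)}\mathcal{R}(s)\mathcal{S}_\ell(s),
\]
so $\mathcal{A}(t)=\mathcal{S}_\ell(t)-e^{-iLt}=\int_0^t e^{-iL(t-s)}\mathcal{R}(s)\mathcal{S}_\ell(s)\,ds$; since the sandwiching operators are unitary, $\norm{\mathcal{A}(t)}\le\int_0^t\norm{\mathcal{R}(s)}\,ds\le \tfrac{C_\ell}{\ell+1}\widetilde\alpha_c(\ell)\,t^{\ell+1}\in O(\widetilde\alpha_c(\ell)\,t^{\ell+1})$, which is the first claim. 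For the number of Trotter steps, apply this with $t\mapsto t/r$ and telescope: since $\mathcal{S}_\ell(t/r)$ and $e^{-iL(t/r)}$ are both unitary, $\norm{\mathcal{S}_\ell^r(t/r)-e^{-iLt}}\le r\,\norm{\mathcal{S}_\ell(t/r)-e^{-iL(t/r)}}\in O(\widetilde\alpha_c(\ell)\,t^{\ell+1}/r^{\ell})$; requiring the right-hand side to be $O(\epsilon)$ gives $r\in O\big(\widetilde\alpha_c(\ell)^{1/\ell}\,t^{1+1/\ell}/\epsilon^{1/\ell}\big)$, the second claim.

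The main obstacle is the honesty of the commutator-structure step. The remainder integrand contains nested brackets whose inner operators are conjugated by \emph{different} unitaries ($U_1\cdots U_{q-1}$, depending on which factor was differentiated), so one cannot simply pull a single conjugation outside the whole bracket; a crude bound would reintroduce a product $\prod_i\norm{L_{\gamma_i}}$ instead of $\norm{[L_{\gamma_{\ell+1}},[\cdots]]}$. Obtaining the bound in terms of the bare nested commutators requires the careful order-by-order expansion of \cite{Childs2021trotter_error} — peeling off one exponential factor at a time so that each newly created object is a genuine nested commutator of the original summands sandwiched between norm-one unitaries — together with verifying that the resulting constant $C_\ell$ depends only on $\ell$ and the fixed stage/stencil counts and not on $\norm{L}$ or the system size. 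Everything else is the Duhamel-plus-telescoping wrapper above.
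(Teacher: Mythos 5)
The paper does not prove this lemma — it is imported verbatim from \cite{Childs2021trotter_error} (Theorem 6 and Corollary 7) and used as a black box. Your reconstruction follows exactly the argument of that reference (order condition for $\mathcal{M}(t)=\mathcal{S}_\ell'(t)\mathcal{S}_\ell(t)^{-1}$, commutator-structured Taylor remainder, variation of parameters, then telescoping over $r$ segments), and you correctly identify the one genuinely delicate step, namely that the order-$\ell$ remainder must be organized so that each term is a bare nested commutator of the original summands sandwiched between unitaries, rather than bounded crudely by products of norms; the sketch is correct as a summary of the cited proof.
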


Note that $\Tilde{\alpha}_{c}(\ell) \leq \lb 2 \sum_{\gamma=1}^{\Gamma} \norm{L_\gamma} \rb^{\ell+1}$. We can establish a tighter bound as shown below.

The following definition will simplify the subsequent discussion of bounding $\Tilde{\alpha}_{c}(\ell)$ for Liouvillian simulations in the Born-Oppenheimer approximation.

\begin{defn}[Terms of the Liouvillian]
    Let $L$ be the discrete Liouvillian operator either in the $NVE$ ensemble (Definition \ref{def:NVE_liouvillian}) or the $NVT$ ensemble (Definition \ref{def:NVT_liouvillian}). Then we define
    \begin{align}
        K_{n,j}^{(NVE)} &:= \frac{\partial H_{\tcla}^{(NVE)}}{\partial {p_{n,j}}} D_{x_{n,j}} = \sum_{\hap_{n,j}} D_{x_{n,j}} \otimes \frac{p_{n,j}}{m_n} \ketbra{{\hap_{n,j}}}{{\hap_{n,j}}} \\
        K_{n,j}^{(NVT)} &:= \frac{\partial H_{\tcla}^{(NVT)}}{\partial {p_{n,j}}} D_{x_{n,j}} = \sum_{\hap_{n,j}} \sum_{\has} D_{x_{n,j}} \otimes \frac{p_{n,j}}{m_n (s + s_{\min})^2} \ketbra{{\hap_{n,j}}}{{\hap_{n,j}}} \otimes \ketbra{\has}{\has} \\
        V_{n,n',j}^{\tcla} &:= \lb \frac{\partial H_{\tcla}^{(NVE/NVT)}}{\partial {x_{n,j}}} \rb_{n'} D_{p_{n,j}} =  \sum_{\hax_{n}} \sum_{\hax_{n'}} \frac{Z_n Z_{n'}(x_{n,j} - x_{n',j})}{\lb \norm{x_n - x_{n'}}^2 + \Delta^2 \rb^{3/2}} \ketbra{\hax_{n}}{\hax_{n}} \otimes \ketbra{\hax_{n'}}{\hax_{n'}} \otimes D_{p_{n,j}} \\
        V_{n,j}^{\tel} &:= \Del \otimes D_{p_{n,j}}^1 \\
        K^{\text{bath}} &:= \frac{\partial H_{\tcla}^{(NVT)}}{\partial_{p_{s}}} D_{s} = \sum_{\hap_{s}} D_{s} \otimes \frac{p_{s}}{Q} \ketbra{{\hap_{s}}}{\hap_{s}}  \\
        V^{\text{bath}}_{n,j} &:= \lb \frac{\partial H_{\tcla}^{(NVT)}}{\partial_{s}} D_{p_s} \rb_{n,j} = -\sum_{\hap_{n,j}} \sum_{\has} \frac{2 p_{n,j}^2}{m_{n} (s + s_{\min})^3} \ketbra{{\hap_{n,j}}}{{\hap_{n,j}}} \otimes \ketbra{{\has}}{\has} \otimes D_{p_s} \\
        V^{\text{bath}}_T &:= \lb \frac{\partial H_{\tcla}^{(NVT)}}{\partial_{s}} D_{p_s} \rb_T = \sum_{\has} \frac{N_f k_B T}{s+s_{\min}} \ketbra{{\has}}{\has} \otimes D_{p_s} .
    \end{align}
\label{def:terms}
\end{defn}
With these terms defined, we can then compute the spectral norms of each of the Liouvillian terms.  These norms are needed to compute the bounds on the Trotter errors which dominate the scaling given in Theorem~\ref{thm:complexity_liouvillian}.

\begin{lem}[Spectral norm of Liouvillian terms]
    The spectral norm of the Liouvillian terms from Definition \ref{def:terms} can be upper bounded as follows:
     \begin{align}
        \norm{K_{n,j}^{(NVE)}} &\leq \frac{p_{\text{max}}}{m_{\text{min}}} \frac{2 \lb \ln{d_x} + 1 \rb}{h_x}  \label{eq:Knve_bd}\\
        \norm{K_{n,j}^{(NVT)}} &\leq \frac{p_{\text{max}}}{m_{\text{min}} s_{\text{min}}^2} \frac{2 \lb \ln{d_x} + 1 \rb}{h_x} \label{eq:Knvt_bd}\\
        \norm{V_{n,n',j}^{\tcla}} &\leq \frac{2 Z_{\text{max}}^2 x_{\text{max}}}{\Delta^3} \frac{2 \lb \ln{d_p} + 1 \rb}{h_p} \label{eq:Vclass} \\
        \norm{V_{n,j}^{\tel}} &\leq \lambda \frac{2 (\ln{d_e+1)}}{h_x h_p} \\
        \norm{K^{\text{bath}}} &\leq \frac{p_{s, \text{max}}}{Q} \frac{2 \lb \ln{d_s} + 1 \rb}{h_s}  \\
        \norm{V^{\text{bath}}_{n,j}} &\leq \frac{2 p_{\text{max}}^2}{m_{\text{min}} s_{\text{min}}^3} \frac{2 \lb \ln{d_{p_s}} + 1 \rb}{h_{p_s}} \\
        \norm{V^{\text{bath}}_{T}} &\leq \frac{N_f k_B T}{s_{\text{min}}} \frac{2 \lb \ln{d_{p_s}} + 1 \rb}{h_{p_s}}.
    \end{align}
\label{lem:bounds_liouvillian_terms}
\end{lem}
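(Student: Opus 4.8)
The plan is to exploit the fact that every operator in Definition \ref{def:terms} is a tensor product $A \otimes D$ of a diagonal ``multiplication'' operator $A$ (a derivative of the classical Hamiltonian, acting on a position or momentum register) with one of the discrete derivative operators $D$ of Definition \ref{def:discrete_derivative} acting on a \emph{disjoint} register. Since the operator norm is multiplicative across tensor factors, $\norm{A \otimes D} = \norm{A}\,\norm{D}$, so it suffices to bound the two factors separately and multiply.

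First I would bound the derivative factors. Writing $D_y = \tfrac{1}{h}\sum_{k=-d}^{d} c_{d,k}\, S_k$ with $S_k := \sum_{\hat y}\ketbra{\hat y - k}{\hat y}$ a cyclic shift (unitary under our periodic boundary conditions), the triangle inequality gives $\norm{D_y} \le \tfrac{1}{h}\sum_{k=-d}^{d}|c_{d,k}| \le \tfrac{2(\ln d + 1)}{h}$, where the last step is Lemma \ref{lem:bound_FD_coeff}. Specializing to $(h,d)\in\{(h_x,d_x),(h_p,d_p),(h_s,d_s),(h_{p_s},d_{p_s})\}$ covers $D_{x_{n,j}}, D_{p_{n,j}}, D_s, D_{p_s}$; the first-order operator $D_{p_{n,j}}^1$ appearing in $V_{n,j}^{\tel}$ has stencil coefficients $c_{1,\pm1}=\pm\tfrac12$, so $\norm{D_{p_{n,j}}^1}\le 1/h_p$. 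The one genuinely nested case is $\Del$ itself: it is a linear combination $\tfrac{1}{h_x}\sum_{k=-d_e}^{d_e} c_{d_e,k}\Lambda_k$ of diagonal matrices $\Lambda_k$ whose entries are values of $E_{\tel}$, and $\norm{\Lambda_k}=\max|E_{\tel}|\le\norm{H_{\tel}}\le\lambda$ by Lemma \ref{lem:block-encode_Hel} (the ground-state energy is bounded in modulus by the spectral norm), so $\norm{\Del}\le\tfrac{\lambda}{h_x}\sum_k|c_{d_e,k}|\le\tfrac{2\lambda(\ln d_e+1)}{h_x}$, again via Lemma \ref{lem:bound_FD_coeff}.

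Next I would bound the diagonal multiplication factors, whose norm is simply the largest diagonal entry in modulus; each bound is obtained by pushing every quantity to its extreme: $|p_{n,j}|\le p_{\max}$ (the virtual-momentum bound $p'_{\max}$ in the $NVT$ terms, which we write $p_{\max}$ for brevity), $m_n\ge m_{\min}$, $s+s_{\min}\ge s_{\min}$, $|p_s|\le p_{s,\max}$, $|x_{n,j}-x_{n',j}|\le x_{\max}$ (up to a factor of $2$ depending on the sign convention for the grid), and $(\norm{x_n-x_{n'}}^2+\Delta^2)^{3/2}\ge\Delta^3$. This yields $\norm{\partial_{p_{n,j}}H_{\tcla}^{(NVE)}}\le p_{\max}/m_{\min}$, $\norm{\partial_{p'_{n,j}}H_{\tcla}^{(NVT)}}\le p_{\max}/(m_{\min}s_{\min}^2)$, $\norm{(\partial_{x_{n,j}}H_{\tcla})_{n'}}\le 2Z_{\max}^2 x_{\max}/\Delta^3$, $\norm{\partial_{p_s}H_{\tcla}^{(NVT)}}\le p_{s,\max}/Q$, $\norm{(\partial_s H_{\tcla}^{(NVT)})_{n,j}}\le 2p_{\max}^2/(m_{\min}s_{\min}^3)$, and $\norm{(\partial_s H_{\tcla}^{(NVT)})_T}\le N_f k_B T/s_{\min}$. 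Multiplying each of these by the matching $\norm{D_\cdot}$ bound from the previous paragraph produces exactly the seven inequalities claimed in Lemma \ref{lem:bounds_liouvillian_terms}.

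This argument is essentially bookkeeping and I do not anticipate a real obstacle. The only two points needing a moment's care are (i) the justification $\max|E_{\tel}|\le\lambda$, which lets us treat the nested finite-difference operator $\Del$ on the same footing as the other derivative operators, and (ii) being consistent about whether a coordinate difference is bounded by $x_{\max}$ or $2x_{\max}$ — but this affects only a constant that is already absorbed into the factor $2$ appearing in the stated bounds and is irrelevant for the asymptotics feeding into Theorem \ref{thm:complexity_liouvillian}.
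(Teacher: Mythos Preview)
Your proposal is correct and follows essentially the same approach as the paper's own proof: factor each Liouvillian term as $A_{\text{diag}}\otimes D$, use $\norm{A\otimes D}=\norm{A}\norm{D}$, bound each discrete-derivative factor via the triangle inequality and Lemma~\ref{lem:bound_FD_coeff}, and bound each diagonal factor by its largest entry using the obvious extremal values of $p,m,s,Z,x,\Delta$. Your treatment of $\Del$ (bounding $\max|E_{\tel}|\le\lambda$ and then summing the finite-difference coefficients) and of $D^1_{p_{n,j}}$ (coefficient sum $1$) also matches the paper's argument.
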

\begin{proof}
    First, recall that for any two matrices $A$ and $B$ it holds that $\norm{A \otimes B} = \norm{A} \cdot \norm{B}$. Next, note that the above terms are all of the form $A_{\text{diag}} \otimes D$, where $A_{\text{diag}}$ is a diagonal matrix and $D$ is a central finite difference matrix.

    Specifically, consider first the quantity $\|D_{x_n,j}\|$.  We have from Definition~\ref{def:discrete_derivative} that
    \begin{equation}
        \|D_{x_n,j}\| = \frac{1}{h_x} \|\sum_{\hax} \sum_{k = -d}^d c_{d,k} \ket{\hax-k}\!\bra{\hax}\| \le \frac{\sum_{k = -d}^d |c_{d,k}|}{h_x}
    \end{equation}
    We then have from Lemma \ref{lem:bound_FD_coeff} that
    \begin{equation}
        \|D_{x_n,j}\| \le \frac{2 \left( \ln{d} + 1 \right)}{h_x}.\label{eq:Dbd}
    \end{equation}
    As $m_n \ge m_{\min}$ and $|p_{n,j}| \le p_{\max}$ we then have that
\begin{equation}
\left\|\sum_{\hap_{n,j}} D_{x_{n,j}} \otimes \frac{p_{n,j}}{m_n} \ketbra{{\hap_{n,j}}}{{\hap_{n,j}}}\right\|\le \frac{2p_{\max}(\ln d_x +1)}{m_{\min} h_x} 
\end{equation}
This validates the claim in~\eqref{eq:Knve_bd}. 
The claim of~\eqref{eq:Knvt_bd} immediately follows from the same reasoning and the fact that $s\ge s_{\min}$.

The result of~\eqref{eq:Vclass} also follows from the above bound on $\|D\|$ and the fact that 
\begin{align}
\left\|\sum_{\hax_{n}} \sum_{\hax_{n'}} \frac{Z_n Z_{n'}(x_{n,j} - x_{n',j})}{\lb \norm{x_n - x_{n'}}^2 + \Delta^2 \rb^{3/2}} \ketbra{\hax_{n}}{\hax_{n}} \otimes \ketbra{\hax_{n'}}{\hax_{n'}} \right\| \le \frac{Z_{\max}^2 \max |x_{n,j} - x_{n',j}|}{\min \lb \norm{x_n - x_{n'}}^2 + \Delta^2 \rb^{3/2}} \le \frac{2Z_{\max}^2 x_{\max}}{\Delta^3}.\label{eq:Zdiag}
\end{align}
The momentum derivative expression is exactly the same as previous, except that the grid spacing is $h_p$ rather than $h_x$ putting this observation together with~\eqref{eq:Zdiag} yields
\begin{equation}
\norm{V_{n,n',j}^{\tcla}} \leq\frac{2 Z_{\text{max}}^2 x_{\text{max}}}{\Delta^3} \|D_{p,n_j}\|\leq \frac{2 Z_{\text{max}}^2 x_{\text{max}}}{\Delta^3} \frac{2 \lb \ln{d_p} + 1 \rb}{h_p}
\end{equation}

Next note that $D^1$ is defined to be the centered difference formula which has a coefficient sum of $1$. This observation than yields
\begin{align}
\|\Del\| &= \frac{1}{h_x} \|\sum_{k = -d_e}^{d_e} \sum_{(n',j') \neq (n,j)} \sum_{\hax_{n',j'}} \sum_{\hax_{n,j}} c_{d_e,k} E_{\tel}\lb \{ x_{n',j'}\}, x_{n,j} + k h_x \rb \ketbra{\hax_{n',j'}}{\hax_{n',j'}} \otimes \ketbra{\hax_{n,j}}{\hax_{n,j}}\|\nonumber\\
&\le \frac{\max(E_{\rm el}) \sum_k |c_{d_e,k}|}{h_x} \le \frac{\lambda}{h_x} .
\end{align}
Next using the bound of~\eqref{eq:Dbd} with the substitution of $x\rightarrow p$ we find
\begin{equation}
\norm{V_{n,j}^{\tel}} \leq \lambda \frac{2 (\ln{d_e+1})}{h_x h_p}
\end{equation}

The remaining bounds then follow precisely from the above bound techniques for $D$ and noting the minimum values of $s$ and maximum values of $p$.
\end{proof}

\begin{defn}[Commutator spectral norm of the Liouvillian]
    Let $L$ be the discretized Liouvillian in the $NVE$ or $NVT$ ensemble and let $\ell \in \mathbb{N}_+$. Then we define
    \begin{equation}
        \mu'_{NVE}(\ell) := 3N \norm{K_{n,j}^{(NVE)}} + 6N\ell \norm{V_{n,n',j}^{\tcla}} + 3N \norm{V_{n,j}^{\tel}}
    \end{equation}
     and
    \begin{equation}
        \mu'_{NVT} (\ell) := 3N \norm{K_{n,j}^{(NVT)}} + 6N\ell \norm{V_{n,n',j}^{\tcla}} + 3N \norm{V_{n,j}^{\tel}} + \norm{K^{\text{bath}}} + \ell \norm{V^{\text{bath}}_{n,j}} + \norm{V^{\text{bath}}_{T}}
    \end{equation}
\label{def:commutator_spectral_norm}
\end{defn}

\begin{lem}[Upper bound on $\Tilde{\alpha}_{c}(\ell)$ for Liouvillian simulations in the Born-Oppenheimer approximation]
    Let $\Tilde{\alpha}_{c}(\ell)$ be defined as in Lemma \ref{lem:trotter}. Let $L = L_{\tcla} + L_{\tel}$ be the discrete Liouvillian operator for the $NVE$ ensemble (Definition \ref{def:NVE_liouvillian}). Then $\Tilde{\alpha}_{c}(\ell)$ associated with approximating $e^{-i L t}$ with an $\ell$-th order product formula involving $e^{-i L_{\tcla}t}$ and $e^{-i L_{\tel}t}$ is upper bounded as follows:
    \begin{equation}
        \Tilde{\alpha}_{c}^{(NVE)}(\ell) \le  2^{\ell} \lb \mu'_{NVE} \rb^{\ell+1}.
    \end{equation}

    For Liouvillian simulations in the $NVT$ ensemble (Definition \ref{def:NVT_liouvillian}), we have that    
    \begin{equation}
        \Tilde{\alpha}_{c}^{(NVE)}(\ell) \le  2^{\ell} \lb \mu'_{NVT} \rb^{\ell+1}.
    \end{equation}
    
\label{lem:bound_alpha_c}
\end{lem}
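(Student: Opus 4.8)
The plan is to exploit the commutator-scaling estimate of Lemma~\ref{lem:trotter}, which for the $\ell$-th order product formula built from the two blocks $e^{-iL_{\tcla}t}$ and $e^{-iL_{\tel}t}$ (so $\Gamma=2$) identifies the relevant quantity as $\Tilde{\alpha}_c(\ell)=\sum_{\gamma_1,\dots,\gamma_{\ell+1}\in\{\tcla,\tel\}}\norm{[L_{\gamma_{\ell+1}},\cdots[L_{\gamma_2},L_{\gamma_1}]\cdots]}$, and then to control this by using the very rigid tensor-product structure of the elementary summands of Definition~\ref{def:terms}. Concretely, I would first write $L_{\tcla}^{(NVE)}$ as the sum of the $3N$ kinetic terms $K^{(NVE)}_{n,j}$ and the $O(N^{2})$ Coulomb-force terms $V^{\tcla}_{n,n',j}$, and $L_{\tel}^{(NVE)}=i\sum_{n,j}V^{\tel}_{n,j}$ (Definitions~\ref{def:NVE_liouvillian} and~\ref{def:el_liouvillian}), and expand every nested commutator of the two blocks into a sum of nested commutators of these elementary terms.

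The structural input I would establish is a commutation table. Each elementary term has the form $X\otimes P$ with $X$ supported on position registers and $P$ on momentum registers; for $K$-terms $X=D_{x_{n,j}}$ is a single-coordinate finite-difference operator and $P$ is diagonal in the momentum basis, whereas for $V^{\tcla}$- and $V^{\tel}$-terms $X$ is diagonal in the position basis (note $\Del$ is diagonal in the nuclear position basis, its ``finite difference'' living only in the scalar coefficients $c_{d_e,k}E_{\tel}(\cdots)$) and $P$ is a single-coordinate finite-difference operator. From $[X_1\otimes P_1,X_2\otimes P_2]=0$ whenever $[X_1,X_2]=[P_1,P_2]=0$ one obtains: any two $K$-terms commute, any two $V^{\tcla}$-terms commute, any two $V^{\tel}$-terms commute; $V^{\tcla}_{n,n',j}$ commutes with every $V^{\tel}$-term; and $[K_{m,j'},V^{\tcla}_{n,n',j}]\neq 0$ only if $m\in\{n,n'\}$. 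In particular a $V^{\tcla}$-term can only fail to commute with a $K$-term, or with a composite operator that contains a position derivative --- and position derivatives are produced only by $K$-terms.

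The heart of the argument is then a connectivity/counting step. For a nested commutator of depth $\ell+1$ to be nonzero we need $[h_{a_2},h_{a_1}]\neq 0$; since $[L_{\tcla},L_{\tcla}]=[L_{\tel},L_{\tel}]=0$ as operators, only block-sequences with $\gamma_1\neq\gamma_2$ contribute, so at least one slot carries a $V^{\tel}$-term, whence at most $\ell$ slots carry $K$- or $V^{\tcla}$-terms and the set $S$ of particle indices carried by the $K$-slots satisfies $|S|\le\ell$. Moreover, whenever a slot is a $V^{\tcla}_{n,n',j}$, nonvanishing of the partial nested commutator forces a position derivative on $x_n$ or $x_{n'}$ to be already present, hence $n$ or $n'$ lies in $S$; consequently that slot ranges over at most $6|S|N\le 6\ell N$ of the $V^{\tcla}$-terms. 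Feeding this into the elementary estimate $\norm{[h_{a_{\ell+1}},\cdots[h_{a_2},h_{a_1}]\cdots]}\le 2^{\ell}\prod_i\norm{h_{a_i}}$ together with the per-term norms of Lemma~\ref{lem:bounds_liouvillian_terms} (and the fact that each $K$- or $V^{\tel}$-slot ranges over at most $3N$ terms), a multinomial resummation over the admissible type-patterns of the $\ell+1$ slots yields
\[
\Tilde{\alpha}_c^{(NVE)}(\ell)\le 2^{\ell}\lb 3N\norm{K^{(NVE)}_{n,j}}+6\ell N\norm{V^{\tcla}_{n,n',j}}+3N\norm{V^{\tel}_{n,j}}\rb^{\ell+1}=2^{\ell}\lb\mu'_{NVE}(\ell)\rb^{\ell+1}
\]
by Definition~\ref{def:commutator_spectral_norm}. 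The $NVT$ statement follows along identical lines after appending the bath summands $K^{\text{bath}}$, $V^{\text{bath}}_{n,j}$, $V^{\text{bath}}_{T}$ to the list of elementary terms, recording their (equally rigid) commutation pattern with the nuclear and electronic terms, and observing that the ``many-body'' summands $V^{\tcla}_{n,n',j}$ and $V^{\text{bath}}_{n,j}$ again have their effective multiplicities reduced by the same connectivity bound (to $6\ell N$ and $\ell$ respectively) while the $O(N)$ and $O(1)$ summands keep their full multiplicities, reproducing $\mu'_{NVT}(\ell)$.

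I expect the main obstacle to be making the connectivity step fully rigorous: one must track that, after absorbing $K$-, $V^{\tcla}$- and $V^{\tel}$-terms, the partial nested commutator still has position-derivative support confined to the particle set $S$ activated by the $K$-slots --- this is precisely where the position-basis diagonality of the $V^{\tcla}$- and $V^{\tel}$-terms is used --- so that every later $V^{\tcla}$-slot is genuinely pinned to $S$, and to retain just enough of the block structure to get the factor $\ell$ rather than $\ell+1$. Verifying the pairwise commutation relations is routine in the $NVE$ case but needs care for the $NVT$ bath terms, since $K^{\text{bath}}$, $V^{\text{bath}}_{n,j}$ and $V^{\text{bath}}_{T}$ share the $s$ and $p_s$ registers and couple to the nuclear-momentum registers through the diagonal kinetic factors.
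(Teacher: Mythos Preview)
Your approach is correct and rests on the same idea as the paper's proof: both exploit the pairwise commutation table among the elementary summands of Definition~\ref{def:terms}, and in particular the fact that a $V^{\tcla}_{n,n',j}$ can fail to commute only with $K$-terms carrying particle index $n$ or $n'$, which reduces its effective multiplicity in any nonzero nested-commutator string from $O(N^{2})$ to $O(\ell N)$. The paper organizes this as an induction on $\ell$ --- appending one outermost operator at a time to a nontrivial string $S_{\ell+1}$ and counting, for each type $\{K,V^{\tcla},V^{\tel},\ldots\}$, how many choices can fail to commute with something already present --- whereas you count the admissible strings directly and resum via the multinomial theorem. These are two presentations of the same combinatorial argument and produce the identical bound $2^{\ell}(\mu')^{\ell+1}$. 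Your anticipated obstacle (tracking position-derivative support through partial nested commutators) is bypassed in the paper's framing by the derivation property: if $h_{i}$ commutes with every $h_{j}$ for $j<i$ then $[h_{i},[h_{i-1},\ldots,[h_{2},h_{1}]\ldots]]=0$, so nonvanishing at step $i$ reduces immediately to the pairwise table; the same simplification works in your direct-counting version and removes the need to analyze the structure of the partial commutator.
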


\begin{proof}
    We prove Lemma \ref{lem:bound_alpha_c} via induction on $\ell$. The first-order formula with $\ell=1$ constitutes the base case. The only nonzero commutator at this level is $[L_{\tcla}, L_{\tel}]$ since $[L_{\tcla}, L_{\tcla}] = [L_{\tel}, L_{\tel}] = 0$. Note that all $V_{n,n',j}^{\tcla}$ terms, $K^{\text{bath}}$ and $V^{\text{bath}}_T$ commute with all terms $V_{n,j}^{\tel}$ of $L_\tel$. Hence, we only need to upper bound commutators of the following types: $\left[ K_{n,j}^{(NVE/NVT)}, V_{n',j'}^{\tel} \right]$ and $\left[ V^{\text{bath}}_{n,j}, V_{n',j'}^{\tel} \right]$.
    Generally, $\left[ K_{n,j}^{(NVE/NVT)}, V_{n',j'}^{\tel} \right] \neq 0$ since $\Del$ depends on all nuclear position variables $\{ x_n \}$. In particular, $\left[ D_{x_{n,j}}, D_{n',j'}^{\tel} \right] \neq 0$ in general. We upper bound these commutators by the product of the norms of the individual operators, resulting in
    \begin{equation}
        \norm{\left[ K_{n,j}^{(NVE/NVT)}, V_{n',j'}^{\tel} \right]} \leq 2 \norm{K_{n,j}^{(NVE/NVT)}} \norm{ V_{n',j'}^{\tel}}.
    \end{equation}
    There are a total of $3N \times 3N = 9N^2$ commutators of the above form.
    
    On the other hand, $\left[ V^{\text{bath}}_{n,j}, V_{n',j'}^{\tel} \right]$ can only be nonzero if $n=n'$ and $j=j'$. In that case we obtain
    \begin{equation}
        \norm{\left[ V^{\text{bath}}_{n,j}, V_{n,j}^{\tel} \right]} \leq 2 \norm{V^{\text{bath}}_{n,j}} \norm{ V_{n,j}^{\tel}}.
    \end{equation}
    There are a total of $3N$ commutators of the above form.
    For the $NVE$ ensemble we therefore have that for $\ell=1$
    \begin{equation}
    \begin{split}
        \Tilde{\alpha}_{c}^{(NVE)}(1) &\leq 18 N^2 \norm{K_{n,j}^{(NVE)}} \norm{V_{n,j}^{\tel}} \\
        & \leq 2 \lb 3N \norm{K_{n,j}^{(NVE)}} + 6N\ell \norm{V_{n,n',j}^{\tcla}} + 3N \norm{V_{n,j}^{\tel}} \rb^{2}.
    \end{split}
    \label{eq:alpha_NVE_1}
    \end{equation}

    Similarly we find for $\ell=1$ the following for the $NVT$ ensemble:
    \begin{equation}
    \begin{split}
        \Tilde{\alpha}_{c}^{(NVT)}(1) &\leq  18 N^2 \norm{K_{n,j}^{(NVT)}} \norm{V_{n,j}^{\tel}} + 6N \norm{V^{\text{bath}}_{n,j}} \norm{ V_{n,j}^{\tel}} \\
        &\leq 2 \lb 3N \norm{K_{n,j}^{(NVT)}} + 6N\ell \norm{V_{n,n',j}^{\tcla}} + 3N \norm{V_{n,j}^{\tel}} + \norm{K^{\text{bath}}} + \ell \norm{V^{\text{bath}}_{n,j}} + \norm{V^{\text{bath}}_{T}} \rb^{2}.
    \end{split}
    \label{eq:alpha_NVT_1}
    \end{equation}

    This establishes the $\ell = 1$ base case.

    Let us now discuss the induction step for the $NVE$ ensemble. By assumption assume that there exists a value of $\ell\ge 1$ such that 
    \begin{equation}
        \Tilde{\alpha}_{c}^{(NVE)}(\ell) \le  2^{\ell} \lb 3N \norm{K_{n,j}^{(NVE)}} + 6N\ell \norm{V_{n,n',j}^{\tcla}} + 3N \norm{V_{n,j}^{\tel}} \rb^{\ell+1}.
    \label{hypothesis_NVE}
    \end{equation}
    A single summand of the nested commutator of $\Tilde{\alpha}_{c}^{(NVE)}(\ell)$ is then a string $S_{\ell+1}$ of $\ell + 1$ operators $O_a \in \left\{ K_{n,j}^{(NVE)}, V_{n,n',j}^{\tcla}, V_{n,j}^{\tel} \right\}$ where $a \in \{1, 2,\dots, \ell +1 \}$. Importantly, only certain combinations of operators yield ``non-trivial'' strings, i.e.~strings that have a nonzero contribution to the nested commutator. An example of a trivial string would be a string containing the same operator $\ell + 1$ times. An alternating sequence of two noncommuting operators, such as $K_{n,j}^{(NVE)}$ and $V_{n,j}^{\tel}$, would be an example of a non-trivial string. 
    
    For the induction step we now add an $(\ell+2)$-th operator $O_{\ell+2}$ to $S_{\ell+1}$ to construct nontrivial strings of length $\ell+2$ ($S_{\ell+2}$) as needed for an $(\ell+1)$-th order product formula. First, let us try adding a $K_{n,j}^{(NVE)}$ term to some fixed non-trivial string $S_{\ell+1}$. If $S_{\ell+1}$ contains at least one $V_{n,j}^{\tel}$ term then the resulting string $S_{\ell+2}$ will be non-trivial. Given some fixed $V_{n,j}^{\tel}$, there are $3N$ choices for $K_{n,j}^{(NVE)}$ to create a non-trivial string $S_{\ell+2}$ from $S_{\ell+1}$.

    Next, let us try adding a $V_{n,n',j}^{\tcla}$ term to some fixed non-trivial string $S_{\ell+1}$. In the worst case, $S_{\ell+1}$ contains up to $\ell+1$ different $K_{n,j}^{(NVE)}$ terms. Given some fixed $K_{n,j}^{(NVE)}$ there are $6(N-1)$ different $V_{n,n',j}^{\tcla}$ terms that would yield a nonzero commutator. Hence, for any given string $S_{\ell+1}$ there are at most $6N(\ell+1)$ possibilities to create a non-trivial string $S_{\ell+2}$ via addition of a $V_{n,n',j}^{\tcla}$ term.

    Lastly, let us try adding a $V_{n,j}^{\tel}$ term to some fixed non-trivial string $S_{\ell+1}$. If $S_{\ell+1}$ contain at least one $K_{n,j}^{(NVE)}$ term then the resulting string $S_{\ell+1}$ will be non-trivial. Hence, there are $3N$ choices for $V_{n,j}^{\tel}$ to create a non-trivial string $S_{\ell+2}$ from $S_{\ell+1}$.

    Putting everything together, we therefore obtain the following recursion:
    \begin{equation}
        \Tilde{\alpha}_{c}^{(NVE)}(\ell+1) \le 2 \lb 3N \norm{K_{n,j}^{(NVE)}} + 6N(\ell+1) \norm{V_{n,n',j}^{\tcla}} + 3N \norm{V_{n,j}^{\tel}} \rb \Tilde{\alpha}_{c}^{(NVE)}(\ell).
    \end{equation}
    Using the hypothesis (Eq.~(\ref{hypothesis_NVE})), we arrive at
    \begin{equation}
          \Tilde{\alpha}_{c}^{(NVE)}(\ell+1) \le  2^{\ell+1} \lb 3N \norm{K_{n,j}^{(NVE)}} + 6N(\ell+1) \norm{V_{n,n',j}^{\tcla}} + 3N \norm{V_{n,j}^{\tel}} \rb^{\ell+2}
    \end{equation}
    as desired. This demonstrates the inductive step, and our proof then follows trivially by induction using the fact that the base case of $\ell=1$ has already been demonstrated in~\eqref{eq:alpha_NVE_1}.

    Let us now turn to the induction step for the $NVT$ ensemble. We use the same strategy as for the $NVE$ ensemble. As an induction hypothesis, assume there exists $\ell \ge 1$ such that 
    \begin{equation}
        \Tilde{\alpha}_{c}^{(NVT)}(\ell) \le 2^{\ell} \lb 3N \norm{K_{n,j}^{(NVT)}} + 6N\ell \norm{V_{n,n',j}^{\tcla}} + 3N \norm{V_{n,j}^{\tel}} + \norm{K^{\text{bath}}} + \ell \norm{V^{\text{bath}}_{n,j}} + \norm{V^{\text{bath}}_{T}} \rb^{\ell+1}.
    \label{hypothesis_NVT}
    \end{equation}    
    A single summand of the nested commutator of $\Tilde{\alpha}_{c}^{(NVT)}(\ell)$ is now a string $S_{\ell+1}$ of $\ell + 1$ operators $O_a \in \left\{ K_{n,j}^{(NVT)}, V_{n,n',j}^{\tcla}, V_{n,j}^{\tel}, K^{\text{bath}}, V^{\text{bath}}_{n,j}, V^{\text{bath}}_T \right\}$ where $a \in \{1, 2,\dots, \ell +1 \}$.

    For the induction step we now add an $(\ell+2)$-th operator $O_{\ell+2}$ to $S_{\ell+1}$ to construct nontrivial strings of length $\ell+2$ ($S_{\ell+2}$) as needed for an $(\ell+1)$-th order product formula. First, let us try adding a $K_{n,j}^{(NVT)}$ term to some fixed non-trivial string $S_{\ell+1}$. If $S_{\ell+1}$ contains at least one $V_{n,j}^{\tel}$ term then the resulting string $S_{\ell+2}$ will be non-trivial. Given some fixed $V_{n,j}^{\tel}$, there are $3N$ choices for $K_{n,j}^{(NVT)}$ to create a non-trivial string $S_{\ell+2}$ from $S_{\ell+1}$.

    Next, let us try adding a $V_{n,n',j}^{\tcla}$ term to some fixed non-trivial string $S_{\ell+1}$. In the worst case, $S_{\ell+1}$ contains up to $\ell+1$ different $K_{n,j}^{(NVT)}$ terms. Given some fixed $K_{n,j}^{(NVT)}$ there are $6(N-1)$ different $V_{n,n',j}^{\tcla}$ terms that would yield a nonzero commutator. Hence, for any given string $S_{\ell+1}$ there are at most $6N(\ell+1)$ possibilities to create a non-trivial string $S_{\ell+2}$ via addition of a $V_{n,n',j}^{\tcla}$ term.

    Let us now try adding a $V_{n,j}^{\tel}$ term to some fixed non-trivial string $S_{\ell+1}$. If $S_{\ell+1}$ contain at least one $K_{n,j}^{(NVT)}$ term then the resulting string $S_{\ell+1}$ will be non-trivial. Hence, there are $3N$ choices for $V_{n,j}^{\tel}$ to create a non-trivial string $S_{\ell+2}$ from $S_{\ell+1}$.

    The $K^{\text{bath}}$ term does not commute with $K_{n,j}^{(NVT)}$, $V^{\text{bath}}_{n,j}$ or $V^{\text{bath}}_T$. Hence, we can create a non-trivial string $S_{\ell+2}$ by adding $K^{\text{bath}}$ to $S_{\ell+1}$ if $S_{\ell+1}$ contains $K_{n,j}^{(NVT)}$, $V^{\text{bath}}_{n,j}$ or $V^{\text{bath}}_T$.

    Next, let us try adding a $V^{\text{bath}}_{n,j}$ term to some fixed non-trivial string $S_{\ell+1}$. In the worst case, $S_{\ell+1}$ contains up to $\ell+1$ different $K_{n,j}^{(NVT)}$ terms. Hence, for any given string $S_{\ell+1}$ there are at most $\ell+1$ possibilities to create a non-trivial string $S_{\ell+2}$ via addition of a $V^{\text{bath}}_{n,j}$ term.
    
    Lastly, let us try adding the $V^{\text{bath}}_T$ term to some fixed non-trivial string $S_{\ell+1}$. We can create a non-trivial string $S_{\ell+2}$ if $S_{\ell+1}$ contains the $K^{\text{bath}}$ term.

    Putting everything together, we therefore obtain the following recursion:
    \begin{equation}
    \begin{split}
        \Tilde{\alpha}_{c}^{(NVT)}(\ell+1) &\le 2 \Big( 3N \norm{K_{n,j}^{(NVT)}} + 6N(\ell+1) \norm{V_{n,n',j}^{\tcla}} + 3N \norm{V_{n,j}^{\tel}} \\
        & \qquad  + \norm{K^{\text{bath}}} + (\ell+1) \norm{V^{\text{bath}}_{n,j}} + \norm{V^{\text{bath}}_T} \Big) \Tilde{\alpha}_{c}^{(NVT)}(\ell).
    \end{split}
    \end{equation}
    Using the hypothesis (Eq.~(\ref{hypothesis_NVT})), we arrive at
    \begin{equation}
    \begin{split}
          \Tilde{\alpha}_{c}^{(NVT)}(\ell+1) &\le 2^{(\ell+1)} \Big( 3N \norm{K_{n,j}^{(NVT)}} + 6N(\ell+1) \norm{V_{n,n',j}^{\tcla}} + 3N \norm{V_{n,j}^{\tel}} \\
          & \qquad \qquad + \norm{K^{\text{bath}}} + (\ell+1) \norm{V^{\text{bath}}_{n,j}} + \norm{V^{\text{bath}}_{T}} \Big)^{\ell+2}
    \end{split}
    \end{equation} 
    as desired. The bound then immediately follows induction given the $\ell=1$ base case has already been demonstrated in~\eqref{eq:alpha_NVT_1}
\end{proof}

The bounds in Lemma \ref{lem:bound_alpha_c} exploit the commutator structure of $L = L_{\tcla} + L_{\tel}$.
Not taking the commutator structure into account, one obtains the following bounds on $\Tilde{\alpha}_{c}^{(NVE/NVT)}(\ell)$ (Lemma 1 of~\cite{Childs2021trotter_error}):
\begin{align*}
      \Tilde{\alpha}_{c}^{(NVE)}(\ell) &\le  2^{\ell} \lb 3N \norm{K_{n,j}^{(NVE)}} + 6N^2 \norm{V_{n,n',j}^{\tcla}} + 3N \norm{V_{n,j}^{\tel}} \rb^{\ell+1} \\
      \Tilde{\alpha}_{c}^{(NVT)}(\ell) &\le 2^{\ell} \lb 3N \norm{K_{n,j}^{(NVT)}} + 6N^2 \norm{V_{n,n',j}^{\tcla}} + 3N \norm{V_{n,j}^{\tel}} + \norm{K^{\text{bath}}} + 3N \norm{V^{\text{bath}}_{n,j}} + \norm{V^{\text{bath}}_{T}} \rb^{\ell+1}.
\label{trivial_bound_alpha_com}
\end{align*}
The main improvement of Lemma \ref{lem:bound_alpha_c} over these bounds lies in the reduction of the coefficients of $\norm{V_{n,n',j}^{\tcla}}$ and $\norm{V^{\text{bath}}_{n,j}}$ from $6N^2$ to $6N\ell$ and from $3N$ to $\ell$, respectively.

While the above results apply to general product formulas, we will only be using $2k$-th order Trotter-Suzuki product formulas for our simulation. Hence, we have $\ell = 2k$ with $k\in \mathbb{N}_+$ in the following discussion.

Before proving Theorem \ref{thm:complexity_liouvillian}, it will also be useful to bound the total evolution time associated with a $2k$-th order product formula. The total evolution time is the sum of the absolute values of the evolution time of each segment for a fixed operator.

\begin{lem}[Total evolution time of a higher-order product formula]
    Let $t \ge 0$ be the desired evolution time of the simulation.
    The total evolution time of a $2k$-th order product formula is 
    \begin{equation}
        T_{2k} \le 5^{k-1} t \in O \lb 5^k t \rb.
    \end{equation}
\label{lem:tot_time}
\end{lem}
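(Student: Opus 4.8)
The plan is to prove the bound by a short induction on $k$, tracking the exact dependence of $T_{2k}$ on the nominal evolution time $t$. First I would recall that, by Definition \ref{def:2ktrotter}, $\mathcal{S}_{2k}(t)$ is built from five copies of $\mathcal{S}_{2k-2}$: two squared factors $\mathcal{S}_{2k-2}^2(u_k t)$ (four copies total) evaluated at the rescaled time $u_k t$, and one factor $\mathcal{S}_{2k-2}((1-4u_k)t)$ evaluated at $(1-4u_k)t$. Since the exponential segments appearing in $\mathcal{S}_{2k}(t)$ for a fixed operator $L_\gamma$ are exactly the union of the corresponding segments of these five subformulas (keeping stage-boundary exponentials of the same operator separate; merging them can only decrease the total), the total evolution time obeys
\begin{equation*}
    T_{2k}(t) = 4\, T_{2k-2}(u_k t) + T_{2k-2}\big((1-4u_k)t\big).
\end{equation*}
For the base case, $\mathcal{S}_2(t)$ evolves each operator for a total time $|t/2| + |t/2| = |t|$, so $T_2(t) = |t|$.

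The second step is to note that $T_{2k}$ is homogeneous of degree one in $|t|$, i.e.\ $T_{2k}(t) = c_{2k}|t|$ for a constant $c_{2k}$; this follows from the base case and the recursion by the same induction, and the recursion then collapses to $c_{2k} = \lb 4u_k + |1-4u_k| \rb c_{2k-2}$ with $c_2 = 1$. The key estimate is to bound $4u_k + |1-4u_k|$ using the range $\frac{1}{3} \le u_k \le \frac{1}{2}$ guaranteed in Definition \ref{def:2ktrotter}: this forces $1 - 4u_k \in [-1, -\frac{1}{3}]$, hence $|1-4u_k| = 4u_k - 1$ and $4u_k + |1-4u_k| = 8u_k - 1 \le 3 \le 5$. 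Therefore $c_{2k} \le 5\, c_{2k-2}$, which unrolls to $c_{2k} \le 5^{k-1} c_2 = 5^{k-1}$, and since $t \ge 0$ we conclude $T_{2k} = c_{2k}\, t \le 5^{k-1} t$.

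The only point requiring any care — it is hardly an obstacle — is resolving the sign of $1 - 4u_k$ so that the absolute value in the recursion can be handled cleanly, together with the (trivial) observation that homogeneity in $|t|$ propagates through the recursion so that the estimate closes; everything else is a two-line induction. I note in passing that the same computation actually yields the sharper constant $3^{k-1}$, but $5^{k-1}$ is all that is needed and is consistent with the $5^{k-1}$ count of $\mathcal{S}_2$-stages appearing in $\mathcal{S}_{2k}$ used elsewhere in the paper.
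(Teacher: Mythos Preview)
Your proof is correct and follows essentially the same inductive approach as the paper: derive the recursion $T_{2k}(t) = 4\,T_{2k-2}(u_k t) + T_{2k-2}\big(|1-4u_k|\,t\big)$ from Definition~\ref{def:2ktrotter}, use the base case $T_2(t)=t$, and unroll. The paper simply bounds each of the five summands by $T_{2k-2}(t)$ to get the factor $5$, whereas you track the exact constant $4u_k + |1-4u_k| = 8u_k - 1 \le 3$; your observation that this actually yields $3^{k-1}$ is a nice sharpening, though as you note only $5^{k-1}$ is needed downstream.
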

\begin{proof}
    Recall the following recursive definition of the $2k$-th order product formula $\mathcal{S}_{2k}(t)$ from Definition \ref{def:2ktrotter}:
    \begin{align}
        \mathcal{S}_{2}(t) &:= e^{L_1 \frac{t}{2}} \cdots e^{L_\Gamma \frac{t}{2}} e^{L_\Gamma \frac{t}{2}} \cdots e^{L_1 \frac{t}{2}}\\
        \mathcal{S}_{2k}(t) &:= \mathcal{S}^2_{2k-2}(u_k t) \mathcal{S}_{2k-2}((1-4u_k)t)\mathcal{S}^2_{2k-2}(u_k t),
    \end{align}
    where
    \begin{equation}
        \frac{1}{3} \le u_k := \frac{1}{\lb 4 - 4^{\frac{1}{2k-1}} \rb} \le \frac{1}{2} \hspace{0.4cm} \forall k \in \mathbb{N}, k \ge 2.
    \end{equation}
    Hence,
    \begin{equation}
        T_{2k}(t) = 4 T_{2k-2}(u_k t) + |T_{2k-2}((1 - 4 u_k )t)| \le 5 T_{2k-2}(t).
    \end{equation}
    Together with the base case, $T_2 = t$, this implies that
    \begin{equation}
        T_{2k}(t) \le 5^{k-1}t.
    \end{equation}
\end{proof}

The number of exponentials for a $2k$-th order product formula with $\Gamma$ summands is given by~\cite{Berry2006trotter}
\begin{equation}
    N_{\rm{exp}} = 2(\Gamma-1)\, 5^{k - 1} + 1.
\end{equation}
In our case, $\Gamma = 2$ ($L_{\tcla}$ and $L_{\tel}$), so $N_{\rm{exp}} = 2 \cdot 5^{k - 1} + 1 \in O(5^{k})$.

Now we are ready to prove Lemma \ref{lem:query_complexity_eL}.

\begin{proof}[Proof of Lemma~\ref{lem:query_complexity_eL}]
    As explained earlier, the discretized Liouvillian $L$ is split into a classical part, $L_{\tcla}$ (Definition \ref{def:class_liouvillian}), and an electronic part, $L_{\tel}$ (Definition \ref{def:el_liouvillian}). We then use a $2k$-th order product formula to recombine the two parts. The time evolution is divided into $r$ time steps, resulting in a total number of $O\lb r 5^k \rb$ exponentials with $r$ chosen according to Lemma \ref{lem:trotter}.  Each exponential is then implemented using qubitization~\cite{Low2019hamiltonian}. 
    By the triangle inequality, we can achieve overall simulation error $\leq \epsilon$ if the error of a single exponential is in $O \lb \epsilon/(r 5^k) \rb$. Furthermore, recall that each exponential is simulated using a QSVT-based method, meaning that each exponential comes with a certain failure probability.
    Invoking the union bound, we can ensure an overall success probability of at least $1-\xi$ if the failure probability of a single exponential is in $ O \lb \xi/(r 5^k) \rb$. This can be achieved via amplitude amplification at the expense of a multiplicative factor of $\log \lb r 5^k/ \xi \rb$ to the query complexities.
    The evolution under the classical Liouvillian $L_{\tcla}$ is simulated using qubitization and the QSVT (Lemma \ref{lem:rob_block-Ham_sim}) with total evolution time as in Lemma \ref{lem:tot_time}, resulting in  
    \begin{equation}
            O \lb \log{\lb \frac{r 5^k}{\xi} \rb} \lb \alpha 5^k t + r 5^k \log{\lb \frac{r 5^k}{\epsilon} \rb} \rb \rb.
    \end{equation}
    queries to the block-encoding of $L_{\tcla}$, where $\alpha = \alpha_{NVE}$ for simulations in the $NVE$ ensemble (Lemma \ref{lem:bounds_L_class_NVE}) and $\alpha = \alpha_{NVT}$ for simulations in the $NVT$ ensemble (Lemma \ref{lem:bounds_L_class_NVT}).

    From Lemma \ref{lem:trotter} we have that $r \in O \left(\frac{\Tilde{\alpha}_{c}^{1/2k} t^{1 + 1/2k}}{\epsilon^{1/2k}}\right)$ and from Lemma \ref{lem:bound_alpha_c} we have that
    $\Tilde{\alpha}_{c}(2k) \leq 2^{2k}(\mu')^{2k+1}$ where $\mu' \in \{ \mu'_{NVE}, \mu'_{NVT} \}$.
    The number of queries to a block-encoding of $L_{\tcla}$ is then in
    \begin{equation}
        \widetilde{O} \lb 5^k t \lb \alpha \log{\lb \frac{\mu'}{\epsilon \xi} \rb} + \mu' \lb \frac{\mu' t}{\epsilon} \rb^{1/2k} \log{\lb \frac{1}{\xi} \rb} \rb \rb.
    \end{equation}
    
    The evolution under the electronic Liouvillian $L_{\tel}$ is simulated according to Lemma \ref{lem:complexity_eL_el} with total evolution time as in Lemma \ref{lem:tot_time}, resulting in 
    \begin{equation}
    \begin{split}
        & O \Bigg( 5^k N d_e \log{\lb \frac{r 5^k N d_e}{\xi} \rb} \lb \frac{\lambda t}{h_x h_p} + r \log{\lb \frac{r 5^k N \lambda \ln{(d_e)} t}{h_x h_p \epsilon} \rb} \log{\lb \frac{r 5^k N d_e \log{\lb \frac{r 5^k N \lambda \ln{(d_e)} t}{h_x h_p \epsilon} \rb}}{\epsilon} \rb} \rb \\
        & \quad + r 5^k N d_e  \frac{\lambda}{\gamma \, \delta} \log{ \lb \frac{r 5^k N d_e}{ \delta \epsilon} \rb} \log{\lb \frac{r 5^k N d_e}{\xi} \rb} \Bigg)
    \end{split}
    \end{equation}
    queries to a block-encoding of $H_{\tel}$ where $d_e \in O \lb \frac{\log{\left( \frac{N r 5^k \chi t}{h_p \epsilon}\right)}}{\log{\left( \frac{1}{h_x} \right)}} \rb$.
    Furthermore,
    \begin{equation}
        O  \lb \frac{r 5^k N d_e}{\delta} \log{\lb \frac{r 5^k N d_e}{\xi} \rb} \rb
    \end{equation}
    queries to the state preparation oracle $U_I$ are needed.\\
    Using 
    \begin{align}
        &r \in O \lb \frac{\Tilde{\alpha}_{c}^{1/2k} t^{1 + 1/2k}}{\epsilon^{1/2k}}\rb, \\
        &\Tilde{\alpha}_{c}(2k) \leq 2^{2k}(\mu')^{2k+1}, \\
        &\frac{\lambda}{h_x h_p} \leq \mu',
    \end{align}
    we find that
    \begin{equation}
    \begin{split}
        &\widetilde{O} \lb 5^k N d_e t \lb \frac{\lambda}{h_x h_p} \log{\lb \frac{\Tilde{\alpha}_{c}}{\epsilon \xi} \rb} + \lb \frac{\Tilde{\alpha}_{c} t}{\epsilon} \rb^{1/(2k)} \lb \log{\lb \frac{\lambda}{h_x h_p} \rb} + \frac{\lambda}{\gamma \delta} \rb \log{\lb \frac{1}{\xi} \rb} \rb \rb \\
        &\subseteq \widetilde{O} \lb 5^k N d_e \frac{\lambda}{\gamma \delta} \frac{\lb \mu' t \rb^{1+1/(2k)}}{\epsilon^{1/(2k)}} \log{\lb \frac{1}{\xi} \rb} \rb
    \end{split}
    \end{equation}
    queries to the block-encoding of $H_{\tel}$ and
    \begin{equation}
        \widetilde{O} \lb \frac{5^k N d_e}{\delta} \frac{\lb \mu' t \rb^{1+1/(2k)}}{\epsilon^{1/(2k)}} \log{\lb \frac{1}{\xi} \rb} \rb
    \end{equation}
    queries to the state preparation oracle $U_I$ are sufficient.
\end{proof}

\subsection{Proof of Theorem \ref{thm:complexity_liouvillian}}

The previous results now give us the tools that we need to prove Theorem \ref{thm:complexity_liouvillian}, which provides upper bounds on the Toffoli complexity of simulating Liouvillian dynamics.
We restate it here for convenience.

\liouvillian*

\begin{proof}
    The upper bound on the number of queries to $\widetilde{U}_I$ follows from Lemma~\ref{lem:query_complexity_eL}. Note that Lemma~\ref{lem:query_complexity_eL} deals with errors between operators, i.e.~the complexity bounds hold for the worst-case input state. Problem~\ref{prob:sim} on the other hand is formulated in terms of the simulation error for a fixed input state. This means that here we only need a good initial electronic state for grid points associated with a nonzero amplitude at some point during the simulation because any simulation errors that occur on grid points which are associated with zero amplitude throughout the simulation do not contribute to the error of the final state. Hence, we can use $\widetilde{U}_I$ instead of $U_I$.

    Realizing that $\mu' \leq \mu$ and choosing
    \begin{equation}
        k = \text{round} \left[ \sqrt{\frac{1}{2} \log_5\lb \mu t /\epsilon \rb + 1}\right],
    \label{choice_k}
    \end{equation}
    we then find that the query complexity for $\widetilde{U}_I$ is in
     \begin{equation}
        \widetilde{O} \lb 5^{\sqrt{2 \log_5\lb \mu t /\epsilon \rb}} \frac{N d_e \mu t}{\widetilde{\delta}} \log{\lb \frac{1}{\xi} \rb} \rb \subseteq \widetilde{O} \lb \frac{N d_e \mu^{1+o(1)} t^{1+o(1)}}{\widetilde{\delta} \, \epsilon^{o(1)}} \log \lb \frac{1}{\xi} \rb \rb,
    \end{equation}
    which is subpolynomial in $1/\epsilon$ and almost linear in $\mu$ and $t$.
    The above heuristic choice for $k$ can be obtained by balancing the factor $5^k$ with $\lb \frac{\mu t}{\epsilon} \rb^{1/2k}$ to minimize the overall complexity with respect to $k$~\cite{Berry2006trotter}.
    
    Next, let us discuss the overall Toffoli complexity which follows from multiplying the query complexities of the block-encodings with their respective Toffoli complexity. More specifically, from Lemma \ref{lem:query_complexity_eL} we have that
    \begin{equation}
    \begin{split}
        &\widetilde{O} \lb 5^k t \lb \alpha \log{\lb \frac{\mu'}{\epsilon \xi} \rb} + \mu' \lb \frac{\mu' t}{\epsilon} \rb^{1/2k} \log{\lb \frac{1}{\xi} \rb} \rb \rb \\
        &\subseteq \widetilde{O} \lb 5^k \frac{\mu^{1+1/2k} t^{1+1/2k}}{\epsilon^{1/2k}} \log{\lb \frac{1}{\xi} \rb} \rb 
    \end{split}
    \end{equation}
    queries to an $\epsilon/(5^k t)$-precise block-encoding of the classical Liouvillian $L_{\tcla}$ are sufficient. Lemmas \ref{lem:bounds_L_class_NVE} and \ref{lem:bounds_L_class_NVT} imply that such a block-encoding requires
    \begin{equation}
    \begin{split}
        &\widetilde{O} \lb N \log \lb \frac{g \alpha 5^k t}{\epsilon} \rb + \log^{\log 3}{\lb \frac{\alpha 5^k t}{\epsilon} \rb}  + d \log (g) \rb \\
        &\subseteq \widetilde{O} \lb (N+d) \log^{\log 3} \lb \frac{\mu 5^k t}{\epsilon} \rb  \rb
    \end{split}
    \end{equation}
    Toffoli gates, where $d$ is the maximum order of the finite difference schemes.
    Choosing $k$ as in Eq.~\eqref{choice_k}, we find that the Toffoli complexity associated with simulating the classical Liouvillian is in
    \begin{equation}
        \widetilde{O} \lb (N + d) \frac{\mu^{1+o(1)} t^{1+o(1)}}{\epsilon^{o(1)}} \log{\lb \frac{1}{\xi} \rb} \rb.
    \end{equation}
    According to Lemma \ref{lem:query_complexity_eL} we also need
    \begin{equation}
    \begin{split}
        &\widetilde{O} \lb 5^k N d_e \frac{\lambda}{\gamma \delta} \frac{\lb \mu' t \rb^{1+1/(2k)}}{\epsilon^{1/(2k)}} \log{\lb \frac{1}{\xi} \rb} \rb \\
        &\subseteq \widetilde{O} \lb  5^k d_e \frac{\mu^{2+1/2k} t^{1+1/2k}}{\gamma \, \delta \, \epsilon^{1/2k}} \log{\lb \frac{1}{\xi} \rb} \rb \\
        &\subseteq \widetilde{O} \lb  d_e \frac{\mu^{2+ o(1)} t^{1+o(1)}}{\gamma \, \delta \, \epsilon^{o(1)}} \log{\lb \frac{1}{\xi} \rb} \rb
    \end{split}
    \end{equation}
    queries to an $\frac{h_x h_p \epsilon}{5^k 36 N d_e t}$-precise block-encoding of the electronic Hamiltonian $H_{\tel}$. Lemma \ref{lem:block-encode_Hel} implies that such a block-encoding has Toffoli cost in
    \begin{equation}
        O \lb N + \Tilde{N} + \log{\lb \frac{B 5^k 36 N d_e t}{h_x h_p \epsilon}\rb} \rb.
    \end{equation}
    The Toffoli cost associated with simulating the electronic Liouvillian is then in
    \begin{equation}
        \widetilde{O} \lb  N_{\text{tot}} d_e \frac{\mu^{2+ o(1)} t^{1+o(1)}}{\gamma \, \delta \, \epsilon^{o(1)}} \log{\lb \frac{1}{\xi} \rb} \rb.
    \end{equation}
    Combining all results, we find that the overall Toffoli complexity of simulating $e^{-iLt}$ is in
    \begin{equation}
        \widetilde{O} \lb \frac{N_{\text{tot}} d \mu^{2+o(1)} t^{1+o(1)}}{\gamma \, \delta \, \epsilon^{o(1)}} \log \lb \frac{1}{\xi} \rb \rb.
    \end{equation}
    Note that the above statements are independent of the initial quantum state encoding the initial phase space density. In particular, $\gamma$ is a lower bound on the spectral gap of the block-encoded operator $\widetilde{H}_{\tel}\lb \{x_{n}\} \rb$ over all phase space grid points. Similarly, $\delta$ is a lower bound on the overlap of the initial electronic state with the true electronic ground state over all phase space grid points.
    However, if we are dealing with a fixed initial state as in Problem \ref{prob:sim}, we only need to consider the spectral gap and the electronic ground state of the grid points associated with a nonzero amplitude at some point during the simulation because any simulation errors that occur on grid points which are associated with zero amplitude throughout the simulation do not contribute to the error of the final state. Let $\widetilde{\gamma} \geq \gamma$ be a lower bound on the spectral gap of $\widetilde{H}_{\tel}\lb \{x_{n}\} \rb$ over that subset of grid points.
    Likewise, let $\widetilde{\delta} \geq \delta$ be a lower bound on the overlap of the initial electronic state with the true electronic ground state over the same subset of grid points.
    Then Problem \ref{prob:sim} can be solved using only $O \lb \frac{1}{\widetilde{\gamma} \widetilde{\delta}} \rb$ rather than $O \lb \frac{1}{\gamma \delta} \rb$ Toffoli gates.
\end{proof}

\section{Details on estimating the free energy}
\label{app:free}

Recall from Definition \ref{def:free_energy} that the free energy is given by 
\begin{equation}
    \mathcal{F} = \mathcal{U} - T \mathcal{S}_G
\end{equation}
where $\mathcal{U}$ is the internal energy of the system and $\mathcal{S}_G$ is the Gibbs entropy of the system.
The concept of Gibbs entropy can be extended to the quantum world in the form of the von Neumann entropy:
\begin{defn}[Von Neumann entropy]
    Let $\rho \in \mathbb{C}^{\eta \times \eta}$ be a density matrix. Then
    \begin{equation}
        S_N := -\text{Tr} \lb \rho \ln  \rho \rb
    \end{equation}
    is the von Neumann entropy associated with $\rho$.
\end{defn}
Note that $S_N = 0$ for a pure state. The idea is to estimate the Gibbs entropy and the internal energy of our system separately and add the results to estimate the free energy. This means that our algorithm requires at least two separate simulations.
Before we show how to obtain the Gibbs entropy from a quantum algorithm for estimating the von Neumann entropy, let us explain the usage of the Nosé thermostat within the Liouvillian framework in a little more detail. The main difference to plain Liouvillian dynamics in the $NVE$ ensemble is the new variable for the heat bath, $s$, and its associated momentum variable, $p_s$. Furthermore, the momenta $\{ p'_n \}$ appearing in the $NVT$ Hamiltonian $H_{NVT}$ from Definition \ref{H_NVT} are virtual momenta of the extended system. They are related to the real momenta $\{ p_n \}$ of the physical system via the relation $p_n = \frac{p'_n}{s}$. In the discretized setting, we introduce a cutoff $s_{\min}$ to avoid infinities in the simulation.

In the following, we will drop the particle index of the position and momentum variables for ease of notation.
As mentioned in the main text, for continuous variables, it can be shown that the microcanonical partition function $\mathcal{Z}$ of the extended system gives rise to the canonical partition function when restricted to the real system~\cite{Nose1984partition_function, Huenenberger2005thermostat}. More specifically, it can be proven, via a change of variables, that
\begin{equation}
\begin{split}
    \mathcal{Z} &\propto \int d\{x\} \int d\{p'\} \int ds \int dp_s \, \delta \lb H_{NVT} \lb \{x\}, \{p'\}, s, p_s \rb - E_{\text{ext}}\rb \\
    &\propto  \int d\{x\} \int d\{p\} e^{- H_{NVE} \lb \{x\}, \{ p \} \rb/(k_BT)},
\end{split}
\end{equation}
where $E_{\text{ext}}$ is the conserved energy of the extended system and $H_{NVE}$ is the $NVE$ Hamiltonian from Eq.~\eqref{H_NVE}.

\begin{algorithm}[t!]
    \caption{Free energy estimation}
    \label{alg:free_energy}
    \KwIn{Quantum state $\ket{\psi_0} = \sum_{\hax, \hap', \overline{S}} c_{x, p', S}(0)\ket{\{\hax\}}\ket{\{\hap'\}} \ket{\overline{S}}$ encoding initial KvN wavefunction of the system together with the heat bath. \newline
    Further input parameters: $t, \epsilon, \xi, k, N, \tn, \{m_n\}_{n=1}^N, \{Z_n\}_{n=1}^N, x_{\text{max}}, p'_{\text{max}}, h_x, h_{p'}, d_x, d_p, d_e, \Delta, B, h_{\tel}, \delta, \widetilde{\gamma}, \chi, N_f, T, Q, s_{\text{min}}, h_s, h_{p_s}, d_s, d_{p_s}$.}
    \KwOut{With success probability $\geq 1 - \xi$ an $\epsilon$-precise estimate of the free energy associated with the phase space density after time $t$.}
    \begin{enumerate}[leftmargin=*]
        \item  Apply the $NVT$ Liouvillian simulation algorithm as summarized in Algorithm~\ref{alg:liouvillian} to $\ket{\psi_0}$ to obtain
        \begin{equation*}
            \ket{\psi_t} = U_{L_{NVT}} \ket{\psi_0} = \sum_{\hax, \hap', \overline{S}} {c}_{x, p', S}(t)\ket{\{\hax\}}\ket{\{\hap'\}} \ket{\overline{S}}.
        \end{equation*}
        \item Duplicate the $\ket{s}$ register of the bath using $U_{\text{dup}}$ to retain the information of $s$. This yields $\ket{\Psi_t}$ which is a purification of the density matrix $\rho_\text{sys} (t)$ for which we want to estimate the von Neumann entropy\;
        \item Eliminate the off-diagonal elements of $\rho_\text{sys} (t)$ via controlled phase gradients between an ancillary register $\ket{j}$ and $\ket{\Psi_t}$\;
        \item Tracing out the ancillary register $\ket{j}$ and the bath register $\ket{S}$ but not the duplicated $\ket{s}$ register, we obtain a reduced phase space density over the nuclear position and momentum registers $\ket{\{\hax_n\}} \ket{\{\hap_n\}}$\;
        \item Use the algorithm associated with Theorem 13 of \cite{Gilyen2019dist_prop_test} to estimate the Gibbs entropy of $\rho_{sys} (t)$ within error $\epsilon/(2k_BT)$\;
        \item Apply the Hadamard test to each of the three internal energy components $H_{\text{kin}}$, $H_{\text{pot}}$ and $H_{E_{\tel}}$ to estimate their expectation values within error $\epsilon/6$\;
        \item Classically add the estimates of the Gibbs entropy and the internal energy components to obtain an $\epsilon$-precise estimate of the free energy $\mathcal{F}$\;
    \end{enumerate}
\end{algorithm}

In terms of our quantum algorithm, we now have three types of quantum registers representing classical variables: the nuclear positions register $\ket{\{\hax\}}$, the nuclear (virtual) momentum register $\ket{\{\hap'\}}$ and the bath register 
\begin{equation}
    \ket{\overline{S}} := \ket{\has}\ket{\hap_s}.
\end{equation}
Let 
\begin{equation}
    \ket{\psi_0} := \sum_{\hax, \hap', \overline{S}} c_{x, p', S}(0)\ket{\{\hax\}}\ket{\{\hap'\}} \ket{\overline{S}}
\end{equation}
be a quantum state encoding the initial KvN wave function of the system + bath where the $\{c_{x, p', S}(0)\}$ are complex amplitudes. 
We time evolve $\ket{\psi_0}$ according to the $NVT$ Liouvillian from Definition \ref{def:NVT_liouvillian}, resulting in
\begin{equation}
   \ket{\psi_t} := U_{L_{NVT}} \ket{\psi_0} = \sum_{\hax, \hap', \overline{S}} {c}_{x, p', S}(t)\ket{\{\hax\}}\ket{\{\hap'\}} \ket{\overline{S}},
\end{equation}
where
\begin{equation}
    U_{L_{NVT}} := e^{-iL_{NVT}t}
\end{equation}
is the unitary that implements the Liouvillian time evolution of the system + bath. The discrete analog of integrating out the bath variables as done in Eq.~(\ref{partition_func}) would be to trace out the bath register $\ket{\overline{S}}$. However, at this stage, we cannot simply trace out $\ket{\overline{S}}$ since we would loose all information of $\ket{\has}$, which is needed to compute the real momenta $\{ p \}$. In other words, we first need to perform a discrete analog of the change of variables $p' \rightarrow p'/s = p$.
We do so by duplicating the $\ket{\has}$ register via a unitary $U_{\text{dup}}$ to obtain
\begin{equation}
     \ket{\Psi_t} := U_{\text{dup}} \ket{\psi_t}\ket{0} =
     \sum_{\hax, \hap', \overline{S}} {c}_{x, p', S}(t)\ket{\{\hax\}}\ket{\{\hap'\}} \ket{\overline{S}} \ket{\has} = \sum_{\hax, \hap', \has, \hap_s} {c}_{x, p', s, p_s}(t)\ket{\{\hax\}}\ket{\{\hap'\}} \ket{\has} \ket{\hap_s} \ket{\has}.
\end{equation}
Note that $U_{\text{dup}}$ can be implemented using $O \lb \log g_s \rb$ $\cnot$ gates. More specifically, we apply a single $\cnot$ to each qubit of the $\ket{\has}$ register where each $\cnot$ has a different target qubit in the duplication ancilla register.

The above quantum state $\ket{\Psi_t}$ can be regarded as a purification of the following density matrix, which describes the dynamics of the nuclei under the influence of the heat bath:
\begin{equation}
\begin{split}
    \rho_{\text{sys}}(t) := \text{Tr}_S \lb \ket{\Psi_t}\!\bra{\Psi_t} \rb &= \sum_{\substack{\hax, \hap', \overline{S}\\ \hax' ,\hap''}} c_{x, p', S}(t)c_{x', p'', S}^*(t) \ket{\{\hax\}} \ket{\{\hap'\}, s}\!\bra{\{\hax'\}} \bra{\{\hap''\}, \has} \\
    &= \sum_{\substack{\hax, \hap', \has, \hap_s\\ \hax' , \hap''}} c_{x, p', s, p_s}(t)c_{x', p'', s, p_s}^*(t) \ket{\{\hax\}}\ket{\{\hap'\}, \has}\!\bra{\{\hax' \}}\bra{\{ \hap''\}, \has}.
\end{split}
\end{equation}
Note that the combined register $\ket{\{\hap'\}, \has}$ can be regarded as the real momentum register. It is effectively just a different representation of $\ket{\{\hap'/\has\}}$. The dimension of $\rho_{\text{sys}}(t)$ is 
\begin{equation}
    \eta = g_x^{3N} g_{p'}^{3N} g_s
\end{equation}
and the probability of finding the nuclei in a particular configuration $\ket{\{\hax^*, \hap'^*, \has^*\}}$ is given by
\begin{equation}
    \bra{\{\hax^*, \hap'^*, \has^*\}} \rho_{\text{sys}}(t)\ket{\{\hax^*, \hap'^*, \has^*\}} = \sum_{\overline{S}} |c_{x^*, p'^*, S}(t)|^2 =: b_{x^*, p'^*, s^*}(t).
\label{diag_elements}
\end{equation}

Using the above ideas, we can reduce the problem of estimating the Gibbs entropy associated with $\rho_{\text{sys}}(t)$ to the problem of estimating the von Neumann entropy of a modified density matrix. 
The reason for requiring a modified density matrix is that in contrast to the von Neumann entropy, the Gibbs entropy associated with $\rho_{\text{sys}}(t)$ depends only on the diagonal elements of $\rho_{\text{sys}}(t)$ since these represent the classical probabilities of the different microstates, see Eq.~(\ref{diag_elements}).
We can eliminate the off-diagonal elements by applying controlled phase gradients to the purification $\ket{\Psi_t}$ as we will now explain in more detail.

Let $\ket{j}$ denote a computational basis state of a $\log \lb \eta_v \rb$-qubit ancilla register where
\begin{equation}
    \eta_v := g_x^{3N}g_{p'}^{3N}.
\end{equation}
Furthermore, let $\ket{n}$ denote a $\eta_v$-dimensional computational basis state obtained by considering all the nuclear position and virtual momentum variables as a single register, i.e.,
\begin{equation}
    \ket{n} \equiv \ket{\{\hax\}}\ket{\{\hap'\}}.
\end{equation}
This change in perspective simplifies the implementation of the controlled phase gradients~\cite{Gidney2018phase_gradient}.
Preparing a uniform superposition over the $\ket{j}$ register and applying controlled phase gradients then yields
\begin{equation}
\begin{split}
    \frac{1}{\sqrt{\eta_v}} \sum_j \ket{\Psi_t} \ket{j} &= \frac{1}{\sqrt{\eta_v}} \sum_{n,\overline{S},j} c_{n, S}(t) \ket{n}\ket{\overline{S}}\ket{\has}\ket{j} \\ 
    &\xrightarrow{U_{\text{pg}}} \frac{1}{\sqrt{\eta_v}} \sum_{n,\overline{S},j} c_{n, S}(t) e^{-2 \pi i n \frac{j}{\eta_v}} \ket{n}\ket{\overline{S}}\ket{\has}\ket{j} =: \ket{\Psi'_t} =: U_{L_{NVT}}' \ket{\psi_0}\ket{0} \ket{0},
\end{split}
\end{equation}
where $U_{\text{pg}}$ denotes the controlled phase gradient unitary and
\begin{equation}
     U_{L_{NVT}}' := U_{\text{pg}} \cdot \lb U_{\text{dup}} \otimes \mathbb{1} \rb \cdot \lb U_{L_{NVT}} \otimes \mathbb{1} \otimes \mathbb{1} \rb \in \mathbb{C}^{\eta_{\text{pur}} \times \eta_{\text{pur}}}
\label{purification}
\end{equation}
with 
\begin{equation}
    \eta_{\text{pur}}:= g_x^{6N} g_{p'}^{6N} g_s^2 g_{p_s}.
\end{equation}

Let us now check that this gives the correct density matrix after tracing out the $\ket{\overline{S}}$ and $\ket{j}$ registers.
First, note that the purification state $\ket{\Psi'_t}$ can be written in density matrix notation as
\begin{equation}
    \ket{\Psi'_t} \! \bra{\Psi'_t} = \frac{1}{\eta_v} \sum_{\substack{n,\overline{S},j \\ n', \overline{S}', j'}} c_{n, S}(t) c_{n', S'}^*(t) e^{-2 \pi i \frac{nj - n'j'}{\eta_v}} \ket{n}\ket{\overline{S}}\ket{\has}\ket{j} \bra{n'}\bra{\overline{S}'}\bra{\has'}\bra{j'}.
\end{equation}
Tracing out the $\ket{j}$ register results in
\begin{equation}
\begin{split}
    \text{Tr}_j \lb \ket{\Psi'_t}\!\bra{\Psi'_t} \rb &=
     \sum_{\substack{n,\overline{S} \\ n', \overline{S}'}} c_{n, S}(t) c_{n', S'}^*(t) \lb \frac{1}{\eta_v} \sum_j e^{-2 \pi i j\frac{n - n'}{\eta_v}} \rb \ket{n}\ket{\overline{S}}\ket{\has} \bra{n'}\bra{\overline{S}'}\bra{\has'} \\
     &= \sum_{\substack{n,\overline{S} \\ n', \overline{S}'}} c_{n, S}(t) c_{n', S'}^*(t) \delta_{n,n'} \ket{n}\ket{\overline{S}}\ket{\has} \bra{n'}\bra{\overline{S}'}\bra{\has'} \\
     &= \sum_{n, \overline{S}, \overline{S}'} c_{n, S}(t) c_{n, S'}^*(t) \ket{n}\ket{\overline{S}}\ket{\has} \bra{n}\bra{\overline{S}'}\bra{\has'}
\end{split}
\end{equation}
and tracing out the bath register $\ket{\overline{S}}$ yields
\begin{equation}
    \rho_{\text{sys}}'(t) := \sum_{n, \overline{S}} |c_{n, S}(t)|^2 \ket{n}\ket{\has}\bra{n}\bra{\has} = \sum_{\hax,\hap',\has,\hap_s} |c_{x,p',s,p_s}(t)|^2 \ket{\{\hax\}}\ket{\{\hap'\}, \has}\!\bra{\{\hax\}}\bra{\{\hap'\}, \has},
\label{diag_rho_sys}
\end{equation}
which consists only of the diagonal elements of $\rho_{\text{sys}}(t)$ as desired. Note that an alternative method of eliminating the off-diagonal elements of $\rho_{\text{sys}}(t)$ consists of duplicating the entire nuclear register $\ket{\{\hax\}}\ket{\{\hap'\}}$ and then tracing out the duplicated register similar to what was done with the $\ket{\has}$ register earlier on. In terms of complexity, this duplication approach is essentially equivalent to the phase gradient approach.
Algorithm~\ref{alg:free_energy} summarizes the key steps of our free energy estimation protocol.

\subsection{Estimating the Gibbs entropy}

Let us now explain how to estimate the Gibbs entropy of our system. First, note that, in general, we cannot implement the Liouvillian evolution operator $U_{L_{NVT}}$ exactly. However, Theorem \ref{thm:complexity_liouvillian} shows that we can efficiently construct an approximation $\widetilde{U}_{L_{NVT}}$ such that 
\begin{equation}
    \norm{\widetilde{U}_{L_{NVT}} - U_{L_{NVT}}} \leq \epsilon
\end{equation}
for any $\epsilon \in (0,1)$. Let us assume that $U_{\text{pg}}$ and $U_{\text{dup}}$ can be implemented with negligible error. Then the resulting approximation $\widetilde{U}_{L_{NVT}}'$ of $U_{L_{NVT}}'$ satisfies 
\begin{equation}
    \norm{\widetilde{U}_{L_{NVT}}' - U_{L_{NVT}}'} \leq \epsilon
\end{equation}
We denote the corresponding approximate density matrix of the system $\widetilde{\rho}_{\text{sys}}'(t)$. The following inequality will be useful for upper bounding the difference in the von Neumann entropy associated with $\rho_{\text{sys}}'(t)$ and $\widetilde{\rho}_{\text{sys}}'(t)$ in terms of their trace distance.

\begin{defn}[Trace distance]
    Let $\rho, \sigma \in \mathbb{C}^{\eta \times \eta}$ be density matrices. Then their trace distance is given by
    \begin{equation}
        \mathcal{T}(\rho, \sigma) := \frac{1}{2} \norm{\rho - \sigma}_1 = \frac{1}{2} {\rm{Tr}} \lb \sqrt{\lb \rho - \sigma \rb^{\dagger} \lb \rho - \sigma \rb} \rb = \frac{1}{2} \sum_{j=1}^{\eta} \left| \lambda_j \right|,
    \end{equation}
    where $\lambda_j \in \mathbb{R}$ is the $j$-th eigenvalue of $\rho - \sigma$.
\label{def:trace_dist}
\end{defn}

In the following, we will use $\norm{\cdot}_1$ to refer to the trace norm (i.e.~the Schatten 1-norm). As before, $\norm{\cdot}$ denotes the (induced) 2-norm.

\begin{lem}[Fannes inequality~\cite{Fannes1973entropy, Audenaert2007entropy}]
    Let $\rho$ and $\sigma$ be $\eta$-dimensional density matrices. If $\mathcal{T}(\rho, \sigma) \leq 1/(2e)$ then
    \begin{equation}
        \left| S_N \lb \rho \rb - S_N \lb \sigma \rb \right| \leq 2\mathcal{T} \log_2\lb \eta \rb - 2\mathcal{T} \log_2\lb 2\mathcal{T} \rb.
    \end{equation}
\label{lem:fannes}
\end{lem}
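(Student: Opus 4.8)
The plan is to reduce the operator statement to a scalar inequality for the function $f(t) := -t\log_2 t$ (with $f(0)=0$), following the classical arguments of Fannes and of Audenaert. First I would pass from $\rho,\sigma$ to their eigenvalue sequences: writing the eigenvalues of $\rho$ and $\sigma$ in decreasing order as $r_1\ge r_2\ge\cdots$ and $s_1\ge s_2\ge\cdots$, we have $S_N(\rho)=\sum_i f(r_i)$ and $S_N(\sigma)=\sum_i f(s_i)$, so the entropy difference depends only on the two spectra. The key linear-algebraic input is Mirsky's theorem (a Lidskii/Ky-Fan--type perturbation inequality for Hermitian matrices), which yields $\sum_{i=1}^{\eta}|r_i-s_i|\le \|\rho-\sigma\|_1 = 2\,\mathcal{T}$. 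Setting $\epsilon_i:=|r_i-s_i|$ and $\epsilon:=\sum_i\epsilon_i\le 2\mathcal{T}$, note that each $\epsilon_i\le\epsilon\le 2\mathcal{T}\le 1/e<1/2$ under the hypothesis $\mathcal{T}\le 1/(2e)$.

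The heart of the argument is the scalar continuity estimate
$$|f(x)-f(y)|\le f(|x-y|)\qquad\text{for }x,y\in[0,1],\ |x-y|\le 1/2.$$
I would prove this by an elementary monotonicity argument: fixing $\delta=|x-y|$ and assuming $x>y$, the function $\psi(y):=f(y+\delta)-f(y)$ on $[0,1-\delta]$ has derivative $\psi'(y)=\log_2\!\big(y/(y+\delta)\big)<0$, hence is strictly decreasing, so $\psi$ takes values only in $[\psi(1-\delta),\psi(0)]=[-f(1-\delta),\,f(\delta)]$; the bound $|\psi(y)|\le f(\delta)$ then follows from the one-variable inequality $f(1-\delta)\le f(\delta)$, which is straightforwardly verified for $0\le\delta\le 1/2$. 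Applying this termwise together with the triangle inequality gives $\bigl|S_N(\rho)-S_N(\sigma)\bigr|\le\sum_{i=1}^{\eta}f(\epsilon_i)$; the threshold $\mathcal{T}\le 1/(2e)$ is precisely what keeps every $\epsilon_i$ in the regime where the scalar lemma is valid.

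It remains to bound $\sum_{i=1}^{\eta}f(\epsilon_i)$ subject to $\epsilon_i\ge 0$ and $\sum_i\epsilon_i=\epsilon$. Since $f$ is concave on $[0,1]$, Jensen's inequality gives $\sum_i f(\epsilon_i)\le \eta\, f(\epsilon/\eta)=\epsilon\log_2(\eta/\epsilon)$. Finally, the function $g(u):=u\log_2(\eta/u)=u\log_2\eta-u\log_2 u$ has derivative $g'(u)=\log_2\!\big(\eta/(ue)\big)\ge 0$ for $u\le\eta/e$; since $\epsilon\le 2\mathcal{T}\le 1/e\le\eta/e$, $g$ is nondecreasing on $(0,2\mathcal{T}]$, so $\epsilon\log_2(\eta/\epsilon)\le 2\mathcal{T}\log_2(\eta/(2\mathcal{T}))=2\mathcal{T}\log_2\eta-2\mathcal{T}\log_2(2\mathcal{T})$, which is the claimed bound.

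I expect the main obstacle to be the scalar continuity lemma for $f$: because $f$ is not monotone on $[0,1]$ (it increases on $[0,1/e]$ and decreases on $[1/e,1]$), one cannot simply bound $|f(x)-f(y)|$ by a modulus of continuity, and the auxiliary-function/endpoint analysis must be arranged so that the worst case genuinely sits at a boundary point and so that the stated restriction $|x-y|\le 1/2$ (equivalently $\mathcal{T}\le 1/(2e)$ after summing) is exactly what is needed. The eigenvalue-perturbation step (Mirsky) and the concavity/monotonicity manipulations are routine and can be cited or checked directly.
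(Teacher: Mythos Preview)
The paper does not prove this lemma; it is quoted as a classical result with citations to Fannes (1973) and Audenaert (2007), and the bound is then used as a black box in the proof of Lemma~\ref{lem:gibbs_entropy}. So there is no ``paper's own proof'' to compare against.

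Your argument is correct and is essentially the original Fannes proof. The three ingredients---the Mirsky/Lidskii eigenvalue perturbation bound $\sum_i|r_i-s_i|\le\|\rho-\sigma\|_1$, the scalar continuity estimate $|f(x)-f(y)|\le f(|x-y|)$ for $|x-y|\le 1/2$, and the concavity/monotonicity step maximizing $\sum_i f(\epsilon_i)$ under an $\ell^1$ constraint---are exactly the classical ones, and you have handled each cleanly. The only subtlety worth flagging is a base-of-logarithm inconsistency in the paper itself: Definition of $S_N$ uses the natural logarithm, while the Fannes bound as stated uses $\log_2$. Your proof is internally consistent (you take $f(t)=-t\log_2 t$ throughout), which matches the lemma as written; just be aware that if one insists on the paper's $S_N=-\mathrm{Tr}(\rho\ln\rho)$ then the right-hand side should carry $\ln$ rather than $\log_2$.
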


We now show how to estimate the Gibbs entropy $\mathcal{S}_G$ of our system.
\begin{lem}[Estimation of the Gibbs entropy]
    Let $\epsilon \in (0,1)$ and let $\rho_{\text{sys}}'(t)$ be the $\eta$-dimensional diagonal density matrix of the system as defined in Eq.~(\ref{diag_rho_sys}) where $\eta\ge 6$.
    Let $\eta_{\text{pur}}$ be the dimension of $U_{L_{NVT}}'$ as defined in Eq.~(\ref{purification}).
    Furthermore, let $\widetilde{U}_{L_{NVT}}'$ be an $\lb \frac{\epsilon}{4\eta_{\text{pur}} \log \lb \eta/\nu \rb} \rb$-precise approximation to $U_{L_{NVT}}'$ where $\nu \in (0,1)$ is a lower bound on $2\mathcal{T}\lb \widetilde{\rho}_{\text{sys}}'(t), \rho_{\text{sys}}'(t) \rb$ and 
    \begin{equation}
        \epsilon \leq \frac{2 \log \lb \eta/\nu \rb}{e}.
    \end{equation}
    There exists a quantum algorithm that outputs an estimate of the Gibbs entropy associated with $\rho_{\text{sys}}(t)$ within error $\epsilon$ with success probability $\ge 1 - \xi$ using
    \begin{equation}
        \widetilde{O} \lb \frac{\eta}{\epsilon^{1.5}} \log{\lb \frac{1}{\xi} \rb}\rb
    \end{equation}
    queries to $\widetilde{U}_{L_{NVT}}'$.
\label{lem:gibbs_entropy}
\end{lem}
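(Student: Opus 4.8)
The plan is to reduce the estimation of the Gibbs entropy to estimating a von Neumann entropy and then invoke the entropy–estimation primitive of \cite{Gilyen2019dist_prop_test}. First I would observe that the matrix $\rho_{\text{sys}}'(t)$ of Eq.~(\ref{diag_rho_sys}) is diagonal in the computational basis, with diagonal entries equal to the classical microstate probabilities $b_{x,p',s}(t)$ from Eq.~(\ref{diag_elements}); therefore its von Neumann entropy equals the Gibbs entropy of the nuclear (plus retained-$s$) distribution, $\mathcal{S}_G = k_B\, S_N(\rho_{\text{sys}}'(t))$, and it suffices to estimate $S_N(\rho_{\text{sys}}'(t))$ to additive error $\epsilon/k_B$. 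Moreover the state $\ket{\Psi_t'} = U_{L_{NVT}}' \ket{\psi_0}\ket{0}\ket{0}$ from Eq.~(\ref{purification}) is by construction a purification of $\rho_{\text{sys}}'(t)$, so we have query access to a unitary preparing the required purification.

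Second I would apply Theorem~13 of \cite{Gilyen2019dist_prop_test}: given a circuit preparing a purification of an $\eta$-dimensional density matrix $\rho$, one can output an additive $\epsilon'$-estimate of $S_N(\rho)$ using $\widetilde{O}(\eta/\epsilon'^{1.5})$ queries to that circuit and its inverse with constant success probability, which is boosted to $1-\xi$ by taking the median of $O(\log(1/\xi))$ independent runs. Internally this builds a block-encoding of $\rho$ from the purification, applies QSVT with a polynomial approximating $x\mapsto \ln(1/x)$ on the spectrum of $\rho$, and reads off $-\mathrm{Tr}(\rho\ln\rho)$ via amplitude estimation; for the purposes of this lemma it is a black box. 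If $U_{L_{NVT}}'$ could be realized exactly, splitting the error budget so the entropy routine contributes at most half of $\epsilon$ would already give the claim with the stated query count $\widetilde{O}(\eta\,\epsilon^{-1.5}\log(1/\xi))$.

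Third — the only non-routine point — I would control the effect of implementing $U_{L_{NVT}}'$ only approximately. Theorem~\ref{thm:complexity_liouvillian} (applied to the $NVT$ Liouvillian, with $U_{\text{dup}}$ and $U_{\text{pg}}$ assumed exact) lets us build $\widetilde{U}_{L_{NVT}}'$ with $\norm{\widetilde{U}_{L_{NVT}}' - U_{L_{NVT}}'}\le\epsilon_0$ for any $\epsilon_0>0$; running the entropy routine with $\widetilde{U}_{L_{NVT}}'$ returns an estimate of $S_N(\widetilde{\rho}_{\text{sys}}'(t))$ for the perturbed state. Since the trace distance of two pure states is at most the $\ell^2$ distance of the vectors produced by their preparing unitaries, and the partial traces over the $\ket{j}$ and $\ket{\hat S}$ registers are contractive in trace distance (Definition~\ref{def:trace_dist}), one gets $\mathcal{T}(\widetilde{\rho}_{\text{sys}}'(t),\rho_{\text{sys}}'(t))\le\epsilon_0$. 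Taking $\epsilon_0$ small enough that this is below $1/(2e)$ — which, under the stated mild hypothesis $\epsilon\le 2\log(\eta/\nu)/e$, the choice $\epsilon_0 = \epsilon/(4\eta_{\text{pur}}\log(\eta/\nu))$ ensures — Fannes' inequality (Lemma~\ref{lem:fannes}) combined with the lower bound $2\mathcal{T}\ge\nu$ gives $|S_N(\widetilde{\rho}_{\text{sys}}'(t)) - S_N(\rho_{\text{sys}}'(t))|\le 2\mathcal{T}\log_2(\eta/\nu)\le 2\epsilon_0\log_2(\eta/\nu)$, which is at most half the entropy budget; the extra $\eta_{\text{pur}}$ factor conservatively absorbs the error that the same perturbation incurs inside the block-encoding of $\widetilde\rho_{\text{sys}}'$ that the entropy routine constructs. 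A final triangle inequality between this perturbation error and the routine's own intrinsic error yields the overall $\epsilon$ guarantee, and the number of calls to $\widetilde{U}_{L_{NVT}}'$ remains the $\widetilde{O}(\eta\,\epsilon^{-1.5}\log(1/\xi))$ inherited from the entropy routine — the factors hidden in $\epsilon_0$ affect only the cost of \emph{synthesizing} each copy of $\widetilde{U}_{L_{NVT}}'$, which is accounted for separately via Theorem~\ref{thm:complexity_liouvillian}. I expect the main obstacle to be exactly this error bookkeeping: tracking a norm perturbation of $\widetilde{U}_{L_{NVT}}'$ through the purification, the two partial traces, the (non-exact) block-encoding and QSVT steps inside the entropy algorithm, and Fannes' inequality, so as to justify that the precision $\epsilon/(4\eta_{\text{pur}}\log(\eta/\nu))$ suffices.
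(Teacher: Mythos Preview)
Your proposal follows essentially the same route as the paper: reduce Gibbs entropy to the von Neumann entropy of the diagonal state $\rho_{\text{sys}}'(t)$, invoke Theorem~13 of \cite{Gilyen2019dist_prop_test} on the purification prepared by $\widetilde U_{L_{NVT}}'$, boost by median-of-means, and control the perturbation $|S_N(\widetilde\rho)-S_N(\rho)|$ via Fannes' inequality after bounding the trace distance and using contractivity under the partial trace. The split $\epsilon_{\text{est}}+\epsilon_{\text{Fan}}\le\epsilon$ and the final query count match the paper exactly.

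The one point of divergence is your handling of the factor $\eta_{\text{pur}}$. You bound the trace distance of the pure states directly by the $\ell^2$ distance of the vectors, obtaining $\mathcal T\le\epsilon_0$ --- which is correct and \emph{tighter} than what the paper does --- and then speculate that the extra $\eta_{\text{pur}}$ in the hypothesis ``conservatively absorbs'' errors inside the entropy routine's block-encoding. That is not where it comes from in the paper: the paper instead bounds the \emph{operator} norm of the difference of the two pure density matrices by $2\epsilon_U$ and then uses the crude conversion $\tfrac12\|A\|_1\le \eta_{\text{pur}}\|A\|$ to pass to trace distance, yielding $\mathcal T\le \eta_{\text{pur}}\epsilon_U$ directly. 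Your tighter pure-state bound would in fact permit a much larger $\epsilon_0$; the $\eta_{\text{pur}}$ in the lemma statement is an artifact of the paper's loose norm conversion, not of any error propagation inside the QSVT subroutine. So your argument is valid (with slack to spare), but your attribution of the $\eta_{\text{pur}}$ factor is off.
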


\begin{proof}
    By construction, estimating the Gibbs entropy of $\rho_{\text{sys}}(t)$ is equivalent to estimating the von Neumann entropy of $\rho_{\text{sys}}'(t)$ since $\rho_{\text{sys}}'(t)$ consists only of the diagonal elements of $\rho_{\text{sys}}(t)$. However, we only have access to $\widetilde{U}_{L_{NVT}}'$, an $\epsilon_U$-precise approximation of $U_{L_{NVT}}'$ in $\ell^2$-distance. We will show that $\epsilon_U$ needs to be upper bounded by 
    \begin{equation}
        \epsilon_U \leq \frac{\epsilon}{4\eta_{\text{pur}} \log \lb \eta/\nu \rb}
    \end{equation}
    This gives rise to an approximation $\widetilde{\rho}_{\text{sys}}'(t)$ of $\rho_{\text{sys}}'(t)$.
    The idea then is to use Theorem 13 of~\cite{Gilyen2019dist_prop_test} to obtain an $\epsilon_{\text{est}}$-precise estimate $\widetilde{S}\lb \widetilde{\rho}_{\text{sys}}'(t) \rb$ of the von Neumann entropy of $\widetilde{\rho}_{\text{sys}}'(t)$ where $\epsilon_{\text{est}} \in (0,1)$. 
    The algorithm of~\cite{Gilyen2019dist_prop_test} requires access to a purification of the density matrix, which in our case is simply $\ket{\Psi'_t}$. The work of~\cite{Gilyen2019qsvt} shows that a polynomial of degree $\widetilde{O} \lb \sqrt{\frac{\eta}{\epsilon_\text{est}}}\rb$ to approximate $\log \lb \frac{1}{\widetilde{b}_{x, p', s}(t)}\rb$ within error $\epsilon_{\rm est}$ where $\{\widetilde{b}_{x, p', s}(t)\}$ are the (diagonal) elements of $\widetilde{\rho}_{\text{sys}}'(t)$.
    This implies that quantum amplitude estimation~\cite{brassard2002quantum} can be used to learn $\widetilde{S}\lb \widetilde{\rho}_{\text{sys}}'(t) \rb$ with constant success probability within error $\epsilon_{\text{est}}$ using $\widetilde{O} \lb \frac{\eta}{\epsilon_{\text{est}}^{1.5}}\rb$ queries to $\widetilde{U}_{L_{NVT}}'$. 
    The Chernoff bound implies that we can achieve a success probability $\geq 1- \xi$ with $\log{(1/\xi)}$ repetitions of the algorithm. Next, let us discuss the required block-encoding precision of $U_{L_{NVT}}'$.
    By the triangle inequality we have that
    \begin{equation}
    \begin{split}
        \left| \widetilde{S}\lb \widetilde{\rho}_{\text{sys}}'(t) \rb - S\lb \rho_{\text{sys}}'(t) \rb \right| &\leq \left| \widetilde{S}\lb \widetilde{\rho}_{\text{sys}}'(t) \rb - S\lb \widetilde{\rho}_{\text{sys}}'(t) \rb \right| + \left| S\lb \widetilde{\rho}_{\text{sys}}'(t) \rb - S\lb \rho_{\text{sys}}'(t) \rb \right| \\
        &\leq \epsilon_{\text{est}} + \epsilon_{\text{Fan}},
    \end{split}
    \end{equation}
    where $\epsilon_{\text{Fan}}$ is determined by Fannes inequality (Lemma \ref{lem:fannes}). To achieve overall error $\leq \epsilon$ it suffices to ensure that 
    $\epsilon_{\text{est}} \leq \epsilon/2$ and $\epsilon_{\text{Fan}} \leq \epsilon/2$.
    Let us now bound $\epsilon_{\text{Fan}}$ in terms of $\epsilon_U$ and $\nu$. For simplicity, let 
    \begin{equation}
        \ket{\Psi_0} := \ket{\psi_0}\ket{0}\ket{0}.
    \end{equation}
    Then we have that
    \begin{equation}
    \begin{split}
        \norm{\widetilde{U}_{L_{NVT}}' \ketbra{\Psi_0}{\Psi_0} \widetilde{U}_{L_{NVT}}'^{\dagger} - U_{L_{NVT}}' \ketbra{\Psi_0}{\Psi_0} U_{L_{NVT}}'^{\dagger}} &\leq \norm{\widetilde{U}_{L_{NVT}}' \ketbra{\Psi_0}{\Psi_0} \widetilde{U}_{L_{NVT}}'^{\dagger} - \widetilde{U}_{L_{NVT}}' \ketbra{\Psi_0}{\Psi_0} U_{L_{NVT}}'^{\dagger}} \\
        & \quad + \norm{\widetilde{U}_{L_{NVT}}' \ketbra{\Psi_0}{\Psi_0} U_{L_{NVT}}'^{\dagger} - U_{L_{NVT}}' \ketbra{\Psi_0}{\Psi_0} U_{L_{NVT}}'^{\dagger}} \\
        &\leq \norm{\widetilde{U}_{L_{NVT}}'^{\dagger} - U_{L_{NVT}}'^{\dagger}} + \norm{\widetilde{U}_{L_{NVT}}' - U_{L_{NVT}}'} \\
        &\leq 2 \epsilon_{U}.
    \end{split}
    \end{equation}
    It follows from Definition \ref{def:trace_dist} that
    \begin{equation}
        \frac{1}{2}\norm{\widetilde{U}_{L_{NVT}}' \ketbra{\Psi_0}{\Psi_0} \widetilde{U}_{L_{NVT}}'^{\dagger} - U_{L_{NVT}}' \ketbra{\Psi_0}{\Psi_0} U_{L_{NVT}}'^{\dagger}}_1 \leq \eta_{\text{pur}} \epsilon_U,
    \end{equation}
    where $\eta_{\text{pur}}$ is the dimension of $U_{L_{NVT}}'$ (or equivalently, of $\ket{\Psi_0}$). Since the trace distance is contractive under the partial trace, we obtain the following bound:
    \begin{equation}
         \mathcal{T}\lb \widetilde{\rho}_{\text{sys}}'(t), \rho_{\text{sys}}'(t) \rb \leq \eta_{\text{pur}} \epsilon_U.
    \end{equation}
    By Lemma \ref{lem:fannes} we then have that
    \begin{equation}
        \epsilon_{\text{Fan}} = \left| S\lb \widetilde{\rho}_{\text{sys}}'(t) \rb - S\lb \rho_{\text{sys}}'(t) \rb \right| \leq 2\eta_{\text{pur}} \epsilon_U \lb \log(\eta) - \log \lb \nu \rb \rb
    \end{equation}
    as long as $\mathcal{T}\lb \widetilde{\rho}_{\text{sys}}'(t), \rho_{\text{sys}}'(t) \rb \in \left[\frac{\nu}{2}, \frac{1}{2e}\right]$. If 
    \begin{equation}
        \epsilon_U \leq \frac{\epsilon}{4\eta_{\text{pur}} \log \lb \eta/\nu \rb} \leq \frac{1}{2\eta_{\text{pur}} e}
    \end{equation}
    then $\epsilon_{\text{Fan}} \leq \epsilon/2$ as desired. Note that this requires $\epsilon \leq \frac{2 \log \lb \eta/\nu \rb}{e}$. In our case we always have $\eta \geq 6$ since the phase space is at least 6-dimensional. This implies that $\frac{2 \log \lb \eta/\nu \rb}{e} \geq 1$ for any $\nu \in (0,1)$. Demanding $\epsilon \in (0,1)$ is thus a sufficiently restrictive criterion.
\end{proof}

A challenge facing this algorithm arises from its scaling with the dimension of the space. In general, it scales exponentially with the number of particles and hence we cannot compute the entropy directly. An alternative approach is to coarse-grain the position and momentum variables of the nuclei, e.g., by tracing out the $l$ least significant qubits associated with each position or momentum variable. This effectively reduces the dimension of the density matrices $\rho_{\text{sys}}(t)$ and $\rho_{\text{sys}}'(t)$ from 
\begin{equation}
    \eta = 2^{3N \lb \log g_x + \log g_{p'} \rb + \log(g_s)}
\end{equation}
to 
\begin{equation}
    \eta' = 2^{3N \lb \log{g_x} + \log{g_{p'}} - 2l \rb + \log(g_s)}
\end{equation}
and accordingly only $\widetilde{O} \lb \frac{\eta'}{\epsilon^{1.5}} \log {\lb \frac{1}{\xi} \rb} \rb$ queries to $\widetilde{U}_{L_{NVT}}'$ are required. The exact entropy can then be estimated by extrapolating the entropy in the limit where the coarse-graining tends to zero.

\subsection{Estimating the internal energy}
\label{app:internal_energy}

Next, let us discuss how to estimate the internal energy $\mathcal{U}$ of our system. 
First, note that a classical system can be described by a density matrix $\rho$ and a Hamiltonian $H$ both of which are diagonal in the computational basis. The internal energy of a classical system can thus be computed as follows:
\begin{equation}
    \mathcal{U} = \text{Tr} \lb  \rho H \rb.
\end{equation}
In our case, we can identify $\rho \equiv \rho_{\text{sys}}'(t)$ and $H \equiv H_{\text{nuc}}$. Recall from Section \ref{sec:prelim} that 
\begin{equation}
    H_{\text{nuc}} = H_{\text{kin}} + H_{\text{pot}} + H_{E_{\tel}}
\end{equation}
where
\begin{align*}
    H_{\text{kin}} &= \sum_{n,j} \sum_{\hap'_{n,j}} \sum_{\has} \frac{{p'}_{n,j}^2}{m_n (s + s_{\min})^2} \ketbra{{\hap'_{n,j}}}{{\hap'_{n,j}}} \otimes \ketbra{\has}{\has} \\
    H_{\text{pot}} &= \sum_{n' \neq n} \sum_{\hax_{n}} \sum_{\hax_{n'}} \frac{Z_n Z_{n'}}{\lb \norm{x_n - x_{n'}}^2 + \Delta^2 \rb^{1/2}}\ketbra{{\hax_n}}{{\hax_n}} \otimes \ketbra{{\hax_{n'}}}{{\hax_{n'}}} \\
    H_{E_{\tel}} &= \sum_{\{ \hax_n \}} E_{\tel} \lb \{ x_n \} \rb \ketbra{\{ \hax_n \}}{\{ \hax_n \}}.
\end{align*}
The idea then is to block-encode each of the three terms of $H_{\text{nuc}}$, use the Hadamard test to estimate the expectation value of each term individually and then add the results classically.

Note that $H_{\text{nuc}}$ is diagonal in the nuclear position and momentum basis. Since the block-encoding of $H_{\text{nuc}}$ will also be diagonal in the nuclear position and momentum basis, we technically do not need to worry about getting rid of the off-diagonal elements of $\rho_{\text{sys}}(t)$ since 
\begin{equation}
    \text{Tr} \lb \rho_{\text{sys}}(t) H_{\text{nuc}} \rb = \text{Tr} \lb \rho_{\text{sys}}'(t) H_{\text{nuc}} \rb.
\end{equation}
However, we will use $\rho_{\text{sys}}'(t)$ to be consistent with the previous discussion on estimating the Gibbs entropy.

The following three lemmas provide upper bounds on the cost of block-encoding the three terms of $H_{\text{nuc}}$.

\begin{lem}[Block-encoding of $H_{\text{kin}}$]
    There exists an $(\alpha_{\text{kin}}, a_{\text{kin}}, \epsilon)$-block-encoding of $H_{\text{kin}}$ with normalization constant
    \begin{equation}
        \alpha_{\text{kin}} \in O \lb N \frac{{p'}_{\max}^2}{m_{\min} s_{\min}^2} \rb
    \end{equation}
    and a number of ancilla qubits
    \begin{equation}
        a_{\text{kin}} \in O \lb \log \lb \frac{\alpha_{\text{kin}}}{\epsilon} \rb \rb
    \end{equation}
    that can be implemented using
    \begin{equation}
        O \lb N \log \lb \frac{ g_{p'} \alpha_{\text{kin}}}{\epsilon} \rb + \log^{\log 3} \lb \frac{\alpha_{\text{kin}}}{\epsilon} \rb \rb
    \end{equation}
    Toffoli gates.
\label{lem:H_kin}
\end{lem}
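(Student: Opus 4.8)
\textbf{Proof proposal for Lemma~\ref{lem:H_kin} (Block-encoding of $H_{\text{kin}}$).}
The plan is to follow exactly the block-encoding hierarchy strategy used in the proof of Lemma~\ref{lem:bounds_L_class_NVE}, specialized to the single diagonal operator $H_{\text{kin}}$. Recall that $H_{\text{kin}} = \sum_{n,j} \frac{1}{m_n} \lb \sum_{\hap'_{n,j}} {p'}_{n,j}^2 \ketbra{\hap'_{n,j}}{\hap'_{n,j}} \otimes \sum_{\has} \frac{1}{(s+s_{\min})^2} \ketbra{\has}{\has} \rb$, so it suffices to (i) build a block-encoding of the diagonal ``core'' operator $A_{n,j} := \sum_{\hap'_{n,j},\has} \frac{{p'}_{n,j}^2}{(s+s_{\min})^2} \ketbra{\hap'_{n,j}}{\hap'_{n,j}} \otimes \ketbra{\has}{\has}$, (ii) take the linear combination over $(n,j)$ with weights $1/m_n$ via a $\prep_m$ state-preparation unitary, and (iii) use $\swap$ registers so that the core block-encoding need only be applied once rather than $3N$ times.

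For step (i), I would use the alternating-sign trick of~\cite{Berry2014alter_sign_trick} as in the proofs above, but since the diagonal entry ${p'}_{n,j}^2/(s+s_{\min})^2$ is a ratio of a square and a square, rather than testing the target value directly I would test an integer inequality of the form $l^2 (\has + \has_{\min})^2 \le \hap'^2$ (with $\has_{\min}\in\mathbb{N}$ such that $s_{\min}=\has_{\min}h_s$), which avoids computing any fractions or square roots and only requires a constant number of quantum Karatsuba multiplications plus an inequality test. This gives an $(\alpha_{p'^2/s^2}, a_{p'^2/s^2}, \epsilon_{p'^2/s^2})$-block-encoding of $A_{n,j}$ with $\alpha_{p'^2/s^2} \in O({p'}_{\max}^2/s_{\min}^2)$ and $a_{p'^2/s^2} \in O(\log(\alpha_{p'^2/s^2}/\epsilon))$, at Toffoli cost $O((a_{p'^2/s^2})^{\log 3})$ dominated by the Karatsuba multiplications. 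For step (ii), $\prep_m$ encodes the weights $\{1/m_n\}$ with normalization $\alpha_m = \sum_{n,j} 1/m_n \le 3N/m_{\min}$ and $a_m = O(\log N)$, and by Lemma~\ref{lem:prep_error} it must be prepared to precision $O(\epsilon/(\alpha_m \sqrt{N}))$, costing $O(N \log(N/(m_{\min}\epsilon)))$ Toffolis. Combining via Lemmas~\ref{lem:product_block_encoding} and~\ref{lem:sum_block_encoding} with the $\swap$-register trick (requiring $O(N\log g_{p'})$ $\swap$ operations, hence $O(N\log g_{p'})$ Toffolis) yields the stated $\alpha_{\text{kin}} \in O(N {p'}_{\max}^2/(m_{\min} s_{\min}^2))$, $a_{\text{kin}} = a_m + a_{p'^2/s^2} + O(1) \in O(\log(\alpha_{\text{kin}}/\epsilon))$, and total Toffoli cost $O(N\log(g_{p'}\alpha_{\text{kin}}/\epsilon) + \log^{\log 3}(\alpha_{\text{kin}}/\epsilon))$ after propagating the error budget through the LCU layers exactly as in the $NVE$ proof.

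I expect the main obstacle to be the error-budget bookkeeping: one must split $\epsilon$ across the $\prep_m$ state-preparation error, the alternating-sign-trick precision (which fixes $a_{p'^2/s^2}$), and the renormalization overhead when the core block-encoding is embedded into the weighted sum, while checking that none of these individual tolerances blow up the ancilla count beyond $O(\log(\alpha_{\text{kin}}/\epsilon))$ or the Toffoli count beyond the claimed bound. This is entirely routine given the machinery already developed in Appendix~\ref{app:L_class}, but it is where the constants and logarithmic factors have to be tracked carefully. A secondary subtlety is verifying that the integer inequality $l^2(\has+\has_{\min})^2 \le \hap'^2$ correctly reproduces ${p'}_{n,j}^2/(s+s_{\min})^2$ up to the rounding error inherent in the alternating-sign trick (the parity issue noted in the $U_{p_{n,j}}$ construction), and that this rounding error is absorbed into $\epsilon$; since it only shifts the represented value by one least-significant unit, choosing $a_{p'^2/s^2}$ logarithmically large suffices.
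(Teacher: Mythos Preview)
Your approach is essentially identical to the paper's: block-encode the diagonal core $A_{n,j}$ via the alternating-sign trick with a Karatsuba-computed integer inequality, combine over $(n,j)$ using $\prep_m$ and $\swap$ registers, and read off the stated $\alpha_{\text{kin}}$, $a_{\text{kin}}$, and Toffoli bounds.

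One concrete slip to fix: the inequality you propose, $l^2(\has+\has_{\min})^2 \le {\hap'}^2$, is equivalent to $l \le |\hap'|/(\has+\has_{\min})$ and would therefore block-encode $|p'|/(s+s_{\min})$, not ${p'}^2/(s+s_{\min})^2$. Squaring $l$ is only needed when the target diagonal entry contains a square root that must be cleared (as in the Coulomb block-encoding, Eq.~\eqref{coulomb_ineq}); here there is none, so the correct test---and the one the paper uses---is $l(\has+\has_{\min})^2 \le {\hap'}^2$. With that correction your normalization $\alpha_{p'^2/s^2}\in O({p'}_{\max}^2/s_{\min}^2)$ and the rest of the argument go through unchanged. (Minor aside: Lemma~\ref{lem:product_block_encoding} is not actually needed, only Lemma~\ref{lem:sum_block_encoding}, since $A_{n,j}$ is block-encoded directly rather than as a product.)
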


\begin{proof}
    The proof of Lemma \ref{lem:H_kin} proceeds along the same lines as the proof of Lemma \ref{lem:bounds_L_class_NVE}. In particular, we use the alternating sign trick~\cite{Berry2014alter_sign_trick, Su2021first_quant_sim} to block-encode a single summand of $H_{\text{kin}}$ and then use Lemma \ref{lem:sum_block_encoding} to combine the block-encodings of all $3N$ terms. More specifically, we use the alternating sign trick to construct $U_{p_{n,j}}$ which provides an $\lb \alpha_p, a_p , \epsilon_p \rb$-block-encoding of
    \begin{equation}
        \sum_{\hap'_{n,j}} \sum_{\has} \frac{{p'}_{n,j}^2}{(s + s_{\min})^2} \ketbra{{\hap'_{n,j}}}{{\hap'_{n,j}}} \otimes \ketbra{\has}{\has}.
    \end{equation}
    Then we use the following to prepare the distribution of coefficients for the masses of each particle in the system under the assumption of three-dimensional dynamics:
    \begin{equation}
        \prep_m \ket{0} := \sum_{n=1}^{N} \sqrt{\frac{1/m_n}{\alpha_m}} \ket{n} \otimes \frac{1}{\sqrt{3}} \sum_{j=1}^3 \ket{j}, \label{eq:prepm}
    \end{equation}
    where 
    \begin{equation}
        \alpha_m = \sum_{n,j} \frac{1}{m_n} \leq \frac{3N}{m_{\min}}.
    \end{equation}
    The above definition implies that $a_m = \lceil \log{N} \rceil + \lceil \log{3} \rceil$.

    Let us now explain how to use the alternating sign trick to implement $U_{p_{n,j}}$.
    The idea is to prepare an ancilla register consisting of $a_p$ qubits in the state 
    \begin{equation}
        \prep_p \ket{0} := \sum_{l=0}^{2^{a_p}-1} \frac{1}{\sqrt{2^{a_p}}} \ket{l},
    \end{equation} 
    and then use the inequality testing circuit from Fig.~\ref{fig:inequality_circuit} to
    test the following inequality:
    \begin{equation}
        l \lb \has + \has_{\min} \rb^2 \leq {\hap'_{n,j}}{^2},
    \label{H_kin_ineq}
    \end{equation}
    where $\has_{\min} \in \mathbb{N}$ such that $s_{\min} = \has_{\min} h_s$.
    As long as $l$ satisfies the above inequality, the coefficient of $\ket{l}$ is set to +1 as is done in existing work involving LCU or qubitization for general purpose simulations~\cite{Berry2014alter_sign_trick,Su2021first_quant_sim}. For larger $l$ the coefficient of $\ket{l}$ is set to alternate between $\pm 1$.
    To test the inequality we first use $O(1)$ quantum Karatsuba multiplications~\cite{Gidney2019} to compute the left- and right-hand side of Eq.~\eqref{H_kin_ineq}. This can be done using $O \lb a_p^{\log 3}\rb$ Toffoli gates, whereas the inequality test itself requires only $O \lb a_p \rb$ Toffolis, see Lemma \ref{lem:inequal}.  We then have that $U_{p_n,j} = \prep_p^\dagger \cdot \sel_p \cdot \prep_p$ where $\sel_p$ includes the quantum Karatsuba multiplications, the inequality testing and a controlled-$Z$ gate to obtain the desired alternating sequence of $\pm 1$. Fig.~\ref{fig:alternating_sign_trick} shows a circuit diagram of the alternating sign trick for the slightly simpler case of block-encoding $\sum_{\hap_{n,j}=0}^{g_p-1} p_{n,j} \ketbra{\hap_{n,j}}{\hap_{n,j}}$.
    The number of ancilla qubits $a_p$ determines the precision $\epsilon_p$ of $U_{p_{n,j}}$. In particular,
    \begin{equation}
        \norm{(\bra{0} \otimes \mathbb{1})U_{p_{n,j}}(\ket{0}\otimes \mathbb{1}) - \frac{1}{\alpha_p} \sum_{\hap'_{n,j}} \sum_{\has} \frac{{p'}_{n,j}^2}{(s + s_{\min})^2} \ketbra{{\hap'_{n,j}}}{{\hap'_{n,j}}} \otimes \ketbra{\has}{\has}} \leq \frac{1}{2^{a_p}}.
    \end{equation}
    Note that $\alpha_p \in O \lb \frac{{p'}_{\max}^2}{s_{\min}^2} \rb$.
    We can ensure that $(\bra{0}  \otimes I)U_{p_{n,j}}(\ket{0}\otimes I)$ is an $\epsilon_p$-precise approximation to $\sum_{\hap'_{n,j}} \sum_{\has} \frac{{p'}_{n,j}^2}{(s + s_{\min})^2} \ketbra{{\hap'_{n,j}}}{{\hap'_{n,j}}} \otimes \ketbra{\has}{\has}$ by choosing $a_p \in \Theta \lb \log \lb \frac{\alpha_p}{\epsilon_p} \rb \rb$.
    
    Instead of constructing $3N$ different block-encoding for each of the $3N$ terms, we use an additional ancilla register which we call a ``$\swap$ register''~\cite{Su2021first_quant_sim}. The $\mathtt{SELECT}$ operation can then be modified to swap the appropriate (virtual) momentum variable into the $\swap$ register controlled by the $\prep_m$ register. This allows us to apply the block-encoding $U_{p_{n,j}}$ only once (to the $\swap$ register holding the appropriate momentum variable) rather than $3N$ times (to each individual momentum variable). However, we do require a total of $O \lb N \log{\lb g_{p'} \rb} \rb$ $\swap$ operations, implying $O \lb N \log{\lb g_{p'} \rb} \rb$ Toffolis.
    
    Applying Lemma \ref{lem:sum_block_encoding} to $\prep_m$ and $\left\{ U_{p_{n,j}}\right\}$ yields an $\lb \alpha_m \alpha_p, a_m + a_p, \alpha_p \epsilon_m + \alpha_m \epsilon_p \rb$-block-encoding of $H_{\text{kin}}$. This implies that 
    \begin{equation}
        \alpha_{\text{kin}} = \alpha_m \alpha_p \in O \lb \frac{{N p'}_{\max}^2}{m_{\min} s_{\min}^2} \rb.
    \end{equation}
    To achieve overall block-encoding error $\leq \epsilon$ it suffices to ensure that $\epsilon_m \leq \frac{\epsilon}{2 \alpha_p}$ and $\epsilon_p \leq \frac{\epsilon}{2 \alpha_m}$. Thus,
    \begin{equation}
        a_{\text{kin}} = a_m + a_p \in O \lb \log \lb \frac{\alpha_{\text{kin}}}{\epsilon} \rb \rb.
    \end{equation}
    It follows from Lemma \ref{lem:prep_error} that we need to prepare the state $\prep_m \ket{0}$ within error $\frac{\epsilon_{m}}{\alpha_{m} \sqrt{N}}$.
    Such a general quantum state preparation has Toffoli cost in  
    \begin{equation}
        O \lb N \log \lb \frac{\alpha_{m} \sqrt{N}}{\epsilon_m} \rb \rb \subseteq O \lb N \log{\lb \frac{\alpha_{\text{kin}}}{\epsilon} \rb} \rb,
    \label{eq:stateprep1}
    \end{equation}
    where we have used the assumption that we choose the uncertainty to saturate $\epsilon_m = \epsilon/2\alpha_p$. We require another 
    \begin{equation}
        O \lb a_p^{\log 3} \rb \subseteq O \lb a_{\text{kin}}^{\log 3} \rb \subseteq O \lb  \log^{\log 3} \lb \frac{\alpha_{\text{kin}}}{\epsilon} \rb \rb
    \end{equation}
    Toffolis for the quantum Karatsuba multiplications~\cite{Gidney2019} used in the comparison test given in~\eqref{H_kin_ineq}.  Addition can be performed in linear time and thus the cost of performing the entire comparison test is given by the cost of multiplication.  This cost is additive to the cost of the state preparation given in~\eqref{eq:stateprep1}.  Combining all results yields the desired complexity expressions.
\end{proof}

\begin{lem}[Block-encoding of $H_{\text{pot}}$]
    There exists an $(\alpha_{\text{pot}}, a_{\text{pot}}, \epsilon)$-block-encoding of $H_{\text{pot}}$ with normalization constant
    \begin{equation}
        \alpha_{\text{pot}} \in O \lb N^2 \frac{Z_{\max}^2}{\Delta} \rb
    \end{equation}
    and a number of ancilla qubits
    \begin{equation}
        a_{\text{pot}} \in O \lb \log \lb \frac{\alpha_{\text{pot}}}{\epsilon} \rb \rb.
    \end{equation}
    This block-encoding can be implemented using
    \begin{equation}
        O \lb N \log \lb \frac{g_x \alpha_{\text{pot}}}{\epsilon} \rb + \log^{\log 3} \lb \frac{\alpha_{\text{pot}}}{\epsilon} \rb \rb
    \end{equation}
    Toffoli gates where $g_x$ is the number of discrete positions considered in the classical part of the Liouvillian.
\label{lem:H_pot}
\end{lem}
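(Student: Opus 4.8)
The plan is to mirror the construction used for the potential part of the classical $NVE$ Liouvillian in the proof of Lemma~\ref{lem:bounds_L_class_NVE}, but in a simpler setting: $H_{\text{pot}}$ carries no momentum-derivative factor $D_{p_{n,j}}$ and no index $j$, so it is just a (block-)diagonal operator on the two position registers $\ket{\hax_n}$, $\ket{\hax_{n'}}$. First I would build a low-level block-encoding $U_V$ of the single diagonal matrix
\begin{equation}
    \sum_{\hax_{n}} \sum_{\hax_{n'}} \frac{1}{\lb \norm{x_n - x_{n'}}^2 + \Delta^2 \rb^{1/2}} \ketbra{\hax_n}{\hax_n} \otimes \ketbra{\hax_{n'}}{\hax_{n'}}
\end{equation}
using the alternating-sign trick of~\cite{Berry2014alter_sign_trick, Su2021first_quant_sim} exactly as in Figs.~\ref{fig:inequality_circuit}--\ref{fig:alternating_sign_trick}: prepare a uniform superposition $\prep_V\ket{0}=2^{-a_V/2}\sum_l\ket{l}$ over an $a_V$-qubit ancilla, perform an inequality test equivalent to $l \le 2^{a_V}\lb \norm{\hax_n - \hax_{n'}}^2 + \hat{\Delta}^2 \rb^{-1/2}$ (with $\hat{\Delta}\in\mathbb{N}$, $\Delta=\hat{\Delta}h_x$), and apply a controlled-$Z$ so that the coefficients of the failing branches alternate in sign and cancel on postselection. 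To avoid square roots the test is carried out in squared form, $l^2\lb \norm{\hax_n - \hax_{n'}}^2 + \hat{\Delta}^2 \rb \le 2^{2a_V}$, whose left-hand side is computed with $O(1)$ quantum Karatsuba multiplications at Toffoli cost $O\lb a_V^{\log 3}\rb$, while the comparison itself costs $O\lb a_V\rb$ Toffolis (Lemma~\ref{lem:inequal}). Since the regularized value is at most $O(1/\Delta)$, this yields an $\lb \alpha_V, a_V, \epsilon_V \rb$-block-encoding with $\alpha_V \in O(1/\Delta)$ and $a_V \in \Theta\lb \log(\alpha_V/\epsilon_V) \rb$.

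Next I would combine these single-term block-encodings over all ordered pairs $n'\neq n$ using the state-preparation unitary
\begin{equation}
    \prep_Z \ket{0} := \sum_{n=1}^{N}\sqrt{\frac{Z_n}{Z}}\,\ket{n} \otimes \sum_{n'=1}^{N}\sqrt{\frac{Z_{n'}}{Z}}\,\ket{n'},
\end{equation}
with $Z=\sum_{n,n'}Z_nZ_{n'}\le N^2 Z_{\text{max}}^2$ and $a_Z = 2\lceil\log N\rceil$; since this is a product state it can be prepared in $O\lb N\log(NZ_{\text{max}}/\epsilon_Z)\rb$ Toffolis. As in Lemma~\ref{lem:bounds_L_class_NVE}, I would route the two relevant position registers $\ket{\hax_n}$, $\ket{\hax_{n'}}$ into $O(1)$ $\swap$ registers controlled by the $\prep_Z$ register, so that $U_V$ is applied only once rather than $O(N^2)$ times, at the cost of $O\lb N\log g_x\rb$ $\swap$ operations and hence $O\lb N\log g_x\rb$ Toffolis. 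The diagonal terms $n=n'$ are excluded by an equality test on the two index registers, storing the flag in an ancilla so the corresponding amplitude vanishes, at Toffoli cost $O(\log N)$. Applying Lemma~\ref{lem:sum_block_encoding} to $\lb \prep_Z, \{U_V\}\rb$ produces an $\lb \alpha_Z\alpha_V,\ a_Z+a_V,\ \alpha_Z\epsilon_V+\alpha_V\epsilon_Z \rb$-block-encoding of $H_{\text{pot}}$, so that $\alpha_{\text{pot}} = \alpha_Z\alpha_V \in O\lb N^2 Z_{\text{max}}^2/\Delta \rb$.

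Finally, the error budget: choosing $\epsilon_V \le \epsilon/(2\alpha_Z)$ and $\epsilon_Z \le \epsilon/(2\alpha_V)$ gives overall block-encoding error $\le\epsilon$, and Lemma~\ref{lem:prep_error} translates $\epsilon_Z$ into a $\prep_Z$ state-preparation error of $\epsilon_Z/(\alpha_Z\cdot O(N))$, which does not change the Toffoli count asymptotically. Substituting these choices gives $a_V\in O\lb\log(\alpha_{\text{pot}}/\epsilon)\rb$, $a_Z\in O(\log N)$, hence $a_{\text{pot}}=a_Z+a_V\in O\lb\log(\alpha_{\text{pot}}/\epsilon)\rb$, and a total Toffoli count of $O\lb N\log(g_x\alpha_{\text{pot}}/\epsilon)\rb$ from the $\swap$s and $\prep_Z$ plus $O\lb\log^{\log 3}(\alpha_{\text{pot}}/\epsilon)\rb$ from the Karatsuba multiplications, with the $O(\log N)$ equality test absorbed. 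I expect the only genuinely delicate point — everything else being routine bookkeeping inherited from Lemma~\ref{lem:bounds_L_class_NVE} — to be verifying that the squared inequality correctly implements the alternating-sign trick for the $(\cdot)^{-1/2}$ profile over the two-register tensor structure, and that the blow-up factor chosen for $U_V$ is consistent with the stated $\alpha_{\text{pot}}$, $a_{\text{pot}}$, and Toffoli bounds.
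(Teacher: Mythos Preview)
The proposal is correct and follows essentially the same construction as the paper's proof: the alternating-sign trick for the single-pair diagonal $1/\sqrt{\|\cdot\|^2+\Delta^2}$ via a squared inequality test with Karatsuba multiplications, combined over pairs via the product-state $\prep_Z$ and $\swap$ registers, with the same error budget $\epsilon_V\le\epsilon/(2\alpha_Z)$, $\epsilon_Z\le\epsilon/(2\alpha_V)$. Your omission of the $j$ register and your explicit $n=n'$ exclusion test are in fact minor clean-ups relative to the paper's version, which carries some copy-paste artifacts from the Liouvillian proof (a spurious $j$ index in $\prep_Z$ and an unnecessary sign test).
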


\begin{proof}
    We use the same strategy as in the proof of Lemma \ref{lem:H_kin}. In particular, we use the alternating sign trick~\cite{Berry2014alter_sign_trick} to block-encode a single summand of $H_{\text{pot}}$ and then use Lemma \ref{lem:sum_block_encoding} to combine the block-encodings of all $O \lb N^2 \rb$ terms. More specifically, we use the alternating sign trick to construct 
    $U_{V_{n,n',j}}$, an $\lb \alpha_V, a_V, \epsilon_V \rb$-block-encoding of
    \begin{equation}
        \sum_{\hax_{n}} \sum_{\hax_{n'}} \frac{1}{\lb \norm{x_n - x_{n'}}^2 + \Delta^2 \rb^{1/2}}\ketbra{{\hax_n}}{{\hax_n}} \otimes \ketbra{{\hax_{n'}}}{{\hax_{n'}}}.
    \end{equation}
    Then we use the following sub-prepare to attach the atomic numbers $\{Z_n\}$:
    \begin{equation}
        \prep_Z \ket{0} := \frac{1}{\sqrt{\alpha_Z}} \sum_{n=1}^{N} \sqrt{Z_N} \ket{n} \otimes \sum_{n'=1}^{N} \sqrt{Z_{n'}} \otimes \sum_{j=1}^3 \ket{j},
    \end{equation}
    where $\alpha_Z = \sum_{n,n',j} Z_n Z_{n'} \leq 3 N^2 Z_{\max}^2$. The above definition implies that 
    \begin{equation}
        a_Z = 2\lceil\log{N}\rceil + \lceil \log{3} \rceil.
    \end{equation} 
    Importantly the resultant state is a product state, meaning that $\sum_{n=1}^{N} \sqrt{Z_n}\ket{n}$, $\sum_{n'=1}^{N} \sqrt{Z_{n'}} \ket{n'}$ and $\sum_{j=1}^3 \ket{j}$ can be prepared individually.
  
    Let us now explain the construction of $U_{V_{n,n',j}}$. 
    Using $\prep_V \ket{0} := \sum_{l=0}^{2^{a_V}-1} \frac{1}{\sqrt{2^{a_V}}} \ket{l}$, we test the following inequality:
    \begin{equation}
        l^2 \lb \norm{\hax_n - \hax_{n'}}^2 + \overline{\Delta}^2 \rb \le 1,
    \label{H_coulomb_ineq}
    \end{equation}
    where $\overline{\Delta} \in \mathbb{N}$ such that $\Delta = \overline{\Delta} h_x$.
    As long as $l$ satisfies the above inequality, the coefficient of $\ket{l}$ is set to +1. For larger $l$ the coefficient of $\ket{l}$ is set to alternate between $\pm 1$.
    To determine the correct sign we also need to test $x_{n,j} \le x_{n',j}$ which has Toffoli complexity in $O \lb \log{(g_x)} \rb$.
    The advantage of testing Eq.~({\ref{coulomb_ineq}}) rather than directly
    \begin{equation}
        l \le \frac{1}{\lb \norm{\hax_n - \hax_{n'}}^2 + \overline{\Delta}^2 \rb^{1/2}}
    \end{equation}
    is that we do not have to calculate fractions containing square roots.
    However, the inequality test in Eq.~({\ref{H_coulomb_ineq}}) does require us first to compute the left- and right-hand sides of the inequality using $O(1)$ quantum Karatsuba multiplications. This can be done using $O\lb \lb a_V \rb^{\log 3} \rb$ Toffoli gates~\cite{Gidney2019}, whereas the inequality test itself requires only $O\lb a_V \rb$ Toffolis~\cite{Gidney2019}.
    
    The number of ancilla qubits $a_V$ determines the precision $\epsilon_V$ of $U_{V_{n,n',j}}$. In particular,
    \begin{equation}
        \norm{(\bra{0} \otimes \mathbb{1})U_{V_{n,n',j}}(\ket{0}\otimes \mathbb{1}) - \frac{1}{\alpha_V} \sum_{\hax_{n}} \sum_{\hax_{n'}} \frac{1}{\lb \norm{x_n - x_{n'}}^2 + \Delta^2 \rb^{1/2}}\ketbra{{\hax_n}}{{\hax_n}} \otimes \ketbra{{\hax_{n'}}}{{\hax_{n'}}}} \leq \frac{1}{2^{a_V}}.
    \label{eq:vBlockEnc}\end{equation}
    Note that $\alpha_V \in O \lb \frac{1}{\Delta} \rb$.
    We can ensure that $U_{V_{n,n',j}}$ is an $\epsilon_V$-precise approximation by choosing 
    \begin{align}
        a_V \in \Theta \lb \log \lb \frac{\alpha_V}{\epsilon_V} \rb \rb.\label{eq:av_bound}
    \end{align}
    
    Instead of constructing $O \lb N^2 \rb$ different block-encoding for each of the $O \lb N^2 \rb$ terms of $H_{\text{pot}}$, we use 6 $\swap$ registers for the 6 nuclear position variables appearing in $\frac{1}{\lb \norm{x_n - x_{n'}}^2 + \Delta^2 \rb^{1/2}}$.  Here the factor of $6$ occurs because we assume that we are interested in dynamics in three spatial dimensions.
    Controlled by the $\prep_Z$ register we swap the appropriate position variables into the $\swap$ registers. This allows us to apply the block-encoding $U_{V_{n,n',j}}$ only once (to the $\swap$ registers holding the appropriate position variables) rather than $O \lb N^2 \rb$ times (for each individual term). However, we do require a total of $O \lb N \log{\lb g_{x} \rb} \rb$ $\swap$ operations, resulting in $O \lb N \log{\lb g_{x} \rb} \rb$ Toffolis.
    
    Applying Lemma \ref{lem:sum_block_encoding} to $\prep_Z$ and $\left\{ U_{V_{n,n',j}} \right\}$ yields an $\lb \alpha_Z \alpha_V, a_Z + a_V, \alpha_V \epsilon_Z + \alpha_Z \epsilon_V \rb$-block-encoding of $H_{\text{pot}}$. This implies that 
    \begin{equation}
        \alpha_{\text{pot}} = \alpha_Z \alpha_V \in O \lb \frac{{N^2 Z_{\max}^2}}{\Delta} \rb.
    \end{equation}
    To achieve overall block-encoding error $\leq \epsilon$ it suffices to ensure that $\epsilon_Z \leq \frac{\epsilon}{2 \alpha_V}$ and $\epsilon_V \leq \frac{\epsilon}{2 \alpha_Z}$. Thus the total number of qubits required for the block encoding of the potential operator is 
    \begin{equation}
        a_{\text{pot}} = a_Z + a_V \in O \lb \log \lb \frac{\alpha_{\text{pot}}}{\epsilon} \rb \rb,
    \end{equation}
    where the latter asymptotic bound follows from substituting into~\eqref{eq:av_bound}.
    It follows from Lemma \ref{lem:prep_error} that we need to prepare the state $\prep_Z \ket{0}$ within error $\frac{\epsilon_Z}{N \alpha_Z}$.
    Preparing such a product state has Toffoli cost in 
    \begin{equation}
        O \lb N \log\lb \frac{\alpha_Z N}{\epsilon_{Z}} \rb \rb \subseteq O \lb N \log \lb \frac{\alpha_{\text{pot}}}{\epsilon} \rb \rb.
    \end{equation}
    We require another 
    \begin{equation}
        O \lb a_Z^{\log 3} \rb \subseteq O \lb a_{\text{pot}}^{\log 3} \rb \subseteq O \lb  \log^{\log 3} \lb \frac{\alpha_{\text{pot}}}{\epsilon} \rb  \rb
    \end{equation}
    Toffolis for the quantum Karatsuba multiplications. Combining all results yields the desired complexity expressions.
\end{proof}

\begin{lem}[Block-encoding of $H_{E_{\tel}}$]
    There exists an $\lb\alpha_{E_{\tel}}, a_{E_{\tel}}, \epsilon \rb$-block-encoding of $H_{E_{\tel}}$ with normalization constant
    \begin{equation}
        \alpha_{E_{\tel}} \in O \lb \lambda \rb
    \end{equation}
    and a number of ancilla qubits
    \begin{equation}
        a_{E_{\tel}} \in O \lb \log{\lb \frac{\lambda}{\epsilon} \rb} \rb.
    \end{equation}
    This block-encoding can be implemented using
    \begin{equation}
        \widetilde{O} \lb \lambda \lb \frac{N + \tn + \log \lb B \rb}{\epsilon} + \frac{N + \tn + \log \lb B \rb + \log^2 \lb 1/\epsilon \rb}{\gamma \delta} \rb \rb 
    \end{equation}
    Toffoli gates and
    \begin{equation}
        O \lb \frac{1}{\delta}  \rb
    \end{equation}
    queries to the initial electronic state preparation oracle $U_I$ from Definition \ref{def:state_prep}.
\label{lem:H_E_el}
\end{lem}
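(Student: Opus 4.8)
The operator $H_{E_{\tel}}$ is diagonal in the nuclear position basis, with entries equal to the electronic ground-state energies $E_{\tel}(\{x_n\})$, for which no analytic expression is available; the plan is to produce these entries \emph{coherently}. Controlled on the nuclear register, I would prepare the electronic ground state, estimate its energy through the block-encoding of $H_{\tel}$ from Lemma~\ref{lem:block-encode_Hel}, write the estimate into an ancilla register, and then block-encode the resulting diagonal operator. Let $U_{H_{\tel}\{x_n\}}$ be a Hermitian $(\lambda,a_{\tel},\epsilon'')$-block-encoding of $H_{\tel}(\{x_n\})$ as in Lemma~\ref{lem:block-encode_Hel}, implemented with $O(N+\tn+\log(B/\epsilon''))$ Toffolis, and write $\widetilde{H}_{\tel}(\{x_n\})$ for the operator it exactly block-encodes, with ground state $\ket{\widetilde{\psi}_0\{x_n\}}$ and ground energy $\widetilde{E}_0(\{x_n\})$.

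First I would use Lemma~\ref{lem:ground_state_prep}, together with the oracle $U_I$ of Definition~\ref{def:state_prep}, to prepare (controlled on $\ket{\{\hax_n\}}$) an approximate ground state $\ket{\widetilde{\phi}_0\{x_n\}}$ with $|\braket{\widetilde{\psi}_0\{x_n\}}{\widetilde{\phi}_0\{x_n\}}|\ge 1-\epsilon_{\text{prep}}$, costing $O\big(\tfrac{\lambda}{\gamma\delta}\log\tfrac{1}{\delta\epsilon_{\text{prep}}}\big)$ queries to $U_{H_{\tel}\{x_n\}}$ and $O(1/\delta)$ queries to $U_I$. Next, controlled on $\ket{\{\hax_n\}}$, I would run quantum phase estimation on $\widetilde{H}_{\tel}(\{x_n\})/\lambda$ with $\ket{\widetilde{\phi}_0\{x_n\}}$ as input—using Hamiltonian simulation from the block-encoding (Lemma~\ref{lem:rob_block-Ham_sim})—to write into an ancilla register an estimate $\hat{E}(\{x_n\})$ of $\widetilde{E}_0(\{x_n\})$ of additive precision $O(\epsilon)$, at a cost of $O(\lambda/\epsilon)$ queries to $U_{H_{\tel}\{x_n\}}$. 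Since the values $\hat{E}(\{x_n\})$ have magnitude $O(\lambda)$ and resolution $O(\epsilon)$, I would then block-encode $\sum_{\{\hax_n\}}\hat{E}(\{x_n\})\ketbra{\{\hax_n\}}{\{\hax_n\}}$ with normalization $\alpha_{E_{\tel}}=O(\lambda)$ using the alternating-sign trick built on the inequality test of Lemma~\ref{lem:inequal} (cost $\widetilde{O}(\log(\lambda/\epsilon))$ Toffolis, $a_{E_{\tel}}=O(\log(\lambda/\epsilon))$ ancillas), and finally uncompute the phase estimation and the ground-state preparation. The error of the final block-encoding splits, by the triangle inequality, into the ground-state infidelity—which contributes $O(\lambda\epsilon_{\text{prep}})$, because the wrong branch carries energy of magnitude at most $\lambda$ while the cross term vanishes thanks to $\widetilde{H}_{\tel}\ket{\widetilde{\psi}_0}=\widetilde{E}_0\ket{\widetilde{\psi}_0}$—the block-encoding error $\epsilon''$ of $U_{H_{\tel}}$ together with the eigenvalue-perturbation bound $|\widetilde{E}_0-E_{\tel}|\le\epsilon''$ (Weyl's inequality, cf.~\cite{Horn1985}), and the phase-estimation precision; choosing $\epsilon_{\text{prep}}=O(\epsilon/\lambda)$ and $\epsilon''=O(\epsilon/\lambda)$ and phase-estimation precision $O(\epsilon)$ makes the total at most $\epsilon$.

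It then remains to collect the costs. The $O(\lambda/\epsilon)$ queries to $U_{H_{\tel}}$ for phase estimation and the $\widetilde{O}(\lambda/(\gamma\delta))$ queries for ground-state preparation, each carrying $O(N+\tn+\log(B/\epsilon''))=\widetilde{O}(N+\tn+\log B)$ Toffolis, give the two terms $\lambda(N+\tn+\log B)/\epsilon$ and $\lambda(N+\tn+\log B)/(\gamma\delta)$; the $\log^2(1/\epsilon)$ piece in the second term is the product of the $\log(1/\epsilon_{\text{prep}})$ factor of Lemma~\ref{lem:ground_state_prep} with the $\log(1/\epsilon'')$ per-query cost, and the alternating-sign block-encoding of the energy register contributes only polylogarithmically. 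Hiding logarithmic factors reproduces the stated Toffoli bound, while the $O(1/\delta)$ queries to $U_I$ are precisely those inherited from Lemma~\ref{lem:ground_state_prep}.

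The main obstacle is the error propagation: controlling how the imperfection of the \emph{controlled} ground-state preparation and of the \emph{superposed} phase estimation feeds into the block-encoded diagonal operator. As in the proof of Lemma~\ref{lem:complexity_eL_el}, the key observation is that because the electronic and energy registers are uncomputed at the end, the residual error matrix is diagonal in the nuclear basis, so its spectral norm equals the worst per-configuration error and can be analyzed one nuclear configuration at a time. The subtlety specific to this lemma is that phase estimation entangles its output register across nuclear configurations and only approximately uncomputes; this contributes an additional error term governed by the phase-estimation failure probability, which is suppressed below $\epsilon$ by standard median amplification at polylogarithmic overhead.
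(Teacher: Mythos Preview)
Your proposal is correct and follows the same construction as the paper: controlled ground-state preparation via Lemma~\ref{lem:ground_state_prep}, phase estimation to load the ground-state energy into an ancilla, the alternating-sign trick on that ancilla to block-encode the resulting diagonal, and uncomputation, with the per-configuration error analysis justified exactly as you say by diagonality in the nuclear basis. The paper differs only in running QPE directly on the qubitization walk operator $Q=(2\ketbra{0}{0}-\mathbb{1})\prep_{\tel}^\dagger\sel_{\tel}\prep_{\tel}$ rather than via Hamiltonian simulation, and in bounding the ground-state-preparation and QPE-failure contributions by $O\!\left(\alpha_{E_{\tel}}\sqrt{\epsilon_{\text{prep}}}\right)$ and $O\!\left(\alpha_{E_{\tel}}\sqrt{\xi_{QPE}}\right)$ through a layer-by-layer triangle inequality (hence taking $\epsilon_{\text{prep}},\xi_{QPE}\in O((\epsilon/\lambda)^2)$) instead of your tighter cross-term cancellation; either bookkeeping lands on the stated Toffoli bound.
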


\begin{proof}
    The main strategy is to prepare the electronic ground states in superposition over all nuclear configurations and then use qubitization together with quantum phase estimation (QPE) to obtain estimates of the ground state energies of $H_{\tel}$ in superposition over all nuclear configurations. Then we use the alternating sign trick to construct a block-encoding of $H_{E_{\tel}}$. Let us now discuss the different subroutines and their errors in more detail.

    First, we require access to a block-encoding of the electronic Hamiltonian $H_{\tel}$. Ref.~\cite{Su2021first_quant_sim} provides explicit $\prep_{\tel}$ and $\sel_{\tel}$ subroutines for obtaining $U_{H_{\tel}}$, a $\lb \lambda, a_{\tel}, \epsilon_{\tel} \rb$-block-encoding of $H_{\tel}$. Let $\widetilde{H}_{\tel}$ denote the block-encoded operator satisfying
    \begin{equation}
        \norm{\widetilde{H}_{\tel} - H_{\tel}} \leq \epsilon_{\tel}.
    \end{equation}
    Since $H_{\tel}$ and $\widetilde{H}_{\tel}$ are both Hermitian, we can use eigenvalue perturbation theory~\cite{Horn1985} to conclude that $\left| \widetilde{E}_{\tel} - E_{\tel} \right| \leq \epsilon_{\tel}$ for all $\{ x_n \}$ where $\widetilde{E}_{\tel}$ is the ground state energy of the block-encoded operator $\widetilde{H}_{\tel}$. It then holds that 
    \begin{equation}
        \norm{\widetilde{H}_{E_\tel} - H_{E_{\tel}}} \leq \epsilon_{\tel}
    \end{equation}
    where 
    \begin{equation}
        \widetilde{H}_{E_\tel} := \sum_{\{ \hax_n \}} \widetilde{E}_{\tel} \lb \{ x_n \} \rb \ketbra{\{ \hax_n \}}{\{ \hax_n \}}.
    \end{equation}

    Next, we explain the electronic ground state preparation. 
    Let $W$ denote the unitary that prepares an approximate ground state of $\widetilde{H}_{\tel}$ for fixed nuclear positions according to Lemma \ref{lem:ground_state_prep}, i.e.
    \begin{equation}
        W\ket{\{\hax_n\}}\ket{0} = \ket{\{\hax_n\}}\ket{\widetilde{\phi}_0 \lb \{x_{n,j}\} \rb}
    \end{equation}
    with
    \begin{equation}
        |\braket{\widetilde{\psi}_0 \lb \{x_{n,j}\} \rb}{\widetilde{\phi}_0 \lb \{x_{n,j}\} \rb}| \ge 1- \epsilon_{\text{prep}}.
    \end{equation}
    Note that we can view $U_{H_{\tel}}$ as an exact block-encoding of $\widetilde{H}_{\tel}$, which allows us to use Lemma \ref{lem:ground_state_prep} directly without further error propagation. This means that we can prepare $\ket{\widetilde{\phi}_0 \lb \{x_{n,j}\} \rb}$ using
    \begin{equation}
        O \lb \frac{\lambda}{\gamma \delta} \log{\lb \frac{1}{\delta \epsilon_{\text{prep}}} \rb} \rb
    \end{equation}
    queries to $U_{H_{\tel}}$ and
    \begin{equation}
        O \lb \frac{1}{\delta} \rb
    \end{equation}
    queries to $U_I$.
    In the following discussion we will mostly refrain from writing out the $\lb \{x_{n,j}\} \rb$-dependence explicitly unless needed for clarity.
    Now, it holds that 
    \begin{equation}
        \ket{\widetilde{\phi}_0} = e^{i\alpha} \lb 1- \epsilon'_{\text{prep}} \rb \ket{\widetilde{\psi}_0} + \beta \ket{\widetilde{\psi}_0^\perp}
    \end{equation}
    for some angle $\alpha \in [0, 2\pi)$, $0 \le \epsilon'_{\text{prep}} \le \epsilon_{\text{prep}}$ and $|\beta|^2 = 2\epsilon'_{\text{prep}} - \lb \epsilon'_{\text{prep}} \rb^2 \le 2 \epsilon'_{\text{prep}}$. Letting 
    \begin{equation}
        \ket{\psi_0'} := e^{i\alpha}\ket{\widetilde{\psi}_0}
    \end{equation}
    we thus have that
    \begin{equation}
        \norm{W\ket{\{\hax_n \}}\ket{0} - \ket{\{\hax_n \}}\ket{\psi_0'}} = 
        \norm{\ket{\{\hax_n\}}\ket{\widetilde{\phi}_0} - \ket{\{\hax_n\}}\ket{\psi_0'}} \le \sqrt{2\epsilon_{\text{prep}}}.
    \end{equation}
    
    Next, we apply QPE with the following qubitization operator to the electronic register holding the electronic ground states:
    \begin{equation}
        Q := \lb 2 \ketbra{0}{0}  - \mathbb{1} \rb \cdot \prep_{\tel}^\dagger \cdot \sel_{\tel} \cdot \prep_{\tel}.
    \end{equation}
    The work of~\cite{Babbush2019} shows that $Q$ has eigenvalues $e^{\pm i \cos^{-1}{\lb \widetilde{E}_{k}/\lambda \rb}}$. QPE allows us to obtain the state $\ket{\psi_{\widetilde{E}'_{\tel}}}$, which encodes an $\epsilon_{QPE}$-precise estimate $\widetilde{E}'_{\tel}$ of the ground state energy $\widetilde{E}_{\tel}$ of $\widetilde{H}_{\tel}$ for fixed nuclear positions, with success probability $\geq 1 - \xi_{QPE}$ using
    \begin{equation}
        O \lb \frac{\lambda}{\epsilon_{QPE}} \log \lb \frac{1}{\xi_{QPE}} \rb \rb
    \end{equation}
    queries to $Q$. In other words, 
    \begin{equation}
        \ket{\psi_{\widetilde{E}'_{\tel}}} = \sqrt{1 - \xi_{QPE}'}\ket{\widetilde{E}'_{\tel}} + \sqrt{\xi_{QPE}'} \ket{\widetilde{E}_{\tel}^\perp}
    \end{equation}
    for some $0 \leq \xi_{QPE}' \leq \xi_{QPE}$.
    We refer to the corresponding unitary that prepares $\ket{\psi_{\widetilde{E}'_{\tel}}}$ as $U_Q$, i.e. 
    \begin{equation}
        U_Q \ket{\{\hax_n\}}\ket{\psi'_0}\ket{0} = \sqrt{1 - \xi_{QPE}'}\ket{\{\hax_n\}}\ket{\psi'_0}\ket{\widetilde{E}'_{\tel}} +  \sqrt{\xi_{QPE}'} \ket{\{\hax_n\}}\ket{\psi'_0}\ket{\widetilde{E}_{\tel}^\perp}.
    \end{equation}

    Lastly, we apply the alternating sign trick, which is explained in detail in the proof of Lemma \ref{lem:bounds_L_class_NVE}, to the $\ket{\psi_{\widetilde{E}'_{\tel}}}$ register to obtain $U_{\text{alt}}$, an $\epsilon_{\text{alt}}$-precise block-encoding of 
    \begin{equation}
        H_{\widetilde{E}'_{\tel}} := \sum_{\{ \hax_n \}} \widetilde{E}'_{\tel} \lb \{ x_n \} \rb \ketbra{\{ \hax_n \}}{\{ \hax_n \}}.
    \end{equation}
    The overall block-encoding error $\epsilon_{E_{\tel}}$ of $H_{E_{\tel}}$ is then given by
    \begin{equation}
    \begin{split}
        \epsilon_{E_{\tel}} := \norm{\alpha_{E_{\tel}} \lb \bra{0} \otimes  \mathbb{1} \otimes \bra{0} \otimes \bra{0} \rb W^{-1} U_Q^{-1} U_{\text{alt}} U_Q W \lb \ket{0} \otimes \mathbb{1} \otimes \ket{0} \otimes \ket{0} \rb - H_{E_\tel}},
    \end{split}
    \end{equation}
    where the first register in the expression $\lb \ket{0} \otimes \mathbb{1} \otimes \ket{0} \otimes \ket{0} \rb$ consists of the block-encoding ancilla qubits for the alternating sign trick, the second register is the nuclear position and momentum register, the third register is the electronic register and the last register is the phase estimation register.
    Note that the above definition implies that the electronic register as well as the phase estimation register are uncomputed and projected out to the $\ket{0}$ state at the end of the simulation. In other words, the error $ \epsilon_{E_{\tel}}$ is only measured within the Hilbert space of the nuclear position and momentum registers. Importantly, the error matrix 
    \begin{equation}
        \alpha_{E_{\tel}} \lb \bra{0} \otimes  \mathbb{1} \otimes \bra{0} \otimes \bra{0} \rb W^{-1} U_Q^{-1} U_{\text{alt}} U_Q W \lb \ket{0} \otimes \mathbb{1} \otimes \ket{0} \otimes \ket{0} \rb - H_{E_\tel}
    \end{equation}
    is diagonal in the nuclear position and momentum basis since $W$, $\widetilde{U}_Q$, $U_{\text{alt}}$ and $H_{E_\tel}$ are all diagonal in the nuclear position and momentum basis. Hence, $\epsilon_{E_{\tel}}$ is simply the largest value on the diagonal of $\mathcal{E}_{E_{\tel}}$. This allows us to consider the block-encoding error for each nuclear computational basis state separately. It also implies that $\alpha_{E_{\tel}} \in O \lb \lambda \rb$.
    Suppressing the nuclear momentum register we then obtain the following:
    \begin{equation}
    \begin{split}
        \epsilon_{E_{\tel}} &= \max_{\{\hax_n\}} \norm{\alpha_{E_{\tel}} \lb \bra{0} \otimes  \mathbb{1} \otimes \bra{0} \otimes \bra{0} \rb W^{-1} U_Q^{-1} U_{\text{alt}} U_Q W \lb \ket{0} \otimes \mathbb{1} \otimes \ket{0} \otimes \ket{0} \rb \ket{\{\hax_n\}} - H_{E_\tel} \ket{\{\hax_n\}}} \\
        &\leq \max_{\{\hax_n\}} \norm{\alpha_{E_{\tel}} \lb \bra{0} \otimes  \mathbb{1} \otimes \mathbb{1} \otimes \mathbb{1} \rb W^{-1} U_Q^{-1} U_{\text{alt}} U_Q W \lb \ket{0} \otimes  \mathbb{1} \otimes \mathbb{1} \otimes \mathbb{1} \rb \ket{\{\hax_n\}} \ket{0}\ket{0} - H_{E_\tel} \otimes \mathbb{1} \otimes \mathbb{1} \ket{\{\hax_n\}}\ket{0}\ket{0}} \\
        &\leq \max_{\{\hax_n\}} \norm{\alpha_{E_{\tel}} W^{-1} U_Q^{-1} \lb \bra{0} \otimes  \mathbb{1} \otimes \mathbb{1} \otimes \mathbb{1} \rb U_{\text{alt}} \lb \ket{0} \otimes  \mathbb{1} \otimes \mathbb{1} \otimes \mathbb{1} \rb U_Q W  \ket{\{\hax_n\}} \ket{0}\ket{0} - H_{E_\tel} \otimes \mathbb{1} \otimes \mathbb{1} \ket{\{\hax_n\}}\ket{0}\ket{0}}, 
    \end{split}
    \end{equation}
    where we used the fact that $\ketbra{0}{0} \otimes \mathbb{1} \otimes \mathbb{1} \otimes \mathbb{1}$ commutes with $W$ and $U_Q$ since $W$ and $U_Q$ act trivially on the first register.
    In the following discussion, we will drop the ``$\otimes \, \mathbb{1}$'' for ease of notation.
    The general strategy now is to apply the triangle inequality repeatedly to ``peel off'' the errors stemming from different subroutines layer by layer.
    First, we will isolate the error associated with $W$:
    \begin{equation}
    \begin{split}
        \epsilon_{E_{\tel}} &\leq \max_{\{\hax_n\}} \norm{\alpha_{E_{\tel}} W^{-1} U_Q^{-1} \lb \bra{0} U_{\text{alt}} \ket{0} \rb U_Q W \ket{\{\hax_n\}} \ket{0}\ket{0} - \alpha_{E_{\tel}} W^{-1} U_Q^{-1} \lb \bra{0} U_{\text{alt}} \ket{0} \rb U_Q \ket{\{\hax_n\}} \ket{\psi'_0}\ket{0}} \\
        &\quad + \max_{\{\hax_n\}} \norm{\alpha_{E_{\tel}} W^{-1} U_Q^{-1} \lb \bra{0} U_{\text{alt}} \ket{0} \rb U_Q \ket{\{\hax_n\}} \ket{\psi'_0}\ket{0} - H_{E_\tel} \ket{\{\hax_n\}}\ket{0}\ket{0}}.
    \label{epsilon_E_ineq}
    \end{split}
    \end{equation}
    By definition, the spectral norm is subordinate to the $\ell^2$-norm. Furthermore, we have that $\norm{U} = 1$ for any unitary $U$. With this in mind we can bound the first term on the RHS of the above inequality as follows:
    \begin{equation}
    \begin{split}
        &\max_{\{\hax_n\}} \norm{\alpha_{E_{\tel}} W^{-1} U_Q^{-1} \lb \bra{0} U_{\text{alt}} \ket{0} \rb U_Q W \ket{\{\hax_n\}} \ket{0}\ket{0} - \alpha_{E_{\tel}} W^{-1} U_Q^{-1} \lb \bra{0} U_{\text{alt}} \ket{0} \rb U_Q \ket{\{\hax_n\}} \ket{\psi'_0}\ket{0}} \\
        &\quad \leq \max_{\{\hax_n\}} \alpha_{E_{\tel}} \norm{W^{-1}} \norm{U_Q^{-1}} \norm{\lb \bra{0} U_{\text{alt}} \ket{0} \rb} \norm{U_Q} \norm{W \ket{\{\hax_n\}} \ket{0}\ket{0} - \ket{\{\hax_n\}} \ket{\psi'_0}\ket{0}} \\
        &\quad \leq \alpha_{E_{\tel}} \sqrt{2\epsilon_{\text{prep}}}.
    \end{split}
    \end{equation}
    To bound the second term on the RHS of \eqref{epsilon_E_ineq}, we use the triangle inequality again:
    \begin{equation}
    \begin{split}
        &\max_{\{\hax_n\}} \norm{\alpha_{E_{\tel}} W^{-1} U_Q^{-1} \lb \bra{0} U_{\text{alt}} \ket{0} \rb U_Q \ket{\{\hax_n\}} \ket{\psi'_0}\ket{0} - H_{E_\tel} \ket{\{\hax_n\}}\ket{0}\ket{0}} \\
        &\quad \leq \max_{\{\hax_n\}} \norm{\alpha_{E_{\tel}} W^{-1} U_Q^{-1} \lb \bra{0} U_{\text{alt}} \ket{0} \rb U_Q \ket{\{\hax_n\}} \ket{\psi'_0}\ket{0} - \alpha_{E_{\tel}} W^{-1} U_Q^{-1} \lb \bra{0} U_{\text{alt}} \ket{0} \rb \ket{\{\hax_n\}} \ket{\psi'_0}\ket{\widetilde{E}'_{\tel}}} \\
        &\qquad  + \max_{\{\hax_n\}} \norm{\alpha_{E_{\tel}} W^{-1} U_Q^{-1} \lb \bra{0} U_{\text{alt}} \ket{0} \rb \ket{\{\hax_n\}} \ket{\psi'_0}\ket{\widetilde{E}'_{\tel}} - H_{E_\tel} \ket{\{\hax_n\}}\ket{0}\ket{0}}
    \label{QPE_fail_ineq}
    \end{split}
    \end{equation}
    The first term on the RHS of the above inequality can be bounded as follows:
    \begin{equation}
    \begin{split}
        &\max_{\{\hax_n\}} \norm{\alpha_{E_{\tel}} W^{-1} U_Q^{-1} \lb \bra{0} U_{\text{alt}} \ket{0} \rb U_Q \ket{\{\hax_n\}} \ket{\psi'_0}\ket{0} - \alpha_{E_{\tel}} W^{-1} U_Q^{-1} \lb \bra{0} U_{\text{alt}} \ket{0} \rb \ket{\{\hax_n\}} \ket{\psi'_0}\ket{\widetilde{E}'_{\tel}}} \\
        &\leq \max_{\{\hax_n\}} \alpha_{E_{\tel}} \norm{W^{-1}} \norm{U_Q^{-1}} \norm{\lb \bra{0} U_{\text{alt}} \ket{0} \rb} \norm{U_Q \ket{\{\hax_n\}} \ket{\psi'_0}\ket{0} - \ket{\{\hax_n\}} \ket{\psi'_0}\ket{\widetilde{E}'_{\tel}}} \\
        &\leq \alpha_{E_{\tel}} \sqrt{\xi_{QPE}}.
    \end{split}
    \end{equation}
    Note that the failure probability $\xi_{QPE}$ of the phase estimation step is now part of the block-encoding error of $H_{E_{\tel}}$ in addition to the actual phase estimation error $\epsilon_{QPE}$. 
    Before explaining how $\epsilon_{QPE}$ contributes to the block-encoding error $\epsilon_{E_{\tel}}$ we first isolate the error $\epsilon_{\text{alt}}$ associated with the alternating sign trick. This can be done by applying the triangle inequality to the second term on the RHS of \eqref{QPE_fail_ineq}:
    \begin{equation}
    \begin{split}
        &\max_{\{\hax_n\}} \norm{\alpha_{E_{\tel}} W^{-1} U_Q^{-1} \lb \bra{0} U_{\text{alt}} \ket{0} \rb \ket{\{\hax_n\}} \ket{\psi'_0}\ket{\widetilde{E}'_{\tel}} - H_{E_\tel} \ket{\{\hax_n\}}\ket{0}\ket{0}} \\
        &\quad \leq \max_{\{\hax_n\}} \norm{\alpha_{E_{\tel}} W^{-1} U_Q^{-1} \lb \bra{0} U_{\text{alt}} \ket{0} \rb \ket{\{\hax_n\}} \ket{\psi'_0}\ket{\widetilde{E}'_{\tel}} -  W^{-1} U_Q^{-1} H_{\widetilde{E}'_{\tel}} \ket{\{\hax_n\}} \ket{\psi'_0}\ket{\widetilde{E}'_{\tel}}} \\
        &\qquad + \max_{\{\hax_n\}} \norm{W^{-1} U_Q^{-1} H_{\widetilde{E}'_{\tel}} \ket{\{\hax_n\}} \ket{\psi'_0}\ket{\widetilde{E}'_{\tel}} - H_{E_\tel} \ket{\{\hax_n\}}\ket{0}\ket{0}}.
    \label{alt_error_ineq}
    \end{split}
    \end{equation}
    The first term on the RHS of the above inequality can then be bounded as follows:
    \begin{equation}
    \begin{split}
        &\max_{\{\hax_n\}} \norm{\alpha_{E_{\tel}} W^{-1} U_Q^{-1} \lb \bra{0} U_{\text{alt}} \ket{0} \rb \ket{\{\hax_n\}} \ket{\psi'_0}\ket{\widetilde{E}'_{\tel}} -  W^{-1} U_Q^{-1} H_{\widetilde{E}'_{\tel}} \ket{\{\hax_n\}} \ket{\psi'_0}\ket{\widetilde{E}'_{\tel}}} \\
        &\quad \leq \max_{\{\hax_n\}} \norm{W^{-1}} \norm{U_Q^{-1}} \norm{\alpha_{E_{\tel}} \lb \bra{0} U_{\text{alt}} \ket{0} \rb \ket{\{\hax_n\}} \ket{\psi'_0}\ket{\widetilde{E}'_{\tel}} - H_{\widetilde{E}'_{\tel}} \ket{\{\hax_n\}} \ket{\psi'_0}\ket{\widetilde{E}'_{\tel}}} \\
        &\quad \leq \epsilon_{\text{alt}}.
    \end{split}
    \end{equation}
    Applying the triangle inequality to the second term on the RHS of \eqref{alt_error_ineq} now allows us to isolate the phase estimation error $\epsilon_{QPE}$:
    \begin{equation}
    \begin{split}
        &\max_{\{\hax_n\}} \norm{W^{-1} U_Q^{-1} H_{\widetilde{E}'_{\tel}} \ket{\{\hax_n\}} \ket{\psi'_0}\ket{\widetilde{E}'_{\tel}} - H_{E_\tel} \ket{\{\hax_n\}}\ket{0}\ket{0}} \\
        &\quad \leq \max_{\{\hax_n\}} \norm{W^{-1} U_Q^{-1} H_{\widetilde{E}'_{\tel}} \ket{\{\hax_n\}} \ket{\psi'_0}\ket{\widetilde{E}'_{\tel}} - W^{-1} U_Q^{-1} H_{\widetilde{E}_{\tel}} \ket{\{\hax_n\}} \ket{\psi'_0}\ket{\widetilde{E}'_{\tel}}} \\
        &\qquad + \max_{\{\hax_n\}} \norm{W^{-1} U_Q^{-1} H_{\widetilde{E}_{\tel}} \ket{\{\hax_n\}} \ket{\psi'_0}\ket{\widetilde{E}'_{\tel}} - H_{E_\tel} \ket{\{\hax_n\}}\ket{0}\ket{0}}.
    \label{QPE_error_ineq}
    \end{split}
    \end{equation}
    The first term on the RHS of the above inequality can be bounded as follows:
    \begin{equation}
    \begin{split}
        &\max_{\{\hax_n\}} \norm{W^{-1} U_Q^{-1} H_{\widetilde{E}'_{\tel}} \ket{\{\hax_n\}} \ket{\psi'_0}\ket{\widetilde{E}'_{\tel}} - W^{-1} U_Q^{-1} H_{\widetilde{E}_{\tel}} \ket{\{\hax_n\}} \ket{\psi'_0}\ket{\widetilde{E}'_{\tel}}} \\
        &\quad \leq \max_{\{\hax_n\}} \norm{W^{-1}} \norm{U_Q^{-1}} \norm{H_{\widetilde{E}'_{\tel}} \ket{\{\hax_n\}} \ket{\psi'_0}\ket{\widetilde{E}'_{\tel}} - H_{\widetilde{E}_{\tel}} \ket{\{\hax_n\}} \ket{\psi'_0}\ket{\widetilde{E}'_{\tel}}} \\
        &\quad \leq \epsilon_{QPE.}
    \end{split}
    \end{equation}
    Next, we isolate the block-encoding error $\epsilon_{\tel}$ of the electronic Hamiltonian by applying the triangle inequality to the second term on the RHS of \eqref{QPE_error_ineq}:
    \begin{equation}
    \begin{split}
        &\max_{\{\hax_n\}} \norm{W^{-1} U_Q^{-1} H_{\widetilde{E}_{\tel}} \ket{\{\hax_n\}} \ket{\psi'_0}\ket{\widetilde{E}'_{\tel}} - H_{E_\tel} \ket{\{\hax_n\}}\ket{0}\ket{0}} \\
        &\quad \leq \max_{\{\hax_n\}} \norm{W^{-1} U_Q^{-1} H_{\widetilde{E}_{\tel}} \ket{\{\hax_n\}} \ket{\psi'_0}\ket{\widetilde{E}'_{\tel}} - W^{-1} U_Q^{-1} H_{E_\tel} \ket{\{\hax_n\}}\ket{0}\ket{0}} \\
        &\qquad + \max_{\{\hax_n\}} \norm{W^{-1} U_Q^{-1} H_{E_{\tel}} \ket{\{\hax_n\}} \ket{\psi'_0}\ket{\widetilde{E}'_{\tel}} - H_{E_\tel} \ket{\{\hax_n\}}\ket{0}\ket{0}}.
    \label{el_error_ineq}
    \end{split}
    \end{equation}
    The first term on the RHS of the above inequality can then be bounded as follows:
    \begin{equation}
    \begin{split}
        &\max_{\{\hax_n\}} \norm{W^{-1} U_Q^{-1} H_{\widetilde{E}_{\tel}} \ket{\{\hax_n\}} \ket{\psi'_0}\ket{\widetilde{E}'_{\tel}} - W^{-1} U_Q^{-1} H_{E_\tel} \ket{\{\hax_n\}}\ket{0}\ket{0}} \\
        &\quad \leq \max_{\{\hax_n\}} \norm{W^{-1}} \norm{U_Q^{-1}} \norm{H_{\widetilde{E}_{\tel}} \ket{\{\hax_n\}} \ket{\psi'_0}\ket{\widetilde{E}'_{\tel}} - H_{E_{\tel}} \ket{\{\hax_n\}} \ket{\psi'_0}\ket{\widetilde{E}'_{\tel}}} \\
        &\quad \leq \epsilon_{\tel}.
    \end{split}
    \end{equation}
    To bound the second term on the RHS of \eqref{el_error_ineq} recall that $H_{E_{\tel}}$ is diagonal in the $\ket{\{\hax_n\}}$ basis, i.e. $H_{\widetilde{E}_{\tel}} \ket{\{\hax_n\}} = E_{\tel} \lb \lb \{x_n\} \rb \rb \ket{\{\hax_n\}}$. Thus, we have that
    \begin{equation}
    \begin{split}
        &\max_{\{\hax_n\}} \norm{W^{-1} U_Q^{-1} H_{E_{\tel}} \ket{\{\hax_n\}} \ket{\psi'_0}\ket{\widetilde{E}'_{\tel}} - H_{E_\tel} \ket{\{\hax_n\}}\ket{0}\ket{0}} \\
        &\quad \leq \max_{\{\hax_n\}} \left|E_{\tel} \lb \{x_n\} \rb \right|\norm{W^{-1} U_Q^{-1}\ket{\{\hax_n\}} \ket{\psi'_0}\ket{\widetilde{E}'_{\tel}} - \ket{\{\hax_n\}}\ket{0}\ket{0}} \\
        &\quad \leq \alpha_{E_{\tel}} \norm{W^{-1} U_Q^{-1}\ket{\{\hax_n\}} \ket{\psi'_0}\ket{\widetilde{E}'_{\tel}} - W^{-1} \ket{\{\hax_n\}} \ket{\psi'_0}\ket{0}} \\
        &\qquad + \alpha_{E_{\tel}} \norm{W^{-1} \ket{\{\hax_n\}} \ket{\psi'_0}\ket{0} - \ket{\{\hax_n\}}\ket{0}\ket{0}}. 
    \label{inv_error_ineq}
    \end{split}
    \end{equation}
    The first term on the RHS of the above inequality can then be bounded as follows:
    \begin{equation}
    \begin{split}
        &\alpha_{E_{\tel}} \norm{W^{-1} U_Q^{-1}\ket{\{\hax_n\}} \ket{\psi'_0}\ket{\widetilde{E}'_{\tel}} - W^{-1} \ket{\{\hax_n\}} \ket{\psi'_0}\ket{0}} \\
        &\quad \leq \alpha_{E_{\tel}} \norm{W^{-1}} \norm{U_Q^{-1}\ket{\{\hax_n\}} \ket{\psi'_0}\ket{\widetilde{E}'_{\tel}} - \ket{\{\hax_n\}} \ket{\psi'_0}\ket{0}} \\
        &\quad \leq \alpha_{E_{\tel}} \sqrt{\xi_{QPE}}.
    \end{split}
    \end{equation}
    Lastly, the second term on the RHS of \eqref{inv_error_ineq} can be bounded in terms of $\epsilon_{\text{prep}}$: 
    \begin{equation}
        \alpha_{E_{\tel}} \norm{W^{-1} \ket{\{\hax_n\}} \ket{\psi'_0}\ket{0} - \ket{\{\hax_n\}}\ket{0}\ket{0}} \leq \alpha_{E_{\tel}} \sqrt{2\epsilon_{\text{prep}}}.
    \end{equation}

    Putting everything together we find that
    \begin{equation}
        \epsilon_{E_{\tel}} \leq \alpha_{E_{\tel}} \lb 2 \sqrt{2\epsilon_{\text{prep}}} + 2 \sqrt{\xi_{QPE}} \rb + \epsilon_{\text{alt}} + \epsilon_{QPE} + \epsilon_{\tel}.
    \end{equation}
    We can ensure $\epsilon_{E_{\tel}} \leq \epsilon$ by having
    \begin{align}
        \epsilon_{\text{prep}} &\leq \frac{1}{2} \lb \frac{\epsilon}{10 \alpha_{E_{\tel}}} \rb^2 \\
        \xi_{QPE} &\leq  \lb \frac{\epsilon}{10 \alpha_{E_{\tel}}} \rb^2 \\
        \epsilon_{\text{alt}} &\leq \frac{\epsilon}{5} \\
        \epsilon_{QPE} &\leq \frac{\epsilon}{{5}} \\
        \epsilon_{\tel} &\leq \frac{\epsilon}{5}.
    \end{align}
    The condition $\epsilon_{\text{prep}} \leq \frac{1}{2} \lb \frac{\epsilon}{10 \alpha_{E_{\tel}}} \rb^2$ can be satisfied by using
    \begin{equation}
        O \lb \frac{\lambda}{\gamma \delta} \log{\lb \frac{\lambda}{\delta \epsilon} \rb} \rb
    \end{equation}
    queries to $U_{H_{\tel}}$ and
    \begin{equation}
        O \lb \frac{1}{\delta} \rb
    \end{equation}
    queries to $U_I$.
    The conditions $\xi_{QPE} \leq  \lb \frac{\epsilon}{10 \alpha_{E_{\tel}}} \rb^2$ and $\epsilon_{QPE} \leq \frac{\epsilon}{5}$ can be satisfied by using
    \begin{equation}
        O \lb \frac{\lambda}{\epsilon} \log{\lb \frac{\lambda}{\epsilon} \rb} \rb
    \end{equation}
    queries to $Q$ (or equivalently, $U_{H_{\tel}}$).
    Next, the condition $\epsilon_{\text{alt}} \leq \frac{\epsilon}{5}$ can be satisfied by using
    \begin{equation}
        a_{E_{\tel}} \in O \lb \log{\lb \frac{\max_{\{ x_n\}} \widetilde{E}_{\tel}'\lb \{ x_n \} \rb }{\epsilon} \rb} \rb \in O \lb \log{\lb \frac{\lambda}{\epsilon} \rb} \rb
    \end{equation}
    ancilla qubits. The associated Toffoli cost is in $O \lb a_{E_{\tel}} \rb$ due to the inequality testing required for the alternating sign trick. Lastly, by Lemma \ref{lem:block-encode_Hel} we need 
    \begin{equation}
        O \lb N + \Tilde{N} + \log{\lb \frac{B}{\epsilon}\rb} \rb
    \end{equation}
    Toffoli gates to ensure that the block-encoding error $\epsilon_{\tel}$ of $H_{\tel}$ is at most $\epsilon/5$. 
    
    The overall Toffoli complexity of block-encoding $H_{E_{\tel}}$ is dominated by the number Toffolis needed for all queries to the walk operator $Q$ and the number of Toffolis needed for the electronic ground state preparation. In either case, we multiply the respective query complexity with the Toffoli cost of block-encoding $H_{\tel}$ to obtain the desired complexity expression.
\end{proof}

Let us now prove the following lemma on the query complexity of estimating the internal energy.

\begin{lem}[Query complexity of estimating the internal energy]
    Let $U_{H_{\text{kin}}}$ be an $\lb \alpha_{\text{kin}}, a_{\text{kin}}, \epsilon/9 \rb$-block-encoding of $H_{\text{kin}}$, let $U_{H_{\text{pot}}}$ be an $\lb \alpha_{\text{pot}}, a_{\text{pot}}, \epsilon/9 \rb$-block-encoding of $H_{\text{pot}}$ and let $U_{E_{\tel}}$ be an $\lb \alpha_{E_{\tel}}, a_{E_{\tel}}, \epsilon/9 \rb$-block-encoding of $H_{E_{\tel}}$. Furthermore, let $\widetilde{U}_{L_{NVT}}' \in \mathbb{C}^{\eta_{\text{pur}} \times \eta_{\text{pur}}}$ be an $\epsilon/\lb 18 \eta_{\text{pur}} \alpha_{\text{nuc}} \rb$-precise approximation to $U_{L_{NVT}}'$ as defined in Eq.~(\ref{purification}) where $\alpha_{\text{nuc}} := \alpha_{\text{kin}} + \alpha_{\text{pot}} + \alpha_{E_{\tel}}$.
    There exists a quantum algorithm for estimating the internal energy $\mathcal{U}$ associated with $\rho_{\text{sys}}'(t)$ within error $\epsilon$ with probability at least $1 - \xi$ using 
    \begin{equation}
        O \lb \frac{\alpha_{\text{nuc}}}{\epsilon}  \log{\lb \frac{1}{\xi} \rb} \rb
    \end{equation}
    queries to  $\widetilde{U}_{L_{NVT}}'$, $U_{H_{\text{kin}}}$, $U_{H_{\text{pot}}}$ and $U_{E_{\tel}}$.
\label{lem:internal_energy}
\end{lem}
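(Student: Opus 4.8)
The plan is to estimate the three contributions $\text{Tr}\lb \rho_{\text{sys}}'(t)\, H_{\text{kin}} \rb$, $\text{Tr}\lb \rho_{\text{sys}}'(t)\, H_{\text{pot}} \rb$ and $\text{Tr}\lb \rho_{\text{sys}}'(t)\, H_{E_{\tel}} \rb$ separately, each via a Hadamard test combined with amplitude estimation, and then to add the three estimates classically. The key observation is that $\rho_{\text{sys}}'(t)$ is the reduced state obtained from the purification $\ket{\Psi'_t} = U_{L_{NVT}}' \ket{\Psi_0}$ (with $\ket{\Psi_0} := \ket{\psi_0}\ket{0}\ket{0}$, as in Eq.~\eqref{purification}) by tracing out all registers except the nuclear position, virtual momentum and duplicated-$s$ registers; since each of $H_{\text{kin}}$, $H_{\text{pot}}$, $H_{E_{\tel}}$ acts only on those system registers, for $H$ in this set we have $\text{Tr}\lb \rho_{\text{sys}}'(t) H \rb = \bra{\Psi'_t} \lb H \otimes \mathbb{1} \rb \ket{\Psi'_t}$. (Because $H_{\text{nuc}}$ is diagonal in the nuclear basis, $\rho_{\text{sys}}(t)$ could be used interchangeably, but we keep $\rho_{\text{sys}}'(t)$ for consistency with the Gibbs-entropy discussion.)

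For a fixed $H$ with $(\alpha_H, a_H, \epsilon/9)$-block-encoding $U_H$, I would run the standard Hadamard test: prepare a control qubit in $\ket{+}$, prepare $\widetilde{U}_{L_{NVT}}' \ket{\Psi_0}$ together with the $a_H$ block-encoding ancillas in $\ket{0}^{\otimes a_H}$, apply $U_H$ controlled on the control qubit, apply a Hadamard to the control qubit, and measure it. Since $\lb \bra{0}^{\otimes a_H} \otimes \mathbb{1} \rb U_H \lb \ket{0}^{\otimes a_H} \otimes \mathbb{1} \rb = \widetilde{H}_H / \alpha_H$ for some Hermitian $\widetilde{H}_H$ with $\norm{\widetilde{H}_H - H} \le \epsilon/9$, and since the block-encoding ancillas are merely initialized to $\ket{0}$ and never post-selected, the Hadamard-test expectation value equals $\text{Re}\, \bra{\Psi'_t} \lb \widetilde{H}_H/\alpha_H \otimes \mathbb{1} \rb \ket{\Psi'_t} = \text{Tr}\lb \widetilde{\rho}_{\text{sys}}'(t)\, \widetilde{H}_H \rb / \alpha_H$, a real number in $[-1,1]$, where $\widetilde{\rho}_{\text{sys}}'(t)$ is the approximate system density matrix induced by $\widetilde{U}_{L_{NVT}}'$. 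Amplitude estimation~\cite{brassard2002quantum} then estimates this quantity to additive precision $\delta$ with $O(1/\delta)$ invocations of the circuit, each invocation using one call to $\widetilde{U}_{L_{NVT}}'$ (and its inverse) and one controlled call to $U_H$. Taking $\delta \in \Theta\lb \epsilon / \alpha_H \rb$ and rescaling by $\alpha_H$ gives an estimate of $\text{Tr}\lb \widetilde{\rho}_{\text{sys}}'(t)\, \widetilde{H}_H \rb$ within $O(\epsilon)$ using $O(\alpha_H/\epsilon)$ queries; summing over the three terms yields $O\lb \lb \alpha_{\text{kin}} + \alpha_{\text{pot}} + \alpha_{E_{\tel}} \rb / \epsilon \rb = O\lb \alpha_{\text{nuc}}/\epsilon \rb$ queries to each of $\widetilde{U}_{L_{NVT}}'$, $U_{H_{\text{kin}}}$, $U_{H_{\text{pot}}}$ and $U_{E_{\tel}}$.

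It then remains to allocate the error budget. There are three contributions: (i) the block-encoding errors, $\left| \text{Tr}\lb \widetilde{\rho}_{\text{sys}}'(t)\lb \widetilde{H}_H - H \rb \rb \right| \le \norm{\widetilde{H}_H - H} \le \epsilon/9$ per term since $\text{Tr}\lb \widetilde{\rho}_{\text{sys}}'(t) \rb = 1$; (ii) the error from replacing $U_{L_{NVT}}'$ by $\widetilde{U}_{L_{NVT}}'$, for which the trace-distance argument used in the proof of Lemma~\ref{lem:gibbs_entropy} gives $\mathcal{T}\lb \widetilde{\rho}_{\text{sys}}'(t), \rho_{\text{sys}}'(t) \rb \le \eta_{\text{pur}} \cdot \frac{\epsilon}{18 \eta_{\text{pur}} \alpha_{\text{nuc}}} = \frac{\epsilon}{18 \alpha_{\text{nuc}}}$, hence $\left| \text{Tr}\lb \lb \widetilde{\rho}_{\text{sys}}'(t) - \rho_{\text{sys}}'(t) \rb H_{\text{nuc}} \rb \right| \le 2\mathcal{T} \norm{H_{\text{nuc}}} \le \epsilon/9$ (using $\norm{H_{\text{nuc}}} \le \norm{H_{\text{kin}}} + \norm{H_{\text{pot}}} + \norm{H_{E_{\tel}}} \le \alpha_{\text{nuc}} + 3\cdot(\epsilon/9) \in O(\alpha_{\text{nuc}})$); and (iii) the amplitude-estimation statistical error, which we set to at most $\epsilon/9$ per term, i.e.\ $\epsilon/3$ total. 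Since the block-encoding errors over the three terms sum to $\epsilon/3$, the total error is at most $\epsilon/3 + \epsilon/9 + \epsilon/3 < \epsilon$. Finally, boosting the success probability of amplitude estimation to $1-\xi$ is achieved by taking the median of $O\lb \log(1/\xi) \rb$ independent runs, which multiplies the query count by $O\lb \log(1/\xi) \rb$ and produces the claimed complexity.

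The main obstacle I expect is the bookkeeping of the error propagation: in particular, carefully verifying that the Hadamard test on the block-encoding, with the block-encoding ancillas only initialized (not post-selected), exactly produces the overlap $\text{Tr}\lb \widetilde{\rho}_{\text{sys}}'(t)\, \widetilde{H}_H \rb / \alpha_H$ rather than a quantity involving a renormalized, post-selected state, and stitching together the block-encoding error $\epsilon/9$, the trace-distance error inherited from the $\epsilon/(18\eta_{\text{pur}}\alpha_{\text{nuc}})$-precise $\widetilde{U}_{L_{NVT}}'$, and the amplitude-estimation precision so that they add up to $\epsilon$ with the specific constants appearing in the hypotheses. A secondary technical point is confirming $\norm{H_{\text{nuc}}} \in O(\alpha_{\text{nuc}})$, which follows directly from the normalization constants of the three block-encodings supplied in Lemmas~\ref{lem:H_kin}, \ref{lem:H_pot} and \ref{lem:H_E_el}.
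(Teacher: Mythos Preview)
Your proposal is correct and follows essentially the same approach as the paper: both estimate the three traces $\text{Tr}\lb \rho_{\text{sys}}'(t)\,H_i\rb$ separately via a Hadamard test on the purification combined with amplitude estimation, decompose the error into the block-encoding error of $H_i$, the trace-distance error inherited from $\widetilde{U}_{L_{NVT}}'$, and the amplitude-estimation statistical error, and then boost the success probability at the cost of a $\log(1/\xi)$ factor. The only cosmetic differences are that the paper partitions the error per term into three pieces of size $\epsilon/(18\alpha_i)$ at the level of the success probability $P_i(0)$ (using von Neumann's trace inequality), whereas you account at the level of the energies directly and handle the $\widetilde{U}_{L_{NVT}}'$-error once for all of $H_{\text{nuc}}$; and the paper boosts via fixed-point amplitude amplification rather than the median trick.
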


\begin{proof}
    First, note that the internal energy $\mathcal{U}$ of the system can be computed as follows:
    \begin{equation}
        \mathcal{U} = \text{Tr} \lb \rho_{\text{sys}}'(t) H_{\text{nuc}} \rb = \text{Tr} \lb \rho_{\text{sys}}'(t) H_{\text{kin}} \rb + \text{Tr} \lb \rho_{\text{sys}}'(t) H_{\text{pot}} \rb + \text{Tr} \lb \rho_{\text{sys}}'(t) H_{E_{\tel}} \rb.
    \end{equation}
    The idea then is to estimate each term individually within error $\epsilon/3$ using the Hadamard test as shown in Fig.~\ref{fig:Hadamard_test}. Then we add the results classically which yields an $\epsilon$-precise estimate of $\mathcal{U}$.
    
    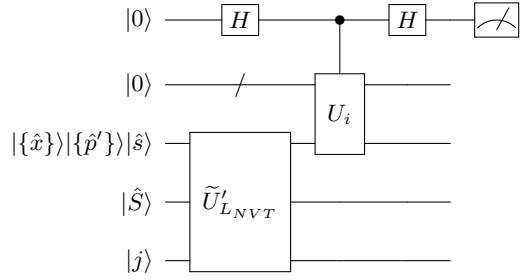
\begin{figure}
        \centering
        \mbox{
            \Qcircuit @C=1em @R=1.5em {
               \lstick{\ket{0}} & \gate{H} & \ctrl{1}  & \gate{H} & \qw & \meter \\
               \lstick{\ket{0}} & {/} \qw & \multigate{1}{U_{i}}  & \qw & \qw \\
               \lstick{\ket{\{\hax\}}\ket{\{ \hap' \}}\ket{\has}} & \multigate{2}{\widetilde{U}_{L_{NVT}}'} & \ghost{U_{i}}  & \qw & \qw \\
               \lstick{\ket{\overline{S}}} & \ghost{\widetilde{U}_{L_{NVT}}'} & \qw  & \qw & \qw \\
               \lstick{\ket{j}} & \ghost{\widetilde{U}_{L_{NVT}}'} & \qw  & \qw & \qw 
        } 
        }
        \caption{Circuit for implementing the Hadamard test to estimate the internal energy $\mathcal{U}$ of the nuclei where $\widetilde{U}_{L_{NVT}}'$ is an approximation to the evolution operator from Eq.~\eqref{purification} and $U_i \in \{U_{H_{\text{kin}}}, U_{H_{\text{pot}}}, U_{E_{\tel}} \}$.
        The second register from the top is the ancilla register needed for block-encoding $H_{\text{kin}}$, $H_{\text{pot}}$ or $H_{E_{\tel}}$. The fact that $U_{i}$ does not act on the bath register $\ket{\overline{S}}$ or the phase gradient ancilla register $\ket{j}$ can be understood as taking the partial trace over those registers when computing the probability of measuring the top qubit as 0.}
        \label{fig:Hadamard_test}
    \end{figure}

    Let 
    \begin{equation}
        \widetilde{H}_{i} := \alpha_{i} \lb \bra{0} \otimes \mathbb{1} \rb U_{i} \lb \ket{0} \otimes \mathbb{1} \rb
    \end{equation}
    be the Hamiltonian term block-encoded in $U_{i}$ where $U_i \in \{U_{H_{\text{kin}}}, U_{H_{\text{pot}}}, U_{E_{\tel}} \}$ and $\alpha_i \in \{ \alpha_{\text{kin}},  \alpha_{\text{pot}},  \alpha_{E_{\tel}} \}$.
    The probability of measuring the top qubit in Fig.~\ref{fig:Hadamard_test} as 0 is given by 
    \begin{equation}
        P''_i(0) := \frac{1}{2} \lb 1 + \frac{\text{Tr} \lb  \widetilde{\rho}_{\text{sys}}'(t) \widetilde{H}_{i} \rb}{\alpha_{i}} \rb,
    \end{equation}
    where $\widetilde{\rho}_{\text{sys}}'(t)$ is the reduced density matrix of our system obtained from $\widetilde{U}_{L_{NVT}}'$.
    Let $\hat{P}''_i(0)$ denote our estimate of $P''_i(0)$ based on the outcome of the Hadamard test.
    Furthermore, let
    \begin{equation}
        P'_i(0) := \frac{1}{2} \lb 1 + \frac{\text{Tr} \lb  \rho_{\text{sys}}'(t) \widetilde{H}_{i} \rb}{\alpha_{i}} \rb,
    \end{equation}
    denote the success probability of the Hadamard test when used with an error-free block-encoding of $U_{L_{NVT}}'$.
    Lastly, let
    \begin{equation}
        P_i(0) := \frac{1}{2} \lb 1 + \frac{\text{Tr} \lb  \rho_{\text{sys}}'(t) H_{i} \rb}{\alpha_{i}} \rb
    \end{equation}
    denote the success probability of the Hadamard test when applied to error-free block-encodings of $U_{L_{NVT}}'$ and $H_{i}$ where $H_i \in \{ H_{\text{kin}}, H_{\text{pot}}, H_{E_{\tel}}\}$.
    Estimating $P_i(0)$ within error $\frac{\epsilon}{6 \alpha_{i}}$ allows us to obtain an $\epsilon/3$-precise estimate of $\text{Tr} \lb \rho_{\text{sys}}'(t) H_{i} \rb$.
    
    By the triangle inequality, it holds that
    \begin{equation}
        |\hat{P}''_i(0) - P_i(0)| \leq |\hat{P}''_i(0) - P''_i(0)| + |P''_i(0) - P'_i(0)| + |P'_i(0) - P_i(0)|.
    \label{estimate_inequality}
    \end{equation}
    
    We now show that each term on the RHS of Eq.~(\ref{estimate_inequality}) is upper bounded by $\frac{\epsilon}{18 \alpha_{i}}$ which implies that 
    \begin{equation}
        |\hat{P}''_i(0) - P_i(0)| \leq \frac{\epsilon}{6 \alpha_{i}}
    \end{equation}
    as desired.
    
    The error associated with the last term stems from the block-encoding error of $H_{i}$.
    Let $\rho'_j$ denote the $j$-th eigenvalue of $\rho_{\text{sys}}'(t)$ and let $H_{i,j}^{-}$ denote the $j$-th eigenvalue of $\widetilde{H}_{i} - H_{i}$.
    Using von Neumann's trace inequality, we obtain the following bound:
    \begin{equation}
    \begin{split}
        |P'_i(0) - P_i(0)| &= \frac{1}{2 \alpha_{i}} \left|\text{Tr} \lb  \rho_{\text{sys}}'(t) \widetilde{H}_{i} \rb - \text{Tr} \lb  \rho_{\text{sys}}'(t) H_{i} \rb \right| = \frac{1}{2 \alpha_{i}} \left| \text{Tr} \lb  \rho_{\text{sys}}'(t) \lb \widetilde{H}_{i} - H_{i} \rb \rb \right| \\
        &\leq \frac{1}{2 \alpha_{i}} \sum_j \left| \rho'_j \right| \left| H_{i,j}^{-} \right|  \leq \frac{\epsilon}{18 \alpha_{i}}  \sum_j \left| \rho'_j \right| \\
        &\leq \frac{\epsilon}{18 \alpha_{i}},
    \end{split}
    \end{equation}
    where we used the fact that $\left| H_{i,j}^{-} \right| \leq \norm{\widetilde{H}_{i} - H_{i}} \leq \epsilon/9$ by choice of the block-encoding precision.

    The error associated with the second term on the RHS of Eq.~(\ref{estimate_inequality}) stems from the simulation error of $U_{L_{NVT}}'$. By assumption,
    \begin{equation}
        \norm{\widetilde{U}_{L_{NVT}}' - U_{L_{NVT}}'} \leq \frac{\epsilon}{18\eta_{\text{pur}} \alpha_{\text{nuc}}}.
    \end{equation}
    This implies that
    \begin{equation}
    \begin{split}
        \norm{\widetilde{U}_{L_{NVT}}' \ketbra{\Psi_0}{\Psi_0} \widetilde{U}_{L_{NVT}}'^{\dagger} - U_{L_{NVT}}' \ketbra{\Psi_0}{\Psi_0} U_{L_{NVT}}'^{\dagger}} &\leq \norm{\widetilde{U}_{L_{NVT}}' \ketbra{\Psi_0}{\Psi_0} \widetilde{U}_{L_{NVT}}'^{\dagger} - \widetilde{U}_{L_{NVT}}' \ketbra{\Psi_0}{\Psi_0} U_{L_{NVT}}'^{\dagger}} \\
        & \quad + \norm{\widetilde{U}_{L_{NVT}}' \ketbra{\Psi_0}{\Psi_0} U_{L_{NVT}}'^{\dagger} - U_{L_{NVT}}' \ketbra{\Psi_0}{\Psi_0} U_{L_{NVT}}'^{\dagger}} \\
        &\leq \norm{\widetilde{U}_{L_{NVT}}'^{\dagger} - U_{L_{NVT}}'^{\dagger}} + \norm{\widetilde{U}_{L_{NVT}}' - U_{L_{NVT}}'} \\
        &\leq \frac{2\epsilon}{18\eta_{\text{pur}} \alpha_{\text{nuc}}} = \frac{\epsilon}{9\eta_{\text{pur}} \alpha_{\text{nuc}}},
    \end{split}
    \end{equation}
    where, as before, $\ket{\Psi_0}$ is the initial state of the purification of $\rho_{\text{sys}}'(t)$.
    It then follows from Definition \ref{def:trace_dist} that 
    \begin{equation}
        \frac{1}{2}\norm{\widetilde{U}_{L_{NVT}}' \ketbra{\Psi_0}{\Psi_0} \widetilde{U}_{L_{NVT}}'^{\dagger} - U_{L_{NVT}}' \ketbra{\Psi_0}{\Psi_0} U_{L_{NVT}}'^{\dagger}}_1 \leq \frac{\epsilon}{18\alpha_{\text{nuc}}}.
    \end{equation}
    Since the trace distance is contractive under the partial trace, we obtain the following bound:
    \begin{equation}
          \mathcal{T}\lb \widetilde{\rho}_{\text{sys}}'(t), \rho_{\text{sys}}'(t) \rb \leq \frac{\epsilon}{18\alpha_{\text{nuc}}}.
    \end{equation}

    Let $\rho^{-}_j$ denote the $j$-th eigenvalue of $\widetilde{\rho}_{\text{sys}}'(t) - \rho_{\text{sys}}'(t)$ and let $\widetilde{H}_{i,j}$ denote the $j$-th eigenvalue of $\widetilde{H}_{i}$.
    Using von Neumann's trace inequality, we then have that
    \begin{equation}
    \begin{split}
        |P''_i(0) - P'_i(0)| &= \frac{1}{2 \alpha_{i}} \left|\text{Tr} \lb  \widetilde{\rho}_{\text{sys}}'(t) \widetilde{H}_{i} \rb - \text{Tr} \lb  \rho_{\text{sys}}'(t) \widetilde{H}_{i} \rb \right| = \frac{1}{2 \alpha_{i}} \left| \text{Tr} \lb \lb \widetilde{\rho}_{\text{sys}}'(t) -  \rho_{\text{sys}}'(t)\rb \widetilde{H}_{i} \rb\right| \\
        &\leq \frac{1}{2 \alpha_{i}} \sum_j \left| \rho^{-}_j \right| \left| \widetilde{H}_{i,j} \right| \leq \frac{1}{\alpha_{i}} \mathcal{T}\lb \widetilde{\rho}_{\text{sys}}'(t),  \rho_{\text{sys}}'(t) \rb \alpha_{i} \\
        &\leq \frac{\epsilon}{18 \alpha_{\text{nuc}}} \leq \frac{\epsilon}{18 \alpha_{i}}.
    \end{split}
    \end{equation}

    Lastly, we need to ensure that $|\hat{P}''_i(0) - P''_i(0)| \leq \frac{\epsilon}{18 \alpha_{i}}$.
    The idea is to use amplitude estimation to obtain an $\frac{\epsilon}{18 \alpha_{i}}$-precise estimate $\hat{P}''_i(0)$ of $P''_i(0)$ with constant success probability. This requires 
    \begin{equation}
        O \lb \frac{\alpha_{i}}{\epsilon} \rb \subseteq O \lb \frac{\alpha_{\nuc}}{\epsilon} \rb
    \end{equation}
    applications of the Hadamard test which is equivalent to $O \lb \frac{\alpha_{\nuc}}{\epsilon} \rb$ (controlled) applications of $\widetilde{U}_{L_{NVT}}'$ and $U_{i}$. 

    By the union bound, we can obtain an $\epsilon$-precise estimate of $\mathcal{U}$ with success probability $\geq 1 - \xi$ by ensuring that the failure probability associated with the estimation of each term $\text{Tr} \lb \rho_{\text{sys}}'(t) H_{i} \rb$ is $\leq \xi/3$. This can be achieved via (fixed-point) amplitude amplification at the expense of an additional multiplicative factor of $\log \lb 1/\xi \rb$ to the query complexities of $\widetilde{U}_{L_{NVT}}'$ and $U_i$.
\end{proof}

\subsection{Proof of Theorem \ref{thm:free_energy}}

For convenience, let us restate Theorem \ref{thm:free_energy} here.
\free*

\begin{proof}
    We use Lemma \ref{lem:gibbs_entropy} to estimate the Gibbs entropy associated with $\rho_{\text{sys}}(t)$ within error $\frac{\epsilon}{2 k_B T}$ with failure probability at most $\xi/2$. This requires
    \begin{equation}
        O \lb \frac{\eta \;  (k_B T)^{1.5}}{\epsilon^{1.5}} \log \lb 1/\xi \rb \rb
    \end{equation}
    queries to an $\lb \frac{\epsilon}{8 \eta_{\text{pur}} k_b T \log \lb \eta/\nu \rb} \rb$-precise approximation $\widetilde{U}_{L_{NVT}}'$ to $U_{L_{NVT}}'$ where $\nu \in (0,1)$ is again a lower bound on $2\mathcal{T}\lb \widetilde{\rho}_{\text{sys}}'(t), \rho_{\text{sys}}'(t) \rb$. For simplicity, we assume that $\nu \in \Theta \lb \epsilon/\eta \rb$ which should be fairly easy to achieve. The promise $\log \lb \eta^2/\epsilon \rb \leq \eta$ ensures that this choice of $\nu$ does not exceed the upper bound on $2\mathcal{T}\lb \widetilde{\rho}_{\text{sys}}'(t), \rho_{\text{sys}}'(t) \rb$.
    
    It then follows from Theorem \ref{thm:complexity_liouvillian} that estimating the entropy requires a total of
    \begin{equation}
        \widetilde{O} \lb \frac{\eta^{1+o(1)} N_{\text{tot}} d (k_B T)^{1.5 + o(1)} \mu_{NVT}^{2+o(1)} t^{1+o(1)}}{\widetilde{\gamma} \, \widetilde{\delta} \, \epsilon^{1.5 + o(1)}} \log \lb \frac{1}{\xi} \rb \rb
    \end{equation}
    Toffoli gates and
    \begin{equation}
        \widetilde{O} \lb \frac{\eta^{1+o(1)} N d (k_B T)^{1.5 + o(1)} \mu_{NVT}^{1+o(1)} t^{1+o(1)}}{\widetilde{\delta} \, \epsilon^{1.5 + o(1)}} \log \lb \frac{1}{\xi} \rb \rb
    \end{equation}
    queries to the initial electronic state preparation oracle $\widetilde{U}_I$.
    
    Next, we use Lemma \ref{lem:internal_energy} to estimate the internal energy $\mathcal{U}$ of the nuclei within error $\frac{\epsilon}{2}$ with failure probability at most $\xi/2$. This requires
    \begin{equation}
        O \lb \frac{\alpha_{\text{nuc}}}{\epsilon}  \log{\lb \frac{1}{\xi} \rb} \rb
    \end{equation}
    queries to an $\lb \frac{\epsilon}{36 \eta_{\text{pur}} \alpha_{\text{nuc}}} \rb$-precise approximation of $U_{L_{NVT}}'$. 
    By Theorem \ref{thm:complexity_liouvillian}, the associated Toffoli complexity is then in 
    \begin{equation}
        \widetilde{O} \lb \frac{\eta^{o(1)} N_{\text{tot}} d \, \alpha_{\text{nuc}} \, \mu_{NVT}^{2+o(1)} t^{1+o(1)}}{\widetilde{\gamma} \, \widetilde{\delta} \, \epsilon^{1+o(1)}} \log \lb \frac{1}{\xi} \rb \rb.
    \end{equation}
    Furthermore, we need
    \begin{equation}
        \widetilde{O} \lb \frac{\eta^{o(1)} N d \, \alpha_{\text{nuc}} \, \mu_{NVT}^{1+o(1)} t^{1+o(1)}}{\widetilde{\delta} \, \epsilon^{1+o(1)}} \log \lb \frac{1}{\xi} \rb \rb
    \end{equation}
    queries to the initial electronic state preparation oracle $\widetilde{U}_I$.
    According to Lemma \ref{lem:internal_energy} we also require
    \begin{equation}
        O \lb \frac{\alpha_{\text{nuc}}}{\epsilon} \log \lb \frac{1}{\xi} \rb \rb
    \end{equation}
    queries to $\epsilon/18$-precise block-encodings of $H_{\text{kin}}$, $H_{\text{pot}}$ and $H_{E_{\tel}}$. Lemmas \ref{lem:H_kin}, \ref{lem:H_pot} and \ref{lem:H_E_el} imply that the combined Toffoli complexity of all these queries is in
    \begin{equation}
    \begin{split}
        &\widetilde{O} \lb \frac{\alpha_{\text{nuc}}}{\epsilon} \log \lb \frac{1}{\xi} \rb \lb N \log \lb \frac{g \alpha_{\nuc}}{\epsilon} \rb + \log^{\log 3} \lb \frac{\alpha_{\nuc}}{\epsilon} \rb + N_{\text{tot}} \lambda \lb \frac{1}{\epsilon} + \frac{1}{\gamma \delta} \rb  \rb \rb \\
        &\subseteq \widetilde{O} \lb N_{\text{tot}} \alpha_{\text{nuc}} \lambda \log \lb \frac{1}{\xi} \rb \lb \frac{1}{\epsilon^2} + \frac{1}{\gamma \delta \epsilon} \rb \rb 
    \end{split}
    \end{equation}
    Similar to the spectral gap argument used in the proof of Theorem \ref{thm:complexity_liouvillian}, we only need to provide a lower bound $\widetilde{\gamma}$ on the spectral gap of the electronic Hamiltonian over those phase space grid points that are associated with a nonzero amplitude at some point during the simulation. 
    Likewise, we only need a lower bound $\widetilde{\delta}$ on the overlap of the initial electronic state with the true electronic ground state over phase space grid points that are visited at some point during the simulation.
    The reason for this is that any simulation errors that occur on grid points which are associated with zero amplitude throughout the simulation do not contribute to the error of the final estimate. This means that Problem \ref{prob:free_energy} can be solved using only $O \lb \frac{1}{\widetilde{\gamma} \widetilde{\delta}} \rb$ rather than $O \lb \frac{1}{\gamma \delta} \rb$ Toffoli gates.
    
    Combining the previous results, we find that the overall Toffoli complexity associated with estimating $\mathcal{F}$ is in 
   \begin{equation}
        \widetilde{O} \lb \lb \frac{\eta^{o(1)} N_{\text{tot}} d \, \alpha_{\text{nuc}} \, \mu_{NVT}^{2+o(1)} t^{1+o(1)}}{\widetilde{\gamma} \, \widetilde{\delta} \, \epsilon^{1+o(1)}}  + \frac{\eta^{1+o(1)} N_{\text{tot}} d (k_B T)^{1.5 + o(1)}  \mu_{NVT}^{2+o(1)} t^{1+o(1)}}{\widetilde{\gamma} \, \widetilde{\delta} \, \epsilon^{1.5 + o(1)}} + \frac{N_{\text{tot}} \alpha_{\text{nuc}} \lambda}{\epsilon^2} \rb  \log \lb \frac{1}{\xi} \rb \rb,
    \end{equation}
    which can be simplified as follows:
    \begin{equation}
        \widetilde{O} \lb \lb \frac{\eta^{o(1)} N_{\text{tot}} d \mu_{NVT}^{2+o(1)} t^{1+o(1)}}{\widetilde{\gamma} \, \widetilde{\delta} \, \epsilon^{1+o(1)}} \lb \alpha_{\nuc} + \frac{\eta (k_b T)^{1.5 + o(1)}}{\sqrt{\epsilon}} \rb + \frac{N_{\text{tot}} \alpha_{\text{nuc}} \lambda}{\epsilon^2} \rb \log \lb \frac{1}{\xi} \rb  \rb.
    \end{equation}
    Furthermore, the overall number of queries to $\widetilde{U}_I$ is in
    \begin{equation}
        \widetilde{O} \lb \frac{\eta^{o(1)} N d \, \alpha_{\text{nuc}} \, \mu_{NVT}^{1+o(1)} t^{1+o(1)}}{\widetilde{\delta} \, \epsilon^{1+o(1)}} \log \lb \frac{1}{\xi} \rb + \frac{\eta^{1+o(1)} N d (k_B T)^{1.5 + o(1)} \mu_{NVT}^{1+o(1)} t^{1+o(1)}}{\widetilde{\delta} \, \epsilon^{1.5 + o(1)}} \log \lb \frac{1}{\xi} \rb \rb,
    \end{equation}
    which can be rewritten as follows:
    \begin{equation}
        \widetilde{O} \lb \frac{ \eta^{o(1)} N d \mu_{NVT}^{1+o(1)} t^{1+o(1)}}{\widetilde{\delta} \, \epsilon^{1+o(1)}} \log \lb \frac{1}{\xi} \rb \lb \alpha_{\nuc} + \frac{\eta (k_b T)^{1.5 + o(1)}}{\sqrt{\epsilon}} \rb \rb.
    \end{equation}
\end{proof}

\section{Computational cost scaling of forward Euler integration}
\label{app:EulerCost}

Computational costs for the calculation of molecular forces on a fault-tolerant quantum computer are given in Ref.~\cite{Obrien2022}; however, the propagation of errors from an individual gradient estimate to the updated particle positions is not bounded.  In the following, we extend that result by giving an upper bound on the runtime of a molecular dynamics simulation algorithm where the quantum computer is used to compute the forces on the particles which are then used to update the nuclear positions on a classical computer with forward Euler's method.

The goal of Euler's forward method is to update the value of a variable $y(T)$ (e.g., the position of a nucleus) at a time $T+h$ using the derivative $y'(T)$ and the step $h$:
\begin{equation}
    y(T+h) = y(T)+ h y'(T).
\end{equation}
 
We define $y(j|y(j-1))$ as the value of the position after $j$ steps in the case of a perfect update (i.e., with the exact derivative) given the same perfectly updated position at the previous iteration. In contrast, we have $\tilde y (j|\tilde y(j-1))$ and $\tilde y(1) = \tilde y(1 | y(0))$ in the case of updates with an approximate derivative (i.e., with some error $\delta$ in the derivative calculation). Unless otherwise stated we assume that
\begin{align*}
    y_{j-1} &= y(j-1|y_{j-2}) \\
    y_{j-2} &= y(j-2|y_{j-3}) \\
    &\dots
\end{align*}

We now want to determine an upper bound on the error performed in the update after $N$ steps of updates with approximate derivatives, defined as the difference between this and the same updates computed with perfect derivatives: $|y(N|y_{N-1}) - \tilde y (N|\tilde y_{N-1})|$. 
Additionally, we consider the solutions to Newton's equation of motion for the updated variable $y$ and the approximate variable $\tilde{y}$: $y= e^{At} y_0$ and $\tilde y= e^{\tilde{A}t}  y_0$, given the initial condition $y_0$.
We call $y(j)=y_j$ the value of our variable after $j$ time steps. 
Furthermore, we are assuming that the error on the derivatives is such that $\norm{A-\Tilde{A}}\leq K_\textrm{Lips} $, where $K_\textrm{Lips}$ is the Lipschitz constant which is directly related to the norm of the differential operator.

For a single integration step of size $h$, if starting from the same previous value, we accumulate an error $|y(j|y_{j-1})-\tilde y(j|y_{j-1})|$ which is upper bounded by the error on the derivative estimation multiplied by the step:

\begin{equation}
    \begin{split}
    |y(j|y_{j-1})-\tilde y(j|y_{j-1})| &\leq \delta h, \\
        \tilde y (j|y_{j-1}) &\leq e^{K_\textrm{Lips}h} y_{j-1},\\
        \tilde y (j|\tilde y_{j-1}) &\leq e^{K_\textrm{Lips}h} \tilde y_{j-1}.
    \end{split}
\end{equation}
So we have:
\begin{equation}
   \begin{split}
    |y(j|y_{j-1})-\tilde y (j|\tilde y_{j-1})| &\leq\ |y(j|y_{j-1})- \tilde y (j|y_{j-1})|+|\tilde y (j| y_{j-1}) - \tilde y (j|\tilde y_{j-1})|\\
    &\leq \delta h + |\tilde y(j|y_{j-1})-\tilde y (j| \tilde y_{j-1})| \leq \delta h +e^{K_\textrm{Lips} h} |y_{j-1} - \tilde y_{j-1}| \\
    &\leq \delta h \left(1 +\sum_{n=1}^j e^{nK_\textrm{Lips}h} \right).
   \end{split} 
\end{equation}

After $N$ steps with $N=T/h$ (where $T$ is the total time) of Euler's forward method we get:
\begin{equation}
\begin{split}
    |y(N|y_{N-1})-\tilde y (N|\tilde y_{N-1})| &\leq \delta h \left(1 +\sum_{n=1}^N e^{nK_\textrm{Lips}h} \right) \leq \delta h \left( 1+ \frac{e^{K_\textrm{Lips}h (N+1)}-1}{e^{K_\textrm{Lips}h}-1}\right) \\ 
   &\leq \delta h \left( 1 + \frac{e^{K_\textrm{Lips}h} e^{K_\textrm{Lips}T}-1}{e^{K_\textrm{Lips}h}-1}\right) = \delta h \left( 1+\frac{e^{K_\textrm{Lips}T}-e^{-K_\textrm{Lips}h}}{1-e^{-K_\textrm{Lips}h}}\right) \\
   &\leq h \delta (1+ 2 e^{K_\textrm{Lips}T})\leq 3 h \delta e^{K_\textrm{Lips}T},
    \end{split}
\end{equation}
where we have chosen $K_\textrm{Lips}h \leq \ln(2)$ and $N=\frac{T}{h}$.

We want to make sure that the error is at most $\epsilon_\textrm{MD}$:
\begin{equation}
    h \delta (1+ 2 e^{K_\textrm{Lips}T})\leq 3 h \delta e^{K_\textrm{Lips}T} \leq \epsilon_\textrm{MD}.
\end{equation}
We choose the step size to be $h=\ln(2)/K_\textrm{Lips}$, so we have that our error on the single gradient estimation needs to be:
\begin{equation}
    \delta \leq \frac{\epsilon_\textrm{MD} K_\textrm{Lips}}{3 e^{K_\textrm{Lips}T} \ln(2)}.
\end{equation}

We want to impose this error $\delta$ as the target error in the single gradient estimation necessary to achieve an overall simulation error $\epsilon_\textrm{MD}$.
From Table IV of Ref.~\cite{Obrien2022} in the case of first quantized plane-waves, the time complexity is $T_\textrm{grad}=N_a^\frac{7}{2} \delta^{-1}$ (with the number of atoms $N_a$ considered proportional to the number of orbitals). So the  time complexity of a single gradient estimation to achieve the target error is:
\begin{equation}
    \textrm{ToffCount}_\textrm{grad} \in O \left( \frac{N_a^\frac{7}{2}e^{K_\textrm{Lips}T}}{K_\textrm{Lips} \epsilon_\textrm{MD}}\right).
\end{equation}
Since we need to perform $N=T/h=K_\textrm{Lips}T/\ln(2)$ steps, the total time is given by:
\begin{equation}
    \textrm{ToffCount}_\textrm{MD} \in O \left( N\frac{N_a^\frac{7}{2}e^{K_\textrm{Lips}T}}{K_\textrm{Lips} \epsilon_\textrm{MD}} \right)=
    O \left( T \frac{N_{a}^\frac{7}{2} e^{K_\textrm{Lips}T} }{ \epsilon_\textrm{MD}} \right).
\end{equation}
This means that the time for simulating a system with Euler's integration method scales exponentially with the simulation time $T$ while still polynomially with the other parameters. 

It is worth noting that while the underlying trajectories are potentially unstable, the overall probability density formed by an ensemble of such trajectories generically is not.  In particular, if we instead focused on the error in phase space density then this scaling would become polynomial if the shadowing theorem conditions hold~\cite{Edward2023}.  This suggests that the relative cost between this approach and our own may be somewhat deceptive; however, it is fair regardless to say that without assuming that we are interested in estimating a single particle trajectory that the number of Toffoli gates needed for an accurate simulation may scale exponentially with the evolution time in the worst case scenario.

\end{document}